\newtheorem{lemma}{Lemma}
\newtheorem{theorem}{Theorem}
\theoremstyle{definition}
\newtheorem{example}{Example}
\newcommand{\union}{\cup}
\newcommand{\NN}{\mathbf{N}}
\newcommand{\ZZ}{\mathbf{Z}}
\newcommand{\QQ}{\mathbf{Q}}
\newcommand{\II}{\mathrm{I\!I}}
\newcommand{\A}{\mathbf{C}}
\newcommand{\B}{\mathbf{B}}
\newcommand{\C}{\mathbf{S}}
\newcommand{\D}{\mathbf{D}}
\newcommand{\G}{\mathbf{G}}
\newcommand{\AP}{\mathbf{AP}}
\newcommand{\AI}{\mathbf{AI}}
\newcommand{\AIB}{\mathbf{AIB}}
\newcommand{\RG}{\mathbf{R}}
\newcommand{\LG}{\mathbf{L}}
\newcommand{\con}{\mathsf{C}}
\newcommand{\cona}{\con_{\alpha,\rho}}
\newcommand{\conad}{\con_{\alpha,\rho'}}
\newcommand{\tail}{\mathsf{tail}}
\newcommand{\head}{\mathsf{head}}
\newcommand{\minus}{\mathsf{minus}}
\newcommand{\identity}{\mathsf{id}}
\newcommand{\sgh}{\mathsf{sgh}}
\newcommand{\sgt}{\mathsf{sgt}}
\newcommand{\stog}{\mathsf{stog}}
\newcommand{\nh}{\mathsf{nh}}
\newcommand{\inv}{\mathsf{inv}}
\newcommand{\stepone}{\mathsf{step1}}
\newcommand{\steptwo}{\mathsf{step2}}
\newcommand{\stepthree}{\mathsf{step3}}
\newcommand{\stepfour}{\mathsf{step4}}
\newcommand{\Fun}{\mathbf{Fun}}
\newcommand{\FV}{\mathrm{FV}} 
\newcommand{\Left}{\mathbf{Left}}
\newcommand{\Nil}{\mathbf{Nil}}
\newcommand{\Pair}{\mathbf{Pair}} 
\newcommand{\Right}{\mathbf{Right}}
\newcommand{\SD}{\mathbf{SD}}
\newcommand{\False}{\mathbf{False}}
\newcommand{\tent}{\mathbf{t}}
\newcommand{\case}{\mathbf{case}}
\newcommand{\of}{\mathbf{of}}
\newcommand{\botexp}{\mathbf{\bot}}
\newcommand{\lett}{\mathbf{let}}
\newcommand{\inn}{\mathbf{in}}
\newcommand{\re}{\mathbf{r}}
\newcommand{\eqdef}{\stackrel{\mathrm{Def}}{=}}
\newcommand{\eqmu}{\stackrel{\mu}{=}}
\newcommand{\eqnu}{\stackrel{\nu}{=}}
\newcommand{\eqrec}{\stackrel{\mathrm{rec}}{=}}
\newcommand{\ire}[2]{#1\,\mathbf{r}\,#2}
\newcommand{\inl}[1]{\Left(#1)}
\newcommand{\inr}[1]{\Right(#1)}
\newcommand{\projl}{\mathbf{\pi_{Left}}}
\newcommand{\projr}{\mathbf{\pi_{Right}}}
\newcommand{\all}[1]{\forall #1\,}
\newcommand{\ex}[1]{\exists #1\,}
\newcommand{\IFP}{\mathrm{IFP}}  
\newcommand{\RIFP}{\mathrm{RIFP}}  
\newcommand{\rec}{\mathbf{rec}}
\newcommand{\tfix}[2]{\mathbf{fix}\,#1\,.\,#2}
\newcommand{\ssp}{\rightsquigarrow}
\newcommand{\newprintp}{\overset{\mathrm{p}}{\ssp}}
\newcommand{\newprintptr}{\mathbin{\stackrel{\mathrm{p}}{\ssp}\kern-.25em{}^*}}
 \newcommand{\muprint}{\print^{\!\!\!\!\!\!\!\!\!\!\!\ {}^{\scriptstyle \mu} \,\,\,\,}}
 \newcommand{\mubprint}{\print^{\!\!\!\!\!\!\!\!\!\!\!\!\!\ {}^{\scriptstyle \mu\bot} \,}}
 \newcommand{\nuprint}{\print^{\!\!\!\!\!\!\!\!\!\!\!\ {}^{\scriptstyle\nu} \,\,\,\,}}
 \newcommand{\nubprint}{\print^{\!\!\!\!\!\!\!\!\!\!\!\!\!\ {}^{\scriptstyle\nu\bot} \,}}
\newcommand{\print}{\Longrightarrow}
\newcommand{\valu}[2]{[\![#1]\!]#2}
\newcommand{\val}[1]{[\![#1]\!]}
\newcommand{\rk}{\mathbf{rk}}
\newcommand{\cl}[1]{\mathrm{Pr}(#1)}
\newcommand{\tdata}{E_\mathrm{t}}
\newcommand{\fdata}{E_\mathrm{f}}
\newcommand{\ftdata}{E_{\mathrm{ft}}}
\newcommand{\data}{E}
\newcommand{\rea}{\mathbf{R}}
\newcommand{\reah}{\mathbf{H}}
\newcommand{\dle}{\sqsubseteq}
\newcommand{\dlee}{\dle_{\data}} 
\newcommand{\reali}[1]{\tilde{#1}}
\newcommand{\idty}{\mathbf{id}}
\newcommand{\acc}{\mathbf{Acc}}
\newcommand{\wfi}{\mathbf{WfI}}
\newcommand{\prog}{\mathbf{Prog}}
\newcommand{\CL}{\mathbf{CL}}
\newcommand{\IND}{\mathbf{IND}}
\newcommand{\COCL}{\mathbf{COCL}}
\newcommand{\COIND}{\mathbf{COIND}}
\newcommand{\sci}{\mathbf{SCI}}
\newcommand{\hsci}{\mathbf{HSCI}}
\newcommand{\si}{\mathbf{SI}}
\newcommand{\hsi}{\mathbf{HSI}}
\newcommand{\mon}{\mathsf{mon}}
\newcommand{\mycomment}[1]{}
\newcommand{\bs}{\Downarrow}
\newcommand{\bigstep}{\Downarrow}
\newcommand{\allex}{\Diamond}
\newcommand{\munu}{\Box}
\newcommand{\dsort}{\delta}
\newcommand{\funsum}[2]{[#1+#2]}
\newcommand{\funpair}[2]{\langle #1,#2\rangle}
\newcommand{\pcv}[1]{\hat{#1}}
\newcommand{\valg}[1]{\val{#1}_{\G}}
\newcommand{\valtg}[1]{\valg{#1}}
\newcommand{\GC}{\mathbf{GC}}
\newcommand{\BT}{\mathbf{BT}}
\newcommand{\BTnc}{\mathbf{BT_{nc}}}
\newcommand{\pat}{\mathbf{Path}}
\newcommand{\less}{\prec}
\newcommand{\one}{\mathbf{1}}
\newcommand{\lang}{\mathcal{L}}
\newcommand{\ax}{\mathcal{A}}
\newcommand{\odata}{\Phi_{\bot}}
\newcommand{\otdata}{\Phi}
\newcommand{\oless}{\Phi^2_{\bot}}
\newcommand{\oeq}{\Phi^2_{\bot,\bot}}
\newcommand{\oteq}{\Phi^2}
\newcommand{\appr}[2]{\mathrm{appr}(#1,#2)}
\newcommand{\peq}[2]{\mathrm{eq}(#1,#2)}
\newcommand{\teq}[2]{\mathrm{teq}(#1,#2)}
\newcommand{\ocl}{\Phi^{\mathrm{op}}}
\newcommand{\isfun}{\mathbf{IsFun}}
\newcommand{\dom}{\mathrm{dom}}
\newcommand{\andi}[2]{\land^+(#1,#2)}
\newcommand{\andel}[1]{\land_l^-(#1)}
\newcommand{\ander}[1]{\land_r^-(#1)}
\newcommand{\oril}[2]{\lor_{l,#2}^+(#1)}
\newcommand{\orir}[2]{\lor_{r,#2}^+(#1)}
\newcommand{\ore}[3]{\lor^-(#1,#2,#3)}
\newcommand{\impi}[2]{\to^+_{#1}(#2)}
\newcommand{\impe}[2]{\to^{-}(#1,#2)}
\newcommand{\alli}[2]{\forall^{+}_{#1}(#2)}
\newcommand{\alle}[2]{\forall^{-}_{#2}(#1)}
\newcommand{\exi}[3]{\exists_{#3,#1}^+(#2)}
\newcommand{\exe}[2]{\exists^-(#1,#2)}
\newcommand{\clos}[1]{\mathbf{Cl}_{#1}}
\newcommand{\induct}[2]{\mathbf{Ind}_{#1}(#2)}
\newcommand{\indu}[1]{\mathbf{Ind}(#1)}
\newcommand{\induprime}[1]{\mathbf{Ind'}(#1)}
\newcommand{\cocl}[1]{\mathbf{CoCl}_{#1}}
\newcommand{\coinduct}[2]{\mathbf{CoInd}_{#1}(#2)}
\newcommand{\coind}[1]{\mathbf{CoInd}(#1)}
\newcommand{\coindprime}[1]{\mathbf{CoInd'}(#1)}
\newcommand{\congr}[3]{\mathbf{Cong}_{#3}(#1,#2)}
\newcommand{\refl}[1]{\mathbf{Refl}_{#1}}
\newcommand{\hsindu}[1]{\mathbf{HSInd}(#1)}
\newcommand{\scoind}[1]{\mathbf{SCoInd}(#1)}
\newcommand{\hscoind}[1]{\mathbf{HSCoInd}(#1)}
\newcommand{\sinduct}[2]{\mathbf{SInd}_{#1}(#2)}
\newcommand{\hsinduct}[2]{\mathbf{HSInd}_{#1}(#2)}
\newcommand{\scoinduct}[2]{\mathbf{SCoInd}_{#1}(#2)}
\newcommand{\hscoinduct}[2]{\mathbf{HSCoInd}_{#1}(#2)}
\newcommand{\subd}{\Delta}
\newcommand{\subdom}[1]{#1 \lhd D}
\newcommand{\subdoms}{\lhd D}
\newcommand{\tval}[2]{D^{#2}_{#1}}
\newcommand{\ftyp}[2]{#1 \Rightarrow #2}
\newcommand{\ep}[1]{\mathbf{ep}(#1)}
\newcommand{\epp}[1]{\mathbf{ep'}(#1)}
\newcommand{\epph}[1]{\mathbf{eph'}(#1)} 
\newcommand{\pt}[1]{\mathbf{pt}(#1)}  
\newcommand{\monproof}[1]{\mathbf{Mon}_{#1}}
\newcommand{\monprop}[2]{\mathrm{Mon}_{#1}(#2)}
\newcommand{\timesd}[1]{\nabla(#1)} 
\newcommand{\tri}{\mathbf{3}}
\newcommand{\bool}{\mathbf{2}}
\newcommand{\nat}{\mathbf{nat}}
\newcommand{\rat}{\mathbf{rat}}
\newcommand{\subrank}[2]{#1\upharpoonright #2}
\newcommand{\depth}[2]{\mathbf{depth}_{#1}(#2)}
\newcommand{\exone}{\exists_{1}}
\newcommand{\exor}{\oplus}
\newcommand{\bigexor}{\bigoplus}
\newcommand{\stream}[1]{#1^\omega}
\newcommand{\halt}{\mathrm{Halt}}
\newcommand{\Haskell}[1]{{\mathsf H}(#1)}
\newcommand{\too}{\ \to \ }
\newcommand{\toot}{\ \leftrightarrow \ }
\newcommand{\adummy}[1]{\Delta(#1)}  
\begin{document}

\title{Intuitionistic Fixed Point Logic\tnoteref{t1}} 
\tnotetext[t1]{
{\protect\includegraphics[scale = 0.05]{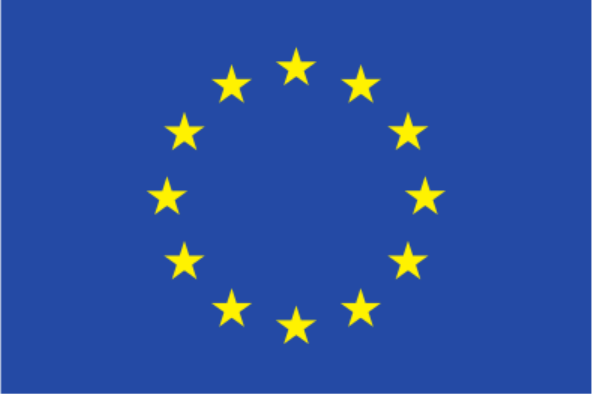}}
This work was supported by the International Research Staff
  Exchange Scheme (IRSES) No.~612638 CORCON and No.~294962 COMPUTAL of
  the European Commission, the JSPS Core-to-Core Program, A. Advanced
  research Networks and JSPS KAKENHI Grant Number 15K00015 as well as 
  the European Union’s Horizon 2020 research and innovation programme 
  under the Marie 
Sklodowska-Curie 
grant agreement No.~731143.}

 \author[1]{Ulrich Berger\corref{cor1}}
 \ead{u.berger@swansea.ac.uk}

\author[2]{Hideki Tsuiki} \ead{tsuiki.hideki.8e@kyoto-u.ac.jp}
\cortext[cor1]{Corresponding author}

\address[1]{Department of Computer Science, Swansea University, Swansea, United Kingdom}

\address[2]{Graduate School of Human and Environmental Studies, Kyoto University, Yoshida-Nihonmatsu, Kyoto, Japan}

\begin{frontmatter}


\begin{abstract}
  { 
We study the system $\IFP$ of 
intuitionistic fixed point logic, an extension of 
intuitionistic first-order logic by strictly positive 
inductive and coinductive definitions.
We define a realizability interpretation of $\IFP$ and use it to extract 
computational content from proofs about abstract structures specified by 
arbitrary classically true disjunction free formulas. 
The interpretation is shown to be sound with respect to a domain-theoretic 
denotational semantics and a corresponding lazy operational semantics
of a functional language for extracted programs.
We also show how extracted programs can be {translated} into Haskell.
As an application we extract a program converting the signed digit 
representation of real numbers to infinite Gray code from a proof
of inclusion of the corresponding coinductive predicates.
}
\end{abstract}

\begin{keyword}
Proof theory \sep
realizability \sep
program extraction \sep
induction \sep
coinduction \sep
exact real number computation
\end{keyword}

\end{frontmatter}

\tableofcontents

\section{Introduction}
\label{sec-introduction}

According to the Brouwer-Heyting-Kolmogorov interpretation of 
constructive logic, formulas correspond to data types and proofs 
to constructions of objects of these data types 
\cite{Troelstra73,MartinLoef84,Constable86,Troelstra88,TroelstraSchwichtenberg96,SchwichtenbergWainer12}. 
Moreover, by the Curry-Howard correspondence constructive
proofs can be directly represented in a typed $\lambda$-calculus such that 
proof normalization is modelled by $\beta$-reduction.
This tight connection between logic and computation
has led to a number of implementations of proof systems that support 
the extraction of programs from constructive proofs,  
e.g.~PX~\cite{Hayashi88},
Nuprl~\cite{Constable86},
Coq~\cite{CoqProofAssistant},
Minlog~\cite{SchwichtenbergMinlog06,BergerMiyamotoSchwichtenbergSeisenberger11}, 
Isabelle/HOL~\cite{Berghofer03},
Agda~\cite{Agda}.
In general, program extraction is
restricted to proofs about structures that are constructively given.
This can be considered a drawback since it excludes 
abstract mathematics done on a purely axiomatic basis. 
This paper introduces the formal system $\IFP$ of 
\emph{Intuitionistic Fixed Point Logic} as a basis for 
program extraction from proofs that does not suffer from this limitation.  
Preliminary versions of the system were presented in 
\cite{BergerCSL09,Berger10,Berger11,SeisenBerger12,BergerPetrovska18}. 

$\IFP$ is an extension of first-order logic by inductive and coinductive 
definitions, i.e., predicates defined as least and greatest fixed points
of strictly positive operators.
Program extraction is performed via a `uniform' 
realizability interpretation.
Uniformity concerns the 
interpretation of quantifiers: A formula $\forall x\,A(x)$ is realized
uniformly by one object $a$ that realizes $A(x)$ for all $x$, so $a$
may not depend on $x$. Dually, a formula $\exists x\,A(x)$ is realized
uniformly by one object $a$ that realizes $A(x)$ for some $x$, so $a$
does not contain a witness for $x$. The usual interpretations of 
quantifiers may be recovered by relativization, 
$\forall x\,(D(x) \to A(x))$ and $\exists x\,(D(x) \land A(x))$,
for a predicate $D$ that specifies that $x$ has some
  concrete representation.
The uniform interpretation of quantifiers makes $\IFP$ 
classically inconsistent with the scheme `realizability implies 
truth'~(see the remark after Lemma~\ref{lem-realizability} 
in Sect.~\ref{sec-realizability}).
The Minlog system~\cite{SchwichtenbergMinlog06}, which also
supports program extraction based on realizability, does permit a uniform
interpretation of quantifiers as well
but differs from $\IFP$ in other respects,
for example the treatment of inductive and coinductive definitions.

Besides the support of proofs about abstract structures
on an axiomatic basis, $\IFP$ has further features that distinguishes it
from other approaches to program extraction.
\emph{Classical logic\/}:
Although $\IFP$ is based on intuitionistic logic a fair amount of
classical logic is available. For example, 
soundness of realizability holds in the presence of 
any disjunction-free axioms that are classically true.
typical example is stability of equality, 
$\forall x,y\,(\neg\neg x=y \to x=y)$.
\emph{Partial computation\/}:
Like the majority of programming languages, $\IFP$'s language of extracted
programs admits general recursion and therefore partial, i.e., 
nonterminating computation. This makes it possible to extract
data representations that are inherently partial, such as 
infinite Gray code~\cite{Gianantonio99, Tsuiki02} 
(see also Sect.~\ref{sec-realnumbers}).
\emph{Infinite computation\/}:
Infinite data, as they naturally occur in exact real number computation,
can be represented by infinite computations. This is achieved by an
operational semantics where computations may continue forever outputting
arbitrary close approximations to the complete
(infinite) result at their finite stages (Sect.~\ref{sec-opsem}).
\emph{Haskell output\/}:
Extracted programs are typable and can be translated into executable Haskell code in a straightforward way.

\emph{Related work: Minlog.}
The motivation for this article mainly stems from recent developments in the
Minlog proof system~\cite{SchwichtenbergMinlog06}. Minlog implements a 
formal system which, from its very conception, is a constructive 
theory of computable objects and functionals with an effective 
domain-theoretic semantics~\cite{SchwichtenbergWainer12}.
In order to increase the expressiveness of the logic and the flexibility 
of program extraction this system has been extended by an elaborate 
`decoration' mechanism for the logical operations that allows for a 
fine control of computational content (this extension is also described 
in~\cite{SchwichtenbergWainer12}).
For example, an existential quantifier can be decorated as `computational' 
or `non-computational' which causes the extracted program to include the 
witnessing term or not. Since in the non-computational case no witness
is required, the range of the quantified variable no longer needs to be 
effectively (i.e. domain-theoretically) given but may be an abstract 
mathematical structure.
This new possibility of including abstract structures in Minlog
formalizations triggered the present article which studies the 
implications and the potential of a computationally meaningful
theory of abstract structures in isolation.
Minlog's `non-computational' decoration corresponds to the uniform
realizability interpretation of $\IFP$ mentioned earlier.
There are some differences between Minlog and $\IFP$ though. 
For example, regarding the logical system, in Minlog all logical operations 
except implication and universal quantification are defined in terms of clausal 
inductive definitions while in $\IFP$ they are primitive and inductive 
definitions are not in the format of clauses. Regarding computational content, 
Minlog's realizers are typed and realizability is defined in the style
of Kreisel's modified realizability~\cite{Kreisel59} whereas in $\IFP$ 
realizers are untyped and realizability is 
closer to Kleene~\cite{Kleene45} (albeit $\IFP$ realizers are not numbers 
but domain elements denoted by functional programs).

\emph{PX\/}. Another related system is PX~\cite{Hayashi88} which is based on Feferman's
system $T_0$ of explicit mathematics~\cite{Feferman79} and uses a version of realizability 
with truth to extract untyped programs from proofs. The main differences to $\IFP$ are
that PX has a fixed, constructively given, model similar to LISP expressions and treats
quantifiers in the usual `non-uniform' way. PX supports positive inductive definitions,
however, restricted to operators without computational content.

\emph{Further related work\/}.
Theories of inductive and coinductive definitions have been studied
extensively in the past. The proof-theoretic strength of classical iterated
inductive definitions has been determined 
in~\cite{BuFePoSi81}. 
A proof-theoretic analysis of a stronger system that is close to $\IFP$,
but based on classical logic, has been given 
in~\cite{Moellerfeld03}. 
In~\cite{Tupailo04} it was shown that
the proof-theoretic strength does not change if the base system is 
changed to intuitionistic logic.
Inductive definitions have also been studied in the context of 
constructive set theory~\cite{Aczel77,rathjen05a}, 
type theory~\cite{dybjersetzer:2003:indrekjour,nordvallforsbergSetzer2010inductiveinductive} and explicit mathematics~\cite{GlassRathjenSchlueter97}. 
In~\cite{AvigadTowsner09} and \cite{Zucker73}, Inductive definitions are
related to theories of finite type in the framework of G\"odel's
Functional Interpretation.
Propositional logics for inductive and coinductive definitions interpreted 
on (finite) labelled transition 
are known as 
\emph{modal $\mu$-calculi}~\cite{Kozen83,BradfieldStirling07}.
These systems are based on classical logic and are mainly concerned 
with determining the computational complexity of definable properties
aiming at applications in automatic program verification systems.
Computational aspects of induction and coinduction 
(coiteration and corecursion), in particular
questions regarding termination, are studied widely
in the context of inductive and coinductive types. 
The strongest and most far reaching normalization 
can be found in~\cite{Mendler91} and~\cite{Matthes01}.
A programming language for real numbers extending PCF has been studied 
in~\cite{Escardo96}. 
It has a small step operational semantics
that permits the incremental computation of digits, similar to our semantics
in Sect.~\ref{sec-opsem}.
Logical, computational, semantical and category-theoretical aspects 
of coinduction are studied in the context of 
coalgebra~\cite{JacobsRutten97,KupkeKurzPattinson04}. 
The representation of coinductive types in dependent type theories
and the associated problems are an intensive object of 
study~\cite{Geuvers92,Coquand94,HancockSetzer03,AbelPientkaSetzer13,BergerSetzer18}.
The computational complexity of corecursion has been studied 
in~\cite{LeivantRamyaa11}.
Realizability interpretation related to the one for $\IFP$
were also studied in~\cite{Hayashi88,Tatsuta98,Miranda-Perea05,BauerBlanck09}
(see the introduction
of~\cite{Berger10} for a discussion of similarities and differences).
In Constructive Analysis~\cite{BishopBridges86} and 
Computable Analysis~\cite{Weihrauch00} 
one works with represented structures 
and explicitly manipulates and reasons about these
representations. In contrast, in $\IFP$ representations remain 
implicit and are made explicit only through realizability.
Proof Mining~\cite{Kohlenbach08} treats real numbers as a represented space
but one can extract effective bounds from ineffective proofs
about abstract spaces without a constructive representation 
(see e.g.~\cite{Kohlenbach05,GerhardyKohlenbach08}).

\emph{Overview of the paper\/}.
\emph{Section~\ref{sec-ifp}} introduces the system $\IFP$. Among 
other things, the usual principle of wellfounded induction
is exhibited as an instance of strictly positive induction 
and shown to be strengthened by an abstract form of
Brouwer's Thesis.
The definitions are illustrated by 
an axiomatic specification of the 
real numbers and a definition of the natural numbers as an 
inductively defined subset of the reals.
Special attention is paid to a formulation of the Archimedean property as
an induction principle.

\emph{Section~\ref{sec-realizability}} begins with a definition of a Scott 
domain $D$ 
that serves as the semantic domain of simple untyped functional programming 
language with constructors and unrestricted recursion.
Then we introduce simple recursive types denoting sub domains of $D$ 
that serve as spaces of potential realizers of formulas
and show that the expected typing rules are valid.
We extend $\IFP$
to a system $\RIFP$ that contains new sorts
$\dsort$ and $\subd$ for elements and subdomains of $D$ as well as new terms, 
called programs and types, for denoting 
them. 
This is followed by a formal realizability interpretation
of $\IFP$ in $\RIFP$. The interpretation is optimized by
exploiting the fact that Harrop formulas, which
are formulas that do not contain a disjunction at a strictly positive
position, have trivial realizers (similar optimizations are available 
in the Minlog system).

In Section~\ref{sec-soundness} we prove the
Soundness Theorem~(Thm.~\ref{thm-soundness}) which shows that from an 
$\IFP$ proof of a formula $A$ from nc axioms one can extract a program provably realizing $A$.
For the proof we use an intermediate system $\IFP'$ which in the rules
for induction and coinduction for the least and greatest fixed point of 
an operator $\Phi$ requires in addition a proof of monotonicity of $\Phi$.
We provide a recursive definition of the program extracted from
an $\IFP$ derivation and give explicit constructions 
of realizers for derived principles such as wellfounded induction and its
variants introduced in Sect.~\ref{sec-ifp}.

\emph{Section~\ref{sec-realnumbers}} is devoted to a case study on exact 
real number computation that utilizes all the concepts introduced so far.
It is shown that the well-known signed digit representation and also
the infinite (and partial!) Gray code representation
can be obtained through realizability from simple coinductively
defined predicates $\C$ and $\G$.  
A detailed $\IFP$ proof that $\C$ is contained in $\G$ is given and
from it a program is extracted that converts the signed digit 
representation into infinite Gray code. 
The equivalence of the extracted program with the one given in~\cite{Tsuiki02} is
also proved, which guarantees the correctness of the original program.

\emph{Section~\ref{sec-opsem}} introduces an operational 
semantics 
of programs that is able to capture infinite computation. 
While the First Adequacy Theorem~(Thm.~\ref{thm-adequacy}) states that
an inductively defined bigstep reduction relation $\muprint$ captures 
the semantics of programs $M$ with a finite total denotation, i.e.,
$M\muprint a$ iff $a = \val{M}$,
the Second Adequacy Theorem~(Thm.~\ref{thm-adequacytwo}) establishes an 
equivalence of programs that have a possibly \emph{infinite} and
\emph{partial} denotational semantics 
with a \emph{small step} reduction relation. This means that it is 
possible to incrementally
compute arbitrary close approximations to a program that has an infinite
value.
Sect.~\ref{sec-opsem} closes with a concrete example of 
infinite computation using a concrete instance of the results 
of Sect.~\ref{sec-realnumbers}.

\emph{Section~\ref{sec-conclusion}} concludes the paper with a summary 
and a discussion of open problems and directions for further work.


\section{Intuitionistic fixed point logic}  
\label{sec-ifp}
We introduce the logical system $\IFP$ of    
\emph{intuitionistic fixed point logic} as a basis for the formalization
of proofs which can be subject to program extraction. $\IFP$ can be viewed as
a subsystem of second-order logic with its standard classical set-theoretic
semantics. We first define the language and the proof rules of $\IFP$ and then
draw some simple consequences demonstrating that $\IFP$ includes
common principles such as wellfounded induction and permits a natural
formalization of real numbers as a real closed Archimedean field.
\subsection{The formal system $\IFP$}
\label{sub-ifp}
$\IFP$ is an extension of intuitionistic first-order predicate logic by
least and greatest fixed points of strictly positive operators.
Rather than a fixed system $\IFP$ is a \emph{schema} for
a family of systems suitable to formalize different mathematical fields.  
An \emph{instance} of $\IFP$ is given by a \emph{many-sorted first-order 
language} $\lang$ and a set of \emph{axioms} $\ax$ described below.
Hence $\lang$ consists of
\begin{itemize}
\item[(1)] \emph{Sorts} $\iota, \iota_1,\ldots$ as names for spaces 
of abstract mathematical objects. 
\item[(2)] \emph{Terms} $s,t,\ldots$ 
with a notion of free variables and a notion of
substitution.  First order terms are the main example but
we will also consider term languages with binding mechanism 
(Sect.~\ref{sec-realizability}).
\item[(3)] \emph{Predicate constants}, each of fixed arity $(\vec \iota)$. 
\end{itemize}
Relative to a language $\lang$ we define simultaneously 
\begin{description}
\item[Formulas] $A,B$: 
\emph{Equations} $s=t$ ($s,t$ terms of the same sort),
$P(\vec t)$ ($P$ a predicate which is 
not an abstraction, $\vec t$ a tuple of terms whose sorts fit the arity of
$P$), \emph{conjunction} $A\land B$, \emph{disjunction} $A\lor B$, 
\emph{implication} $A\to B$, \emph{universal and existential quantification}
$\forall x\,A$, $\exists x\,A$. 

\item[Predicates] $P,Q$: 
\emph{Predicate variables} $X,Y,\ldots$ (each of fixed arity),
\emph{predicate constants}, 
\emph{abstraction} 
$\lambda \vec x\,A$ (arity given by the sorts of the variable tuple $\vec x$), 
$\mu(\Phi)$, $\nu(\Phi)$ (arities = arity of $\Phi$).
\item[Operators] $\Phi$: $\lambda X\,P$ where $P$ must be
strictly positive in $X$ (see below) and the arities of $X$ and $P$
must coincide. The arity of $\lambda X\,P$ is this common arity.
\end{description}
\emph{Falsity} is defined as $\False \eqdef\mu(\lambda X\,X)()$ where $X$ is a 
predicate variable of arity $()$.

By an \emph{expression} we mean a formula, predicate, or operator.
When considering an expression it is tacitly assumed that the arity
of a predicate and the sorts of terms it is applied to fit.
The set of free object variables and the set of free predicate variables 
of an expression are defined as expected. 

An occurrence of an expression $E$ is
\emph{strictly positive (s.p.)} in an expression $F$ if that occurrence
is not within the premise of an implication.
A predicate $P$ is strictly positive in a predicate variable $X$
if every occurrence of $X$ in $P$ is strictly positive.
The requirement of strict positivity could be easily relaxed to mere 
positivity. However, since non-strict positivity will not be required at any 
point, but would come at the cost of more complicated proofs, we refrain from 
this generalization. A similar remark applies to the strict positivity
condition for fixed point types in Sect.~\ref{sub-types}.

We adopt the following \emph{notational conventions\/}.
Application of an abstraction 
to terms, $(\lambda \vec x\,A)(\vec t)$, is defined as $A[\vec t/\vec x]$ 
(therefore $P(\vec t)$ is now defined for all predicates $P$ and 
terms $\vec t$ of fitting arity). 
Application of an operator $\Phi = \lambda X\,P$ to a predicate
$Q$, $\Phi(Q)$, is defined as $P[Q/X]$. 
Instead of $P(\vec t)$ we also write $\vec t \in P$ and
a definition $P \eqdef \mu(\Phi)$ will also be written 
$P \eqmu \Phi(P)$. 
The notation $P \eqnu \Phi(P)$ has a similar meaning.
If $\Phi = \lambda X \lambda \vec x\,A$, then we also write
$P(\vec x) \eqmu A[P/X]$ and
$P(\vec x) \eqnu A[P/X]$ instead of 
$P \eqdef \mu(\Phi)$ and $P \eqdef \nu(\Phi)$.
Inclusion of predicates (of the same arity), $P\subseteq Q$, is defined as 
$\forall \vec x\,(P(\vec x) \to Q(\vec x))$, intersection, $P\cap Q$, as
$\lambda \vec x\,(P(\vec x) \land Q(\vec x))$, and union, $P\cup Q$, as
$\lambda \vec x\,(P(\vec x) \lor Q(\vec x))$.
Pointwise implication, $P\Rightarrow Q$, is defined as
$\lambda \vec x\,(P(\vec x) \to Q(\vec x))$. 
Hence $P\subseteq Q$ is the same
as $\forall \vec x\,(P\Rightarrow Q)(\vec x)$.
Equivalence, $A \leftrightarrow B$, is defined as 
$(A \to B) \land (B\to A)$, 
and extensional equality of predicates, $P \equiv Q$, as
$P\subseteq Q \land Q \subseteq P$.

Negation, $\neg A$, is defined as $A \to \False$ and inequality, 
$t \neq s$, as $\neg (t=s)$. Bounded quantification, 
$\forall x \in A\,B(x)$ and $\exists x \in A\,B(x)$, is defined, 
as usual, as $\forall x\,(A(x) \to B(x))$ and 
$\exists x\,(A(x) \land B(x))$.
Exclusive `or' and unique existence are defined as 
$A \exor B \eqdef (A \lor B) \land \neg (A \land B)$,
$\exone x\,A(x) \eqdef \exists x\,\forall y\,(A(y) \leftrightarrow x = y)$.
If we write $E=E'$ for expressions that are not terms, 
we mean that $E$ and $E'$ are 
syntactically equal up to renaming of bound variables.

An expression is called \emph{non-computational (nc)} if it
is disjunction-free and contains no free predicate variables.  
Our realizability interpretation (Sect.~\ref{sec-realizability})
will be defined such that nc formulas do not carry computational content and
are interpreted by themselves. The reader may wonder why existential quantifiers
aren't banned from nc formulas as well. The reason is that quantifiers are interpreted
uniformly, in particular, existential quantifiers are not witnessed.
The existential quantifier in intuitionisitc arithmetic, 
which \emph{is} witnessed, can be expressed in $\IFP$ by 
$\exists x\,(\NN(x) \land A(x))$, where $\NN(x)$ means
`$x$ is a natural number' and the predicate $\NN$ is defined using disjunction 
(see Sect.~\ref{subsub-nat}).

The set of \emph{axioms} $\ax$ of an $\lang$-instance of $\IFP$ can be
any set of closed $\lang$-formulas. We denote that instance by 
$\IFP(\ax)$ leaving the language implicit since it is usually 
determined by the axioms.
In order for $\IFP(\ax)$ to admit a sound realizability 
interpretation (Sects.~\ref{sec-realizability} and \ref{sec-soundness}), 
the axioms in $\ax$ are required to be non-computational. 
The reason is that
this guarantees that they have trivial
computational content and are equivalent to their realizability 
interpretations, as will be explained in Sect.~\ref{sec-realizability}.
Therefore, it suffices that the axioms are true in the intended structure 
where ``true'' can be interpreted in the sense of classical logic.   
For example, if one is willing to accept a certain amount of classical 
logic (as we do in this paper) one may include in $\ax$ the 
\emph{stability axiom}
\[ \forall \vec x (\neg\neg A \to A) \]
for every nc formula $A$ with free variables $\vec x$.

The \emph{proof rules} of $\IFP$ include the usual natural deduction rules for 
intuitionistic first-order logic with equality 
(see below or e.g.~\cite{SchwichtenbergWainer12}). In addition there are 
the following rules for strictly positive induction and coinduction:
\[
  \infer[\CL(\Phi)]
{
  \Phi(\mu(\Phi)) \subseteq \mu(\Phi)   
  }
{
 }
\qquad
  \infer[\IND(\Phi,P)]
{
  \mu(\Phi) \subseteq P 
  }
{
 \Phi(P) \subseteq P
 }
\]
\[
  \infer[\COCL(\Phi) ]
{
  \nu(\Phi) \subseteq \Phi(\nu(\Phi))   
  }
{
 }
\qquad
  \infer[\COIND(\Phi,P)]
{
  P \subseteq \nu(\Phi) 
  }
{
  P \subseteq \Phi(P)
 }
\]
These rules can be applied in any context, that is, in 
the presence of free object and predicate variables as well as assumptions.

Intuitively, $\mu(\Phi)$ is the predicate defined inductively by
the rules encoded by the operator $\Phi$. For example natural numbers 
(viewed as a subset of the real numbers) can be defined as 
$\NN \eqdef \mu(\lambda X\,\lambda x\,(x=0\lor X(x-1)))$ corresponding to 
the rules `$\NN(0)$' and `if $\NN(x-1)$, then $\NN(x)$'.
The closure axiom $\CL(\Phi)$ expresses that $\mu(\Phi)$
is closed under the rules, the induction rule $\IND(\Phi,P)$ says that
$\mu(\Phi)$ is the smallest predicate closed under the rules 
(see also Sect.~\ref{subsub-nat}). 
Dually, $\nu(\Phi)$ is a coinductive predicate defined by `co-rules'. 
For example, the elements of a partial order which start an infinite
descending path can be characterized by the predicate 
$\pat = \nu(\lambda X\,\lambda x\,\exists y\,(y < x \land X(y)))$ 
(see also  Sect.~\ref{sub-wf}). 
Hence if $\pat(x)$, then $\pat(y)$ for some $y<x$ ($\COCL(\Phi)$), and 
$\pat$ is the largest predicate with that property ($\COIND(\Phi,P)$).

The existence of $\mu(\Phi)$ and $\nu(\Phi)$ is guaranteed, essentially,
by Tarski's fixed point theorem applied to the complete lattice of
predicates (of appropriate arity) ordered by inclusion
and the operator $\Phi$, which is monotone due to its strict positivity.
A simple but important observation is that $\mu(\Phi)$ and 
$\nu(\Phi)$ are (provably in $\IFP$)
{least and greatest}
 fixed points of $\Phi$, {respectively}. 
For example, $\mu(\Phi)\subseteq\Phi(\mu(\Phi))$ follows by induction:
One has to show $\Phi(\Phi(\mu(\Phi)))\subseteq \Phi(\mu(\Phi))$, which,
by monotonicity, follows from the closure axiom 
$\Phi(\mu(\Phi))\subseteq \mu(\Phi)$.

Induction and coinduction can be strengthened to the derivable principles 
of \emph{strong and half-strong induction and coinduction} (which will
be used in Sect.~\ref{sec-realnumbers}).
\[
  \infer[\si(\Phi,P)]{
  \mu(\Phi) \subseteq P 
}{
\Phi(P \cap \mu(\Phi)) \subseteq P 
}
\qquad
 \infer[\sci(\Phi,P)]{
  P \subseteq \nu(\Phi)
}{
P \subseteq \Phi(P \union \nu(\Phi))
}
\]
\[
  \infer[\hsi(\Phi,P)]{
  \mu(\Phi) \subseteq P 
}{
\Phi(P) \cap \mu(\Phi) \subseteq P 
}
\qquad
 \infer[\hsci(\Phi,P)]{
  P \subseteq \nu(\Phi)
}{
P \subseteq  \Phi(P) \union \nu(\Phi)
}
\]
It is clear that these proof rules are indeed strengthenings of ordinary 
s.p.\ induction since their premises are weaker due to the inclusions
\[ \Phi(P \cap \mu(\Phi)) \subseteq \Phi(P) \cap \mu(\Phi) \subseteq \Phi(P)\] 
\[ \Phi(P \cup \nu(\Phi)) \supseteq \Phi(P) \cup \nu(\Phi) \supseteq \Phi(P)\] 
which follow from the monotonicity of $\Phi$.
The derivations of these principles in $\IFP$ are straightforward. For example,
assuming the premise of $\si(\Phi,P)$, $\Phi(P \cap \mu(\Phi)) \subseteq P$, 
one defines another operator $\Psi = \lambda X\,\,\Phi(X \cap \mu(\Phi))$ so    
that $\Psi(P) \subseteq P$. Then, $\mu(\Psi) \subseteq P$ by induction.  
Hence it suffices to show $\mu(\Phi) \subseteq \mu(\Psi)$.
The converse inclusion, $\mu(\Psi) \subseteq \mu(\Phi)$, follows by induction 
since $\Psi(\mu(\Phi)) \equiv \Phi(\mu(\Phi))\subseteq \mu(\Phi)$.
But now $\mu(\Phi) \subseteq \mu(\Psi)$ follows by induction since
$\Phi(\mu(\Psi)) \subseteq \Phi(\mu(\Psi) \cap \mu(\Phi)) = \Psi(\mu(\Psi)) \subseteq
\mu(\Psi)$. The other proofs are similar.
Despite their derivability we will adopt these strengthenings of induction 
and coinduction as genuine rules of $\IFP$ since we can realize 
them by programs that are simpler than those that would be extracted from their 
derivations (see Thm.~\ref{thm-soundness}).

When defining the syntax of $\IFP$ we deliberately left open the exact
structure of terms. This will give us greater flexibility regarding different
instantiations of $\IFP$.
All we need to require of terms, in order to guarantee that the theorems of $\IFP$ 
are true with respect to the usual Tarskian semantics, is that their semantics 
satisfies a `substitution lemma', that is 
\[\valu{t[r/x]}{\eta} = \valu{t}{\eta[x \mapsto \valu{r}{\eta}]}.\] 
The rest follows from the Tarskian soundness of the rules of intuitionistic 
predicate logic and the existence of least and greatest fixed points of 
monotone predicate transformers as explained above.

Note that, since $\False$ is defined as $\mu(\lambda X\,X)()$, the 
schema 
\emph{ex-falso-quodlibet}, $\False \to A$, 
follows from $A\to A$ by induction. 

For the proof of the Soundness Theorem and the description of the
program extraction procedure (Sect.~\ref{sec-soundness}) it will be convenient
to denote $\IFP$ derivations by derivation terms and describe the
proof calculus through an inductive definition of a set of
derivation judgements $\Gamma \vdash d : A$ 
where $\Gamma$ is a context of assumptions, 
$d$ is a derivation term in that context, 
and $A$ is the formula proved by the derivation.
Derivations are defined relative to a given set $\ax$ of 
\emph{axioms} consisting of pairs $(o,A)$ where 
$A$ is any closed formula and
$o$ is the name of the axiom,
though the Soundness Theorem holds only under nc axioms.

\begin{center}
$\Gamma, u:A \vdash u : A$
\hspace{3em} 
$\Gamma \vdash o:A$\quad ($(o,A)\in \ax$)
\\[0.5em]
$\Gamma \vdash \refl{t} : t=t$
\hspace{3em} 
\AxiomC{$\Gamma\vdash d:A[s/x]$}
\AxiomC{$\Gamma\vdash e:s=t$}
             \BinaryInfC{$\Gamma \vdash \congr{d}{e}{\lambda x\,A} : A[t/x]$}
            \DisplayProof 
\\[0.5em]
\AxiomC{$\Gamma\vdash d:A$}
\AxiomC{$\Gamma\vdash e:B$}
             \BinaryInfC{$\Gamma \vdash \andi{d}{e} : A \land B$}
            \DisplayProof 
\hspace{1em} 
\AxiomC{$\Gamma \vdash d: A \land B$}
       \UnaryInfC{$\Gamma \vdash \andel{d} : A$}
            \DisplayProof 
\hspace{1em} 
\AxiomC{$\Gamma \vdash d: A \land B$}
       \UnaryInfC{$\Gamma \vdash \ander{d} : B$}
            \DisplayProof 
\\[0.5em]
\AxiomC{$\Gamma\vdash d:A$}
             \UnaryInfC{$\Gamma \vdash \oril{d}{B}:A\lor B$}
            \DisplayProof 
\hspace{3em} 
\AxiomC{$\Gamma\vdash d:B$}
             \UnaryInfC{$\Gamma \vdash \orir{d}{A}:A\lor B$}
            \DisplayProof \\[0.5em]
\AxiomC{$\Gamma\vdash d:A \lor B$}
\AxiomC{$\Gamma\vdash e:A \to C$}
\AxiomC{$\Gamma\vdash f:B\to C$}
             \TrinaryInfC{$\Gamma \vdash \ore{d}{e}{f}:C$}
            \DisplayProof \ \ \ \ 
\\[0.5em]
\AxiomC{$\Gamma, u:A\vdash d:B$}
             \UnaryInfC{$\Gamma \vdash \impi{u:A}{d} : A\to B$}
            \DisplayProof 
\hspace{3em} 
\AxiomC{$\Gamma\vdash d:A \to B$}  
\AxiomC{$\Gamma\vdash e:A$}
             \BinaryInfC{$\Gamma \vdash \impe{d}{e}:B$}
            \DisplayProof \ \ \ \ 
\\[0.5em]
\AxiomC{$\Gamma \vdash d : A$}
             \UnaryInfC{$\Gamma \vdash \alli{x}{d}:\forall x\,A$}
            \DisplayProof
($x$ not free in $\Gamma$) 
\AxiomC{$\Gamma \vdash d : \forall x\,A$}
             \UnaryInfC{$\Gamma \vdash \alle{d}{t}:A[t/x]$}
            \DisplayProof 
\\[0.5em]
\AxiomC{$\Gamma \vdash d : A[t/x]$}
             \UnaryInfC{$\Gamma \vdash \exi{t}{d}{\lambda x\,A}:\exists x\,A$}
            \DisplayProof
\AxiomC{$\Gamma \vdash d : \exists x\,A$}
\AxiomC{$\Gamma \vdash e : \forall x\,(A \to B)$}
             \BinaryInfC{$\Gamma \vdash \exe{d}{e}:B$}
            \DisplayProof 
($x$ not free in $B$) 
\\[0.5em]
$\Gamma \vdash \clos{\Phi} : \Phi(\mu(\Phi))\subseteq \mu(\Phi)$
\hspace{3em} 
\AxiomC{$\Gamma \vdash d : \Phi(P)\subseteq P$}
             \UnaryInfC{$\Gamma \vdash \induct{\Phi,P}{d} : \mu(\Phi)\subseteq P$}
            \DisplayProof 
\\[0.5em]
$\Gamma\vdash\cocl{\Phi}:\nu(\Phi) \subseteq \Phi(\nu(\Phi))$
\hspace{3em} 
\AxiomC{$\Gamma \vdash d : P \subseteq \Phi(P)$}
             \UnaryInfC{$\Gamma \vdash \coinduct{\Phi,P}{d} : P \subseteq \nu(\Phi)$}
            \DisplayProof 
\\[0.5em]
\AxiomC{$\Gamma \vdash d : \Phi(P)\cap \mu(\Phi)\subseteq P$}
             \UnaryInfC{$\Gamma \vdash \hsinduct{\Phi,P}{d} : \mu(\Phi)\subseteq P$}
            \DisplayProof 
\hspace{3em} 
\AxiomC{$\Gamma \vdash d : \Phi(P\cap\mu(\Phi))\subseteq P$}
             \UnaryInfC{$\Gamma \vdash \sinduct{\Phi,P}{d} : \mu(\Phi)\subseteq P$}
            \DisplayProof 
\\[0.5em]
\AxiomC{$\Gamma \vdash d : P \subseteq \Phi(P)\cup\nu(\Phi)$}
             \UnaryInfC{$\Gamma \vdash \hscoinduct{\Phi,P}{d} : P \subseteq \nu(\Phi)$}
            \DisplayProof
\hspace{3em}  
\AxiomC{$\Gamma \vdash d : P \subseteq \Phi(P\cup\nu(\Phi))$}
             \UnaryInfC{$\Gamma \vdash \scoinduct{\Phi,P}{d} : P \subseteq \nu(\Phi)$}
            \DisplayProof 
\end{center}
Note that symmetry and transitivity of equality can be derived
from reflexivity and the congruence rule.

\subsection{Wellfounded induction and Brouwer's Thesis}
\label{sub-wf}
The principle of \emph{wellfounded induction} is
an induction principle for elements in the accessible or wellfounded part 
of a binary relation $\less$
(definable in the language of the given instance of $\IFP$).
We show that it is an instance of strictly positive induction:
The 
\emph{accessible part} of $\less$ is defined 
inductively by
\[\acc_{\less}(x) \eqmu \forall y \less x\, \acc_{\less}(y)\] 
that is, $\acc_{\less} = \mu(\Phi)$ where 
$\Phi \eqdef \lambda X\,\lambda x\,\forall y \less x\,X(y)$.
A predicate $P$ is called
\emph{progressive} if $\Phi(P)\subseteq P$, that is, $\prog_\less(P)$ holds where
\[\prog_{\less}(P) \eqdef 
    \forall x\,(\forall y\less x P(y) \to P(x))\,.  \]
Therefore, the principle of wellfounded induction, 
which states that a progressive predicate holds on 
the accessible part of $\less$, 
is a direct instance of the rule of strictly positive induction:
\[
\infer[\hbox{$\wfi_\less(P)$}]{
  \acc_\less\subseteq P 
}{
\prog_{\less}(P)
}
\]
In most applications $P$ is of the form
$A \Rightarrow P$.  
The progressivity of $A\Rightarrow P$ can be equivalently
written as progressivity of $P$ relativized to $A$,
\[\prog_{\less,A}(P) \eqdef 
    \forall x\in A\,(\forall y \in A\, (y\less x \to P(y)) \to P(x))\]
and the conclusion becomes
$\acc_\less\subseteq A\Rightarrow P$, equivalently,
$\acc_\less\cap A \subseteq P$.
\[
\infer[\hbox{$\wfi_{\less,A}(P)$}]{
  \acc_\less\cap A\subseteq P 
}{
\prog_{\less,A}(P)
}
\]
Dually to the accessibility predicate one can define
for a binary relation a path predicate
\[\pat_{\less}(x) \eqnu \exists y \less x\, \pat_{\less}(y)\] 
that is, $\pat_{\less} = \nu(\Phi)$ where 
$\Phi \eqdef \lambda X\,\lambda x\,\exists y \less x\,X(y)$.
Intuitively, $\pat_{\less}(x)$ states that there is an infinite descending
path $\ldots x_2 \less x_1 \less x$.

With the axiom of choice and classical logic one can show that 
$\neg\pat_{\less}(x)$ implies $\acc_{\less}(x)$ 
(the converse holds even intuitionistically),
which can be viewed as an abstract form of Brouwer's Thesis:
\[ \BT_\less \qquad \forall x\,(\neg\pat_{\less}(x) \to \acc_{\less}(x)) \]
In conjunction with wellfounded induction, $\BT_\less$
says that $\less$-induction is valid for all elements without 
infinite $\less$-descending path.

If $\less$ is defined in a 
disjunction-free way, then $\BT_\less$ is a true nc formula which
can be postulated as an nc axiom.
By $\BTnc$ we denote the schema $\BT_{\less}$ for any binary nc predicate 
$\less$.
In Sect.~\ref{subsub-ai} we will use $\BTnc$ to justify
a principle called `Archimedean Induction' 
which in turn will be needed in Sect.~\ref{sec-realnumbers}.

\emph{Remark\/}.                         
Brouwer's original thesis which he used
to justify Bar Induction is obtained from $\BT_\less$ by defining
for a `bar predicate' $P$ on finite sequences of
natural numbers the relation 
$y \less x \eqdef \neg P(x) \land \exists a\, (y = a x)$,
where $a x$ denotes the sequence $x$ prefixed with $a$
(see e.g.~\cite{Veldman06}).
Bar Induction on a predicate $Q$ is then equivalent to $\wfi_{\less}(Q)$.

\subsection{Example: Real numbers}
\label{sub-ex-reals}
We illustrate the concepts introduced so far by an instance of $\IFP$
providing an abstract specification of real numbers. 
This 
will be
the basis for the program extraction case study in 
Sect.~\ref{sec-realnumbers}. Hence we will take care to 
postulate only non-computational axioms.
\subsubsection{The language of real numbers}
\label{subsub-basic} 
The language of the real numbers is given by
\begin{itemize}
\item[(1)] \emph{Sorts}: 
One sort $\iota$ as a name for the set of real numbers.
\item[(2)] \emph{Terms}: 
First-order terms built from the constants and function symbols 
$0,1,+,-,*,/,2^{(\cdot)} \mbox{(exponentiation)},\max$.
Further function symbols may be added on demand.
We set $|x| \eqdef \max(x,-x)$.
\item[(3)] \emph{Predicate constants}: $<,\le$.
\end{itemize}

\subsubsection{The axioms of real numbers}
\label{subsub-axioms}

As axioms we may choose any disjunction-free formulas that 
are true in the real numbers.  
As such, we define $\ax_R$ that consists of
a disjunction-free 
formulation of the axioms of real-closed fields, equations 
for exponentiation, the defining axiom for $\max$
\[
\max(x,y)\le z \leftrightarrow y\le z\land x\le z,
\] 
stability of $=,\le,<$, 
as well as 
$\AP$ (Archimedean property) that will be defined in Sect.~\ref{subsub-infinite} and
Brouwer's Thesis for nc predicates ($\BTnc$) introduced in Sect.~\ref{sub-wf}.
In the remainder of Sect.~\ref{sec-ifp} and also in Sect.~\ref{sec-realnumbers}
all proofs take place in $\IFP(\ax_R)$.

\subsubsection{Natural numbers}
\label{subsub-nat}

The natural numbers, considered as a subset of the 
real numbers, can be defined inductively by
\begin{eqnarray*}
\NN(x) &\eqmu& x = 0 \lor \NN(x-1)
\end{eqnarray*}
which is shorthand for  $\NN \eqdef \mu(\Phi_\NN)$ where 
  $\Phi_\NN
  \eqdef \lambda X\,\lambda x\,\,(x = 0 \lor X(x-1))$.      
Equivalently, one could define 
$\NN(x) \eqmu x = 0 \lor \exists y\,(\NN(y) \land x = y+1)$.
The closure and induction rules for $\NN$ are literally
\[
  \infer[]
{
  \forall x\,((x=0 \lor \NN(x-1)) \to \NN(x))
  }
{
 }
\qquad
  \infer[]
{
  \forall x\in\NN\, P(x)
  }
{
  \forall x\,((x=0 \lor P(x-1)) \to P(x))
 }
\]
equivalently (using equality reasoning and axioms for real numbers),
\[
  \infer[]
{
  0 \in \NN 
  }
{
 }
\qquad
  \infer[]
{
  \forall x\in \NN\,(x+1\in\NN)
  }
{
 }
\qquad
  \infer[]
{
  \forall x\in\NN\,P(x)
  }
{
  P(0) \land \forall x\,(P(x) \to P(x+1))
 }
\]
The missing Peano axiom, $\forall x\,(\NN(x) \to x+1\neq 0)$,
follows from the formula $\forall x\,(\NN(x) \to 0 \le x)$ which
can be proven by induction.

Strong induction on natural numbers is equivalent to
\[
  \infer[]
{
  \forall x\in\NN\,P(x)
  }
{
  P(0) \land \forall x\in\NN\,(P(x) \to P(x+1))
 }
\]

The 
rational numbers $\QQ$ can be defined from the 
natural numbers as usual, 
for example 
$\QQ(q) \eqdef \exists x,y,z\in\NN\,(z \neq 0 \land q\cdot z = x-y)$.

 \begin{example}\label{ex-sum1}
   We prove that the sum of two natural numbers is a natural number, which is expressed as 
   $\forall x, y\, (\NN(x) \to \NN(y) \to \NN(x+y))$.  
   An addition program for natural numbers will be extracted from this proof in Example \ref{ex-sum2}.
   Suppose that $x$ satisfies $\NN(x)$.  We prove $\forall y (\NN(y) \to \NN(x+y))$ by induction.
   Thus, we need to prove $\forall y\ (y = 0 \lor \NN(x+(y-1)) \to \NN(x+y))$. If $y=0$, then $\NN(x+y)$ holds by
   the assumption and 
$x+0 = x$.  
If $\NN(x+(y-1))$, then $y = 0 \lor \NN((x+y)-1)$ since $x+(y-1) = (x+y)-1$.  
Therefore, $\NN(x+y)$ holds by  the closure rule.
 \end{example}

\subsubsection{Infinite numbers and the Archimedean property}
\label{subsub-infinite}
As an example of a coinductive definition we define infinite numbers
by
\begin{eqnarray*}
\infty(x) &\eqnu& x\ge 0 \land \infty(x-1)\,.
\end{eqnarray*}
Hence a real number is infinite iff by repeatedly subtracting $1$ one
always stays non-negative (and hence positive).

The Archimedean property of real numbers can be expressed by stating
that there are no infinite numbers: 
\[\AP\qquad \forall x\,\neg\infty(x)\]
Since this is a true nc formula we include it as an axiom for the real numbers.

To give simple examples of proofs by induction and coinduction we show
\begin{lemma}
\label{lem-infty}
$\forall x\,(\infty(x) \leftrightarrow \forall y\in\NN\,y \le x)$.
\end{lemma}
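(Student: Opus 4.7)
The plan is to prove the two directions separately, using induction on $\NN$ for the forward implication and coinduction for $\infty$ for the backward implication. The key design choice is how to formulate the induction/coinduction predicate so that the step case lines up with one unfolding of the fixed point for $\infty$.

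For the direction $\infty(x) \to \forall y\in\NN\,y\le x$, a naive induction on $y$ with $P(y) \eqdef y\le x$ (for a fixed $x$ with $\infty(x)$) fails: the step case would have to derive $y\le x$ from $y-1\le x$, which is in general false. Instead I would fix the predicate $P(y) \eqdef \forall x\,(\infty(x)\to y\le x)$ and apply $\IND(\Phi_\NN,P)$, i.e.\ prove $\forall y\,((y=0\lor P(y-1))\to P(y))$ by $\lor^-$. In the base disjunct $y=0$: given $\infty(x)$, one application of $\COCL(\infty)$ gives $x\ge 0\land \infty(x-1)$, hence $0\le x$. In the step disjunct $P(y-1)$: given $\infty(x)$, $\COCL(\infty)$ yields $\infty(x-1)$; then instantiating the induction hypothesis $P(y-1)$ at $x-1$ gives $y-1\le x-1$, and adding $1$ on both sides (using the ordered-field axioms in $\ax_R$) yields $y\le x$.

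For the direction $(\forall y\in\NN\,y\le x)\to \infty(x)$, I would apply $\COIND(\Phi_\infty,Q)$ with $Q(x)\eqdef \forall y\in\NN\,y\le x$. The required inclusion $Q\subseteq \Phi_\infty(Q)$ unfolds to $\forall x\,(Q(x)\to x\ge 0\land Q(x-1))$. The first conjunct is immediate: $\NN(0)$ follows from closure, so $Q(x)$ gives $0\le x$. For the second conjunct $Q(x-1)$, assume $\NN(y)$; then $\NN(y+1)$ by the closure rule for $\NN$, so $Q(x)$ gives $y+1\le x$, i.e.\ $y\le x-1$.

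Both directions are routine once the predicates $P$ and $Q$ are chosen well; the only mild subtlety is remembering to quantify over $x$ inside $P$ (rather than treating $x$ as a parameter), so that the induction step has enough strength to ``consume'' one unfolding of $\infty$ via $\COCL(\infty)$. I expect this to be the only genuine design decision; everything else reduces to equational reasoning in the ordered-field axioms.
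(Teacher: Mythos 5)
Your proposal is correct and follows essentially the same route as the paper: induction on $\NN$ with the predicate $\forall x\,(\infty(x)\to y\le x)$ (using coclosure of $\infty$ in both the base and step cases) for the forward direction, and coinduction with $Q(x)\eqdef\forall y\in\NN\,y\le x$ for the converse. The only difference is cosmetic — you phrase the induction step via the literal $y=0\lor P(y-1)$ form while the paper uses the equivalent $P(0)$/$P(y)\to P(y+1)$ form.
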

\begin{proof}
For the implication from left to right we show 
\[\forall y\in\NN\,\forall x\,(\infty(x) \to y \le x) \]
by induction on $y\in\NN$. $\forall x\,(\infty(x) \to 0 \le x)$
holds by the coclosure axiom for $\infty$.
In the step, the induction hypothesis is 
$\forall x\,(\infty(x) \to y \le x)$. We have to show
$\forall x\,(\infty(x) \to y+1 \le x)$. Hence assume $\infty(x)$.
By coclosure, $\infty(x-1)$. Therefore $y\le x-1$, by the induction
hypothesis. It follows $y+1\le x$.

The implication from right to left can be shown by coinduction.
Setting $P(x) \eqdef \forall y\in\NN\,y \le x$ we have to show
that $P(x)$ implies $x \ge 0$ and $P(x-1)$. Hence assume $P(x)$. 
$x\ge 0$ holds since $0\in\NN$. To show $P(x-1)$ let $y\in\NN$.
Then $y+1\in\NN$ and therefore, since $P(x)$, $y+1\le x$.
It follows $y\le x-1$. 
\end{proof}

\subsubsection{Archimedean induction}
\label{subsub-ai}
Now we study a formulation of the Archimedean property as an induction 
principle. This principle will be needed to prove the conversion of
the signed digit representation into Gray code (Thm.~\ref{thm-c-g'}).

If we set $y \less x \eqdef x \ge 0 \land y = x-1$, then clearly
$\infty(x)$ is equivalent to $\pat_{\less}(x)$. Therefore,
by the Archimedean property, $\pat_{\less}$ is empty, and
by Brouwer's Thesis, $\BT$ (Sect.~\ref{sub-wf}), it follows that
$\acc_{\less}(x)$ holds for all $x$. 
Hence, wellfounded induction on $\less$, $\wfi_{\less}(P)$, is equivalent 
to the rule
\[
\infer[\hbox{$\AI(P)$}]{
  \forall x \,  P(x)
}{
 \forall x \, ((x \ge 0 \to P(x-1)) \to P(x))
}
\]
since clearly its premise is equivalent to $\prog_{\less}(P)$.

A useful variant of $\AI$ is obtained by defining 
\[y \less_q x \eqdef |x| \leq q \land y = 2x\]
where here and in the following we assume $q>0$.
Then, as we will prove in Lemma \ref{lem-pathq},
$\acc_{\less_q}(x)$ is equivalent to $x\neq 0$.
Therefore, 
half strong induction, $\hsi(\Phi,P)$, for 
$\Phi = \lambda X\, \lambda x\ (\forall y \less_q x \, X(y))$,
yields the rule 
\[
  \infer[\AI_q(P)]{
    \forall x \ne 0\ P(x)
}{
\forall x \ne 0\ ((|x| \leq q \to P(2x)) \to P(x))
}
\]
since its premise is equivalent to $\prog_{\less_q}(P)$.
We call the principles $\AI(P)$ and $\AI_q(P)$ 
\emph{Archimedean induction}. 
Therefore, we have shown:
\begin{lemma}
\label{lem-ai}
Archimedean induction is derivable in 
$\IFP(\ax_R)$.
\end{lemma}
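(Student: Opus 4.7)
The plan is to derive each of $\AI(P)$ and $\AI_q(P)$ as an instance of (half-strong) wellfounded induction on a suitably chosen binary relation, using the Archimedean property $\AP \in \ax_R$ to rule out infinite descending paths and the axiom schema $\BTnc$ to convert the emptiness of the path predicate into full accessibility. Since both $\less$ and $\less_q$ are defined from the basic language in a disjunction-free manner, $\BT_\less$ and $\BT_{\less_q}$ are indeed legitimate instances of $\BTnc$.

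For $\AI(P)$, let $y \less x \eqdef x \ge 0 \land y = x-1$. The coinductive unfolding of $\pat_\less(x) \eqnu \exists y \less x\, \pat_\less(y)$ is (up to logical rearrangement) exactly the coclosure clause $x \ge 0 \land \pat_\less(x-1)$. A straightforward pair of coinductive arguments then gives $\forall x\,(\infty(x) \leftrightarrow \pat_\less(x))$: from $\infty \subseteq \pat_\less$ use $\COIND(\Phi_\less,\infty)$, and symmetrically for the converse. Combining this equivalence with $\AP$ yields $\forall x\,\neg\pat_\less(x)$, so $\BT_\less$ delivers $\forall x\,\acc_\less(x)$. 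Now the premise of $\wfi_\less(P)$, namely $\prog_\less(P) = \forall x\,(\forall y \less x\,P(y) \to P(x))$, unfolds to $\forall x\,((x \ge 0 \to P(x-1)) \to P(x))$, which is precisely the premise of $\AI(P)$; and $\acc_\less \subseteq P$ together with $\forall x\,\acc_\less(x)$ gives $\forall x\,P(x)$.

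For $\AI_q(P)$ the argument is parallel but uses half-strong induction. Set $y \less_q x \eqdef |x| \le q \land y = 2x$ and $\Phi \eqdef \lambda X\,\lambda x\,\forall y \less_q x\,X(y)$, so that $\mu(\Phi) = \acc_{\less_q}$. Once Lemma \ref{lem-pathq} has established $\acc_{\less_q}(x) \leftrightarrow x \ne 0$ (proven, as above, by showing $\pat_{\less_q}(x) \leftrightarrow \infty(|x|^{-1}\cdot c)$-style equivalences that force the path predicate to be empty, then invoking $\BTnc$), the premise $\Phi(P) \cap \mu(\Phi) \subseteq P$ of $\hsi(\Phi,P)$ reads $\forall x\,(x \ne 0 \to ((|x| \le q \to P(2x)) \to P(x)))$, which is the premise of $\AI_q(P)$, and the conclusion $\mu(\Phi) \subseteq P$ becomes $\forall x \ne 0\,P(x)$.

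The main obstacles are bookkeeping rather than conceptual: one has to verify the coinductive equivalence $\infty \equiv \pat_\less$ carefully (in particular the direction $\pat_\less \subseteq \infty$ requires a coinduction against $\Phi_\infty$, using the co-closure of $\pat_\less$), and one has to ensure that the seemingly innocuous relativizations (progressivity vs.\ the intended induction premise) match on the nose so that no further logical manipulation beyond the equality and order axioms of $\ax_R$ is needed. The forward reference to Lemma \ref{lem-pathq} is the only nontrivial ingredient used for the $\AI_q$ case; all other steps are direct applications of the already established rules $\wfi_\less$, $\hsi$, $\AP$, and $\BTnc$.
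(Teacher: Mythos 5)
Your proposal is correct and follows essentially the same route as the paper: $\AI(P)$ is obtained from $\wfi_\less(P)$ via the equivalence $\infty \equiv \pat_\less$, the Archimedean property, and $\BT_\less$, while $\AI_q(P)$ is obtained from half-strong induction $\hsi(\Phi,P)$ together with the forward-referenced Lemma~\ref{lem-pathq} identifying $\acc_{\less_q}$ with $\lambda x\,(x\neq 0)$. The only difference is that you spell out the coinductive arguments for $\infty \leftrightarrow \pat_\less$ that the paper treats as immediate.
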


\begin{lemma}
\label{lem-pathq}
$\acc_{\less_q}(x)$ iff $x \neq 0$.
\end{lemma}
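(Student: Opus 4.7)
The proof splits into the two implications of the stated equivalence.

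For the direction $\acc_{\less_q}(x) \to x \neq 0$, I would apply the induction rule $\IND(\Phi,P)$ with $\Phi \eqdef \lambda X\,\lambda x\,\forall y\less_q x\,X(y)$ and $P \eqdef \lambda x\,x\neq 0$. The premise $\Phi(P) \subseteq P$ requires: if every $\less_q$-predecessor of $x$ is nonzero, then so is $x$. Assume $\forall y\less_q x\,y\neq 0$ and suppose $x=0$. Since $q>0$, we have $|0|=0\le q$ and $2\cdot 0=0$, hence $0\less_q 0$; applying the hypothesis with $y=0$ yields $0\neq 0$, contradiction. Hence $\IND$ gives the conclusion.

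For the direction $x\neq 0 \to \acc_{\less_q}(x)$ I would appeal to $\BT_{\less_q}$, which is an instance of the axiom schema $\BTnc$, since $\less_q$ is defined by a disjunction-free formula. It then suffices to prove $x\neq 0 \to \neg\pat_{\less_q}(x)$. Assume $\pat_{\less_q}(x)$ and $x\neq 0$, aiming at a contradiction with $\AP$. By induction on $n\in\NN$, repeatedly invoking the coclosure axiom for $\pat_{\less_q}$, establish $\pat_{\less_q}(2^n x)\,\land\,|2^n x|\le q$ for every $n\in\NN$, so that $2^n |x|\le q$ for all $n$.

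Using $x\neq 0$ together with stability of $=$ and $<$, derive $|x|>0$, so $2^n\le q/|x|$ for all $n\in\NN$. A separate $\NN$-induction gives $y\le 2^y$ for every $y\in\NN$, and combining, $y\le q/|x|$ for all $y\in\NN$. By Lemma~\ref{lem-infty} this means $\infty(q/|x|)$, contradicting $\AP$. Thus $\pat_{\less_q}(x)\to x=0$ (via stability), i.e., $x\neq 0 \to \neg\pat_{\less_q}(x)$, and $\BT_{\less_q}$ delivers $\acc_{\less_q}(x)$.

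The main obstacle is in the backward direction: orchestrating the coinductive use of coclosure, the two $\NN$-inductions, and the Archimedean axiom so as to land precisely on a contradiction with $\AP$. In particular, care is needed to justify the step from $x\neq 0$ to $|x|>0$ using only the stability axioms (since the nc axiomatization of real-closed fields does not provide constructive trichotomy), and to confirm that $\less_q$ satisfies the nc-definability precondition for invoking $\BTnc$.
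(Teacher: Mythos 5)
Your proof is correct and follows essentially the same route as the paper's: the forward direction by strictly positive induction using $2\cdot 0=0$ and $q>0$, and the backward direction via $\BTnc$ by deriving $2^n|x|\le q$ for all $n\in\NN$ from $\pat_{\less_q}(x)$ and contradicting $\AP$ through Lemma~\ref{lem-infty}. You merely make explicit two steps the paper leaves implicit (the auxiliary induction $n\le 2^n$ and the passage from $x\neq 0$ to $|x|>0$ via the nc axioms), which is fine.
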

\begin{proof}
The `only if' part follows by induction on $\acc_{\less_q}(x)$: Since
$\acc_{\less_q}(x) \eqmu \forall y \,(|x| \leq q \land y = 2x \to \acc_{\less_q}(y))$ is equivalent to 
$\acc_{\less_q}(x) \eqmu (|x| \leq q \to \acc_{\less_q}(2x))$
it suffices to show that $(|x| \leq q \to 2x \neq 0)$ 
implies $x \neq 0$, which is immediate (using $2\cdot 0 = 0$).

The `if' part reduces, by $\BTnc$, to the implication
$x\neq 0 \to \neg \pat_{\less_q}(x)$. Therefore, we assume $x\neq 0$ and
$\pat_{\less_q}(x)$ with the aim to arrive at a contradiction.
Recall that 
$\pat_{\less_q}(x) \eqnu \exists y\, (|x| \leq q \land y = 2x \land \pat_{\less_q}(y))$, 
which is equivalent to 
$\pat_{\less_q}(x) \eqnu (|x| \leq q \land \pat_{\less_q}(2x))$.
By induction on $\NN$ we can prove 
$$\forall n \in \NN\ \forall x (\pat_{\less_q}(x) \to |x|  \leq q2^{-n})\,.$$
Therefore, if $\pat_{\less_q}(x)$, then for all $n\in\NN$,
$q/{|x|} \geq 2^n \ge n$, which, by Lemma~\ref{lem-infty} and
$\AP$, is impossible.
\end{proof}

In most applications Archimedean induction
is used with a predicate of the form $B \Rightarrow P$, 
and its premise is stated in an intuitionistically 
slightly stronger (though classically equivalent) form. 
 \[
 \infer[\hbox{$\AIB(B,P)$}]{
   \forall x \in B \, P(x)
 }{
  \forall x \in B\, 
      (P(x) \lor (x \ge 0 \land B(x-1) \land (P(x-1) \to P(x))))
 }
 \]
 \[
 \infer[\hbox{$\AIB_q(B,P)$}]{
   \forall x \in B\setminus\{0\} \, P(x)
 }{
  \forall x \in B\setminus\{0\}\, 
      (P(x) \lor (|x| \le q \land B(2x) \land (P(2x) \to P(x))))
 }
 \]
\begin{lemma}
\label{lem-aib}
$\AI$ implies $\AIB$. $\AI_q$ implies $\AIB_q$.
\end{lemma}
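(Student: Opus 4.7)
The plan is to derive each bounded Archimedean induction principle from its unrestricted counterpart by applying the latter to a suitably chosen auxiliary predicate, namely the pointwise implication $Q \eqdef B \Rightarrow P$, which is $\lambda x\,(B(x) \to P(x))$. Once we have shown $\forall x\,Q(x)$ by the unrestricted principle, we obtain $\forall x\in B\,P(x)$ (respectively $\forall x\in B\setminus\{0\}\,P(x)$) as an immediate consequence.

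First I would handle the implication $\AI \Rightarrow \AIB$. Fix $B$ and $P$, assume the premise of $\AIB(B,P)$, and aim to verify the premise of $\AI(Q)$, i.e.\ $\forall x\,((x\ge 0 \to Q(x-1)) \to Q(x))$. So fix $x$, assume $x\ge 0 \to (B(x-1)\to P(x-1))$ and $B(x)$; we must prove $P(x)$. From $B(x)$ and the premise of $\AIB(B,P)$ we obtain the disjunction $P(x) \lor (x\ge 0 \land B(x-1) \land (P(x-1)\to P(x)))$. In the first case we are done; in the second case the conjunct $x\ge 0$ triggers the first assumption to yield $B(x-1)\to P(x-1)$, which combined with the conjuncts $B(x-1)$ and $P(x-1)\to P(x)$ gives $P(x)$. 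Then $\AI(Q)$ delivers $\forall x\,Q(x)$, whence $\forall x\in B\,P(x)$.

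The proof for $\AI_q \Rightarrow \AIB_q$ follows exactly the same pattern, again with $Q \eqdef B \Rightarrow P$. The only small point worth watching is that $\AI_q$ quantifies over $x\ne 0$, but this matches the conclusion of $\AIB_q$ which quantifies over $B\setminus\{0\}$; and since $x\ne 0$ gives $2x\ne 0$, one stays within the range of the hypothesis when instantiating $Q(2x)$. Apart from this bookkeeping, the verification of the premise of $\AI_q(Q)$ from that of $\AIB_q(B,P)$ is again a direct disjunction case analysis followed by two applications of modus ponens.

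There is no real obstacle here; the argument is essentially pure propositional manoeuvring, parameterised by the choice of auxiliary predicate $B\Rightarrow P$. The only thing that needs care is to keep the implicit $B$-hypothesis threaded through both cases of the disjunction in the premise, which is why formulating $Q$ as the pointwise implication (rather than, say, the intersection) is the natural choice.
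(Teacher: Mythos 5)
Your proposal is correct and is exactly the paper's argument: instantiate $\AI$ (resp.\ $\AI_q$) with the pointwise implication $B \Rightarrow P$ and check that the premise of $\AIB(B,P)$ (resp.\ $\AIB_q(B,P)$) implies the premise of $\AI(B\Rightarrow P)$ (resp.\ $\AI_q(B\Rightarrow P)$) by a case split on the disjunction. The paper states this in two lines; you have merely written out the same propositional verification in full.
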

\begin{proof}
The premise of $\AIB(B,P)$ implies the premise of 
$\AI(B \Rightarrow P)$.
The premise of $\AIB_q(B,P)$ implies the premise of 
$\AI_q(B \Rightarrow P)$.
\end{proof}


\section{Realizability}
\label{sec-realizability}

In this section we define a realizability interpretation of $\IFP$.      
The interpretation will be formalized in a system $\RIFP$ defined 
in Sect.~\ref{sub-rifp} which is  
another instance of $\IFP$ with extra sorts and 
terms for extracted programs 
and their types 
(Sect.~\ref{sub-programs} and Sect.~\ref{sub-types})
as well as axioms describing them (Sect.~\ref{sub-rifp}).

Our programming language is an untyped language with 
a type assignment system.  It is similar to the language studied in
\cite{MacQueenPlotkinSethi86}, but simpler in that recursive types are restricted to
strictly positive ones.%

Programs will be interpreted in a Scott domain $D$ satisfying a recursive domain
equation, types will be interpreted as subdomains of $D$ (Sect.~\ref{sub-domain}).

A lazy operational semantics of this language will be studied in Sect.~\ref{sec-opsem}
and shown to be equivalent to the denotational semantics. 
Our domain-theoretic model of untyped programs originates in work by 
Scott~\cite{Scott82}. An overview and comparison of different models of
untyped $\lambda$-calculi can be found in~\cite{Plotkin93}

To define the realizability interpretation,
we first assign types to $\IFP$ expressions (Sect.~\ref{sub-tau}) and
then define the set of realizers of an expression 
as a subset of the subdomain defined by its type
(Sect.~\ref{sub-realizability}).
  We also show that typable $\RIFP$ programs can be translated into Haskell
programs (Sect.~\ref{sub-haskell}) and explain how  Haskell programs can 
be directly extracted from  IFP proofs in Section \ref{sec-soundness}.

\subsection{The domain of realizers and its subdomains}
\label{sub-domain} 
Extracted programs will be interpreted in a Scott domain $D$  
defined by the recursive domain equation 
\[
D = (\Nil +
\Left(D) + \Right(D) + \Pair(D\times D) + \Fun(D\to D))_\bot
\]
where 
$D \to D$ is the domain of continuous functions from $D$ to $D$,
$+$ denotes the disjoint sum of partial orders and $(\cdot)_\bot$
adds a new bottom element.
$\Nil,
\Left, \Right, \Pair, \Fun$
denote the injections of the various components of the sum into $D$.
$\Nil,
\Left, \Right, \Pair$ (but not $\Fun$) 
are called \emph{constructors}.
$D$ carries a natural partial order $\dle$ 
with respect to which $D$ is a 
countably based
\emph{Scott domain} 
(\emph{domain} for short),
that is, a bounded complete algebraic dcpo with least 
element $\bot$ and 
countably many compact 
elements.
The theory of Scott domains and recursive domain equations is standard and can 
be found e.g.\ in~\cite{AbramskyJung94,GierzHofmannKeimelLawsonMisloveScott03}.

Since domains are closed under suprema of increasing chains,
$D$ contains not only finite but also infinite combinations of 
the constructors.
For example, writing $a:b$ for $\Pair(a,b)$, 
an infinite sequence of domain elements $(d_i)_{i\in\NN}$
is represented in $D$ as the stream
\[d_0 : d_1 : \ldots \eqdef
\sup_{n\in\NN} \Pair(d_0,\Pair(d_1,\ldots \Pair(d_n,\bot)\ldots))\,.\]
Since Scott domains and continuous functions form a cartesian closed category, 
$D$ can be equipped with the structure of a partial
combinatory algebra (PCA, \cite{GierzHofmannKeimelLawsonMisloveScott03})
by defining a continuous application operation $a\,b$ such that 
$\Fun(f)\,b \eqdef f(b)$ and otherwise $a\,b \eqdef \bot$, 
as well as combinators 
$K$ and $S$
satisfying $K\,a\,b =b$ and $S\,a\,b\,c = a\,c\,(b\,c)$ (where application associates
to the left).
In particular $D$ has a continuous least fixed point operator 
which can be defined by Curry's $Y$-combinator 
or as the mapping $(D\to D) \ni f \mapsto \sup_n f^n(\bot) \in D$.

Besides the PCA structure we will use the algebraicity of $D$, 
that is, the fact that every element of $D$ is the directed supremum 
of compact elements. Compact elements have a strongly finite character 
which will be exploited in the proof of uniqueness of certain fixed points 
(Sect.~\ref{sub-types}) and in the proof of the 
Computational Adequacy Theorem (Thm.~\ref{thm-adequacy}).
The finiteness of compact element is captured by
their defining property ($d\in D$ is compact iff for every directed set 
$A\subseteq D$, if $d \dle \bigsqcup A$, then $d\dle a$ for some $a\in A$) 
and the existence of a function assigning to every compact element $a$
a \emph{rank}, $\rk(a)\in\NN$, satisfying 
\begin{itemize}
\item[$\rk 1$]
  If $a$ has the form $C(a_1,\ldots,a_k)$ for a constructor $C$,
  then $a_1,\ldots,a_k$ are compact and $\rk(a) > \rk(a_i)$ $(i \leq k)$.
\item[$\rk 2$] If $a$ has the form $\Fun(f)$, then 
for every $b\in D$, $f(b)$ is compact with $\rk(a) > \rk(f(b))$ and 
there exists a compact $b_0 \dle b$ such that $\rk(a) > \rk(b_0)$ and $f(b_0)=f(b)$.
Moreover, there are finitely many compacts
$b_1,\ldots b_n$ with $\rk(b_i) < \rk(a)$ such that 
$f(b) = \bigsqcup \{f(b_i) | i=1,\ldots n, b_i\dle b\}$.
\end{itemize}

In Sect.~\ref{sub-types} we will model types as \emph{subdomains} of $D$, 
that is, subsets of $D$ that are downwards closed and
closed under suprema of bounded subsets. 
We write $\subdom{X}$ if
$X$ is a subdomain of $D$ and denote by $\subdoms$ the set of all subdomains
of $D$. 
It is easy to see that a subdomain $X$ is a domain with respect 
to the partial order inherited from $D$ and the notions of supremum and 
compact element in $X$ are the same as taken with respect to $D$. 
The following is easy to see.

\begin{lemma}
\label{lem-subdom}
\begin{itemize}
\item[(a)] $\subdoms$ is a complete lattice. The meet operation is intersection.
\item[(b)] $\subdoms$ is closed under the following operations.

$(X+Y)_\bot \eqdef \{\Left(a)\mid a\in X\} \cup 
                           \{\Right(b) \mid b \in Y\} \cup \{\bot\}$,

$(X\times Y)_\bot \eqdef \{\Pair(a,b) \mid a\in X, b\in Y \} \cup \{\bot\}$,

$(\ftyp{X}{Y})_\bot \eqdef \{\Fun(f) \mid f:D\to D\hbox{ cont., }
                    \forall a\in X (f(a) \in Y) \} \cup \{\bot\}$.
\end{itemize}
\end{lemma}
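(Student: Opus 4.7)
The plan is to prove (a) and (b) essentially by routine verification that the two subdomain conditions (downward closure and closure under bounded suprema) are preserved by the relevant constructions, leveraging the paper's observation that suprema inside a subdomain coincide with those in $D$.

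For part (a), I would first show that an arbitrary intersection $\bigcap_{i\in I} X_i$ of subdomains is itself a subdomain. Downward closure is immediate from downward closure of each $X_i$. For bounded suprema, if $S \subseteq \bigcap_i X_i$ is bounded in $D$, then $S \subseteq X_i$ for each $i$, and since suprema in each $X_i$ are computed in $D$, the element $\sup S$ lies in every $X_i$, hence in the intersection. Since $D$ itself is trivially a subdomain (greatest element of $\subdoms$), $\subdoms$ is a complete meet-semilattice with top, which is automatically a complete lattice; joins are recovered in the standard way as the meet of all subdomains containing the given family.

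For part (b), I would handle the three constructions separately. In each case the bottom element is included by construction, so the nontrivial work is checking downward closure and closure under bounded suprema of the nontrivial elements.

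\begin{itemize}
\item For $(X+Y)_\bot$: downward closure follows because in $D$ the only elements below $\Left(a)$ (resp.\ $\Right(b)$) are $\bot$ and $\Left(a')$ with $a' \dle a$ (resp.\ $\Right(b')$ with $b' \dle b$), and $X$, $Y$ are downward closed. For bounded suprema, the key observation is that $\Left$-images and $\Right$-images are incompatible in $D$, so any bounded $S$ consists of $\bot$'s together with either only $\Left$-elements or only $\Right$-elements; say the former, then $\sup S = \Left(\sup_i a_i)$ with the $a_i$ bounded in $X$, and $\sup_i a_i \in X$ since $X$ is a subdomain.
\item For $(X\times Y)_\bot$: any $c \dle \Pair(a,b)$ in $D$ is either $\bot$ or of the form $\Pair(a',b')$ with $a' \dle a$, $b' \dle b$, so downward closure follows. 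For bounded $S$ of pairs $\Pair(a_i,b_i)$, the projections are bounded in $X$ and $Y$, and $\sup S = \Pair(\sup_i a_i, \sup_i b_i)$ lies in $(X\times Y)_\bot$.
\item For $(\ftyp{X}{Y})_\bot$: if $g \dle \Fun(f)$ then $g = \bot$ or $g = \Fun(f')$ with $f' \dle f$ pointwise; for $a \in X$ we get $f'(a) \dle f(a) \in Y$, and downward closure of $Y$ gives $f'(a) \in Y$. For bounded $S$ of functional elements $\Fun(f_i)$, the $f_i$ are pointwise bounded continuous functions; their pointwise supremum $f = \sup_i f_i$ is again Scott-continuous, and for $a \in X$ we have $f(a) = \sup_i f_i(a)$, a bounded sup in $Y$, hence in $Y$.
\end{itemize}

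The only step requiring genuine care rather than pattern-matching is the function-space case, where I must invoke standard Scott-domain facts (continuity of evaluation, continuity of pointwise suprema of bounded families of continuous functions) to move the supremum through application; everything else reduces to the disjointness of the constructor components and the assumption that $X$, $Y$ are themselves subdomains.
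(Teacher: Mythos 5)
Your proof is correct and is exactly the routine verification the paper has in mind — the paper states this lemma with ``The following is easy to see'' and omits the argument entirely, and your case analysis (disjointness of constructor images for $+$, componentwise suprema for $\times$, pointwise suprema of bounded continuous functions for $\Rightarrow$) is the standard way to fill it in. No gaps.
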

By Lemma~\ref{lem-subdom}~(a), for every set $S\subseteq D$ there is a smallest 
subdomain $X$ containing $S$, called the subdomain \emph{generated} by $S$.
Hence for any subdomain $Y$, $S\subseteq Y$ iff $X\subseteq Y$.
Furthermore, any subdomain is generated by the set of its compact elements. 
\subsection{Programs}
\label{sub-programs}
In order to formally denote elements of $D$ 
we introduce terms $M,N,\ldots$ of 
a new sort $\dsort$, called \emph{programs}.
\begin{eqnarray*}
  \mathit{Programs} \owns M,N 
& :: = & a,b\ \ 
  \text{(program variables, i.e.\ variables of sort $\dsort$)}\\ 
  & | &  \Nil\ |\ \Left(M)\ |\ \Right(M)\ |\ \Pair(M,N)\ \\
& | & \case\,M\,\of\,\{Cl_1;\ldots;Cl_n\} \\ 
& | & \lambda a.\,M\\
& | & M\,N \\ 
& | & \rec\,M \\
& | & \botexp
\end{eqnarray*}
%
In the case-construct each $Cl_i$ is a \emph{clause}  of the form 
$C(\vec a) \to N$ where $C$ is a constructor and $\vec a$ is a 
tuple of different
variables whose free occurrences in $N$ are bound by the clause.
Furthermore, for different clauses $C(\vec a)\to M$ and $C'(\vec a')\to M'$,
the constructors $C$ and $C'$ must be different.
The intuitive meaning of a case-expression, say 
$\case\,M\,\of\,\{\ldots; \Left(a) \to L; \ldots\}$, is that 
if $M$ evaluates to a term matching the pattern $\Left(a)$, 
say $\Left(M')$, then the whole case-expression evaluates to
$L[M'/a]$.
The recursion construct $\rec\,M$ defines the least fixed point of $M$. 
It could be defined as $Y\,M$ where $Y$ is the well-known combinator
$\lambda f\,.\,(\lambda a\,.\,f\,(a\,a))\, (\lambda a\,.\,f\,(a\,a))$,
however, we prefer an explicit construct for general recursion
since it better matches programming practice~(Sect.~\ref{sub-haskell}) 
and it can be naturally assigned a type~(Sect.~\ref{sub-types}) while
$Y$ involves self-application which is not typable in our system
(see the remark after Lemma~\ref{lem-typ-rifp}).
The constant $\bot$ represents the `undefined' domain element $\bot$.
It could be defined as a non-terminating recursion but it is more convenient
to have it as a constant.
Overall, our goal is to have a programming language 
that enables us to naturally express the 
computational contents of
IFP expressions and proofs.

Substitution of programs, $M[N/a]$, is defined as usual in term 
languages with binders 
so that a substitution lemma holds~(Lemma~\ref{lem-subst}).
We also identify $\alpha$-equal programs, that is, programs
that are equal up to renaming of bound variables.
Composition, sum, pairing, and projections are defined as
\begin{eqnarray*} 
M\circ N &\eqdef& \lambda a.\,M (N\,a)\\
\funsum{M}{N}&\eqdef&
 \lambda c.\,\case\,c\,\of\,\{\Left(a)\to M\,a;\Right(b)\to N\,b\}\\
\funpair{M}{N}&\eqdef& \lambda c.\,\Pair(M\,c,N\,c)\\
\projl\,M &\eqdef& \,\case\,M\,\of\,\{\Pair(a,b)\to a\}\\
\projr\,M &\eqdef& \,\case\,M\,\of\,\{\Pair(a,b)\to b\}
\end{eqnarray*}
We write
$a \eqrec M$ for $a \eqdef \rec(\lambda a.\,M)$, and
$a\,b \eqrec M$ for  $a \eqrec \lambda b.\,M$.
Occasionally we will use 
generalized clauses such as $\Right(\Pair(a,b)) \to M$ as an abbreviation
for $\Right(c) \to \case\ c\ \of\ \{\Pair(a,b) \to M\}$.

Since $D$ is a combinatory algebra
every program $M$ denotes an element $\valu{M}{\eta}\in D$ 
depending continuously (w.r.t.\ the Scott topology) on the
environment $\eta$ that maps program variables to elements of $D$.
\begin{eqnarray*}
\valu{a}{\eta} &=& \eta(a)\\
\valu{C(M_1,\ldots,M_k)}{\eta} &=& C(\valu{M_1}{\eta},\ldots,
                                       \valu{M_k}{\eta})\\
\valu{\case\,M\,\of\,\{Cl_1;\ldots;Cl_n\}}{\eta} &=& 
       \valu{K}{\eta[\vec a \mapsto \vec d]} \quad
\hbox{if $\valu{M}{\eta} = C(\vec d)$}\\
&&\quad \hbox{and some $Cl_i$ is of the form $C(\vec a) \to K$}\\
\valu{\lambda a.\,M}{\eta} &=& \Fun(f)\quad  
  \hbox{where $f(d) = \valu{M}{\eta[a\mapsto d]}$}\\
\valu{M\,N}{\eta} &=& f(\valu{N}{\eta})\quad 
                        \hbox{if $\valu{M}{\eta}= \Fun(f)$}\\
\valu{\rec\,M}{\eta} &=& \hbox{the least fixed point of $f$}\\
&&\quad \quad\hbox{if $\valu{M}{\eta}=\Fun(f)$}\\
\valu{M}{\eta} &=& \bot\ \ 
\hbox{in all other cases, 
in particular $\valu{\botexp}{\eta} = \bot$}
\end{eqnarray*}
For closed terms the environment is redundant and may therefore be omitted.
The following lemma is standard.
\begin{lemma}[Substitution]
\label{lem-subst}
$\valu{M[N/a]}{\eta} = \valu{M}{\eta[a \mapsto \valu{N}{\eta}]}$.
\end{lemma}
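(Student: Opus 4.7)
The plan is structural induction on the program $M$. The base cases are immediate: for a variable $b$, if $b = a$ then $M[N/a] = N$ and $\valu{a}{\eta[a\mapsto \valu{N}{\eta}]} = \valu{N}{\eta}$; if $b \neq a$ then both sides reduce to $\eta(b)$. The cases $\Nil$ and $\botexp$ are trivial, since substitution leaves them unchanged and their denotations do not depend on $\eta$.

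For the constructor cases $\Left(M_1), \Right(M_1), \Pair(M_1,M_2)$ and for application $M_1\,M_2$, substitution commutes with the constructor or application, and the denotation is defined compositionally from the subterms, so the induction hypothesis applied to each subterm combined with continuity of the constructor (or of application in $D$) yields the result. The case-expression scrutinee is handled the same way.

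The interesting cases are the binding constructs: $\lambda b.\,M_1$, the clauses $C(\vec b) \to K$ inside a case-expression, and $\rec\,M_1$ (when $M_1$ is a $\lambda$-abstraction; otherwise the semantics is $\bot$ and the claim is trivial). Using the convention that $\alpha$-equivalent programs are identified, we may assume $b \notin \{a\} \cup \FV(N)$ (and similarly for $\vec b$), so that $(\lambda b.\,M_1)[N/a] = \lambda b.\,(M_1[N/a])$. The clause for $\lambda$ in the semantics then reduces the goal to showing, for all $d\in D$,
\[
  \valu{M_1[N/a]}{\eta[b\mapsto d]}
  \;=\;
  \valu{M_1}{\eta[a\mapsto \valu{N}{\eta}][b\mapsto d]}.
\]
By the induction hypothesis applied to $M_1$ under environment $\eta[b\mapsto d]$, the left-hand side equals $\valu{M_1}{\eta[b\mapsto d][a\mapsto \valu{N}{\eta[b\mapsto d]}]}$. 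Since $b \notin \FV(N)$, a straightforward auxiliary lemma (proved by a separate structural induction: the denotation of a term depends only on the values of its free variables) gives $\valu{N}{\eta[b\mapsto d]} = \valu{N}{\eta}$, and since $a \neq b$ the two environment updates commute, matching the right-hand side. The case for $\rec$ then follows because the least-fixed-point operator depends only on the associated continuous function, which is handled by the $\lambda$-case.

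The main obstacle is purely bookkeeping around bound variables: one needs the $\alpha$-equivalence convention to avoid capture, and the auxiliary lemma about semantic independence of non-free variables. Both are entirely routine once stated, and no deeper domain-theoretic facts beyond continuity and the fixed-point property are required.
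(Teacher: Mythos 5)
The paper gives no proof of this lemma at all---it is simply declared ``standard''---and your structural induction, with $\alpha$-renaming at binders and the auxiliary coincidence lemma (denotations depend only on the values of free variables), is precisely the standard argument being invoked; it is correct. One small inaccuracy: $\rec\,M_1$ binds nothing, and it is not true that its semantics is $\bot$ whenever $M_1$ is not syntactically a $\lambda$-abstraction ($M_1$ could be a variable or an application whose denotation is some $\Fun(f)$); that case is purely compositional, since $\valu{\rec\,M_1}{\eta}$ is a function of $\valu{M_1}{\eta}$ alone, so the induction hypothesis closes it directly without appeal to the $\lambda$-case.
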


\subsection{Types}
\label{sub-types}
We introduce simple recursive 
types which are interpreted as subdomains of 
the domain $D$ defined in Sect.~\ref{sub-domain}.
 
\emph{Types} are expressions defined by the grammar
\[ Types \ni \rho, \sigma ::=  \alpha\ (\hbox{type variables})
                         \mid \one
                         \mid \rho + \sigma
                         \mid \rho \times \sigma
                         \mid \ftyp{\rho}{\sigma}
                         \mid \tfix{\alpha}{\rho} .\]
where in $\tfix{\alpha}{\rho}$ the type $\rho$ must be strictly positive in $\alpha$.

Given an environment $\zeta$ that assigns to each type variable a 
subdomain of $D$, every type $\rho$ denotes a subdomain 
$\tval{\rho}{\zeta}$ of $D$:
\begin{eqnarray*}
\tval{\alpha}{\zeta} &=& \zeta(\alpha),\\
\tval{\one}{\zeta} &=& \{\Nil,\bot\},\\
\tval{\rho\diamond\sigma}{\zeta} &=& 
(\tval{\rho}{\zeta}\diamond\tval{\sigma}{\zeta})_{\bot}\quad 
(\diamond\in\{+,\times,\ftyp{}{}\}),\\
\tval{\tfix{\alpha}{\rho}}{\zeta} &=& 
\bigcap\{\subdom{X} \mid \tval{\rho}{\zeta[\alpha \mapsto X]} \subseteq X \}
\end{eqnarray*}
\begin{lemma}
\label{lem-typ-fix}
\[ \tval{\tfix{\alpha}{\rho}}{\zeta} = 
   \tval{\rho}{\zeta[\alpha\mapsto \tval{\tfix{\alpha}{\rho}}{\zeta}]} =
   \tval{\rho[\tfix{\alpha}{\rho}/\alpha]}{\zeta} .
\]
\end{lemma}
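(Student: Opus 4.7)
The plan is to prove the two equalities separately. The second equality is a purely syntactic/semantic substitution statement, and the first is the Tarski-style fixed point property of $\tval{\tfix{\alpha}{\rho}}{\zeta}$ regarded as the least pre-fixed point of the operator $F : \subdoms \to \subdoms$ defined by $F(X) = \tval{\rho}{\zeta[\alpha \mapsto X]}$.

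For the second equality, I would prove a general \emph{type substitution lemma}: for every type $\rho$, every type $\sigma$, and every environment $\zeta$,
\[
\tval{\rho[\sigma/\alpha]}{\zeta} = \tval{\rho}{\zeta[\alpha \mapsto \tval{\sigma}{\zeta}]}.
\]
This goes by straightforward induction on $\rho$. The base cases $\alpha$, other type variables, and $\one$ are immediate from the semantic clauses; the cases for $+$, $\times$, and $\ftyp{}{}$ are immediate from the inductive hypotheses applied to the components. For $\rho = \tfix{\beta}{\rho'}$ (with $\beta \ne \alpha$ and $\beta$ not free in $\sigma$, as usual for capture-avoiding substitution) one notes that both sides are intersections of the same collection of subdomains $X$, since by induction $\tval{\rho'[\sigma/\alpha]}{\zeta[\beta\mapsto X]} = \tval{\rho'}{\zeta[\alpha\mapsto \tval{\sigma}{\zeta}][\beta\mapsto X]}$. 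Specialising to $\sigma = \tfix{\alpha}{\rho}$ yields the second equality.

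For the first equality I would first establish that $F$ is monotone on $(\subdoms, \subseteq)$. More generally, by induction on $\rho$ one shows: if $\rho$ is strictly positive in $\alpha$, then $X \subseteq Y$ implies $\tval{\rho}{\zeta[\alpha \mapsto X]} \subseteq \tval{\rho}{\zeta[\alpha \mapsto Y]}$. The base cases are trivial, the cases for $+$ and $\times$ follow directly from Lemma~\ref{lem-subdom}(b), and the case $\rho = \ftyp{\rho_1}{\rho_2}$ is where strict positivity is used: because $\alpha$ does not occur in $\rho_1$, $\tval{\rho_1}{\zeta[\alpha\mapsto X]} = \tval{\rho_1}{\zeta[\alpha\mapsto Y]}$, so only the covariant position $\rho_2$ matters. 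The fixed point case uses the inductive hypothesis together with the fact that intersection is monotone in its collection. With $F$ monotone, Tarski's argument applies: $\tval{\tfix{\alpha}{\rho}}{\zeta}$, being the intersection of all pre-fixed points, is itself a pre-fixed point (since $F$ applied to it is contained in $F(X)$ for every pre-fixed point $X$, hence in $X$, hence in the intersection), and then applying $F$ once more shows that $F(\tval{\tfix{\alpha}{\rho}}{\zeta})$ is also a pre-fixed point, which by minimality must contain $\tval{\tfix{\alpha}{\rho}}{\zeta}$. The two inclusions give the first equality.

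The main obstacle is the monotonicity lemma for $F$: one must carefully handle the strict positivity condition through the inductive clauses, and in particular verify that the intersection in the semantic clause for $\tfix{\beta}{\rho'}$ behaves monotonically in $\zeta(\alpha)$ when $\alpha \ne \beta$. The subtle point is that changing $\zeta(\alpha)$ changes the collection of subdomains being intersected, so monotonicity is not automatic; one argues that every pre-fixed point of the operator for the larger $\zeta(\alpha) = Y$ is also (using the inductive hypothesis on $\rho'$) a pre-fixed point for $\zeta(\alpha) = X \subseteq Y$, so the intersection for $X$ is taken over a superset and is therefore smaller. Once this is pinned down, both equalities follow cleanly.
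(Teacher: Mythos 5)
Your proof is correct and follows essentially the same route as the paper, which simply invokes monotonicity of $X \mapsto \tval{\rho}{\zeta[\alpha\mapsto X]}$ (from strict positivity) plus Tarski's fixed point theorem for the first equality and the standard substitution lemma for the second. You have merely filled in the details the paper leaves implicit — the inductions establishing monotonicity and substitution, including the correct handling of the nested-$\mathbf{fix}$ case via the comparison of the two families of pre-fixed points.
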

\begin{proof}
By strict positivity, $\tval{\rho}{\zeta[\alpha\mapsto X]}$ is monotone
in $X$. Therefore, the left equation holds by Tarski's fixed point theorem. 
The right equation is an instance of the usual substitution lemma.
\end{proof}
As an example, we consider the type of natural numbers,
\[\nat \eqdef \tfix{\alpha}{1+\alpha}.\]
 By Lemma~\ref{lem-typ-fix}, 
$\tval{\nat}{} = (\tval{\one}{} + \tval{\nat}{})_{\bot}$. 
It is easy to see that
\[\tval{\nat}{} 
    = \{\Right^n(d) \mid  n\in\NN, d \in\{\bot,\Left(\bot),\Left(\Nil)\}\} \cup
     \{\sqcup_{n\in\NN}\Right^n(\bot)\}.\]
By identifying
$\Left(\bot)$ with $\Left(\Nil)$, 
one obtains an isomorphic copy of the domain of lazy natural numbers
where $\Left(\Nil)$ represents $0$ and $\Right$ represents the successor
operation.
See Remark~1 at the end of this section for a discussion on the relation between 
these domains.

Lemma~\ref{lem-typ-fix} says that $\tval{\tfix{\alpha}{\rho}}{\zeta}$
is a fixed point of the type operator $\alpha \mapsto \rho$.
We show that it is the \emph{unique} fixed point under a 
regularity condition.
The regularity conditions excludes type operators of the form 
$\alpha \mapsto \tfix{\beta_1}{\ldots\tfix{\beta_n}{\alpha}}$
(where the $\beta_i$ are all different from $\alpha$)
which, obviously, have  every subdomain of $D$ as fixed point. 
It turns out that if fixed points
of such type operators are excluded then uniqueness of fixed points holds.
Therefore, we call a type \emph{regular} if it contains no sub-expression 
of the form $\tfix{\alpha}{\tfix{\beta_1}{\ldots\tfix{\beta_n}{\alpha}}}$.
\begin{lemma}
\label{lem-typ-reg}
\begin{itemize}
\item[(a)] Regular types are closed under substitutions.
\item[(b)] Every regular type is semantically equal to a non-fixed-point type,
that is, a type which is not of the form $\tfix{\alpha}{\rho}$. 
\end{itemize} 
\end{lemma}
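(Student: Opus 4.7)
My plan for (a) is to show that capture-avoiding substitution cannot create a forbidden subexpression when none was present beforehand, by a case analysis on the provenance of the offending outer binder. Concretely, I would argue by contradiction: if $\rho[\sigma/\alpha]$ contains a subexpression $\tfix{\beta}{\tfix{\gamma_1}{\ldots\tfix{\gamma_n}{\beta}}}$, then either this subexpression lies entirely inside a substituted copy of $\sigma$ (in which case it is a subexpression of $\sigma$ itself, contradicting regularity of $\sigma$), or it has the form $\rho'[\sigma/\alpha]$ for some subexpression $\rho'$ of $\rho$. In the latter case, capture-avoidance guarantees that the outer binder $\beta$ is fresh for $\sigma$. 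I would then trace the chain from the outside in: the outer $\tfix{\beta}$ must come from $\rho'$, for otherwise it would lie in a substituted copy of $\sigma$, which is the previous case. Each successive $\tfix{\gamma_i}$ must likewise come from $\rho'$, for if at some depth the chain dipped into $\sigma$ then the terminating occurrence of $\beta$ would be a free variable of $\sigma$, contradicting freshness. It follows that $\rho'$ itself has the forbidden shape, contradicting regularity of $\rho$.

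For (b) I plan to strip the entire leading chain of fixes in one go. Write $\rho = \tfix{\alpha_1}{\ldots\tfix{\alpha_k}{\sigma}}$ with $k \geq 0$ chosen maximally so that $\sigma$ is not itself a $\tfix$-type; if $k = 0$ we are done, so assume $k \geq 1$. Applying the regularity condition to each prefix $\tfix{\alpha_j}{\tfix{\alpha_{j+1}}{\ldots\tfix{\alpha_k}{\sigma}}}$ forces $\sigma$ to be syntactically different from every chain variable $\alpha_j$. I then split on the form of $\sigma$. If $\sigma$ contains no $\alpha_j$ as a free variable (e.g., $\sigma$ is a variable outside $\{\alpha_1,\ldots,\alpha_k\}$, or $\sigma = \one$, or structural without free $\alpha_j$'s), each fix in the chain is vacuous and $k$-fold application of Lemma~\ref{lem-typ-fix} yields $\rho \equiv \sigma$, a non-fixed-point type. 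Otherwise $\sigma$ is structural of the form $\sigma_1 \diamond \sigma_2$ with $\diamond \in \{+,\times,\to\}$, and I would unfold from the innermost fix: Lemma~\ref{lem-typ-fix} gives $\tfix{\alpha_k}{\sigma} \equiv \sigma_1[\tfix{\alpha_k}{\sigma}/\alpha_k] \diamond \sigma_2[\tfix{\alpha_k}{\sigma}/\alpha_k]$, which is non-fixed-point with the same top-level connective. Since $\tfix{\alpha_{k-1}}{-}$ respects semantic equality of the body (immediate from the definition of $\tval{\cdot}{\zeta}$), the next layer again reduces to a fix over a structural body, which I unfold in the same way. After $k$ iterations the whole chain is eliminated, leaving a non-fixed-point type semantically equal to $\rho$.

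The main obstacle, in both parts, lies in keeping the case analysis tight. In (a), the only genuinely delicate situation is when the forbidden chain straddles the substitution boundary, with outer binders inherited from $\rho$ and the terminal occurrence of the bound variable supplied by $\sigma$; the capture-avoidance convention rules this out, but only provided one renames bound variables of $\rho$ fresh for $\sigma$ before substituting, and the argument must be articulated carefully. In (b), the tempting alternative of inducting on the size of $\rho$ and applying the inductive hypothesis to the body $\sigma$ alone breaks down: a semantically equivalent non-fixed-point replacement for $\sigma$ could be a bare variable $\alpha_j$, which when re-wrapped in the outer fix would yield $\tfix{\alpha_j}{\alpha_j}$, exactly the degenerate form that regularity is designed to exclude. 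Working with the whole chain at once and relying on the syntactic regularity of $\sigma$ relative to the chain binders $\alpha_1,\ldots,\alpha_k$ is what sidesteps this pitfall.
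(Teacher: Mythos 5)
Your proposal is correct. For part (a) the paper offers nothing beyond ``easy,'' and your provenance analysis of the offending chain --- either it lies wholly inside a substituted copy of $\sigma$, or all of its binders come from $\rho$, the mixed case being excluded because the renamed binders of $\rho$ are fresh for $\sigma$ so the terminal bound-variable occurrence cannot sit inside a copy of $\sigma$ --- is a sound way to fill that in. For part (b) you take a genuinely different route. The paper inducts on the \emph{fixed point height} of the whole type: writing $\rho=\tfix{\alpha}{\sigma}$, it unfolds the \emph{outermost} fix to $\sigma[\rho/\alpha]$, uses regularity to see that the innermost body is not $\alpha$ (so the height drops by one), invokes part (a) to see that $\sigma[\rho/\alpha]$ is still regular, and applies the induction hypothesis. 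You instead strip the entire leading chain $\tfix{\alpha_1}{\ldots\tfix{\alpha_k}{\sigma}}$ at once, use regularity only for the syntactic fact $\sigma\neq\alpha_j$, and unfold from the \emph{innermost} fix outward, propagating a structural (or vacuous) body through the chain via the congruence that $\tfix{\alpha}{\cdot}$ respects semantic equality of bodies. Both arguments hinge on the same use of regularity --- excluding a bare chain variable as the innermost non-fix body --- but yours dispenses with part (a) in the proof of (b), at the cost of needing that congruence (easy from the definition of the semantics of fixed point types, but not among the paper's stated lemmas) and of implicitly checking that the unfolded bodies remain strictly positive so the intermediate fix types are well formed; the paper's version carries the same implicit well-formedness obligation for $\sigma[\rho/\alpha]$. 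Your closing observation about why a naive induction on the body alone fails --- a semantically equivalent replacement for $\sigma$ could be a bare $\alpha_j$, recreating the degenerate $\tfix{\alpha_j}{\alpha_j}$ --- is exactly the pitfall the paper's height-based induction on the substituted type is designed to avoid.
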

\begin{proof}
(a) is easy.

(b) can be proved 
by induction on the \emph{fixed point height} of a type  
which is the unique number $n$ such that the type is of the form
$\tfix{\alpha_1}{\ldots\tfix{\alpha_n}{\rho_0}}$ and $\rho_0$ is not a 
fixed point type. Let $\rho$ be a regular type. 
If the fixed point height of $\rho$ is $0$ we are done. If the fixed point height
of $\rho$ is $n+1$, then $\rho$ is of the form $\tfix{\alpha}{\sigma}$ where 
$\sigma$ has fixed point height $n$. By Lemma~\ref{lem-typ-fix}, 
$\rho$ is semantically equal to
$\sigma[\rho/\alpha]$ which has fixed point height $n$ as well since $\rho$ is
regular. Moreover, by (a), $\sigma[\rho/\alpha]$ is regular. Hence 
the induction hypothesis can be applied.
\end{proof}

Let $X,Y$ range over $\subdoms$ and set 
$\subrank{X}{n} \eqdef \{a \in X \mid a\hbox{ compact}, \rk(a)\le n\}$.

\begin{lemma}
\label{lem-subrank}
If $\subrank{X}{n}\subseteq Y$ for all $n$, then $X\subseteq Y$.
\end{lemma}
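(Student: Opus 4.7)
The plan is to exploit algebraicity of $D$ together with the closure properties built into the definition of a subdomain.

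First, I would fix an arbitrary element $a \in X$ and seek to show $a \in Y$. By algebraicity of $D$ (discussed in Sect.~\ref{sub-domain}), $a$ is the directed supremum of the set $K(a)$ of compact elements $b \dle a$. Since $X$ is a subdomain, it is downwards closed, so $K(a) \subseteq X$. Moreover, because every compact element of $D$ carries a rank $\rk(b) \in \NN$ by the definition of the rank function, each $b \in K(a)$ lies in $\subrank{X}{n}$ for some $n$, namely $n = \rk(b)$.

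Next, by the hypothesis $\subrank{X}{n} \subseteq Y$ for every $n$, I conclude $K(a) \subseteq Y$. The set $K(a)$ is directed (hence bounded by $a$) because $D$ is bounded complete and the compact elements are closed under bounded joins in a Scott domain. Since $Y$ is a subdomain, it is closed under suprema of bounded subsets, so $\bigsqcup K(a) \in Y$. But $\bigsqcup K(a) = a$ by algebraicity, giving $a \in Y$.

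Since $a \in X$ was arbitrary, $X \subseteq Y$. There is no real obstacle here; the only thing one has to be careful about is the verification that $K(a)$ is directed and that compact approximants below an arbitrary element are themselves in $X$, both of which are immediate from the definition of a subdomain plus the standard facts about Scott domains cited in Sect.~\ref{sub-domain}.
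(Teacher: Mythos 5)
Your proof is correct and takes essentially the same route as the paper, which simply remarks that the claim is clear because a domain is the completion of its set of compact elements and the sets $\subrank{X}{n}$ exhaust the compact elements of $X$ as $n$ grows. You have merely spelled out the details (downward closure of $X$, every compact element having a rank, and closure of $Y$ under suprema of bounded sets) that the paper leaves implicit.
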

\begin{proof}
This is clear since a domain is the completion of the subset of its compact elements,
and with increasing $n$, $\subrank{X}{n}$ exhausts all compact elements of $X$.
\end{proof}

Define $\depth{\alpha}{\rho}\in\NN\cup\{\infty\}$ by recursion on 
$\rho$ as follows.
$\depth{\alpha}{\rho} = \infty$ if $\alpha$ is not free in $\rho$. 
Otherwise (using the expected order on $\NN\cup\{\infty\}$ and setting
$1+\infty = \infty$)
\begin{eqnarray*}
\depth{\alpha}{\alpha} &=& 0\\
\depth{\alpha}{\rho_1\diamond\rho_2} &=& 1 + \min_i\,\depth{\alpha}{\rho_i}
            \quad \diamond \in \{+,\times\}\\
\depth{\alpha}{\ftyp{\rho_1}{\rho_2}} &=& \depth{\alpha}{\rho_2}\\
\depth{\alpha}{\tfix{\beta}{\rho}} &=& \depth{\alpha}{\rho}
\end{eqnarray*}
The following lemma exploits regularity in an essential way and is key to
proving uniqueness of fixed points~(Thm~\ref{thm-sub-fix}).
\begin{lemma}
\label{lem-depth-subrank}
Let $\rho$ be regular and s.p.\ in $\alpha$. 

If $\subrank{X}{n}\subseteq Y$, then 
$\subrank{\tval{\rho}{\zeta[\alpha\mapsto X]}}{(n+\depth{\alpha}{\rho})}
 \subseteq \tval{\rho}{\zeta[\alpha\mapsto Y]}$.
\end{lemma}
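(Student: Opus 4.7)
The plan is to proceed by structural induction on $\rho$, with the fixed-point case requiring an auxiliary inner induction on the rank of compact elements. For $\rho = \alpha$, the conclusion is immediate from $\depth{\alpha}{\alpha} = 0$ together with the hypothesis. For $\rho = \beta$ (a variable distinct from $\alpha$) and $\rho = \one$, $\alpha$ is not free, so the two semantic values coincide, consistent with $\depth{\alpha}{\rho} = \infty$. For $\rho = \rho_1 + \rho_2$ and $\rho = \rho_1 \times \rho_2$, any compact element has shape $\bot$, $\Left(a_1)$, $\Right(a_2)$, or $\Pair(a_1,a_2)$; by property $\rk 1$ the sub-parts have strictly smaller rank, and $\depth{\alpha}{\rho} = 1 + \min_i\depth{\alpha}{\rho_i}$ gives exactly the slack to invoke the structural IH on the relevant component. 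For $\rho = \ftyp{\rho_1}{\rho_2}$, strict positivity forces $\alpha \notin \rho_1$ so $\tval{\rho_1}{\cdot}$ is unchanged; for a compact $\Fun(f)$, property $\rk 2$ ensures $\rk(f(b)) < \rk(\Fun(f))$ for every $b$, which together with $\depth{\alpha}{\rho} = \depth{\alpha}{\rho_2}$ lets the IH on $\rho_2$ carry the argument.

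The heart of the proof is the case $\rho = \tfix{\beta}{\sigma}$. If $\beta = \alpha$ or $\alpha$ is not free in $\sigma$, then $\alpha$ is not free in $\rho$, $\depth{\alpha}{\rho} = \infty$, and both semantic values agree. If $\beta$ is not free in $\sigma$, the fixed-point equation degenerates to $X' = \tval{\sigma}{\zeta[\alpha\mapsto X]}$ and the structural IH on $\sigma$ concludes. So I may assume $\alpha \neq \beta$ and both are free in $\sigma$. Write $X' = \tval{\rho}{\zeta[\alpha\mapsto X]}$, $Y' = \tval{\rho}{\zeta[\alpha\mapsto Y]}$, and $d = \depth{\alpha}{\sigma} = \depth{\alpha}{\rho}$. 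I aim to prove $\subrank{X'}{k} \subseteq Y'$ for every $k \leq n+d$ by induction on $k$. The base is immediate. In the step, given a compact $a \in X'$ with $\rk(a) \leq k$, Lemma~\ref{lem-typ-fix} places $a$ in $\tval{\sigma}{\zeta[\alpha\mapsto X,\beta\mapsto X']}$, and I chain two applications of the structural IH on $\sigma$: first with respect to $\alpha$, using $\subrank{X}{n} \subseteq Y$, transporting $a$ into $\tval{\sigma}{\zeta[\alpha\mapsto Y,\beta\mapsto X']}$; then with respect to $\beta$, using the inner IH $\subrank{X'}{k-1} \subseteq Y'$, transporting $a$ into $\tval{\sigma}{\zeta[\alpha\mapsto Y,\beta\mapsto Y']} = Y'$.

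The main obstacle is the rank accounting for the second application, which requires $k \leq (k-1) + \depth{\beta}{\sigma}$, i.e.\ $\depth{\beta}{\sigma} \geq 1$. I plan to supply this via a side claim: whenever an expression $\sigma$ is strictly positive in two distinct variables $\alpha,\beta$ with both free in $\sigma$, one has $\depth{\beta}{\sigma} \geq 1$. This is by structural induction on $\sigma$: the leaf cases (any single variable, $\one$) contradict the freeness of at least one of $\alpha,\beta$; the $+$ and $\times$ cases give depth at least $1$ immediately; for $\ftyp{\tau_1}{\tau_2}$, strict positivity pushes both variables into $\tau_2$ (where they remain s.p.), and the IH applies; for $\tfix{\gamma}{\tau}$, the bound $\gamma$ differs from both $\alpha,\beta$, both remain free and s.p.\ in $\tau$, and the IH again applies. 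Regularity of $\rho$ is used throughout to guarantee that subexpressions, in particular $\sigma$, are themselves regular, so that the structural IH is applicable.
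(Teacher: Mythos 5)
Your proposal is correct, but it takes a genuinely different route through the one case where the two proofs could diverge: fixed-point types. The paper runs a single induction on $\rk(a)$ with a case analysis on the shape of $\rho$, and disposes of $\tfix{\beta}{\sigma}$ entirely up front by citing Lemma~\ref{lem-typ-reg}~(b): every regular type is semantically equal to a non-fixed-point type, so that case simply "does not occur". You instead keep a structural induction on $\rho$ and confront $\tfix{\beta}{\sigma}$ head-on, unrolling once via Lemma~\ref{lem-typ-fix} and chaining two applications of the inductive hypothesis for $\sigma$ (one in $\alpha$ with the outer hypothesis, one in $\beta$ with a nested rank induction), which forces you to prove the auxiliary combinatorial fact that $\depth{\beta}{\sigma}\ge 1$ whenever $\sigma$ is strictly positive in two distinct free variables. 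The non-fixed-point cases (variable, $+$, $\times$, $\ftyp{}{}$) are handled identically in both arguments, via $\rk 1$, $\rk 2$ and the clauses of $\depth{\alpha}{\cdot}$. What the paper's route buys is brevity: one induction, no extra lemma; but it silently relies on the unfolding $\sigma[\tfix{\beta}{\sigma}/\beta]$ not decreasing $\depth{\alpha}{\cdot}$ below $\depth{\alpha}{\tfix{\beta}{\sigma}}$, a point left implicit. Your route is longer and needs the two-level induction, but it makes fully explicit where strict positivity, regularity and the depth function each do their work, and your side claim about depth is a clean, correct observation that the paper never states. Both arguments are sound.
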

\begin{proof}
Suppose that $\subrank{X}{n}\subseteq Y$.
We write $\rho(X)$ for $\tval{\rho}{\zeta[\alpha\mapsto X]}$ and
show that for all compact elements 
$a \in \subrank{\rho(X)}{(n+\depth{\alpha}{\rho})}$, we have $a \in \rho(Y)$.
The proof is by induction on $\rk(a)$.
We do a case analysis on $\rho$. Thanks to Lemma~\ref{lem-typ-reg}~(b) we may
skip fixed point types.

Let $a \in \subrank{\rho(X)}{(n+\depth{\alpha}{\rho})}$. 
If $a=\bot$ then the assertion holds since all subdomains contain $\bot$.
Therefore in the following we assume $a\neq\bot$.

\emph{Case $\alpha$ is not free in $\rho$\/}. Then $\rho(X)=\rho(Y)$ and therefore
the assertion holds trivially. 

\emph{Case $\rho=\alpha$\/}. Then $\rho(X) = X$, $\rho(Y) = Y$ and 
$\depth{\alpha}{\rho} = 0$. Therefore, the assertion is again trivial.

\emph{Case $\rho=\rho_1+\rho_2$\/}.  
W.l.o.g. $a = \Left(b)$ with $b \in \rho_1(X)$. 
Since $\rk(a) \le n+\depth{\alpha}{\rho} \le n+ 1 + \depth{\alpha}{\rho_1}$
and $\rk(a) = 1 + \rk(b)$ it follows $\rk(b) \le n+ \depth{\alpha}{\rho_1}$, 
that is, $b \in\subrank{\rho_1(X)}{(n+\depth{\alpha}{\rho_1})}$. 
By induction hypothesis $b \in\rho_1(Y)$, hence $a \in \rho(Y)$.

\emph{Case $\rho=\rho_1\times\rho_2$\/}.
Then $a = \Pair(a_1,a_2)$ with $a_i \in \rho_i(X)$ ($i=1,2$). 
Since $\rk(a) \le n+ \depth{\alpha}{\rho} \le 1 + n+ \depth{\alpha}{\rho_i}$
and $\rk(a) \ge 1 + \rk(a_i)$ it follows $\rk(a_i) \le n+ \depth{\alpha}{\rho_i}$, 
that is, $a_i \in\subrank{\rho_i(X)}{(n+\depth{\alpha}{\rho_i})}$. 
By induction hypothesis $a_i \in\rho_i(Y)$, hence $a \in \rho(Y)$.

\emph{Case $\rho=\ftyp{\rho_1}{\rho_2}$\/}.
Then $a = \Fun(f)$ with $f\in D\to D$ such that $f[\rho_1(X)] \subseteq \rho_2(X)$
and $\rk(f(a_1)) < \rk(a)$ for all $a_1\in D$. 
We have to show $a\in \rho(Y)$, that is $f[\rho_1(Y)] \subseteq \rho_2(Y)$.
Hence we assume $a_1\in\rho_1(Y)$ and show $f(a_1)\in\rho_2(Y)$.
Since $\rho$ is s.p. in $\alpha$, $\alpha$ is not free in $\rho_1$.
Therefore $\rho_1(X) = \rho_1(Y)$ and we have $a_1\in\rho_1(X)$. 
Since $\rk(f(a_1)) < \rk(a) \le n+ \depth{\alpha}{\rho} = n + \depth{\alpha}{\rho_2}$
it follows $\rk(f(a_1)) \le n+ \depth{\alpha}{\rho_2}$, i.e.\ 
$f(a_1) \in \subrank{\rho_2(X)}{(n+\depth{\alpha}{\rho_2})}$. 
By induction hypothesis $f(a_1) \in\rho_2(Y)$.
\end{proof}






%
\begin{lemma}
\label{lem-sub-sub}
Let $\rho$ be regular and s.p.\ in $\alpha$ with $\depth{\alpha}{\rho}>0$. 
Assume $X \subseteq \tval{\rho}{\zeta[\alpha\mapsto Z]}$ 
and  $\tval{\rho}{\zeta[\alpha\mapsto Y]} \subseteq Y$,
where $Z$ is the least subdomain containing $X \cup Y$. 
Then $X \subseteq Y$.

\end{lemma}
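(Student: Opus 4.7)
The plan is to invoke Lemma~\ref{lem-subrank}, reducing the goal $X \subseteq Y$ to showing $\subrank{X}{n} \subseteq Y$ for every $n \in \NN$, which I would establish by strong induction on $n$. The two ingredients driving the step are the hypothesis $X \subseteq \tval{\rho}{\zeta[\alpha\mapsto X]}$ (to move $X$ inside $\rho$) and Lemma~\ref{lem-depth-subrank} (which exchanges $X$ for $Y$ inside $\rho$, at the cost of raising the rank-bound by $d := \depth{\alpha}{\rho}$).

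For the inductive step, fix $n \geq 1$ and set $m := \max(0, n-d)$. Since $d \geq 1$ by hypothesis, $m < n$, so the induction hypothesis supplies $\subrank{X}{m} \subseteq Y$. Applying Lemma~\ref{lem-depth-subrank} (whose assumptions on $\rho$ are exactly those of the present lemma) then yields
\[
\subrank{\tval{\rho}{\zeta[\alpha\mapsto X]}}{m+d} \ \subseteq\ \tval{\rho}{\zeta[\alpha\mapsto Y]} \ \subseteq\ Y,
\]
the last inclusion being the assumption on $Y$. Combined with $X \subseteq \tval{\rho}{\zeta[\alpha\mapsto X]}$ and $m+d \geq n$, this delivers $\subrank{X}{n} \subseteq Y$.

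The main obstacle is the base case $n = 0$, which cannot be bootstrapped via Lemma~\ref{lem-depth-subrank} since no smaller rank-bound is available. Properties~($\rk 1$) and~($\rk 2$) force every compact of rank $0$ in $D$ to lie in $\{\bot, \Nil\}$: a constructor application $C(a_1,\ldots,a_k)$ with $k \geq 1$ has rank strictly greater than some $\rk(a_i) \geq 0$, and every $\Fun(f)$ has rank strictly greater than $\rk(f(\bot))$. The case $a = \bot$ is immediate. If $\Nil \in X$, then $\Nil \in \tval{\rho}{\zeta[\alpha\mapsto X]}$; using Lemma~\ref{lem-typ-reg}(b) we may assume $\rho$ is a non-fixed-point type, and the assumption $d \geq 1$ rules out $\rho = \alpha$. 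A direct inspection of the remaining forms $\one$, $\rho_1 + \rho_2$, $\rho_1 \times \rho_2$, $\ftyp{\rho_1}{\rho_2}$ shows that $\Nil$ appears in the interpretation only when $\rho = \one$, in which case $\tval{\rho}{\zeta[\alpha\mapsto Y]} = \{\bot, \Nil\}$, so $\Nil \in Y$. It is precisely here that the regularity of $\rho$ and the assumption $\depth{\alpha}{\rho} > 0$ are put to essential use.
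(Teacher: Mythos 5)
Your proof is correct and follows essentially the same route as the paper's: reduce the goal to $\subrank{X}{n}\subseteq Y$ for all $n$ via Lemma~\ref{lem-subrank}, and induct on $n$, using Lemma~\ref{lem-depth-subrank} together with $\depth{\alpha}{\rho}\geq 1$ to push the rank bound down in the step. The only divergence is the base case, where the paper simply asserts $\subrank{X}{0}=\{\bot\}$ while you carefully account for the possibility that the nullary constructor $\Nil$ has rank $0$ (which the stated rank properties $\rk 1$ and $\rk 2$ do not exclude) --- a legitimate refinement; just note that your case inspection should also include $\rho$ a type variable other than $\alpha$, where the conclusion is immediate because the two interpretations of $\rho$ coincide.
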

\begin{proof}

By Lemma~\ref{lem-subrank} it suffices to show that
$\subrank{X}{n} \subseteq Y$ for all $n\in\NN$. We induct on $n$. 

$n=0$: $\subrank{X}{0} = \{\bot\} \subseteq Y$.

$n+1$: By induction hypothesis, $\subrank{X}{n} \subseteq Y$. 
Hence, $\subrank{Z}{n} \subseteq Y$. 
Since $\depth{\alpha}{\rho}>0$ it follows with 
Lemma~\ref{lem-depth-subrank} that
$\subrank{\tval{\rho}{\zeta[\alpha\mapsto Z]}}{(n+1)}
 \subseteq \tval{\rho}{\zeta[\alpha\mapsto Y]}$. Therefore,
\[ 
\subrank{X}{(n+1)} 
\subseteq
\subrank{\tval{\rho}{\zeta[\alpha\mapsto Z]}}{(n+1)}
\subseteq 
\tval{\rho}{\zeta[\alpha\mapsto Y]}
\subseteq
Y
\]
\end{proof}

\begin{theorem}[Uniqueness of fixed points]
\label{thm-sub-fix}
Let $\tfix{\alpha}{\rho}$ be regular.
\begin{itemize}
\item[(a)] If $X \subseteq \tval{\rho}{\zeta[\alpha\mapsto Y]}$
for all $Y \supseteq X \cup \tval{\tfix{\alpha}{\rho}}{\zeta}$,  
then $X \subseteq \tval{\tfix{\alpha}{\rho}}{\zeta}$.
\item[(b)] If $X \supseteq \tval{\rho}{\zeta[\alpha\mapsto Y]}$ 
for all $Y \subseteq X \cap \tval{\tfix{\alpha}{\rho}}{\zeta}$,  
then $X \supseteq \tval{\tfix{\alpha}{\rho}}{\zeta}$.
\end{itemize}
In particular $X = \tval{\rho}{\zeta[\alpha\mapsto X]}$
iff $X = \tval{\tfix{\alpha}{\rho}}{\zeta}$,
\end{theorem}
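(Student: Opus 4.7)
My plan is to deduce both inclusions from Lemma~\ref{lem-sub-sub} by letting $\tval{\tfix{\alpha}{\rho}}{\zeta}$ play the opposite role in each case, and then combine them with Lemma~\ref{lem-typ-fix} to derive the concluding equivalence. The crucial input is that, by Lemma~\ref{lem-typ-fix}, the subdomain $\tval{\tfix{\alpha}{\rho}}{\zeta}$ is an actual fixed point of the operator $F \eqdef \lambda X\,\tval{\rho}{\zeta[\alpha \mapsto X]}$, and hence is simultaneously a pre-fixed and a post-fixed point of $F$.

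For the first implication, I assume $X \subseteq F(X)$, set $Y \eqdef \tval{\tfix{\alpha}{\rho}}{\zeta}$, observe $F(Y) \subseteq Y$ by Lemma~\ref{lem-typ-fix}, and invoke Lemma~\ref{lem-sub-sub} to obtain $X \subseteq Y$. For the second implication, given $X \supseteq F(X)$, I let $X' \eqdef \tval{\tfix{\alpha}{\rho}}{\zeta}$, use Lemma~\ref{lem-typ-fix} to get $X' \subseteq F(X')$, and apply Lemma~\ref{lem-sub-sub} (with $X'$ in the role of ``$X$'' and the given $X$ in the role of ``$Y$'') to conclude $X' \subseteq X$. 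The right-to-left direction of the final ``iff'' is then immediate from Lemma~\ref{lem-typ-fix}, and the left-to-right direction follows by applying both implications to the two inclusions contained in $X = F(X)$.

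The one genuine obstacle is the side condition $\depth{\alpha}{\rho} > 0$ in Lemma~\ref{lem-sub-sub}; I handle the degenerate alternatives separately. If $\alpha \notin \FV(\rho)$, that is $\depth{\alpha}{\rho} = \infty$, then $F$ is constant and Lemma~\ref{lem-typ-fix} identifies that constant with $\tval{\tfix{\alpha}{\rho}}{\zeta}$, so both inclusions collapse to their hypotheses. Otherwise I would first invoke Lemma~\ref{lem-typ-reg}(b) to replace $\rho$ by a semantically equivalent non-fixed-point type $\rho'$; since $\tfix{\alpha}{\rho}$ is regular, $\rho$ cannot unfold to $\alpha$ itself, so $\rho'$ is a sum, product or function type, and in the sum and product cases one has $\depth{\alpha}{\rho'} \geq 1$ by the definition of $\depth{\alpha}{\cdot}$.

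The delicate sub-case, and what I expect to be the main technical hurdle, is $\rho' = \ftyp{\sigma}{\tau}$ with $\depth{\alpha}{\tau} = 0$, where the straightforward rank-based induction of Lemma~\ref{lem-sub-sub} breaks down because the function-type clause of Lemma~\ref{lem-depth-subrank} does not increase depth. Here I would exploit the properties $\rk 1$--$\rk 2$ of compact elements of $D$ more directly: a compact function $\Fun(f)$ at rank $n+1$ is determined by finitely many compact arguments of strictly smaller rank, so an iterated unfolding of the $\ftyp{}{}$-structure (equivalently, a separate rank-induction that descends through the function case by one unit per step) should let one transfer the inductive hypothesis $\subrank{X}{n} \subseteq Y$ to the next rank even when $\depth{\alpha}{\rho} = 0$, thereby recovering the inclusion $X \subseteq Y$ in this remaining case.
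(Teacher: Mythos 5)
Your core argument is exactly the paper's: both implications follow from Lemma~\ref{lem-sub-sub} with $\tval{\tfix{\alpha}{\rho}}{\zeta}$ inserted on the appropriate side, using Lemma~\ref{lem-typ-fix} to know that it is simultaneously a pre- and post-fixed point, and the concluding ``iff'' is assembled the same way. Where you diverge is in the treatment of the side condition $\depth{\alpha}{\rho}>0$: the paper disposes of it in half a sentence with ``$\depth{\alpha}{\rho}>0$ since $\tfix{\alpha}{\rho}$ is regular,'' whereas you make a case analysis and isolate the one problematic configuration. Your scrutiny is warranted: with the definitions as literally stated, regularity only forbids subexpressions $\tfix{\alpha}{\tfix{\beta_1}{\ldots\tfix{\beta_n}{\alpha}}}$, so a type such as $\tfix{\alpha}{\ftyp{\sigma}{\alpha}}$ is regular while $\depth{\alpha}{\ftyp{\sigma}{\alpha}}=\depth{\alpha}{\alpha}=0$, and the paper's one-line justification does not cover it. Your proposed repair is the right idea, though you leave it as a sketch; the cleanest way to execute it is to observe that the function-type case of Lemma~\ref{lem-depth-subrank} already gains one rank unit for free from property $\rk 2$ (the proof only uses $\rk(f(a_1))<\rk(a)$ to land at the \emph{same} bound $n+\depth{\alpha}{\rho_2}$), so one may harmlessly redefine $\depth{\alpha}{\ftyp{\rho_1}{\rho_2}}=1+\depth{\alpha}{\rho_2}$; then every regular non-variable $\rho$ with $\alpha$ free has positive depth and Lemma~\ref{lem-sub-sub} applies uniformly, which is exactly the ``descend one unit per function layer'' induction you describe. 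Two small remarks: the case $\alpha\notin\FV(\rho)$ needs no separate treatment, since $\depth{\alpha}{\rho}=\infty>0$ already satisfies the hypothesis of Lemma~\ref{lem-sub-sub}; and when you pass to the non-fixed-point form $\rho'$ via Lemma~\ref{lem-typ-reg}(b) you should note explicitly that semantic equality of $\rho$ and $\rho'$ transfers both the hypotheses and the fixed point $\tval{\tfix{\alpha}{\rho}}{\zeta}=\tval{\tfix{\alpha}{\rho'}}{\zeta}$, which holds because the latter is defined purely from the semantic operator.
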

\begin{proof}
For the first implication use Lemma~\ref{lem-sub-sub} with 
$Y \eqdef \tval{\tfix{\alpha}{\rho}}{\zeta}$, 
noting that $\tval{\rho}{\zeta[\alpha\mapsto Y]} = Y$ 
by Lemma~\ref{lem-typ-fix},
and $\depth{\alpha}{\rho}>0$ since $\tfix{\alpha}{\rho}$ is regular. 
For the second implication the argument is similar. 
\end{proof}

\emph{Remark\/}. In \cite{MacQueenPlotkinSethi86} a similar result is obtained
for a larger type system that includes universal and existential type 
quantification as well as union and intersection types,
and permitting fixed point types without positivity condition.
Types are interpreted 
as ideals, which are similar to subdomains but are only closed under directed 
suprema. Subdomains are called strong ideals in~\cite{MacQueenPlotkinSethi86}.
The existence of fixed points is proven using the Banach Fixed Point Theorem
w.r.t.\ a metric $d$ such that for $X\neq Y$ as 
$d(X,Y) \eqdef \min\{2^{-n}\mid \subrank{X}{n}\neq\subrank{Y}{n}\}$. 
We added strict positivity since this provides stronger information about
extracted programs (see e.g.\ Lemma~\ref{lem-typ-E} and Thm.~\ref{thm-pe}, 
and the remark after Lemma~\ref{lem-typ-rifp}) 
and the definition of fixed points is more direct.

\subsection{The formal system $\RIFP$}
\label{sub-rifp}
We introduce an extension $\RIFP$ of $\IFP$ suitable for a formal definition
of realizability and a formal proof of its soundness.
In addition to the sorts of $\IFP$, $\RIFP$ contains the sorts $\dsort$ 
denoting the domain $D$, and $\subd$ denoting the set of subdomains of $D$.
Programs are terms of sort $\dsort$, types are terms of sort $\subd$.
We also add a new relation symbol $:$ of arity $(\dsort,\subd)$ where 
$a:\alpha$ means that $a$ is an element of the subdomain $\alpha$. 
We write $\forall a:\rho\,A$ for $\forall a\,(a:\rho \to A)$ and
$\exists a:\rho\,A$ for $\exists a\,(a:\rho \land A)$.
We identify a type $\rho$ with the predicate $\lambda a.\,a:\rho$, 
so that $\rho(a)$ stands for $a:\rho$ and, for example,  
$\rho\subseteq \sigma$ means
$\forall a\,(a : \rho \to a : \sigma)$.

In addition to the axioms and rules of $\IFP$, which are extended to
the language of $\RIFP$ in the obvious way 
(and which include stability of equations),
$\RIFP$ contains 
(universally quantified) axioms that reflect the denotational semantics of programs and types 
as well as those that express injectivity, range disjointness 
and compactness of constructors.
Since we will not apply a realizability interpretation to $\RIFP$ we do not 
need to restrict axioms to nc formulas.
We use the abbreviation
$\isfun(a) \eqdef \exists b\,(a = \lambda c\,.\,(b\,c))$.
\paragraph{Axioms for programs}
\label{par-ax-prog}

\begin{align*}
 \hbox{(i) }&    \case\,C_i(\vec b)\,
   \of\{C_1(\vec a_1) \to M_1; \ldots; C_n(\vec a_n)\to M_n\}     
       =  M_i[\vec b/\vec a_i]
  \\
\hbox{(ii) }&  
    \bigwedge_{i} \forall \vec b\,a \neq C_i(\vec b) \too 
    \case\,a\,\of\{C_1(\vec a_1) \to M_1;\ldots;C_n(\vec a_n) \to M_n\} 
    =  \botexp 
  \\
 \hbox{(iii) }&(\lambda b.\,M)\,a          =  M[a/b] \\  
 \hbox{(iv) }&\neg \isfun(a) \too a\,b =\botexp\\
\hbox{(v) }&\isfun(a)\land \isfun(b) \land 
\forall c\,(a\,c =b\,c) \too a = b\\
{\hbox{(vi) }}&{
    \bigexor_{C\ \hbox{constructor}} \exone\,\vec b\,(a=C(\vec b))\ \exor\ 
    \isfun(a) \ \exor\ a = \botexp}\\ 
\hbox{(vii) }&\rec\,a = a\,(\rec\,a)\\
\hbox{(viii) }& 
P(\botexp) \land \forall b\,(P(b) \too P(a\,b)) \to P(\rec\,a)
\quad\hbox{($P$ admissible)} 
\end{align*}
where an $\RIFP$ predicate of arity $(\dsort)$ 
is called \emph{admissible} 
if it contains neither free predicate variables nor existential quantifiers
nor inductive definitions.
\paragraph{Axioms for types}
\label{par-ax-typ}
\begin{align*}
%
\hbox{(ix) }&    \bot : \alpha\\  
\hbox{(x) }&  
  \rho[\tfix{\alpha}{\rho}/\alpha] \equiv \tfix{\alpha}{\rho}\\
\hbox{(xi) }& 
\forall \gamma\,(\beta \cup (\tfix{\alpha}{\rho}) \subseteq \gamma \to 
          \beta \subseteq \rho[\gamma/\alpha]) 
   \too \beta \subseteq \tfix{\alpha}{\rho}
   \quad\hbox{($\tfix{\alpha}{\rho}$ regular)}\\ 
\hbox{(xii) }&   
   c : \one \toot (c = \Nil \lor c = \botexp )\\
\hbox{(xiii) }&  
   c : \alpha\times\beta \toot (
           \exists a:\alpha,b:\beta\,(c=\Pair(a,b)) \lor c = \botexp )\\
\hbox{(xiv) }&   
  c : \alpha + \beta \toot (
 \exists a:\alpha\,(c = \Left(a)) \lor \exists b:\beta\,(c = \Right(b))
  \lor c =\botexp )\\
\hbox{(xv) }&    c : \ftyp{\alpha}{\beta} \toot (
  {(\isfun(c)} \land \forall a:\alpha\,(c\,a:\beta)) \lor c =\botexp )\\
\hbox{(xvi) }& 
        \exists\alpha\,\forall\beta\,
           (P\subseteq\beta \leftrightarrow \alpha\subseteq\beta)
        \quad\hbox{($P$ an $\RIFP$ predicate of arity $(\subd)$)}
\end{align*}
Clearly, axioms (i-vii) and (xii-xv) are correct in $D$ respectively 
in $\subdoms$.
Axiom (viii) is a restricted form of Scott-induction,
a.k.a.\ fixed point induction.
It is a way of expressing that $\rec\,a$ is the \emph{least} 
fixed point of $a$, that is, the supremum of the chain 
$\bot \dle a\,\bot \dle a\,(a\,\bot) \dle \ldots$.  
Scott-induction holds more generally for predicates that are closed 
under suprema of chains (such predicates are called 
\emph{inclusive}~in~\cite{Winskel93}).
It is easy to see that admissible predicates have this property.
As an example of Scott-induction, we show 
that every type is closed under 
least fixed points of endofunctions, that is,
\[a : \ftyp{\alpha}{\alpha} \too \rec\,a : \alpha.\]
Indeed, assuming $a : \ftyp{\alpha}{\alpha}$, the admissible predicate
$P \eqdef (\lambda b\,.\,b : \alpha)$ satisfies the premises of (viii)
since $\bot:\alpha$ by axiom (ix) (which is valid since all subdomains of $D$
contain $\bot$).
Obviously, Scott-induction is also valid for admissible predicates of 
more than one argument, e.g.\ 
\[P(\botexp,\botexp) \land \forall b_1,b_2\,(P(b_1,b_2) 
\too P(a_1\,b_1,a_2\,b_2)) \to P(\rec\,a_1,\rec\,a_2)\]
and Axiom~(viii) should be understood in this more general form.
Axioms (x) and (xi) hold by Lemma~\ref{lem-typ-fix} and Theorem~\ref{thm-sub-fix}.
Axiom (xvi) expresses the existence of the subdomain generated by $P$
and can be viewed as a form of comprehension.

We set $\RIFP(\ax) \eqdef \IFP(\ax\cup\ax')$ where
$\ax'$ consist of the axioms (i-xvi) for programs and types above.
We write $\RIFP$ for $\RIFP(\ax)$ if the set of axioms is not important.

The following lemma will be used later to simplify 
extracted programs.
\begin{lemma}
\label{lem-bot}
$\RIFP(\emptyset)$ proves:
If $f$ is strict, that is, $f\,\botexp = \botexp$, then
\begin{align*}
f(\case\,M\, \of\, \{C_1(\vec a_1) \to L_1;\ldots; C_n(\vec a_n) \to L_n\})\\
&\hspace*{-6cm} = 
 \case\,M\, \of\, \{C_1(\vec a_1)\to f\,L_1;\ldots; C_n(\vec a_n)\to f\,L_n\}\,.
\end{align*}
\end{lemma}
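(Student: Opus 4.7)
The plan is to reason by case analysis on the value of the scrutinee $M$, using the total case‐distinction axiom (vi) which asserts that every element of $D$ is either a constructor application (uniquely), a function, or $\botexp$. Call the LHS and RHS of the claimed equation $E_L$ and $E_R$ respectively, and abbreviate the case arms as $\{C_i(\vec a_i)\to L_i\}$ and $\{C_i(\vec a_i)\to f\,L_i\}$. Assume (by $\alpha$-renaming if necessary) that the bound variables $\vec a_i$ do not occur free in $f$.

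First I would treat the "matching" case: suppose there is some $i\in\{1,\ldots,n\}$ and a tuple $\vec b$ with $M=C_i(\vec b)$. By axiom (i), $E_L$ reduces to $f\,(L_i[\vec b/\vec a_i])$, and $E_R$ reduces to $(f\,L_i)[\vec b/\vec a_i]$. Since $\vec a_i$ is not free in $f$, the substitution lemma gives $(f\,L_i)[\vec b/\vec a_i] = f\,(L_i[\vec b/\vec a_i])$, so $E_L = E_R$.

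Next I would treat the "non-matching" case: by axiom (vi), the remaining possibilities are that $M$ is a constructor application $C(\vec b)$ for a constructor $C$ distinct from all of $C_1,\ldots,C_n$, or $M$ is a function (so $\isfun(M)$ holds and $M \ne C_i(\vec b)$ for each $i$ by range disjointness), or $M = \botexp$ (so again $M\ne C_i(\vec b)$ for each $i$). In each subcase the premise of axiom (ii) is satisfied for both case expressions, hence both $E_L$ without the outer $f$ and $E_R$ equal $\botexp$. Therefore $E_L = f\,\botexp = \botexp = E_R$, where the middle equation is the strictness hypothesis on $f$.

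The only mild subtlety is in the non-matching subcases: for the two non-constructor alternatives one needs that $M\ne C_i(\vec b)$ for every $i$ and every $\vec b$, which follows from range disjointness of the constructors and the mutual disjointness of the constructor, function, and $\botexp$ summands encoded in the exclusive disjunction of axiom (vi). Once that is in hand, combining axioms (i), (ii), (vi) with the substitution lemma and strictness of $f$ gives the equation. No induction or fixed-point reasoning is required, so the argument is a short equational case split inside $\RIFP(\emptyset)$.
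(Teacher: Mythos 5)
Your proof is correct and takes essentially the same route as the paper's: a case split on whether $M$ matches one of the patterns $C_i(\vec a_i)$, with axiom (i) handling the matching case and axiom (ii) together with strictness of $f$ handling the rest. The only (inessential) difference is how the case split is licensed — you derive it constructively from the exhaustive exclusive disjunction of axiom (vi), whereas the paper appeals to $\neg\neg$-stability of equations and argues classically.
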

\begin{proof}
Let $f$ be strict. We have to prove the equation $f\,K = K'$  where $K \eqdef \case\,M\, \of\, \{C_1(\vec a_1) \to L_1;\ldots; C_n(\vec a_n) \to L_n\}$ and $K' \eqdef \case\,M\, \of\, \{C_1(\vec a_1) \to f\,L_1;\ldots; C_n(\vec a_n) \to f\,L_n\}$. 
Since we have to prove an equation and 
we assume equations to be $\neg\neg$-stable, we may use classical logic.
If $M=C_i(\vec b)$ for some $i$ and $\vec b$, then $K= L_i[\vec b/\vec a_i]$ 
and $K'=f\,L_i[\vec b/\vec a_i]$, by axiom~(i), and the equation holds. 
Otherwise, $K=K'=\botexp$ by axiom (ii), and the equation holds since $f$ 
is strict.
\end{proof}
\begin{lemma}
\label{lem-typ-rifp}
The following typing rules are derivable in $\RIFP(\emptyset)$ (where
$\Gamma$ is a typing context, that is, a list of assumptions
$a_1:\rho_1,\ldots a_n:\rho_n$).
\begin{center}
$\Gamma, a:\rho \vdash a:\rho$
\hspace{3em} 
$\Gamma \vdash \Nil:\one$
\hspace{3em}
$\Gamma \vdash \bot:\rho$
\hspace{3em}
\\[0.5em]
\AxiomC{$\Gamma\vdash M:\rho$}
             \UnaryInfC{$\Gamma \vdash \Left(M) : \rho + \sigma$}
            \DisplayProof 
\hspace{3em} 
\AxiomC{$\Gamma\vdash M:\sigma$}
             \UnaryInfC{$\Gamma \vdash \Right(M) : \rho + \sigma$}
            \DisplayProof \ \ \ \ 
\\[0.5em]
\AxiomC{$\Gamma\vdash M:\rho$}
\AxiomC{$\Gamma\vdash N:\sigma$}
             \BinaryInfC{$\Gamma \vdash \Pair(M,N) : \rho\times\sigma$}
            \DisplayProof 
\AxiomC{$\Gamma \vdash M:\rho\times\sigma$}
\AxiomC{$\Gamma, a:\rho, b:\sigma \vdash N:\tau$}
       \BinaryInfC{$\Gamma \vdash \case\,M\,\of\,\{\Pair(a,b) \to N\} : \tau$}
            \DisplayProof 
\\[0.5em]
\AxiomC{$\Gamma \vdash M:\rho + \sigma$}
\AxiomC{$\Gamma, a:\rho \vdash L:\tau$}
\AxiomC{$\Gamma, b:\sigma \vdash R:\tau$}
\TrinaryInfC{$\Gamma\vdash\case\,M\,\of\,\{\Left(a)\to L\,;\,\Right(b)\to R\}:\tau$}
            \DisplayProof \ \ \ \ 
\\[0.5em]
\AxiomC{$\Gamma, a:\rho\vdash M:\sigma$}
             \UnaryInfC{$\Gamma \vdash \lambda a.\,M : \ftyp{\rho}{\sigma}$}
            \DisplayProof 
\hspace{3em} 
\AxiomC{$\Gamma\vdash M:\ftyp{\rho}{\sigma}$}
\AxiomC{$\Gamma\vdash N:\rho$}
             \BinaryInfC{$\Gamma \vdash M\,N : \sigma$}
            \DisplayProof \ \ \ \ 
\\[0.5em]
\AxiomC{$\Gamma \vdash M : \rho[\tfix{\alpha}{\rho}/\alpha]$}
\RightLabel{{{\bf ROLL}}}
             \UnaryInfC{$\Gamma \vdash M : \tfix{\alpha}{\rho}$}
            \DisplayProof 
\hspace{3em} 
\AxiomC{$\Gamma \vdash M : \tfix{\alpha}{\rho}$}
\RightLabel{{{\bf UNROLL}}}
             \UnaryInfC{$\Gamma \vdash M : \rho[\tfix{\alpha}{\rho}/\alpha]$}
            \DisplayProof 
\\[0.5em]
\AxiomC{$\Gamma, a:\rho\vdash M\,a:\rho$}
             \UnaryInfC{$\Gamma \vdash \rec\,M : \rho$}
            \DisplayProof 
{($a$ not free in $M$)}

\hspace{3em} 
\end{center}
\end{lemma}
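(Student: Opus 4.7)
The plan is to derive each typing rule individually from the corresponding type axioms (ix)--(xv) in combination with the program axioms (i)--(viii). The pattern is uniform: given the hypotheses, unfold the definition of membership in the target type via axioms (xii)--(xv), and show that the extracted program term witnesses one of the disjuncts. In every case where the target type is a sum, product, function, or $\one$ type, the ``or $c = \botexp$'' disjunct provides a safety net that is handled by $\botexp : \rho$ (axiom (ix)).

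The trivial rules (assumption, $\Nil : \one$, $\botexp : \rho$, $\Left/\Right$, $\Pair$) reduce immediately to the corresponding axioms. The ROLL/UNROLL rules are direct from axiom (x). For the destructors on products and sums, I would argue by case distinction using axiom (vi): if $M = C(\vec b)$ for the matching constructor, the case expression reduces via (i) and the body's typing assumption applies; if $M$ matches a different constructor, or is a function or $\botexp$, then the case expression equals $\botexp$ by axiom (ii), and $\botexp : \tau$ by (ix). Here (xiii)/(xiv) constrain the shape of $M$ so that only the $\botexp$ clause and the matching constructor clause are relevant.

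For the $\lambda$-abstraction rule, I need to discharge the $\isfun$ conjunct in (xv) by choosing witness $b := \lambda a.\,M$ and using axiom (iii) together with $\alpha$-conversion to get $\lambda c.((\lambda a.\,M)\,c) = \lambda c.\,M[c/a] = \lambda a.\,M$; the application clause $(\lambda a.\,M)\,a : \sigma$ then follows from (iii) and the hypothesis. For the application rule, I would use (xv) to split on $M$: in the function case the conclusion is immediate; in the $M = \botexp$ case, axiom (vi) gives $\neg\isfun(\botexp)$, so (iv) yields $M\,N = \botexp$, and again (ix) finishes.

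The main obstacle is the $\rec$-rule, which genuinely requires Scott-induction (axiom (viii)). I would apply it to the predicate $P(b) \eqdef b : \rho$, noting that $P$ is admissible because the atomic formula $b : \rho$ contains no free predicate variables, no existential quantifier, and no $\mu$/$\nu$ definition (the type constructor $\tfix{\alpha}{\rho}$ lives in sort $\subd$ and is not an inductive \emph{predicate} definition). The base case $P(\botexp)$ is axiom (ix). For the step case $\forall b\,(P(b) \to P(M\,b))$, the side-condition that $a$ is not free in $M$ lets me rename the hypothesis $\Gamma, a:\rho \vdash M\,a : \rho$ to $\Gamma, b:\rho \vdash M\,b : \rho$ and then generalize over $b$. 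Scott-induction then delivers $P(\rec\,M)$, i.e.\ $\rec\,M : \rho$. One minor point to check carefully is that the premise of the rule as stated reads $\Gamma, a:\rho \vdash M\,a : \rho$ rather than $\Gamma \vdash M : \ftyp{\rho}{\rho}$, which is strictly weaker and precisely matches the shape needed by (viii).
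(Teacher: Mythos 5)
Your proposal is correct and matches the paper's approach: the paper's own proof is just ``Immediate from the axioms for types,'' and your elaboration (reading each rule off axioms (ix)--(xv), with the $\botexp$ disjunct absorbed by axiom (ix)) is exactly what is intended, while your treatment of the $\rec$-rule via Scott-induction on the admissible predicate $b:\rho$ is precisely the example the paper itself works out in its discussion of axiom (viii). The only point glossed over at the same level as the paper is the discharge of $\isfun(\lambda a.\,M)$, which needs rewriting under the binder via the $\beta$-axiom; everything else is spelled out correctly.
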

\begin{proof}
Immediate from the axioms for types.
\end{proof}
\emph{Remark\/}.
 Only terms typable with these rules will be extracted in Sect.~{\ref{sub-pe}}.
Note that the $Y$-combinator~(Sect.~\ref{sub-programs}) is not typable by these rules
since its type must be of the form $\ftyp{(\ftyp{\rho}{\rho})}{\rho}$, 
and in order to type the self application $(a\,a)$ occurring in $Y$ 
one needs a type $\sigma$
satisfying $\sigma \equiv \ftyp{\sigma}{\rho}$, that is, a fixed point
of a non-positive type operator. 

\subsection{Translation to Haskell }
\label{sub-haskell}

We sketch how to translate
typable $\RIFP$ programs into Haskell.
First we define a Haskell type $\Haskell{\rho}$ for each type $\rho$ and 
a sequence of Haskell {algebraic data type} declarations.
We begin with the declaration
$$
\mathsf {data\  One = Nil},
$$
and then define
\begin{itemize}
\item[(i)] $\Haskell{\one} = \mathsf{One}$
\item[(ii)] $\Haskell{\alpha} = \alpha$
\item[(iii)] $\Haskell{\rho + \sigma} = {\mathsf{Either}}\ \Haskell{\rho}\ \Haskell{\sigma} $
\item[(iv)]  $\Haskell{\rho \times \sigma}  = (\Haskell{\rho}, \Haskell{\sigma}) $
\item[(v)]  $\Haskell{\ftyp{\rho}{\sigma}}  = \Haskell{\rho} \to \Haskell{\sigma} $
\item[(vi)]  $\Haskell{\tfix{\alpha}{\rho}}  = \cona\,\vec{\beta}$
  \end{itemize}
In case (v), $\to$ is Haskell's function type constructor,
in case (vi), $\cona$ is a new name generated from $\alpha$ and $\rho$,
and $\vec{\beta}$ is a list of the free type variables in 
$\tfix{\alpha}{\rho}$.
The list of Haskell data type declarations is extended by 
the following
{recursive and possibly polymorphic} data type $\cona$
with one constructor which we again call $\cona$.
$$
\mathsf{data}\ \cona\,\vec{\beta} = 
  \cona\, \Haskell{\rho}[\cona\,\vec{\beta}/\alpha]
$$
To accommodate the typing rules {\bf ROLL} and {\bf UNROLL} we need
Haskell programs 
\begin{align*}
\mathsf{roll}_{\cona} :: 
          \Haskell{\rho}[\cona\,\vec{\beta}/\alpha] 
\to
            \cona\,\vec{\beta}  
&\qquad
\mathsf{roll}_{\cona} \,\mathsf{x} = \cona\,\mathsf{x}\\
\mathsf{unroll}_{\cona} :: 
               \cona\,\vec{\beta}  
\to 
\Haskell{\rho}[\cona\,\vec{\beta}/\alpha]
&\qquad
\mathsf{unroll}_{\cona} (\cona\,\mathsf{x}) = \mathsf{x}
\end{align*}
{and for recursive programs a 
fixed point combinator
\begin{align*}
\mathsf{rec} :: 
(\alpha\to\alpha)\to\alpha
&\qquad
\mathsf{rec}\, \mathsf{f} = \mathsf{f}\, (\mathsf{rec}\, \mathsf{f})
\end{align*}
Now suppose that $d$ is a type derivation of $M:\rho$ built from the 
typing rules in Lemma \ref{lem-typ-rifp}.
We define a Haskell program $\Haskell{d}$ of type $\Haskell{\rho}$ as follows.
By considering $\Pair(M,N)$ as the Haskell term $(M, N)$,
our program is an untyped Haskell program.
$\Haskell{d}$ is obtained by inserting appropriate 
$\mathsf{roll}\underline{\ }$ and $\mathsf{unroll}\underline{\ }$ to $M$
following the type derivation $d$.
We do not modify $M$ for rules other than {\bf ROLL} and {\bf UNROLL}.
If $d$ ends with {\bf ROLL} with $\rho =  \tfix{\alpha}{\rho'}$,  we define
$\Haskell{d} = \mathsf{roll}_{\conad} \Haskell{d'}$.
If $d$ ends with {\bf UNROLL} and
$\rho =  \rho[\tfix{\alpha}{\rho'}/\alpha]$,  we define
$\Haskell{d} = \mathsf{unroll}_{\conad} \Haskell{d'}$.
Here, $d'$ is the derivation of the premise of {\bf ROLL} and
{\bf UNROLL}.
With the Haskell program $\Haskell{d}$ obtained in this way,
we have a sound derivation of the typing $\Haskell{d}:: \Haskell{\rho}$
in Haskell.

One can optimize this translation in several ways.
For example,  one can treat a type of the form $\tfix{\alpha}{\rho_1 + \ldots + \rho_k}$
so that it is {translated} to a data type with $k$ constructors.
One can also use Haskell's list type for 
$\tfix{\alpha}({\tau \times \alpha + \one})$ 
(i.e., finite/infinite list type)
and $\tfix{\alpha}({\tau \times \alpha})$ (i.e., infinite list type).

\subsection{Types of $\IFP$ expressions}
\label{sub-tau}

We inductively assign to every $\IFP$-expression (i.e., formula or predicate) $E$
a type $\tau(E)$.
The idea is that $\tau(A)$, for a formula $A$,
is the type of potential realizers of $A$.
We call an expression \emph{Harrop} if it contains neither 
disjunctions nor free predicate variables at strictly positive positions.
This deviates from the usual definition of the Harrop 
property~\cite{TroelstraSchwichtenberg96} since 
existential quantifiers at strictly positive positions are permitted. 
The reason for this is
that quantifiers are interpreted uniformly, that is, not witnessed by realizers.
Like nc formulas, Harrop formulas have no computational content, however, they
differ from nc formulas in that they need not coincide with their own 
realizability interpretation
(see Remark 3 at the end of this section).

We define $\tau(E)$ so that the type $\one$ is assigned to 
an expression iff it is Harrop.
In the following definition,  we say that
a predicate $P$ is \emph{$X$-Harrop} if $\lambda X\, P$ is Harrop, that is, 
if $P$ is strictly positive in $X$ and 
$P[\pcv{X}/X]$ is Harrop where $\pcv{X}$ is a predicate constant associated with $X$.
\begin{align*}
\tau(P(\vec t)) &= \tau(P)\\
\tau(A \lor B) &= \tau(A) + \tau(B)\\
\tau(A \land B) &= \tau(A) \times \tau(B) &\hbox{($A,B$ non-Harrop)}\\
                &= \tau(A)  &\hbox{($B$ Harrop, $A$ non-Harrop)}\\             
                &= \tau(B)  &\hbox{($A$ Harrop, $B$ non-Harrop)}\\             
               &= \one  &\hbox{($A,B$ Harrop)}\\             
\tau(A \to B) &= \ftyp{\tau(A)}{\tau(B)}  &\hbox{($A,B$ non-Harrop)}\\
              &= \tau(B)  &\hbox{(otherwise)}\\
\tau(\diamond x\,A) &= 
  \tau(A) &\hbox{($\diamond \in\{\forall,\exists\}$)}\\[.5em]
\tau(X) &= \alpha_X & \hspace{-1cm}
         \hbox{($X$ a predicate variable, $\alpha_X$ a fresh type variable)}\\
\tau(P) &= \one &\hbox{($P$ a predicate constant)}\\
\tau(\lambda \vec x\,A) &= \tau(A)\\
\tau(\diamond (\lambda X\,P)) &= \tfix{\alpha_X}{\tau(P)}
                  &\hbox{($\diamond \in\{\mu,\nu\}$, $P$ not $X$-Harrop)}\\
         &= \one &\hbox{($\diamond \in\{\mu,\nu\}$, $P$ $X$-Harrop)}
\end{align*}

\emph{Remark\/}.                         %
Though the semantics 
$\tval{\one}{\zeta}$ of the type $\one$ is $\{\Nil,\bot\}$,
we will stipulate in Section \ref{sub-realizability} 
that only $\Nil$ is a possible realizer of a Harrop expression. 
We will also define simplified realizers for products and 
implications if some of their components are Harrop
and therefore have corresponding simplified definitions of $\tau(A)$ 
for these cases.  
Note that 
we define the type of a (co)inductively defined Harrop
  predicate $\diamond (\lambda X\,P)$ to be $\one$.
Without this simplified type assignment 
a non-regular type may be assigned to a predicate, for example, 
$\tau(\False) = \tau(\mu (\lambda X\,X))$ would become 
$\tfix{\alpha_X}{\alpha_X}$.

\begin{lemma}
\label{lem-typ-subst}
For every expression $E$ (formula or predicate) and predicate $P$,
\begin{itemize}
\item[(a)] $E$ is Harrop if and only if $\tau(E) = \one$,
\item[(b)] $\tau(E)$ is regular,
\item[(c)] if $P$ is non-Harrop, then $\tau(E[P/X]) = \tau(E)[\tau(P)/\alpha_X]$,
\item[(d)] 
If $P$ is Harrop, then 
$\tau(E[P/X]) = \tau(E[\pcv{X}/X])$.
\end{itemize}
\end{lemma}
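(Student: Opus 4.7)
The plan is to prove all four parts simultaneously by structural induction on $E$, following the case splits in the definition of $\tau$. The four parts are tightly linked: (a) is needed in (c) and (d) to compare the Harrop status (and hence the case split of $\tau$) on each side of the claimed type equation. Part (a) itself is immediate from the definition, which returns $\one$ in exactly the Harrop cases and something non-$\one$ (a type variable, sum, product, arrow, or fixed point) otherwise.

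For (b), regularity can fail only in the step that introduces a fresh fixed point, namely $\tau(\mu(\lambda X\,P)) = \tfix{\alpha_X}{\tau(P)}$ (and analogously for $\nu$) when $P$ is not $X$-Harrop. Since $\tau(P)$ is regular by the induction hypothesis, regularity is destroyed only if $\tau(P) = \tfix{\beta_1}{\ldots\tfix{\beta_n}{\alpha_X}}$ for some $n \ge 0$. I will strengthen the induction with the auxiliary claim: whenever $\tau(Q)$ has this degenerate shape, $Q$ is $X$-Harrop. For $n=0$, induction on $Q$ shows that $\tau(Q) = \alpha_X$ is produced only by combinators that propagate types --- conjunctions and implications with a Harrop partner, quantifiers, abstractions, and the leaf $X(\vec t)$ --- none of which introduces a disjunction or a free predicate variable other than $X$ at an s.p.\ position; this is precisely $X$-Harropness. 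The $n \geq 1$ case peels off an outer $\mu/\nu$ wrapper (the only constructor that adds a $\mathbf{fix}$) and reduces to a smaller $n$. Since $P$ is assumed not $X$-Harrop, the degenerate form is excluded, and $\tau(\mu(\lambda X\,P))$ is regular.

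Parts (c) and (d) hinge on Harrop-preservation under the relevant substitution. For (c), with $P$ non-Harrop, $A$ is Harrop iff $A[P/X]$ is: if $X$ sits at an s.p.\ position of $A$ then $A$ is already non-Harrop, and the non-Harrop content of $P$ also lands at an s.p.\ position of $A[P/X]$ by compositionality of strict positivity (an s.p.\ position within an s.p.\ position is s.p.), so $A[P/X]$ is non-Harrop too; if $X$ occurs only under implication premises, the non-Harrop content of $P$ is buried there and does not affect Harropness, which therefore matches that of $A$. Given this matching, $\tau(E[P/X]) = \tau(E)[\tau(P)/\alpha_X]$ follows case by case from the induction hypothesis, since the case splits of $\tau$ on the two sides are made on the same Harrop data, and the leaf $\tau(X) = \alpha_X$ supplies the base equality $\tau(P) = \alpha_X[\tau(P)/\alpha_X]$. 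For (d), both the Harrop $P$ and the predicate constant $\pcv{X}$ contribute neither disjunctions nor free predicate variables at s.p.\ positions, so $A[P/X]$ and $A[\pcv{X}/X]$ have identical Harrop status at every subexpression and the case splits of $\tau$ line up identically. The main obstacle is the auxiliary claim driving (b); everything else is routine bookkeeping along the clauses defining $\tau$.
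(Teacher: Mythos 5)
Your proof is correct and follows the same route as the paper, which disposes of this lemma with the single line ``straightforward structural induction on $E$.'' You have in fact supplied the one genuinely non-obvious detail that the paper leaves implicit, namely the auxiliary claim for part (b) that a degenerate type $\tfix{\beta_1}{\ldots\tfix{\beta_n}{\alpha_X}}$ can only arise from an $X$-Harrop expression, which is exactly what is needed to see that the clause $\tau(\diamond(\lambda X\,P))=\tfix{\alpha_X}{\tau(P)}$ (applied only when $P$ is not $X$-Harrop) cannot create the forbidden pattern.
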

\begin{proof}
Straightforward structural induction on $E$.
\end{proof}

\subsection{Realizers of expressions}
\label{sub-realizability}
In this section,
we define the notion that $a : \tau(A)$ 
is a \emph{realizer} of a formula $A$, 
  written $\ire{a}{A}$. This intuitively means that $a$ is a 
computational content of  the formula $A$.  
In intuitionistic logic, a proof of
$A \lor B$ provides evidence that $A$ is true or $B$ is true,
together with an indicator of which of the two cases holds.
We construct our notion of realizer by treating this as the primitive 
source of computational content.
Therefore, we defined an expression \emph{non-computational} (nc) 
if it contains neither disjunctions nor free predicate variables
(Sect. \ref{sub-ifp}).
A more general notion of an expression 
with trivial 
computational content is 
provided by the Harrop property 
which forbids the occurrence of
disjunctions and free predicate variables only at strictly positive positions
(Sect. \ref{sub-tau}).

In order to formalize realizability in $\RIFP$ we define for every 
$\IFP$ formula 
$A$ an $\RIFP$ predicate $\rea(A)$ of arity $(\dsort)$ 
that specifies the set of domain elements $a$ such that $\ire{a}{A}$ holds.
For defining $\rea(A)$, 
we simultaneously define $\reah(B)$ for Harrop formulas $B$ 
which expresses that $B$ is realizable, however with trivial 
computational content $\Nil$.
We define for every $\IFP$-expression an  $\RIFP$-expression,
more precisely we define for a
\medbreak
\noindent
\begin{tabular}{l@{\ \ }l}
& \emph{formula} $A$ a predicate $\rea(A)$ of arity $(\dsort)$;\\
&\emph{non-Harrop predicate} $P$ of arity $(\vec\iota)$ 
a predicate $\rea(P)$ of arity $(\vec\iota,\dsort)$;\\ 
&\emph{non-Harrop operator} $\Phi$ of arity $(\vec\iota)$
an operator $\rea(\Phi)$ of arity $(\vec\iota,\dsort)$;\\
&\emph{Harrop  formula} $A$ a formula $\reah(A)$;\\
&\emph{Harrop predicate} $P$ a predicate $\reah(P)$ of the same arity;\\
&\emph{Harrop operator} $\Phi$ an operator $\reah(\Phi)$ of the same arity.
\end{tabular}

\medskip

In the definition of realizability below 
we assume that to every $\IFP$ predicate variable $X$ of arity 
$(\vec \iota)$ there are assigned, in a one-to-one fashion, an 
$\RIFP$ predicate variable $\reali{X}$ of arity $(\vec \iota, \dsort)$
and a type variable $\alpha_X$. 
Furthermore, we write $\ire{a}{A}$ for $\rea(A)(a)$ and 
$\re A$ for $\exists a\,\ire{a}{A}$.
Recall that
a predicate $P$ is 
\emph{$X$-Harrop} if it is strictly positive in $X$ and 
$P[\pcv{X}/X]$ is Harrop 
where $\pcv{X}$ is a fresh predicate constant associated with $X$. In this situation we write 
$\reah_X(P)$ for $\reah(P[\pcv{X}/X])[X/\pcv{X}]$.
The idea is that $\reah_X(P)$ is the same as $\reah(P)$ but considering $X$ 
as a (non-computational) predicate constant.
\begin{align*}
\ire{a}{A}  &= (a = \Nil \land \reah(A))   &\hbox{($A$ Harrop)}\\
\ire{a}{P(\vec t)} &= \rea(P)(\vec t,a) &\hbox{($P$ non-H.)}\\
\ire{c}{(A\lor B)}   &=\ex{a}(c=\inl{a}\land\ire{a}{A})\lor
                        \ex{b}(c=\inr{b}\land\ire{b}{B}) \hspace*{-2cm}\\
\ire{c}{(A\land B)}  &=\exists a,b\,(c = \Pair(a,b) \land
      \ire{a}{A}\land \ire{b}{B})
              \ &\hbox{($A,B$ non-H.)}\\
\ire{a}{(A\land B)} &=\ire{a}{A} \land \reah(B)
   \quad&\hspace*{-1cm}\hbox{($B$ Harrop, $A$ non-H.)}\\
\ire{b}{(A\land B)} &=\reah(A) \land \ire{b}{B}
   \quad&\hspace*{-1cm}\hbox{($A$ Harrop, $B$ non-H.)}\\
\ire{c}{(A\to B)}    &= 
 c:\ftyp{\tau(A)}{\tau(B)} \land  
          \all{a}(\ire{a}{A}\to\ire{(c\,a)}{B}) 
              \ &\hspace*{-2cm}\hbox{($A, B$ non-H.)}\\
\ire{b}{(A\to B)}    &=
 b:\tau(B) \land  
     (\reah(A) \to \ire{b}{B})  
   \quad &\hspace*{-2cm}\hbox{($A$ Harrop, $B$ non-H.)}\\
\ire{a}{\allex x\,A}  &=\allex x\,(\ire{a}{A})
  \quad&\hspace*{-2cm}\hbox{($\allex\in\{\forall,\exists\}$, $A$ non-H.)}
\end{align*}
\begin{align*}
\rea(X) &= \reali{X}\\
\rea(\lambda \vec x\,A) &= \lambda (\vec x,a)\,(\ire{a}{A})  
&\hbox{($A$ non-H.)}\\
%
%
%
\rea(\munu(\lambda X\,P)) &= \munu(\lambda\reali{X}\,\rea(P)[\rho/\alpha_X])
  \quad&\hbox{($\munu\in\{\mu,\nu\}$, $\lambda X\,P$ non-H.)}\\
  & \qquad \hbox{where } \rho = \tfix{\alpha_X}{\tau(P)}\\ 
%
\rea(\lambda X\,P) &= 
\lambda \reali{X}\,\rea(P)
&\hbox{($P$ non-H.)}\\
&&\\
\reah(P(\vec t)) &= \reah(P)(\vec t)
&\hbox{($P$ Harrop)}\\
\reah(A\land B)  &=
      \reah(A)\land \reah(B) &\hbox{($A, B$ Harrop)}\\
\reah(A\to B)    &=
                     \re A 
                   \to \reah(B) &\hbox{($B$ Harrop)}\\
\reah(\allex x\,A)  &=\allex x\,\reah(A)
  \quad&\hspace*{-1cm}\hbox{($\allex\in\{\forall,\exists\}$, $A$ Harrop)}\\
&&\\
\reah(P) &= P\quad &\hbox{($P$ a predicate constant)}\\
\reah(\lambda \vec x\,A) &= \lambda \vec x\,\reah(A) &\hbox{($A$ Harrop)}\\
\reah(\munu(\Phi)) &= \munu(\reah(\Phi))
  \quad&\hbox{($\munu\in\{\mu,\nu\}$, $\Phi$ Harrop)}\\
\reah(\lambda X\,P) &=    \lambda X\,\reah_X(P)   &\hbox{($P$ $X$-Harrop)}
\end{align*}
To see that $\rea(\munu(\Phi))$ and $\reah(\munu(\Phi))$ are 
wellformed one needs to prove simultaneously that if an expression $E$ is s.p.\ in
$X$, then $\rea(E)$ is s.p.\ in $\reali{X}$, and if $E$ is $X$-Harrop, then 
$\reah_X(E)$ ($=\reah(E[\pcv{X}/X])$) is
s.p.\ in $\pcv{X}$.

In the following we use the notation
\[\adummy{\rho} \eqdef \lambda (\vec x,a)\,(a:\rho),\]
so that 
$Q\subseteq \adummy{\rho} \equiv \forall \vec x,a\,(Q(\vec x,a) \to a:\rho)$.
\begin{lemma}
\label{lem-realizability}
\item[(a)] 
If $P$ is non-Harrop, then 
$\reah(A[P/X])  =  \reah(A)[\rea(P)/\reali{X}][\tau(P)/\alpha_X]$
if $A[P/X]$ is Harrop, 
and $\rea(A[P/X])  =  \rea(A)[\rea(P)/\reali{X}][\tau(P)/\alpha_X]$
if $A[P/X]$ is non-Harrop. 
\item[(b)] If $P$ is Harrop, then
$\reah(A[P/X])  =  \reah(A[\pcv{X}/X])[\reah(P)/\pcv{X}]$
if $A$ is $X$-Harrop,
and $\rea(A[P/X])  =  \rea(A[\pcv{X}/X])[\reah(P)/\pcv{X}]$
if $A$ is not $X$-Harrop.
\item[(c)] If $A$ is Harrop, then $\reah(A) \leftrightarrow \re A$.
\item[(d)] If $E$ is an nc expression, then $\reah(E) = E$,
in particular, $\reah(\False) = \False$.
\item[(e)] $\rea(P) \subseteq \adummy{\tau(P)}$
under the assumptions 
$\reali{X} \subseteq \adummy{\alpha_X}$ 
for all free predicate variables $X$ in $P$.
\end{lemma}
\begin{proof}
The statements are proven by induction on the size of expressions 
suitably generalizing the statements to formulas or predicates. 
Parts (a-d) are easy. We only show details of (a) for the case 
of an inductive predicate $\mu(\lambda Y\,Q)$.
If $\mu(\lambda Y\,Q)[P/X]$ is non-Harrop, then
we set $\rho := \tfix{\alpha_Y}{\tau(Q)}$ and 
$\sigma := \tfix{\alpha_Y}{\tau(Q[P/X])}=\rho[\tau(P)/\alpha_X]$, and we have
%
\begin{eqnarray*}
&&\rea(\mu(\lambda Y\,Q)[P/X])\\  
&=&\rea(\mu(\lambda Y\,Q[P/X]))\\  
&=& 
  \mu(\lambda \reali{Y}\,\rea(Q[P/X])
                 [\sigma/\alpha_Y])\\
&\stackrel{\hbox{i.h.}}{=}& 
  \mu(\lambda \reali{Y}\,\rea(Q)
                 [\rea(P)/\reali{X}]
                 [\tau(P)/\alpha_X]
                 [\sigma/\alpha_Y])\\
&=&
  \mu(\lambda \reali{Y}\,\rea(Q)
                 [\rea(P)/\reali{X}]
                 [\tau(P)/\alpha_X]
                 [\rho/\alpha_Y]
                 [\tau(P)/\alpha_X])\\
&=& 
  \mu(\lambda \reali{Y}\,\rea(Q)
                [\rho/\alpha_Y])
                [\rea(P)/\reali{X}]
                [\tau(P)/\alpha_X]\\
&=& 
  \rea(\mu(\lambda Y\,Q))
                [\rea(P)/\reali{X}]
                [\tau(P)/\alpha_X]
\end{eqnarray*}
If $\mu(\lambda Y\,Q)[P/X]$ is Harrop, then $Q[P/X][\pcv{Y}/Y]$ and $Q[\pcv{Y}/Y]$
are Harrop. Hence,
\begin{eqnarray*}
&&\reah(\mu(\lambda Y\,Q)[P/X])\\  
&=&\reah(\mu(\lambda Y\,Q[P/X]))\\  
&=& \mu(\lambda Y\,\reah(Q[P/X][\pcv{Y}/Y])[Y/\pcv{Y}])\\
&=& \mu(\lambda Y\,\reah(Q[\pcv{Y}/Y][P/X])[Y/\pcv{Y}])\\
&\stackrel{\hbox{i.h.}}{=}& 
\mu(\lambda Y\,\reah(Q[\pcv{Y}/Y])[\rea(P)/\reali{X}][\tau(P)/\alpha(X)][Y/\pcv{Y}])\\
&=& 
\mu(\lambda Y\,\reah(Q[\pcv{Y}/Y])[Y/\pcv{Y}])[\rea(P)/\reali{X}][\tau(P)/\alpha(X)]\\
&=& 
\reah(\mu(\lambda Y\,\reah(Q))[\rea(P)/\reali{X}][\tau(P)/\alpha(X)]
\end{eqnarray*}



For (e), the only difficult case is a non-Harrop predicate $P$ of the form
$\munu(\lambda X\,Q)$ ($\munu\in\{\mu,\nu\}$). In that case 
$\tau(P) = \tfix{\alpha_X}{\tau(Q)}$
and
%
\[\rea(P) = \munu(\lambda\reali{X}\,\rea(Q)
                [\tau(P)/\alpha_X]). \]
%

Let $\exists\,\rea(P) := \{a\in D\mid \exists \vec x\,\rea(P)(\vec x,a)\}$ and let 
$\beta$ be the least subdomain containing $\exists\,\rea(P)$ and $\gamma$ the least subdomain
containing $\tau(P) \cup \exists\,\rea(P)$. 
$\beta$ and $\gamma$ exist by Axiom~(xvi).
It suffices to show 
$\beta \subseteq \tau(Q)[\gamma/\alpha_X]$ 
since then, by Axiom~(xi), 
$\beta \subseteq \tfix{\alpha_X}{\tau(Q)} = \tau(P)$
and consequently $\rea(P) \subseteq \adummy{\beta} \subseteq \adummy{\tau(P)}$.

For the proof of $\beta \subseteq \tau(Q)[\gamma/\alpha_X]$ it 
suffices to show (by the minimality of $\beta$) that 
$\exists\,\rea(P) \subseteq \tau(Q)[\gamma/\alpha_X]$,
i.e.\ $\rea(P) \subseteq \adummy{\tau(Q)[\gamma/\alpha_X]}$.

Substituting in the formula $\reali{X} \subseteq \adummy{\alpha_X}$ 
the predicate variable
$\reali{X}$ by $\rea(P)$ and the type variable $\alpha_X$ by $\gamma$,
one obtains the provable formula $\rea(P) \subseteq \adummy{\gamma}$. 
Hence, by the induction hypothesis,
$\rea(Q)[\gamma/\alpha_X][\rea(P)/\reali{X}] \subseteq 
\adummy{\tau(Q)[\gamma/\alpha_X]}$ is provable.
Since
$\rea(P) \equiv 
  \rea(Q)[\tau(P)/\alpha_X][\rea(P)/\reali{X}] \subseteq 
  \rea(Q)[\gamma/\alpha_X][\rea(P)/\reali{X}]$,  
we are done.
\end{proof}
\emph{Remarks\/}.
1.
Since $\Nil$ is the only possible realizer of a Harrop formula, 
one could as well 
define $\one$ as $\{\bot\}$ and use $\bot$ as the realizer of a realizable
 Harrop formula.  
Then, the domain $D_\nat$ for $\nat \eqdef \tau(\NN)$ (see Sect.~\ref{sub-rnat}) would
  be isomorphic to the domain of lazy natural numbers,
and the domain 
$D_\bool$ for $\bool \eqdef \one + \one$ would be isomorphic to
the domain of truth values $\{\bf{true}, \bf{false}, \bot\}$
(see Sect.~\ref{sub-gray}). 
However, 
using $\bot$ as a realizer of Harrop formulas contradicts 
our intuitive understanding that $\bot$ means non-termination. 
  One could as well obtain these isomorphisms without modifying the realizer of
  a Harrop formula 
by adding 
nullary constructors $\Left_0$ (representing 
$\Left(\Nil)$) and $\Right_0$ (representing 
$\Right(\Nil)$) to $D$ 
and  corresponding constructors to type expressions.
However,  we refrain from these
additions since their comparably small benefits would not
match the considerable complications they would create.

2. 
While $\ire{a}{(\forall x\,A)} \equiv \forall x\,(\ire{a}{A})$ holds, 
$\re(\forall x\,A) \equiv \forall x\,\re\, A$ does not hold in general
since $\re(\forall x\,A) = \exists a\, \ire{a}{(\forall x\,A)} = \exists a\, \forall x\, \ire{a}{A}$ whereas
$\forall x\,\re\, A = \forall x\,\exists a\,\ire{a}{A}$.

3. Regarding (c) vs.\ (d) we note that for Harrop formulas $A$, 
$\reah(A)$ need not be equivalent to $A$. 
In fact, $A$ and $\reah(A)$ may even contradict each other.
For example, if 
$A$ is the Harrop formula 
$\neg\forall x\,(x=0 \lor x \neq 0)$, then
$\reah(A)$ is $\neg\exists a\,\forall x\,(a = \Left(\Nil) \land x=0 \lor 
a = \Right(\Nil) \land x \neq 0)$ which is intuitionistically provable from 
$0\neq 1$. On the other hand $\neg A$ is provable in classical logic.
Hence, $\re A \to A$ is classically contradictory and therefore 
unprovable in $\RIFP$. 
The reason for this difference between $A$ and $\re A$ is \emph{logical}, 
more precisely it lies in the uniform interpretation of the 
universal quantifier which forbids a realizer of a formula 
$\forall x\,B$ to depend on $x$. In contrast, in Kleene realizability
the main source of discrepancy between realizability and truth is 
\emph{computational}
and follows from the existence of undecidable predicates. For example,
the formula $C \eqdef \forall x\in\NN (\halt(x) \lor \neg \halt(x))$ 
is classically true but not realizable since any realizer, 
which in Kleene realizability has to be computable, would solve the 
halting problem (and hence $\neg C$ is classically false but realizable). 
In our setting $C$ is realizable since the domain $D$ admits non-computable
functions. 

4. A crucial property of our realizability interpretation is that
$\bot$ can be a realizer of a formula.
For example, $\ire{a}{(\False \to A)}$ for any $a : \tau(A)$. In particular,
$\ire{\bot}{(\False \to A)}$ for any 
non-Harrop formula $A$.
This enables us to manipulate non-terminating computation in logic and
extract non-terminating programs from logical proofs.
On the other hand, $\ire{\bot}{A}$ does not hold if $A$ is a Harrop formula.

5. Although, by Lemma~\ref{lem-realizability}~(e), realizers are typable,
they may be partial as remarked above. Therefore our realizability is
closer to Kleene's realizability by (codes of) partial recursive 
functions~\cite{Kleene45},
rather than Kreisel's modified realizability~\cite{Kreisel59} whose 
characteristic feature is that realizers are typed and \emph{total}.
For example, it is easy to see that the schema \emph{Independence of Premise}, 
$(A \to \exists x\in\NN\,B) \to \exists x\in\NN\,(A \to B)$ where $A$ is a 
Harrop formula, which is realizable in modified realizability, 
is not realizable in our system.

6.
Despite the availability of classical logic through disjunction-free axioms 
our interpretation is very different from Krivine's classical
realizability~\cite{Krivine01,Krivine03}. While our interpretation
fundamentally rests on the intuitionistic interpretation of disjunction
as a problem whose solution requires a decision between two alternatives,
Krivine's classical realizability is formulated in the negative fragment
of logic given with implication, conjunction and universal quantification as the
only logical connectives. In \cite{OlivaStreicher08} it is shown that
Krivine's realizability (roughly) corresponds to G\"odel's negative translation
followed by intuitionistic realizability.


 

\section{Soundness}
\label{sec-soundness}
The Soundness Theorem, stating that provable formulas are realizable, is the
theoretical foundation for program extraction.

\begin{theorem}[Soundness]
\label{thm-soundness}
Let $\ax$ be a set of nc axioms.
From an $\IFP(\ax)$ proof of a closed 
formula $A$ one can extract a closed program 
$M:\tau(A)$ such that $\ire{M}{A}$ is provable in $\RIFP(\ax)$.

More generally, let $\Gamma$ be a set of Harrop formulas and
$\Delta$ a set of non-Harrop formulas. 
Then, from an $\IFP(\ax)$ proof of 
a formula $A$ 
from the assumptions $\Gamma,\Delta$
one can extract a program $M$ with $\FV(M) \subseteq \vec u$ 
such that $\vec u : \tau(\Delta) \vdash M: \tau(A)$ 
is derivable by the typing rules
of Lemma~\ref{lem-typ-rifp}
and $\ire{M}{A}$ is 
provable in $\RIFP(\ax)$
from the assumptions $\reah(\Gamma)$,
$\ire{\vec u}{\Delta}$, 
and 
$\reali{X}\subseteq\adummy{\alpha_X}$ for all predicate variables
$X$ occurring in $\Gamma,\Delta,A$.




\end{theorem}
In this section we prove this theorem {(Sect.~\ref{sub-soundness-proof})} 
and read off from it a 
program extraction procedure for $\IFP$-proofs {(Sect.~\ref{sub-pe})}. 
We also 
study the realizers of natural numbers {(Sect.~\ref{sub-rnat})} 
and wellfounded induction {(Sect.~\ref{sub-rwf})}.

\emph{Remarks\/}. 
1. From the general version of the Soundness Theorem 
one sees that Harrop formulas $B$ can be freely used as assumptions 
(or axioms) as long as their Harrop interpretations $\reah(B)$ are true. 
For example, $\BT_{\less}$ (Brouwer's Thesis, defined in Sect.~\ref{sub-wf})
is a Harrop formula (for an arbitrary 
relation $\less$) and one can show that $\reah(\BT_{\less})$ is
equivalent to $\BT_{\re\,\less}$ and hence true. Therefore,
$\BT_{\less}$ (without restriction on the relation $\less$) 
can be used as an axiom in a proof without spoiling program extraction.

2. Since $\RIFP(\ax)$ is an instance of $\IFP$ it follows from the Tarskian
soundness of $\IFP$ (see Sect.~\ref{sub-ifp}) that the statements
$M: \tau(A)$ and $\ire{M}{A}$ in the Soundness Theorem are true, in particular
$M$ denotes indeed a realizer of $A$.

\subsection{Proof of the Soundness Theorem} 
\label{sub-soundness-proof}
The expected proof of the Soundness Theorem 
by structural induction on $\IFP$ derivations faces the obstacle that
in order to prove realizability of s.p.\ induction and coinduction
one needs realizers for the monotonicity of the operators in question, 
and this, in turn, requires the realizability of s.p. induction 
and coinduction. 
We escape this circularity 
by introducing an equivalent system 
$\IFP'$ for which soundness can be proven by induction
on the length of 
derivations. The only difference between
the two systems is that $\IFP'$ requires a monotonicity proof for the operator
as an additional premise of s.p.\ induction and coinduction.

Let $\monprop{}{\Phi}$ be the following formula expressing the monotonicity of $\Phi$:
$$\monprop{}{\Phi} \eqdef X \subseteq Y \to \Phi(X) \subseteq \Phi(Y)$$ where 
$X$ and $Y$ are fresh predicate variables. The system $\IFP'$ is obtained from $\IFP$
by replacing the rules $\IND(\Phi,P)$ and $\COIND(\Phi,P)$ by
\[
  \infer[\IND'(\Phi,P)\quad (*)]
{
  \mu(\Phi) \subseteq P 
  }
{
  \Phi(P) \subseteq P\ \ \ \ \monprop{}{\Phi}
 }
\]
\qquad
\[
  \infer[\COIND'(\Phi,P)\quad (*)]
{
  P \subseteq \nu(\Phi) 
  }
{
  P \subseteq \Phi(P) \ \ \ \ \monprop{}{\Phi}
 }
\]
$(*)$ is the side condition that the free assumptions in the proof of 
$\monprop{}{\Phi}$ must not contain $X$ or $Y$ free.

The modified rules $\si'(\Phi,P)$, $\hsi'(\Phi,P)$, $\sci'(\Phi,P)$, 
$\hsci'(\Phi,P)$, are defined similarly.

By the \emph{length of a derivation} we mean the number of occurrences of 
derivation rules.

\begin{lemma}
\label{lem-subst-deriv}
If $\IFP$, $\IFP'$, or $\RIFP$ proves $\Gamma\vdash A$, 
then the same system proves 
$\Gamma[P/X]\vdash A[P/X]$, $\Gamma[P/\pcv{X}]\vdash A[P/\pcv{X}]$ and, 
if applicable, $\Gamma[\rho/\alpha]\vdash A[\rho/\alpha]$,
with the same derivation length, 
where $A$, $P$, $X$, $\rho$, $\alpha$ are arbitrary formulas, predicates, 
predicate variables, types, type variables
respectively, and $\pcv{X}$ is an arbitrary predicate constant that does 
not appear in any axiom.
\end{lemma}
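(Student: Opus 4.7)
The plan is to prove the lemma by induction on the length of the derivation, handling all three substitution operations uniformly. Before starting the induction I record the routine compatibilities of substitution with the syntactic constructors: for any predicate $Q$ of fitting arity, $\mu(\Phi)[P/X] = \mu(\Phi[P/X])$, $\Phi(Q)[P/X] = \Phi[P/X](Q[P/X])$, and the analogous equalities for $\nu$, for the substitution $[P/\pcv{X}]$, and (in $\RIFP$) for $[\rho/\alpha]$. These follow directly from the definition of substitution on operators, using the convention that expressions are identified up to renaming of bound variables so that variable capture is avoided by alpha-conversion. Since the length of a derivation is the count of rule occurrences, it is preserved as long as the substituted derivation uses the same rule at each node.

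The base cases are straightforward. For the axiom rule $\Gamma \vdash o:A$ with $(o,A)\in\ax$, the axioms of $\IFP$ and $\RIFP$ are closed formulas with no free predicate variables, hence $A[P/X] = A$ and (in $\RIFP$) $A[\rho/\alpha] = A$; for $[P/\pcv{X}]$ the hypothesis that $\pcv{X}$ does not occur in any axiom gives $A[P/\pcv{X}] = A$, so the same axiom instance applies. The assumption rule $\Gamma, u:A \vdash u:A$ passes through trivially. The propositional rules ($\land$, $\lor$, $\to$ introduction and elimination, reflexivity, and congruence) commute with substitution in the expected way, so the inductive step reduces to the induction hypothesis on the premise derivations. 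Quantifier rules require a little bookkeeping: in $\alli{x}{d}$ and $\exe{d}{e}$ I may need to alpha-rename the bound variable $x$ so that it does not occur free in $P$ (respectively $\rho$), and similarly for binders occurring inside predicates; this is permitted by the convention of working up to $\alpha$-equivalence and does not change derivation length.

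The (co)induction rules $\clos{\Phi}$, $\cocl{\Phi}$, $\induct{\Phi,P}{d}$, $\coinduct{\Phi,P}{d}$, and their strong and half-strong variants, are handled using the preliminary equalities: for instance, after applying $[Q/X]$ the conclusion $\mu(\Phi) \subseteq P$ of $\induct{\Phi,P}{d}$ becomes $\mu(\Phi[Q/X]) \subseteq P[Q/X]$ and the premise $\Phi(P) \subseteq P$ becomes $\Phi[Q/X](P[Q/X]) \subseteq P[Q/X]$, so the substituted derivation is again an $\mathbf{Ind}$-instance of the same length. The one subtlety is in $\IFP'$: the rules $\IND'(\Phi,P)$ and $\COIND'(\Phi,P)$ carry the side condition that the free assumptions used in the subproof of $\monprop{}{\Phi}$ do not contain the fresh predicate variables $X, Y$. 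Since $X$ and $Y$ were chosen fresh relative to the original derivation, I alpha-rename them (as variables internal to the monotonicity statement) to be fresh also relative to $P$, $\rho$, and any constants introduced by the substitution, after which the side condition is preserved. The main obstacle is precisely this bookkeeping around binders and freshness conditions, but it is purely syntactic and introduces no new ideas beyond the compatibilities noted at the outset.
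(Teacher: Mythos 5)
Your proof is correct and follows essentially the same route as the paper, which simply states ``easy structural induction on derivations'' without elaboration; your induction on derivation length with the substitution--constructor compatibility equalities, the closedness of axioms, and the $\alpha$-renaming of bound variables and of the fresh variables $X,Y$ in $\monprop{}{\Phi}$ is exactly the bookkeeping that the paper leaves implicit.
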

\begin{proof}
Easy structural induction on derivations.
\end{proof}
\emph{Remark\/}. 
Important instances of Lemma~\ref{lem-subst-deriv}
are derivations of $\monprop{}{\Phi}$, which occur as premises of
the rules $\IND'$ and $\COIND'$. If we replace in $\monprop{}{\Phi}$ 
one or both of the predicate variables $X$ and $Y$ by different fresh 
predicate constants, say $\pcv{X}$ and $\pcv{Y}$, then, 
by Lemma~\ref{lem-subst-deriv}, the resulting
formulas have derivations of the same length. 
This fact will be used in the soundness proof for $\IFP'$ 
(Thm~\ref{thm-ifp'}).

In Theorem~\ref{thm-ifp'} we will use 
the following monotone predicate transformers:
\[
\begin{array}{llllllll}
(f^{-1}\circ Q)(\vec x,a)  &\eqdef&  Q(\vec x,f\,a) && 
(f\circ Q)(\vec x,b)       &\eqdef&  \exists a\,(f\,a = b \land Q(\vec x,a)) \\
(a^{-1} * Q)(\vec x)       &\eqdef&  Q(\vec x,a) &&
     (a * P)(\vec x,b)     &\eqdef&  a = b \land P(\vec x) \\
\timesd{P}(\vec x,b)       &\eqdef&  P(\vec x) &&
\exists(Q)(\vec x)         &\eqdef&  \exists a\,Q(\vec x,a) 

\end{array}
\]

The next lemma states their relevant properties. We omit the easy proofs.
\begin{lemma}
\label{lem-trans}
{\emph{Equivalences\/}.}
\[
\begin{array}{lll}
f^{-1} \circ (g^{-1} \circ Q) \equiv (g \circ f)^{-} \circ Q &&
f \circ (g \circ Q) \equiv (f \circ g) \circ Q \\
a^{-1} * (f^{-1} \circ Q) \equiv (f\,a)^{-1} * Q &&
f \circ (a  * P) \equiv (f\,a) * P \\
f^{-1} \circ \timesd{P} \equiv \timesd{P} &&
\exists(f\circ Q) \equiv \exists(Q)
\end{array}
\]
\begin{eqnarray*}
f^{-1}\circ P\cap g^{-1}\circ Q &\equiv&\funpair{f}{g}^{-1}
\circ(\projl^{-1}\circ P\cap\projr^{-1}\circ Q)\\
f\circ P\cup g\circ Q &\equiv&\funsum{f}{g}\circ(\Left\circ P\cup\Right\circ Q)
\end{eqnarray*}

\emph{Adjunctions\/}.
\[
\begin{array}{lll}
Q \subseteq f^{-1}\circ Q' &\leftrightarrow& f \circ Q \subseteq Q'\\
P \subseteq a^{-1} * Q &\leftrightarrow& a * P \subseteq Q\\
Q \subseteq \timesd{P} &\leftrightarrow& \exists(Q) \subseteq P
\end{array}
\]
\emph{Realizability\/}. Below let $Q,Q'$ be non-Harrop predicates, 
$P,P'$ Harrop predicates, $f:\ftyp{\tau(Q)}{\tau(Q')}$ and $a:\tau(Q)$:
\[
\begin{array}{lllll}
\ire{f}{(Q \subseteq Q')} 
   &\leftrightarrow& \rea(Q) \subseteq f^{-1}\circ \rea(Q') 
   &\leftrightarrow& f \circ \rea(Q) \subseteq \rea(Q')\\
\ire{a}{(P \subseteq Q)} 
   &\leftrightarrow& \reah(P) \subseteq a^{-1} * \rea(Q) 
   &\leftrightarrow& a * \reah(P) \subseteq \rea(Q)\\
\reah(Q \subseteq P) 
   &\leftrightarrow& \rea(Q) \subseteq \timesd{\reah(P)} 
   &\leftrightarrow& \exists(\rea(Q)) \subseteq \reah(P)\\
\reah(P \subseteq P') 
   &\leftrightarrow& \reah(P) \subseteq \reah(P') 
   &&
\end{array}
\]
\begin{eqnarray*}
\rea(Q\cap Q') &\equiv& \projl^{-1} \circ\rea(Q) \cap \projr^{-1} \circ \rea(Q')\\
\rea(Q\cup Q') &\equiv& \Left \circ\rea(Q) \cup \Right \circ \rea(Q')
\end{eqnarray*}
\end{lemma}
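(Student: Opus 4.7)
The proof plan is that every assertion in the lemma reduces to pointwise unfolding of the definitions of the transformers, the realizability clauses from Section~\ref{sub-realizability}, and some elementary intuitionistic logic. Since all six transformers are defined by a single defining clause, the equivalences become straightforward chains of equalities between predicates, and the adjunctions become one-line logical manipulations; only the realizability lines require a little care about types.

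For the equivalences I would verify each one by evaluating both sides at a generic argument. For example, $(f^{-1}\circ(g^{-1}\circ Q))(\vec x,a) = (g^{-1}\circ Q)(\vec x,f\,a) = Q(\vec x,g\,(f\,a)) = ((g\circ f)^{-1}\circ Q)(\vec x,a)$, using nothing more than $\beta$-reduction and the definition of composition in $D$. The entries $a^{-1}*(f^{-1}\circ Q)\equiv (f\,a)^{-1}*Q$, $f\circ(a*P)\equiv (f\,a)*P$, and the absorption $f^{-1}\circ\timesd{P}\equiv\timesd{P}$, $\exists(f\circ Q)\equiv\exists(Q)$ are similar one-step unfoldings. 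The two pairing/copairing identities use Axiom~(i) for the case-expressions in $\projl,\projr,\funpair{\cdot}{\cdot},\funsum{\cdot}{\cdot}$, together with Axiom~(vi) to handle the $\bot$ case; concretely, $\funpair{f}{g}^{-1}\circ(\projl^{-1}\circ P\cap\projr^{-1}\circ Q)$ at $(\vec x,a)$ reduces to $P(\vec x,\projl(\Pair(f\,a,g\,a)))\land Q(\vec x,\projr(\Pair(f\,a,g\,a)))$, and hence to $(f^{-1}\circ P\cap g^{-1}\circ Q)(\vec x,a)$.

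For the adjunctions I would rewrite $f\circ Q\subseteq Q'$ as $\forall\vec x,b\,(\exists a\,(f\,a=b\land Q(\vec x,a))\to Q'(\vec x,b))$ and contract the existential in the antecedent to obtain $\forall\vec x,a\,(Q(\vec x,a)\to Q'(\vec x,f\,a))$, which is $Q\subseteq f^{-1}\circ Q'$; the step uses only the substitution lemma and standard intuitionistic quantifier manipulation. The other two adjunctions are completely analogous, with $a*P\subseteq Q$ contracted via the equality $a=b$, and $\exists(Q)\subseteq P$ obtained from $Q\subseteq\timesd{P}$ by noting that $\timesd{P}(\vec x,b)$ does not mention $b$.

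For the realizability block I would unfold the definition of the realizer of an inclusion. When $Q,Q'$ are non-Harrop, $Q\subseteq Q'$ is $\forall\vec x\,(Q(\vec x)\to Q'(\vec x))$, which the clauses for $\forall$ and $\to$ together with Lemma~\ref{lem-realizability}(e) turn into the statement that $f$ is a typed function such that $\forall\vec x,a\,(\ire{a}{Q(\vec x)}\to\ire{(f\,a)}{Q'(\vec x)})$, i.e., $\rea(Q)\subseteq f^{-1}\circ\rea(Q')$; the alternative form then follows from the first adjunction. The cases with $P$ Harrop use the simplified clauses for conjunction/implication with a Harrop component, so that the only witness on the Harrop side is $\Nil$, and the equation $a=b$ in the definition of $a*P$ captures exactly the condition imposed on the non-Harrop realizer. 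The last two lines describing $\rea(Q\cap Q')$ and $\rea(Q\cup Q')$ are just the definitions of realizability of $\land$ and $\lor$ rewritten using the pairing and injection transformers.

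The only place I expect to slow down is the typing bookkeeping in the realizability lines; in particular, one must be mindful that $\ire{f}{(Q\subseteq Q')}$ carries with it the condition $f:\ftyp{\tau(Q)}{\tau(Q')}$, which silently absorbs into the notation $f^{-1}\circ\rea(Q')$ once Lemma~\ref{lem-realizability}(e) is invoked. Beyond that, every clause is a definitional unfolding and the omission is justified.
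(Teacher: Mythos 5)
Your proposal is correct and matches the paper exactly: the paper dismisses this lemma with ``We omit the easy proofs,'' and your pointwise unfoldings of the six transformer definitions, the quantifier contractions for the adjunctions, and the unfolding of the realizability clauses for inclusions (with the typing side conditions absorbed via Lemma~\ref{lem-realizability}(e)) are precisely the routine verifications being omitted. Nothing further is needed.
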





\begin{theorem}[$\IFP'$ version of Soundness]\label{thm-ifp'}
Let $\ax$ be a set of nc axioms.
From an $\IFP'(\ax)$ proof of a formula $A$ one can extract a closed program 
$M:\tau(A)$ such that $\ire{M}{A}$ is provable in $\RIFP(\ax)$.

More generally, let $\Gamma$ be a set of Harrop formulas,
$\Delta$ a set of non-Harrop formulas,
and $A$, and let $\vec X$ be the free predicate variables of $\Gamma,\Delta,A$.
Then, from an $\IFP'(\ax)$ proof of $A$ 
from the assumptions $\Gamma,\Delta$ and any vector $\vec u$ of distinct program variables
such that each variable in $\vec u$ is assigned to a unique formula in $\Delta$,
one can extract a program $M$ with $\FV(M) \subseteq \vec u$ 
such that 
\begin{enumerate}
\item[(i)] $\vec u : \tau(\Delta) \vdash M: \tau(A)$ 
is derivable by the typing rules
of Lemma~\ref{lem-typ-rifp},
and 
\item[(ii)] $\ire{M}{A}$ is 
provable in $\RIFP(\ax)$
from the assumptions $\reah(\Gamma)$,
$\ire{\vec u}{\Delta}$, 
and $\reali{\vec X}\subseteq\adummy{\alpha_{\vec X}}$.
\end{enumerate}
In (ii), we mean by $\reali{\vec X}\subseteq\adummy{\alpha_{\vec X}}$ 
all formulas 
$\reali{X}\subseteq\adummy{\alpha_{X}}$ where $X\in\vec X$, i.e.\ 
$X$ is a free predicate variable of $\Gamma,\Delta,A$.

The above statement holds for all formulas $A$. However, if
$A$ is Harrop, then it simplifies to:
If $\IFP'(\ax)$ proves $A$
from the assumptions $\Gamma,\Delta$, 
then  $\RIFP(\ax)$ proves $\reah(A)$ 
from the assumptions $\reah(\Gamma)$,
$\re\,\Delta$,
and $\reali{\vec X}\subseteq\adummy{\alpha_{\vec X}}$.




\end{theorem}

\begin{proof}
We first observe that in (ii), to prove $\ire{M}{A}$, we may assume in addition
$\vec u : \tau(\Delta)$, 
and $M:\tau(A)$.
This is so, since, by Lemma~\ref{lem-realizability}~(e), the assumptions
$\reali{\vec X}\subseteq\adummy{\alpha_{\vec X}}$ 
and $\ire{\vec u}{\Delta}$ imply $\vec u : \tau(\Delta)$, 
and therefore, by (i), $M:\tau(A)$ (since the typing rules are provable in 
$\RIFP$).

In the following we will refer to the assumptions
$\reali{\vec X}\subseteq\adummy{\alpha_{\vec X}}$ as 
the \emph{type correctness assumptions}.

The proof is by induction on the length of $\IFP'$ derivations.

In the following we mean by `induction hypothesis' always an induction
hypothesis of the induction on the length of derivations.
In order to avoid confusion with $\IFP'$ induction on a strictly positive 
inductive predicate $\mu(\Phi)$, we will refer to  the latter always 
as `s.p.~induction'.
Furthermore, when writing $\vec u:\rho\vdash M:\sigma$, we mean that this judgment
is derivable by the typing rules. Finally, `provable' means `provable in $\RIFP(\ax)$'.

We only look at some critical cases. 
In Section~\ref{sub-pe} the extracted programs
for all rules are shown.

\underline{Assumption rule}. 
Let $A$ be derived from the assumption $A$.


\emph{(a) Case $A$ is non-Harrop}.
Then $A\in\Delta$ and therefore some $u\in\vec{u}$ is assigned to $A$.
We choose that $u$ as the extracted program.
Hence, we have to show
\begin{enumerate}
\item[(i)] $\vec u : \tau(\Delta) \vdash u: \tau(A)$, 
\item[(ii)] $\ire{u}{A}$ is 
provable from the assumptions $\reah(\Gamma)$,
$\ire{\vec u}{\Delta}$, 
and $\reali{\vec X}\subseteq\adummy{\alpha_{\vec X}}$.
\end{enumerate}
But this is trivially the case since $u: \tau(A)$ occurs in $\vec u:\tau(\Delta)$
and $\ire{u}{A}$ occurs in $\ire{\vec u}{\Delta}$.

\emph{(b) Case $A$ is Harrop}. 
Then $A\in\Gamma$.
Hence $\RIFP(\ax)$ proves $\reah(A)$ from the assumptions $\reah(\Gamma)$.

\underline{Implication introduction}.
Assume we derived $A\to B$ from $\Gamma,\Delta$ by implication introduction,
i.e.~from a derivation of $B$ from the assumptions $\Gamma$, $\Delta$ and  $A$. 

\emph{(a) Case $A$ and $B$ are both non-Harrop}.
Then $\tau(A\to B)$ is $\ftyp{\tau(A)}{\tau(B)}$ and
$\ire{f}{(A\to B)}$ is $f:\ftyp{\tau(A)}{\tau(B)} \land 
\forall a\,(\ire{a}{A}\to \ire{(f\,a)}{B})$. 

By the induction hypothesis we have a program $M$ such that
\begin{enumerate}
\item[(ih-i)] $\vec{u}:\tau(\Delta), u:\tau(A)\vdash M:\tau(B)$,
\item[(ih-ii)] $\ire{M}{B}$ is provable from the assumptions 
$\reah(\Gamma)$,
$\ire{\vec u}{\Delta}$, $\ire{u}{A}$, 
and $\reali{\vec X}\subseteq\adummy{\alpha_{\vec X}}$.
\end{enumerate}
We let $f \eqdef \lambda u\,M$ be the program extracted from the 
given proof of $A\to B$.
We have to show
\begin{enumerate}
\item[(i)] $\vec u : \tau(\Delta) \vdash 
f: \tau(A \to B)$, 
\item[(ii)] $\ire{f}{(A\to B)}$ is 
provable from $\reah(\Gamma)$,
$\ire{\vec u}{\Delta}$, 
and $\reali{\vec X}\subseteq\adummy{\alpha_{\vec X}}$.
\end{enumerate}
From (ih-i), we get $\vec{u}:\tau(\Delta)\vdash f:\ftyp{\tau(A)}{\tau(B)}$,
by the lambda-abstraction rule. Since $\ftyp{\tau(A)}{\tau(B)} = \tau(A\to B)$,
this shows (i).
To prove (ii), assume
$\reah(\Gamma)$,
$\ire{\vec u}{\Delta}$, 
and $\reali{\vec X}\subseteq\adummy{\alpha_{\vec X}}$.
By the initial observation, we may in additon assume 
$f:\tau(A\to B)$. Therefore, it only remains to show that
$\forall a\,(\ire{a}{A}\to\ire{(f\,a)}{B})$.
But this follows immediately from (ih-ii).



\emph{(b) Case $A$ is Harrop, $B$ is non-Harrop}.
Then $\tau(A\to B)$ is $\tau(B)$ and
$\ire{b}{(A\to B)}$ is $b:\tau(B) \land (\reah(A)\to \ire{b}{B})$. 

By the induction hypothesis we have a program $M$ such that
\begin{enumerate}
\item[(ih-i)] $\vec{u}:\tau(\Delta)\vdash M:\tau(B)$, 
\item[(ih-ii)] $\ire{M}{B}$ is provable from the assumptions 
$\reah(\Gamma)$, 
$\reah(A)$,
$\ire{\vec u}{\Delta}$,
and $\reali{\vec X}\subseteq\adummy{\alpha_{\vec X}}$.
\end{enumerate}
We choose $M$ as the program extracted from the 
given proof of $A\to B$.
We have to show
\begin{enumerate}
\item[(i)] $\vec u : \tau(\Delta) \vdash M: \tau(B)$ 
(which is the same as (ih-i)),
\item[(ii)] $\ire{M}{(A\to B)}$ is 
provable from $\reah(\Gamma)$,
$\ire{\vec u}{\Delta}$, 
and $\reali{\vec X}\subseteq\adummy{\alpha_{\vec X}}$.
\end{enumerate}
To prove (ii), assume
$\reah(\Gamma)$,
$\ire{\vec u}{\Delta}$, 
and $\reali{\vec X}\subseteq\adummy{\alpha_{\vec X}}$.
By the initial observation, we may in additon assume 
$M:\tau(B)$. Therefore, it only remains to show 
$\reah(A)\to\ire{M}{B}$.
But this follows immediately from (ih-ii).

\emph{(c) Case $A$ is non-Harrop, $B$ is Harrop}.
Then $A\to B$ is a Harrop formula and $\reah(A\to B)$ is $\re\,A \to \reah(B)$.

By the induction hypothesis, $\RIFP(\ax)$ proves $\reah(B)$ 
from the assumptions $\reah(\Gamma)$,
$\re\,\Delta$, $\re\,A$,
and $\reali{\vec X}\subseteq\adummy{\alpha_{\vec X}}$.
Hence, $\RIFP(\ax)$ proves $\reah(A\to B)$ 
from the assumptions $\reah(\Gamma)$,
$\re\,\Delta$,
and $\reali{\vec X}\subseteq\adummy{\alpha_{\vec X}}$.

\emph{(d) Case $A$ and $B$ are both Harrop}.
Then $A\to B$ is a Harrop formula and $\reah(A\to B)$ is $\reah(A) \to \reah(B)$.

By the induction hypothesis, $\RIFP(\ax)$ proves $\reah(B)$ 
from the assumptions $\reah(\Gamma)$, $\reah(A)$,
$\re\,\Delta$,
and $\reali{\vec X}\subseteq\adummy{\alpha_{\vec X}}$.
Hence, $\RIFP(\ax)$ proves $\reah(A\to B)$ 
from the assumptions $\reah(\Gamma)$,
$\re\,\Delta$,
and $\reali{\vec X}\subseteq\adummy{\alpha_{\vec X}}$.

\underline{Implication elimination}.
Assume we have derived $B$ under the assumptions $\Gamma$, $\Delta$ 
by implication elimination from $A\to B$ and $A$, each under the same assumptions. 
We may assume that $A$ contains no free predicate variables other than those
in $\Gamma,\Delta,A$. Otherwise, we use Lemma~\ref{lem-subst-deriv} and 
substitute any new free predicate variable
in $A$ by some closed predicate, say $\lambda \vec x.\False$.
We distinguish cases whether $A$ and $B$ are Harrop or not but we are less detailed
regarding the type correctness assumptions $\reali{X}\subseteq\adummy{\alpha_X}$ 
since they are dealt with
exactly as in the case of implication introduction.

\emph{(a) Case $A$ and $B$ are both non-Harrop}.
By the induction hypotheses we have programs $M$ and $N$ such that
$\vec{u}:\tau(\Delta) \vdash M:\ftyp{\tau(A)}{\tau(B)}$ and 
$\vec{u}:\tau(\Delta) \vdash N:\tau(A)$, 
and furthermore $\RIFP(\ax)$ proves $\ire{M}{(A\to B)}$ and $\ire{N}{A}$
from the assumptions
$\reah(\Gamma)$, $\ire{\vec u}{\Delta}$, and $\vec u : \tau(\Delta)$.
Hence, 
$\vec{u}:\tau(\Delta) \vdash (M\,N):\tau(B)$
and $\RIFP(\ax)$ proves 
$\ire{(M\,N)}{B}$
from the assumptions $\reah(\Gamma)$, $\ire{\vec u}{\Delta}$, 
and $\vec u : \tau(\Delta)$.

\emph{(b) Case $A$ is Harrop, $B$ is non-Harrop}.
By the first induction hypothesis, we have program $M$ such that
$\vec{u}:\tau(\Delta) \vdash M:\tau(B)$ and
$\RIFP(\ax)$ proves $\ire{M}{(A\to B)}$, that is,
$M:\tau(B) \land (\reah(A) \to \ire{M}{B})$
from the assumptions
$\reah(\Gamma)$, $\ire{\vec u}{\Delta}$, and $\vec u : \tau(\Delta)$.
By the second induction hypothesis, 
$\RIFP(\ax)$ proves $\reah(A)$ from the same assumptions.
Hence, $\RIFP(\ax)$ proves 
$\ire{M}{B}$ from the same assumptions.

\emph{(c) Case $A$ is non-Harrop, $B$ is Harrop}.
By the first induction hypothesis, 
$\RIFP(\ax)$ proves $\reah(A\to B)$, that is,
$\exists a\,(\ire{a}{A}) \to\reah(B)$,
from the assumptions
$\reah(\Gamma)$, $\ire{\vec u}{\Delta}$, and $\vec u : \tau(\Delta)$.
By the second induction hypothesis, we have program $N$ such that
$\vec{u}:\tau(\Delta) \vdash N:\tau(A)$ an
$\RIFP(\ax)$ proves $\ire{N}{A}$
from the assumptions
$\reah(\Gamma)$, $\ire{\vec u}{\Delta}$, and $\vec u : \tau(\Delta)$.
Hence, $\RIFP(\ax)$ proves 
$\reah(B)$ from the same assumptions.

\emph{(d) Case $A$ and $B$ are both Harrop}. Easy.

\underline{Existence elimination}.
Assume we have derived $B$ by existence elimination from 
$\exists x\,A$ and $\forall x\,(A \to B)$ where $x$ is not free in $B$.
As in the previous case, we may assume that $A$ does not contain
new predicate variables.

\emph{(a) Case $A$ and $B$ are both non-Harrop}.
By the induction hypotheses we have programs $M$ and $N$ such that
$\vec{u}:\tau(\Delta) \vdash M:\tau(A)$ and 
$\vec{u}:\tau(\Delta) \vdash N:\ftyp{\tau(A)}{\tau(B)}$, 
and furthermore $\RIFP(\ax)$ proves $\exists x\,(\ire{M}{A})$ and 
$\forall x\,\forall a\,((\ire{a}{A}) \to \ire{(N\,a)}{B})$, 
both from the assumptions
$\reah(\Gamma)$, $\ire{\vec u}{\Delta}$, and $\vec u : \tau(\Delta)$.
Hence 
$\vec{u}:\tau(\Delta) \vdash (N\,M):\tau(B)$
and $\RIFP(\ax)$ proves 
$\ire{(N\,M)}{B}$
from the assumptions $\reah(\Gamma)$, $\ire{\vec u}{\Delta}$, 
and $\vec u : \tau(\Delta)$.

The other cases are similar.



For s.p.~induction and s.p.~coinduction we consider an operator
$\Phi = \lambda X\,Q$, hence $\Phi(P) = Q[P/X]$.

\underline{$\IND'(\Phi,P)$}. 
Assume we have derived $\mu(\Phi)\subseteq P$ under the assumptions $\Gamma,\Delta$, 
by s.p~induction from 
$\Phi(P)\subseteq P$ (i.e.\ $Q[P/X]\subseteq P$), 
and $\monprop{}{\Phi}$ (i.e.\ $X \subseteq Y \to Q \subseteq Q[Y/X]$),
from the same assumptions, where $Y$ is a fresh predicate variable.
Hence $X$ and $Y$ do not occur free in $\Gamma,\Delta$ and $\mu(\Phi)\subseteq P$.

\emph{(a) Case $\Phi$ and $P$ are both non-Harrop\/}. 
By the induction hypothesis for $Q[P/X]\subseteq P$ 
and Lemma~\ref{lem-typ-subst}~(c), 
we have a program $s$ such that
\begin{enumerate}
\item[(a-i)] 
$\vec{u}:\tau(\Delta)\vdash s:\ftyp{\tau(Q)[\tau(P)/\alpha_X]}{\tau(P)}$, 
\item[(a-ii)] $\ire{s}{(Q[P/X]\subseteq P)}$ is provable from the assumptions 
$\reah(\Gamma)$, $\ire{\vec u}{\Delta}$, 
and $\reali{\vec X}\subseteq\adummy{\alpha_{\vec X}}$.
\end{enumerate}
By the induction hypothesis for $\monprop{}{\Phi}$, 
we have a program $m$ such that
\begin{enumerate}
\item[(b-i)] $\vec{u}:\tau(\Delta)\vdash 
  m:\ftyp{(\ftyp{\alpha_X}{\alpha_Y})}{(\ftyp{\tau(Q)}{\tau(Q)[\alpha_Y/\alpha_X]})}$, 
\item[(b-ii)] $\ire{m}{\monprop{}{\Phi}}$ is provable from the assumptions 
$\reah(\Gamma)$, $\ire{\vec u}{\Delta}$, 
and $\reali{\vec X}\subseteq\adummy{\alpha_{\vec X}}$, 
$\reali{X}\subseteq\adummy{\alpha_{X}}$,
$\reali{Y}\subseteq\adummy{\alpha_{Y}}$.
\end{enumerate}
We define recursively $f\eqrec s \circ m\,f$, and choose $f$ as 
the program extracted from the given proof of $\mu(\Phi)\subseteq P$.
We set $\rho\eqdef\tfix{\alpha_X}{\tau(Q)} = \tau(\mu(\Phi))$,
so that $\tau(\mu(\Phi)\subseteq P) = \ftyp{\rho}{\tau(P)}$.
Hence, we have to show
\begin{enumerate}
\item[(i)] $\vec u : \tau(\Delta) \vdash f: \ftyp{\rho}{\tau(P)}$, 
\item[(ii)] $\ire{f}{(\mu(\Phi)\subseteq P)}$ is 
provable from $\reah(\Gamma)$,
$\ire{\vec u}{\Delta}$, 
and $\reali{\vec X}\subseteq\adummy{\alpha_{\vec X}}$.
\end{enumerate}
Since, as one easily sees, the typing rules are closed under type substitution,
(b-i) remains true after a type substitution, 
hence we also have
$\vec{u}:\tau(\Delta)\vdash 
 m:\ftyp{(\ftyp{\rho}{\tau(P)})}{(\ftyp{\tau(Q)[\rho/\alpha_X]}{\tau(Q)[\tau(P)/\alpha_X]})}$.
Together with (a-i) and the typing rule for recursively defined functions, (i) follows.

To prove (ii), we assume $\reah(\Gamma)$,
$\ire{\vec u}{\Delta}$, and $\reali{\vec X}\subseteq\adummy{\alpha_{\vec X}}$.
By Lemma~\ref{lem-trans} and the definition of $\rea(\mu(\Phi))$
and since we know $f :\ftyp{\rho}{\tau(P)}$ (by the initial observation),
our goal $\ire{f}{(\mu(\Phi)\subseteq P)}$ is equivalent to
\[\mu(\lambda\reali{X}\,\rea(Q)[\rho/\alpha_X])\subseteq f^{-1} \circ \rea(P).\]
We show by s.p.\ induction the stronger statement
\[\mu(\lambda\reali{X}\,\rea(Q)[\rho/\alpha_X])\subseteq 
                         (f^{-1} \circ \rea(P)) \cap\adummy{\rho}.\]
Hence, we show
\[\rea(Q)[\rho/\alpha_X][(f^{-1}\circ\rea(P))\cap\adummy{\rho}/\reali{X}] 
     \subseteq (f^{-1} \circ \rea(P))\cap\adummy{\rho}.\]
From Lemma~\ref{lem-realizability}~(e), we get 
$\rea(Q)[\rho/\alpha_X][(f^{-1}\circ\rea(P))\cap\adummy{\rho}/\reali{X}] 
     \subseteq \adummy{\rho}$. 
Hence, it remains to show
\[\rea(Q)[\rho/\alpha_X][(f^{-1}\circ\rea(P))\cap\adummy{\rho}/\reali{X}] 
     \subseteq f^{-1} \circ \rea(P).\]
The formula $\ire{m}{\monprop{}{\Phi}}$ of (b-ii) is, 
by Lemma~\ref{lem-trans} and by Lemma~\ref{lem-realizability}~(a), 
equivalent to
\[ \forall f : \ftyp{\alpha_X}{\alpha_Y}\,
     (\reali{X}\subseteq f^{-1}  \circ \reali{Y} \to  
         \rea(Q) \subseteq (m\,f)^{-1}\circ
                           \rea(Q)[\reali{Y}/\reali{X}][\alpha_Y/\alpha_X])\]
and is provable under the extra assumptions
$\reali{X}\subseteq\adummy{\alpha_{X}}$ and
$\reali{Y}\subseteq\adummy{\alpha_{Y}}$.
Instantiating $f$ with our extracted program and setting
$\alpha_X \eqdef \rho$, 
$\alpha_Y \eqdef\tau(P)$,
$\reali{X}\eqdef (f^{-1}  \circ \rea(P)) \cap \adummy{\rho}$,
$\reali{Y}\eqdef \rea(P)$,
the assumption $f : \ftyp{\alpha_X}{\alpha_Y}$ becomes provable
(by the initial observation),
the assumption $\reali{X}\subseteq\adummy{\alpha_{X}}$ becomes a tautology,
the assumption $\reali{Y}\subseteq\adummy{\alpha_{Y}}$ become provable 
(by Lemma~\ref{lem-realizability}~(e)),
and the premise $\reali{X}\subseteq f^{-1}  \circ \reali{Y}$ becomes a tautology.
Hence we have,
%
%
using
Lemma~\ref{lem-subst-deriv} for $\RIFP'$,
\begin{equation}
\label{eq-atwo}
\rea(Q)[\rho/\alpha_X][(f^{-1}\circ\rea(P))\cap\adummy{\rho}/\reali{X}]
          \subseteq (m\,f)^{-1}\circ \rea(Q)[\rea(P)/\reali{X}][\tau(P)/\alpha_X]
\end{equation}
By (a-ii), we have $\ire{s}{(Q[P/X]\subseteq P)}$, which, by Lemma~\ref{lem-trans},
is equivalent to
\begin{equation}
\label{eq-aone}
\rea(Q[P/X]) \subseteq s^{-1} \circ \rea(P) .
\end{equation}
Since, by Lemma~\ref{lem-realizability}~(a), 
$\rea(Q)[\rea(P)/\reali{X}][\tau(P)/\alpha_X] = \rea(Q[P/X])$ and since composition
is monotone w.r.t.~inclusion, we obtain
\begin{eqnarray*}
\rea(Q)[\rho/\alpha_X][(f^{-1}\circ\rea(P))\cap \adummy{\rho}/\reali{X}] 
&\stackrel{\hbox{(\ref{eq-atwo})}}{\subseteq}&
(m\,f)^{-1} \circ \rea(Q[P/X])\\
&\stackrel{\hbox{(\ref{eq-aone})}}{\subseteq}&
(m\,f)^{-1} \circ (s^{-1} \circ \rea(P))\\
&\equiv&
(s \circ m\,f)^{-1} \circ \rea(P)
\end{eqnarray*}
where the last step uses Lemma~\ref{lem-trans}.
Since $s \circ m\,f = f$, we are done.

In the remaining cases we are less detailed regarding the type correctness assumptions
since they can be dealt with exactly as above.

\emph{(b) Case $\Phi$ and $P$ are both Harrop\/} (then $\mu(\Phi)$ and 
$Q[P/X]$ are Harrop). 
We aim to prove $\reah(\mu(\Phi)\subseteq P)$, that is,
$\mu(\lambda X\,\reah_X(Q)) \subseteq \reah(P)$.
We try s.p.~induction, so our goal is to prove
$\reah_X(Q)[\reah(P)/X] \subseteq \reah(P)$, i.e. 
\[\reah(Q[\pcv{X}/X])[\reah(P)/\pcv{X}] \subseteq \reah(P)\]
By the  first induction hypothesis (the second induction hypothesis is not needed) 
we have $\reah(\Phi(P)\subseteq P)$, i.e.\  
$\reah(Q[P/X]) \subseteq \reah(P)$.
Since, by Lemma~\ref{lem-realizability}~(b), 
$\reah(Q[P/X]) = \reah(Q[\pcv{X}/X])[\reah(P)/\pcv{X}]$,
we are done.

\emph{(c) Case $\Phi$ is non-Harrop, $P$ is Harrop\/} (then $\mu(\Phi)$ and 
$Q[P/X]$ are non-Harrop). 
We aim to prove $\reah(\mu(\Phi)\subseteq P)$, which,
by Lemma~\ref{lem-trans}, is equivalent to
$\mu(\lambda \reali{X}\,\rea(Q)[\rho/\alpha_X]) \subseteq \timesd{\reah(P)}$
where $\rho \eqdef\tfix{\alpha_X}{\tau(Q)}$.
%
We show by s.p.\ induction the stronger statement
\[\mu(\lambda\reali{X}\,\rea(Q)[\rho/\alpha_X])\subseteq 
                               \timesd{\reah(P)} \cap\adummy{\rho}\]
(note that 
$(\timesd{\reah(P)} \cap\adummy{\rho})(\vec x,b) \equiv \reah(P)(\vec x) \land b:\rho$).
Hence, we show
\[\rea(Q)[\rho/\alpha_X][\timesd{\reah(P)}\cap\adummy{\rho}/\reali{X}] 
     \subseteq \timesd{\reah(P)}\cap\adummy{\rho}.\]
From Lemma~\ref{lem-realizability}~(e), we get 
$\rea(Q)[\rho/\alpha_X][\timesd{\reah(P)}\cap\adummy{\rho}/\reali{X}] 
     \subseteq \adummy{\rho}$. 
Hence, it remains to show
\[\rea(Q)[\rho/\alpha_X][\timesd{\reah(P)}\cap\adummy{\rho}/\reali{X}] 
     \subseteq \timesd{\reah(P)}.\]
By the first  induction hypothesis we have 
$\reah(\Phi(P)\subseteq P)$, that is,
\begin{equation}
\label{eq-cone}
\rea(Q[P/X]) \subseteq \timesd{\reah(P)}.
\end{equation}
Furthermore we have an $\IFP'$ derivation of 
$X\subseteq Y \to Q \subseteq Q[Y/X]$ and therefore,
by Lemma~\ref{lem-subst-deriv}, also an $\IFP'$ derivation of
$X\subseteq P \to Q \subseteq Q[P/X]$ of the same height.
Hence, by the induction hypothesis, we have
$m: \ftyp{\tau(Q)}{\tau(Q[P/X])}$ such that $\RIFP$ derives 
$\ire{m}{(X\subseteq P \to Q \subseteq Q[P/X])}$, that is,
\[\reali{X}\subseteq\timesd{\reah(P)} \to \rea(Q) \subseteq m^{-1}\circ \rea(Q[P/X]),\] 
from the 
extra assumption $\reali{X}\subseteq\adummy{\alpha_X}$.
Using this with $\reali{X}\eqdef\timesd{\reah(P)}\cap\adummy{\rho}$ 
and $\alpha_X\eqdef\rho$ we obtain
\begin{equation}
\label{eq-ctwo}
\rea(Q)[\rho/\alpha_X][\timesd{\reah(P)}\cap\adummy{\rho}/\reali{X}] \subseteq 
              m^{-1}\circ\rea(Q[P/X]),
\end{equation}
without the assumption $\reali{X}\subseteq\adummy{\alpha_X}$.
Now
\begin{eqnarray*}
\rea(Q)[\rho/\alpha_X][\timesd{\reah(P)}\cap\adummy{\rho}/\reali{X}] 
&\stackrel{\hbox{(\ref{eq-ctwo})}}{\subseteq}&
m^{-1} \circ \rea(Q[P/X])\\
&\stackrel{\hbox{(\ref{eq-cone})}}{\subseteq}&
m^{-1} \circ \timesd{\reah(P)}\\
&=&
\timesd{\reah(P)}
\end{eqnarray*}

\emph{(d) Case $\Phi$ is Harrop, $P$ is non-Harrop\/}. 

\emph{Subcase $X$ is not free in $Q$\/}. 
The goal to find a realizer 
$\tilde{a}:\tau(P)$
of $\mu(\Phi) \subseteq P$
can be written as $\mu(\lambda X\,\reah_X(Q)) \subseteq \tilde{a}^{-1}\circ \rea(P)$
whose s.p.~inductive proof, in this case, boils down to proving 
$\reah(Q) \subseteq \tilde{a}^{-1}\circ\rea(P)$. But such an $\tilde{a}$ is
provided by the  induction hypothesis as a realizer of $\Phi(P)\subseteq P$.

\emph{Subcase $X$ is free in $Q$\/} (then $Q$, $Q[P/X]$ and 
$\monprop{}{\Phi}[\pcv{X}/X][P/Y]$ are non-Harrop). 
We need to find 
$\tilde{a}:\tau(P)$
such that $\ire{\tilde{a}}{(\mu(\Phi)\subseteq P)}$, 
which is equivalent to 
$\mu(\lambda X\,\reah_X(Q)) \subseteq \tilde{a}^{-1} * \rea(P)$.
A proof attempt by s.p.~induction leads to the goal
\[\reah_X(Q)[\tilde{a}^{-1} * \rea(P)/X] \subseteq \tilde{a}^{-1} * \rea(P).\]
By the  induction hypothesis we have 
$s:\ftyp{\tau(Q)[\tau(P)/\alpha_X]}{\tau(P)}$ such that
$\ire{s}{(Q[P/X]\subseteq P)}$, equivalently,
\begin{equation}
\label{eq-done}
\rea(Q[P/X]) \subseteq s^{-1} \circ \rea(P),
\end{equation}
and, with a similar justification as previously, some
$m:\ftyp{\tau(P)}{\tau(Q)[\tau(P)/\alpha_X]}$ realizing the formula
$\pcv{X}\subseteq P \to Q[\pcv{X}/X] \subseteq Q[P/X]$, i.e.\ 
\[\forall a:\tau(P)\,(\pcv{X}\subseteq a^{-1} * \rea(P)\to 
\reah(Q[\pcv{X}/X]) \to (m\,a)^{-1} * \rea(Q[P/X])).\]
We define recursively $\tilde{a} \eqrec s\, (m\,\tilde{a})$ which clearly has 
type $\tau(P)$, as required.
Using the above formula with $a \eqdef \tilde{a}$ and 
$\pcv{X}\eqdef\tilde{a}^{-1} * \rea(P)$ we obtain
\begin{equation}
\label{eq-dtwo}
\reah(Q[\pcv{X}/X])[\tilde{a}^{-1} * \rea(P)/\pcv{X}] \subseteq (m\,a)^{-1} * \rea(Q[P/X])
\end{equation}
We show, by s.p.~induction, that $\tilde{a}$ realizes $\mu(\Phi)\subseteq P$:  
\begin{eqnarray*}
\reah_X(Q)[\tilde{a}^{-1} * \rea(P)/X]
&=&
\reah(Q[\pcv{X}/X])[\tilde{a}^{-1} * \rea(P)/\pcv{X}]\\
&\stackrel{\hbox{(\ref{eq-dtwo})}}{\subseteq}&
(m\,\tilde{a})^{-1} * \rea(Q[P/X])\\
&\stackrel{\hbox{(\ref{eq-done})}}{\subseteq}&
(m\,\tilde{a})^{-1} * (s^{-1} \circ \rea(P))\\
&=&(s\, (m\,\tilde{a}))^{-1} * \rea(P)
\end{eqnarray*}

\underline{$\COIND'(\Phi,P)$}. 
This is largely dual to $\IND'(\Phi,P)$. 

Assume we have derived $P\subseteq \nu(\Phi)$ under the assumptions 
$\Gamma,\Delta$, 
by s.p~coinduction from $P \subseteq Q[P/X]$, 
and $\monprop{}{\Phi}$
from the same assumptions.

\emph{(a) Case $\Phi$ and $P$ are both non-Harrop\/}. 
By the induction hypothesis for $P\subseteq Q[P/X]$ 
and Lemma~\ref{lem-typ-subst}~(c), 
we have a program $s$ such that
\begin{enumerate}
\item[(a-i)] 
$\vec{u}:\tau(\Delta)\vdash s:\ftyp{\tau(P)}{\tau(Q)[\tau(P)/\alpha_X]}$, 
\item[(a-ii)] $\ire{s}{(P\subseteq Q[P/X])}$ is provable from the assumptions 
$\reah(\Gamma)$, $\ire{\vec u}{\Delta}$, 
and $\reali{\vec X}\subseteq\adummy{\alpha_{\vec X}}$.
\end{enumerate}
By the induction hypothesis for $\monprop{}{\Phi}$ 
we have a program $m$ such that (exactly as in the case of s.p.\ induction)
\begin{enumerate}
\item[(b-i)] $\vec{u}:\tau(\Delta)\vdash 
  m:\ftyp{(\ftyp{\alpha_X}{\alpha_Y})}{(\ftyp{\tau(Q)}{\tau(Q)[\alpha_Y/\alpha_X]})}$,
\item[(b-ii)] $\ire{m}{\monprop{}{\Phi}}$ is provable from the assumptions 
$\reah(\Gamma)$, $\ire{\vec u}{\Delta}$, 
and $\reali{\vec X}\subseteq\adummy{\alpha_{\vec X}}$, 
$\reali{X}\subseteq\adummy{\alpha_{X}}$,
$\reali{Y}\subseteq\adummy{\alpha_{Y}}$.
\end{enumerate}
We define recursively $f\eqrec m\,f\circ s$, and choose $f$ as 
the program extracted from the given proof of $P\subseteq\nu(\Phi)$.
We set $\rho\eqdef\tfix{\alpha_X}{\tau(Q)}$, 
so that $\tau(\nu(\Phi)\subseteq P) = \ftyp{\tau(P)}{\rho}$.
Hence, we have to show
\begin{enumerate}
\item[(i)] $\vec u : \tau(\Delta) \vdash f: \ftyp{\tau(P)}{\rho}$, 
\item[(ii)] $\ire{f}{(P\subseteq\nu(\Phi))}$ is 
provable from $\reah(\Gamma)$,
$\ire{\vec u}{\Delta}$, 
and $\reali{\vec X}\subseteq\adummy{\alpha_{\vec X}}$.
\end{enumerate}
Since by (b-i),
$\vec{u}:\tau(\Delta)\vdash 
 m:\ftyp{(\ftyp{\tau(P)}{\rho})}{(\ftyp{\tau(Q)[\tau(P)/\alpha_X]}{\tau(Q)[\rho/\alpha_X]})}$,
(i) follows with (a-i) and the typing rule for recursively defined functions.

To prove (ii), we assume $\reah(\Gamma)$,
$\ire{\vec u}{\Delta}$, and $\reali{\vec X}\subseteq\adummy{\alpha_{\vec X}}$.
By Lemma~\ref{lem-trans} and the definition of $\rea(\nu(\Phi))$ 
and since we know that $f : \ftyp{\tau(P)}{\rho}$, 
our goal $\ire{f}{(P\subseteq \nu(\Phi))}$ is equivalent to
\[f \circ \rea(P)\subseteq \nu(\lambda\reali{X}\,\rea(Q)[\rho/\alpha_X]) .\]
We show this by s.p.\ coinduction (unlike in the case of s.p.\ induction, no stronger 
statement needed here). Hence we show
\[f\circ\rea(P)\subseteq\rea(Q)[\rho/\alpha_X][(f\circ\rea(P))/\reali{X}] .\]
The formula $\ire{m}{\monprop{}{\Phi}}$ of (b-ii) is 
equivalent to
\[ \forall f : \ftyp{\alpha_X}{\alpha_Y}\,
     (f\circ\reali{X}\subseteq \reali{Y} \to  
  (m\,f)\circ \rea(Q) \subseteq \rea(Q)[\reali{Y}/\reali{X}][\alpha_Y/\alpha_X])\]
and is provable under the extra assumptions
$\reali{X}\subseteq\adummy{\alpha_{X}}$ and
$\reali{Y}\subseteq\adummy{\alpha_{Y}}$.
Instantiating $f$ with our extracted program and setting
$\alpha_X \eqdef \tau(P)$, 
$\alpha_Y \eqdef\rho$,
$\reali{X}\eqdef \rea(P)$,
$\reali{Y}\eqdef f  \circ \rea(P)$,
the assumptions $f : \ftyp{\alpha_X}{\alpha_Y}$ and
$\reali{X}\subseteq\adummy{\alpha_{X}}$ become provable,
and the premise $f\circ\reali{X}\subseteq \reali{Y}$ becomes a tautology.
The assumption $\reali{Y}\subseteq\adummy{\alpha_{Y}}$ becomes
$\forall \vec x,a\,(\rea(P)(\vec x,a) \to (f\,a):\rho)$ which follows from
Lemma~\ref{lem-realizability}~(e) and the earlier established fact that 
$f:\ftyp{\tau(P)}{\rho}$. 
Hence,
\begin{equation}
\label{eq-coatwo}
(m\,f)\circ \rea(Q)[\rea(P)/\reali{X}][\tau(P)/\alpha_X] \subseteq 
   \rea(Q)[f\circ\rea(P)/\reali{X}][\rho/\alpha_X]
\end{equation}
By (a-ii), we have $\ire{s}{(P\subseteq Q[P/X])}$, which
is equivalent to
\begin{equation}
\label{eq-coaone}
s\circ\rea(P)\subseteq \rea(Q[P/X]) .
\end{equation}
Since $\rea(Q)[\rea(P)/\reali{X}][\tau(P)/\alpha_X] = \rea(Q[P/X])$, we have
\begin{eqnarray*}
\rea(Q)[\rho/\alpha_X][(f\circ\rea(P))\cap \adummy{\rho}/\reali{X}] 
&\stackrel{\hbox{(\ref{eq-coatwo})}}{\supseteq}&
(m\,f) \circ \rea(Q[P/X])\\
&\stackrel{\hbox{(\ref{eq-coaone})}}{\supseteq}&
(m\,f) \circ (s \circ \rea(P))\\
&\equiv&
((m\,f)\circ s) \circ \rea(P)\\
&\equiv&
f \circ \rea(P)
\end{eqnarray*}

\emph{(b) Case $\Phi$ and $P$ are both Harrop\/}.  Dual to case (b) for $\IND'$.

\emph{(c) Case $\Phi$ is non-Harrop, $P$ is Harrop\/} (then $\nu(\Phi)$, 
$Q[P/X]$ and $Q[Y/X]$ are non-Harrop). 
By the induction hypothesis we have 
$s:\tau(Q[P/X])$ such that
$\ire{s}{(P\subseteq Q[P/X])}$, that is, 
\begin{equation}
\label{eq-ccone}
s * \reah(P) \subseteq \rea(Q[P/X]),
\end{equation}
Furthermore, by Lemma~\ref{lem-subst-deriv}, the second premise
instantiated with $X \eqdef P$ yields a shorter derivation of
$P\subseteq Y \to Q[P/X] \to Q[Y/X]$, and therefore,
by the induction hypothesis, we have 
$m:\ftyp{\alpha_Y}{\ftyp{\tau(Q[P/X])  }{\tau(Q)[\alpha_Y/\alpha_X]}}$ 
such that $\RIFP$ proves
$\ire{m}{(P\subseteq Y \to Q[P/X] \to Q[Y/X])}$,
that is,
\[\forall a:\alpha_Y\,(a * \reah(P)\subseteq \reali{Y} \to  
    (m\, a)\circ \rea(Q[P/X])\subseteq \rea(Q[Y/X]))\]
from the extra assumption $\reali{Y}\subseteq\adummy{\alpha_Y}$.

We define recursively $\tilde{a}\eqrec m\, \tilde{a}\,s$,
and show that this realizes $P \subseteq \nu(\Phi)$.

Substituting $\alpha_Y$ with $\rho\eqdef\tfix{\alpha_X}{\tau(Q)}$ 
we get $m:\ftyp{\rho}{\ftyp{\tau(Q[P/X])  }{\rho}}$ 
($\rho = \tau(Q)[\rho/\alpha_X]$!) and therefore
$\tilde{a}:\rho$, 
which is the correct type.
Substituting further $\reali{Y}$ with $\tilde{a} * \reah(P)$, we get
\begin{equation}
\label{eq-cctwo}
(m\, \tilde{a})\circ \rea(Q[P/X])\subseteq 
          \rea(Q)[\rho/\alpha_X][\tilde{a}*\reah(P)/\reali{X}].
\end{equation}
The assumption $\reali{Y}\subseteq\adummy{\alpha_Y}$ disappears since it 
becomes a tautology.
It remains to show
$\tilde{a} * \reah(P) \subseteq \nu(\lambda \reali{X}\,\rea(Q)[\rho/\alpha_X])$,
which we do by coinduction:
\begin{eqnarray*}
\rea(Q)[\rho/\alpha_X][\tilde{a} * \reah(P)/\reali{X}]
&\stackrel{\hbox{(\ref{eq-cctwo})}}{\supseteq}&
(m\, \tilde{a})\circ \rea(Q[P/X])\\
&\stackrel{\hbox{(\ref{eq-ccone})}}{\supseteq}&
(m\, \tilde{a})\circ (s * \reah(P))\\
&=&
(m\, \tilde{a}\,s) * \reah(P)
\end{eqnarray*}

\emph{(d) Case $\Phi$ is Harrop, $P$ is non-Harrop\/}. 

\emph{Subcase $X$ is not free in $Q$\/}. 
We have to show $\reah(P \subseteq \nu(\Phi))$, equivalently, 
$\exists(\rea(P)) \subseteq \nu(\lambda X\,\reah(Q))$. 
By s.p.~coinduction, this reduces to $\exists(\rea(P)) \subseteq \reah(Q)$
which is equivalent to the  induction hypothesis, $\reah(P\subseteq Q)$. 

\emph{Subcase $X$ is free in $Q$\/} (then $Q$, $Q[P/X]$ and 
$\monprop{}{\Phi}[P/X][\pcv{X}/Y]$ are non-Harrop). 
We need to prove $\reah(P\subseteq \nu(\Phi))$, that is, 
$\exists(\rea(P)) \subseteq \nu(\lambda X\,\reah_X(Q))$. 
S.p.\ coinduction reduces this to the goal
\[\exists(\rea(P)) \subseteq \reah_X(Q)[\exists(\rea(P))/X].\]
By the  induction hypothesis we have $\ire{s}{(P\subseteq \Phi(P))}$, 
equivalently,
\begin{equation}
\label{eq-cdone}
s \circ \rea(P) \subseteq \rea(Q[P/X]),
\end{equation}
and $\reah(\monprop{}{\Phi}[P/X][\pcv{X}/Y])$, that is,
\[\exists(\rea(P)) \subseteq \pcv{X} \to 
          \exists(\rea(Q[P/X])) \subseteq \reah(Q[\pcv{X}/X]).\]
Using 
Lemma~\ref{lem-subst-deriv} for $\RIFP'$ with
$\pcv{X} \eqdef \exists(\rea(P))$ yields
\begin{equation}
\label{eq-cdtwo}
\exists(\rea(Q[P/X])) \subseteq \reah(Q[\pcv{X}/X])[\exists(\rea(P))/\pcv{X}]
    = \reah_X(Q)[\exists(\rea(P))/X].
\end{equation}
Now, 
\begin{eqnarray*}
\reah_X(Q)[\exists(\rea(P))/X]
&\stackrel{\hbox{(\ref{eq-cdtwo})}}{\supseteq}&
\exists(\rea(Q[P/X]))\\
&\stackrel{\hbox{(\ref{eq-cdone})}}{\supseteq}&
\exists(s \circ \rea(P))\\
&\equiv& \exists(\rea(P)).
\end{eqnarray*}

We conclude the proof with the strong and half strong variants of s.p.\ induction
and coinduction. Since, as remarked in Sect.~\ref{sub-ifp}, these variants
are derivable from ordinary s.p.\ induction and coinduction, they do not need to be
treated separately. We will do this nevertheless in order to obtain simpler
realizers. We only derive these simplified realizers for those instances that
will be used later although simplified realizers can be given in all cases where the
conclusion of a rule is a non-Harrop formula. 
 
\underline{$\hsci'(\Phi,P)$}. 
Assume we have derived $P\subseteq\nu(\Phi)$ from the premises
 $P \subseteq \Phi(P)\cup \nu(\Phi)$, that is, 
$P\subseteq Q[P/X]\cup \nu(\Phi)$, 
and $\monprop{}{\Phi}$, that is, $X \subseteq Y \to Q \subseteq Q[Y/X]$.

\emph{Case $\Phi$ and $P$ are both non-Harrop\/}. 
By the  induction hypothesis we have 
$s:\ftyp{\tau(P)}{(\tau(Q)[\tau(P)/\alpha_X] + \rho)}$ 
such that
$\ire{s}{(P \subseteq \Phi(P) \cup \nu(\Phi))}$, that is, 
by Lemma~\ref{lem-trans} for $\IFP'$, 
\begin{equation}
\label{eq-hscaone}
s \circ \rea(P)\subseteq \Left \circ \rea(Q[P/X]) \cup 
                         \Right \circ \nu(\rea(\Phi))
\end{equation}
Furthermore, the induction hypothesis and Lemma~\ref{lem-typ-subst}~(c), 
yield a program 
$m:\ftyp{(\ftyp{\tau(P)}{\alpha_Y})}
        {\ftyp{\tau(Q)[\tau(P)/\alpha_X]}{\tau(Q)[\alpha_Y/\alpha_X]}}$ 
with $\ire{m}{(\monprop{}{\Phi[P/X]})}$, that is,
\[\forall f:\ftyp{\tau(P)}{\alpha_Y}\,( f \circ \rea(P) \subseteq \reali{Y} 
        \to  m\,f\circ\rea(Q[P/X]) \subseteq \rea(Q[Y/X])),\] 
under the extra assumption $\reali{Y}\subseteq\adummy{\alpha_Y}$.

We define recursively $\tilde{f}\eqrec \funsum{(m\, \tilde{f})}{\idty} \circ s$
and show this realizes $P\subseteq\nu(\Phi)$.

Substituting $\alpha_Y$ with $\rho\eqdef\tfix{\alpha_X}{\tau(Q)}$, we obtain the typing 
$m:\ftyp{\ftyp{\tau(P)}{\rho})}
        {\ftyp{\tau(Q)[\tau(P)/\alpha_X]}{\tau(Q)[\rho/\alpha_X]}}$
and therefore $\tilde{f}:\ftyp{\tau(P)}{\rho}$, as required.
Substituting $f$ with $\tilde{f}$ and $\reali{Y}$ with $\tilde{f}\circ\rea(P)$, 
we obtain, using again Lemma~\ref{lem-subst-deriv}
\begin{equation}
\label{eq-hscatwo}
m\,\tilde{f} \circ \rea(Q[P/X]) \subseteq 
\rea(Q)[\tilde{f}\circ\rea(P)/\reali{X}]\,,
\end{equation}
without extra assumption, since from Lemma~\ref{lem-realizability}~(e) 
we get $\tilde{f}\circ \rea(P)\subseteq\adummy{\rho}$.
It remains to show $\tilde{f}\circ \rea(P)\subseteq \nu(\rea(\Phi))$.
We will prove this by half strong coinduction, so 
our goal is to prove
(since $\rea(\Phi) = \lambda\reali{X}\,\rea(Q)[\rho/\alpha_X]$)
\[\tilde{f}\circ \rea(P)\subseteq 
     \rea(Q)[\rho/\alpha_X][\tilde{f}\circ \rea(P)/\reali{X}]
                                     \cup \nu(\rea(\Phi))\,.\]
Indeed,
\begin{eqnarray*}
&&\rea(Q)[\rho/\alpha_X][\tilde{f}\circ \rea(P)/\reali{X}] \cup \nu(\rea(\Phi)) \\
&\stackrel{(\ref{eq-hscatwo})}{\supseteq}&
(m\, \tilde{f} \circ \rea(Q[P/X])) \cup \nu(\rea(\Phi)) \\
&\stackrel{\hbox{Lemma }\ref{lem-trans}}{\equiv}&
\funsum{(m\, \tilde{f})}{\idty} \circ (\Left \circ \rea(Q[P/X]) \cup 
\Right \circ \nu(\rea(\Phi))) \\
&\stackrel{(\ref{eq-hscaone})}{\supseteq}&
\funsum{(m\, \tilde{f})}{\idty} \circ (s \circ \rea(P)) \\
&\stackrel{\hbox{Lemma }\ref{lem-trans}}{\equiv}&
(\funsum{(m\, \tilde{f})}{\idty} \circ s) \circ \rea(P) 
\end{eqnarray*}

\underline{$\sci'(\Phi,P)$}, \emph{case $\Phi$ and $P$ are both non-Harrop\/}. 
Using the  induction hypothesis with realizers $s$ of $P \subseteq \Phi(P \cup \nu(\Phi))$,
and $m$ of $\monprop{}{\Phi[P/X]}$, one sees, with a similar reasoning as above,
that the recursive definition
$\tilde{f}\eqrec (m\,\funsum{\tilde{f}}{\idty}) \circ s$
provides a realizer of $P \subseteq \nu(\Phi)$.

\underline{$\hsi'(\Phi,P)$}, 
\emph{case $\Phi$ is Harrop but not constant, $P$ is non-Harrop\/}. 
Using the  induction hypothesis with realizers 
$s$ of $\Phi(P) \cap \mu(\Phi) \subseteq P$,
and $m$ of $\monprop{}{\Phi[P/X]}$, one sees that the recursive definition
$\tilde{a}\eqrec s\,(m\,\tilde{a})$
provides a realizer of $\mu(\Phi) \subseteq P$ (which is the same as the realizer for the 
corresponding instance of s.p.\ induction).
\end{proof}

\begin{lemma}\label{lem-mon-new}
  $\monprop{}{\Phi}$ is provable in IFP'.
\end{lemma}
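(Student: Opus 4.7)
The plan is to prove $\monprop{}{\Phi}$ by structural induction on the body of $\Phi$. Unfolding the definition, if $\Phi = \lambda Z\, P$ with $P$ strictly positive in $Z$, we need to derive $X \subseteq Y \to P[X/Z] \subseteq P[Y/Z]$ for fresh predicate variables $X,Y$. It is convenient to generalise the claim simultaneously over all expressions $E$ s.p. in $Z$: one shows that whenever $X \subseteq Y$, the expression $E[X/Z]$ is contained in $E[Y/Z]$ (where ``contained'' is $\to$ at the formula level and $\subseteq$ at the predicate level). We then proceed by structural induction on $E$.

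The base cases (equations and atomic predicate applications that do not involve $Z$) are immediate. For the case $E = Z(\vec t)$ the assumption $X \subseteq Y$ directly yields $X(\vec t) \to Y(\vec t)$. For $E = E_1 \land E_2$, $E = E_1 \lor E_2$, and quantifier cases, the argument is routine intuitionistic logic combined with the IH applied to the immediate subexpressions. The case $E = A \to B$ is where the strict positivity hypothesis is used: because implications have their premise in a non-s.p.\ position, $Z$ cannot be free in $A$, so $A[X/Z] = A[Y/Z]$ and only the IH for $B$ is needed; then $(A \to B)[X/Z] \to (A \to B)[Y/Z]$ follows by chaining implications.

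The main case is the (co)inductive one, and it is the only place where the rules $\IND'$ and $\COIND'$ of $\IFP'$ are used. Consider $E = \mu(\lambda W\, Q)$, with $Q$ s.p.\ in both $W$ and $Z$. Assume $X \subseteq Y$; we must derive
\[ \mu(\lambda W\, Q[X/Z]) \subseteq \mu(\lambda W\, Q[Y/Z]) . \]
Apply $\IND'$ with operator $\Phi_X \eqdef \lambda W\, Q[X/Z]$ and predicate $\mu(\lambda W\, Q[Y/Z])$. This generates two obligations: the step, $\Phi_X(\mu(\lambda W\, Q[Y/Z])) \subseteq \mu(\lambda W\, Q[Y/Z])$, and a proof of $\monprop{}{\Phi_X}$. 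The latter is obtained by invoking the IH on $Q$ viewed as s.p.\ in $W$ (with $Z$ replaced by $X$). For the step, unfold to $Q[X/Z][\mu(\lambda W\, Q[Y/Z])/W] \subseteq \mu(\lambda W\, Q[Y/Z])$; by the IH on $Q$ as s.p.\ in $Z$ (with $W$ instantiated to $\mu(\lambda W\, Q[Y/Z])$) under the assumption $X\subseteq Y$, the left side is contained in $Q[Y/Z][\mu(\lambda W\, Q[Y/Z])/W]$, and the closure axiom $\CL(\lambda W\, Q[Y/Z])$ finishes the argument. The case $E = \nu(\lambda W\, Q)$ is dual, using $\COIND'$ and $\cocl{\cdot}$.

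The main obstacle is precisely the (co)inductive case: applying $\IND'$ or $\COIND'$ generates a new monotonicity side-condition, which might look circular. The circularity is avoided because the required monotonicity is for the strictly smaller operator $\lambda W\, Q[X/Z]$ (with $X$ a variable, hence a substitution of a variable for a variable does not increase structural size), and so it is discharged by the induction hypothesis applied to the subexpression $Q$. The simultaneous way of viewing $Q$ as s.p.\ in both $W$ and $Z$ is what makes the single structural induction sufficient.
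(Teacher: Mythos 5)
Your proof is correct and follows essentially the same route as the paper's: a structural induction on the body of the operator, with the only non-routine case being $\mu$/$\nu$, where $\IND'$/$\COIND'$ is applied so that the generated monotonicity side-condition is discharged by the induction hypothesis for the subexpression $Q$ (as an operator in the inner bound variable $W$), and the step obligation follows from the induction hypothesis for $Q$ in the outer variable together with (co)closure. The only cosmetic difference is that the paper keeps the operator's bound variable $X$ free and introduces a single fresh $X'$, so it applies $\IND'$ to the unsubstituted operator $\lambda Y\,Q$, whereas you substitute a fresh variable for $Z$ first and then justify that this preserves structural size; both bookkeeping choices are sound and interchangeable via Lemma~\ref{lem-subst-deriv}.
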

\begin{proof}
We define
$\monprop{X}{P} \eqdef X \subseteq X' \to P \subseteq P[X'/X]$ where 
$X'$ is a fresh variable accompanied with $X$. 
Then, for $\Phi = \lambda X\, P$, $\monprop{}{\Phi}$ is equivalent 
to $\monprop{X}{P}$.  
Therefore,  we prove
$\monprop{X}{P}$ by  induction on $P$. 
That is, prove $\monprop{X}{P}$ assuming that $\monprop{Y}{Q}$ holds for every
operator $\lambda Y\,Q$ such that $Q$ is a subexpression of $P$.

For the case that $P$ has the form $\mu(\lambda Y\,Q)$ 
we assume $X \subseteq X'$ and 
show $\mu(\lambda Y\,Q)\subseteq \mu(\lambda Y\,Q[X'/X])$.
Here, we may assume that $Y \not\in\{X,X'\}$. 
We use IFP'-induction on $\mu(\lambda Y\,Q)$ and hence have to show
\begin{equation}\label{eq:eap}
Q[\mu(\lambda Y\,Q[X'/X])/Y]\subseteq \mu(\lambda Y\,Q[X'/X])
\end{equation}
and $\monprop{}{\lambda Y\, Q}$, that is, $\monprop{Y}{Q}$.
The latter holds by the  induction hypothesis.
But $\monprop{X}{Q}$ also holds.
Therefore, $Q \subseteq Q[X'/X]$.
Thus, by Lemma \ref{lem-subst-deriv}, $Q[\mu(\lambda Y\,Q[X'/X])/Y] \subseteq 
Q[X'/X][\mu(\lambda Y\,Q[X'/X])/Y]$ holds.
Furthermore, by closure, 
$Q[X'/X][\mu(\lambda Y\,Q[X'/X])/Y]\subseteq \mu(\lambda Y\,Q[X'/X])$.
Thus, we have (\ref{eq:eap}).

For the case that $P$ has the form $\nu(\lambda Y\,Q)$,
the argument is completely dual if we replace $\monprop{X}{P}$
by the equivalent formula $X' \subseteq X \to P[X'/X] \subseteq P$. 

The remaining cases are easy using the extracted programs in Sect.~\ref{sub-pe}
as a guide.
\end{proof}

\begin{proof}[Proof of the Soundness Theorem for 
$\IFP$ (Thm.~\ref{thm-soundness})]  
  From an $\IFP$ proof one can obtain an $\IFP$' proof of the same formula
  by Lemma \ref{lem-mon-new}.  Therefore 
we obtain the result by Theorem~\ref{thm-ifp'}.
\end{proof}

\subsection{Program extraction}
\label{sub-pe}
The proof of the Soundness Theorem contains an algorithm
for computing the realizing program $M$ which we now describe.
We also note how to produce
a Haskell program at the end of this section.
For brevity we write a derivation judgement $\Gamma\vdash d:A$
as $d^A$, suppressing the context.

For an $\IFP$ derivation $d^A$ the extracted program $\ep{d^A}$ is defined
as 
\[\ep{d^A} \eqdef \epp{\pt{d^A}}\]
where $\pt{\cdot}$ is the transformation
of $\IFP$ proofs into $\IFP'$ proofs based on Lemma~\ref{lem-mon-new}, 
and $\epp{\cdot}$ is the program extraction procedure based on 
Theorem~\ref{thm-ifp'}. 

The transformation $\pt{d^A}$ simply replaces recursively every 
subderivation of the form $\indu{e^{\Phi(P)\subseteq P}}$ by 
$\induprime{\pt{e^{\Phi(P)\subseteq P}},\monproof{\Phi}^{\monprop{}{\Phi}}}$
where $\monproof{\Phi}^{\monprop{}{\Phi}}$ is the proof described in 
Lemma~\ref{lem-mon-new}.
Similarly, $\coind{e^{\Phi(P)\subseteq P}}$ is replaced by 
$\coindprime{\pt{e^{\Phi(P)\subseteq P}},\monproof{\Phi}^{\monprop{}{\Phi}}}$, 
and so on.

The extraction procedure $\epp{d^A}$ is defined by recursion on derivations as follows:

If $A$ is Harrop then $\epp{d^A} \eqdef \Nil$. 
Hence, in the following we assume that the proven formula is non-Harrop.

Closure and coclosure are realized by the identity:
\begin{eqnarray}
\epp{\clos{\Phi}^{\Phi(\mu(\Phi))\subseteq \mu(\Phi)}} =
\epp{\cocl{\Phi}^{\nu(\Phi)\subseteq\Phi(\nu(\Phi))}} =  \lambda a\,.\,a
\label{eq:closure}
\end{eqnarray}
For induction, in the case where $P$ is non-Harrop, the extracted program is
%
\vspace{1em}

$\epp{\induprime{d^{\Phi(P)\subseteq P}, 
           e^{\monprop{}{\Phi}}}^{\mu(\Phi)\subseteq P}} =$
\begin{eqnarray}
&&
\left\{
\begin{array}{ll}         
\rec\,(\lambda a\,.\,\epp{d} \circ \epp{e}\, a)&\hbox{if $\Phi$ is non-Harrop}\\  
\rec\,(\lambda a\,.\,\epp{d}\, (\epp{e[\pcv{X}/X]}\, a)) &\hbox{otherwise.}
\end{array}
\right.
\label{eq:ind}
\end{eqnarray}
For coinduction in the case where $\Phi$ is non-Harrop, the extracted program is
%
\vspace{1em}

$\epp{\coindprime{d^{P\subseteq \Phi(P)},
              e^{\monprop{}{\Phi}}}^{{P \subseteq \nu(\Phi)}}} =$
\begin{eqnarray}
&&
\left\{
\begin{array}{ll}
\rec\,(\lambda a\,.\,\epp{e}\, a\circ \epp{d}) &\hbox{if $P$ is non-Harrop}\\  
\rec\,(\lambda a\,.\, (\epp{e[\pcv{X}/X]}\, a\,\epp{d}))  &\hbox{otherwise.}
\end{array}
\right.
\label{eq:coind}
\end{eqnarray}

For the strong and half-strong versions of induction and coinduction we only
present a few cases that will be used later.
For half strong induction in the case where  
$\Phi$ is Harrop but not constant and $P$ is non-Harrop, the extracted program is
the same as for induction, namely
\[\epp{\hsindu{d^{\Phi(P)\cap\mu(\Phi)\subseteq P},
           e^{\monprop{}{\Phi}}}^{\mu(\Phi)\subseteq P}} = 
   \rec\,(\lambda a\,.\,\epp{d}(\epp{e[\pcv{X}/X]}\, a)).\]
For half strong coinduction in the case where both 
$\Phi$ and $P$ are non-Harrop, the extracted program is
\begin{equation}
\epp{\hscoind{d^{P\subseteq \Phi(P)\cup\nu(\Phi)},
              e^{\monprop{}{\Phi}}}^{P \subseteq \nu(\Phi)}} =
     \rec\,(\lambda a\,.\,\funsum{\epp{e}\, a}{\idty}\circ \epp{d}).
\label{eq:hscoi}
\end{equation}
For strong coinduction in the case where both 
$\Phi$ and $P$ are non-Harrop, the extracted program is
\[\epp{\scoind{d^{P\subseteq \Phi(P\cup\nu(\Phi))},
              e^{\monprop{}{\Phi}}}^{{P \subseteq \nu(\Phi)}}} =
     \rec\,(\lambda a\,.\,(\epp{e}\funsum{a}{\idty})\circ \epp{d}).\]

Assumptions are realized by variables, and the congruence rule does 
not change the realizer:
\begin{eqnarray*}
\epp{u_i^{A_i}} &=& u_i\\
\epp{\congr{d^{P(s)}}{e^{s=t}}{P}^{P(t)}} &=& \epp{d}
\end{eqnarray*}
The logical rules are realized as follows:
\begin{eqnarray*}
\epp{\oril{d^{A}}{B}^{A\lor B}} &=& \Left(\epp{d}) \\
\epp{\orir{d^{B}}{A}^{A\lor B}} &=& \Right(\epp{d}) \\
\epp{\ore{d^{A\lor B}}{e^{A\to C}}{f^{B\to C}}^C} &=&
     \case\,\epp{d}\,\of\,\\
  &&\quad\{\Left(a) \to \epp{e}*a\,;\,\\
  &&\quad\ \Right(b) \to \epp{f}*b\} 
\end{eqnarray*}
where $\epp{e}*a$ means $\epp{e}\,a$ if 
$A$ in non-Harrop and $\epp{e}$ if $A$ is Harrop. Similarly for $\epp{f}*b$.
\begin{eqnarray*}
\epp{\andi{d^A}{e^B}^{A\land B}} &=& \left\{
\begin{array}{ll}
\epp{d} &\hbox{if $B$ is Harrop}\\
\epp{e} &\hbox{if $A$ is Harrop}\\
\Pair(\epp{d},\epp{e}) &\hbox{otherwise}
\end{array}
\right.
\\
\epp{\andel{d^{A\land B}}^{A}} &=& \left\{
\begin{array}{ll}
\epp{d} &\hbox{if $B$ is Harrop}\\
\projl(\epp{d}) &\hbox{otherwise}
\end{array}
\right.
\\
\epp{\ander{d^{A\land B}}^{B}} &=& \left\{
\begin{array}{ll}
\epp{d} &\hbox{if $A$ is Harrop}\\
\projr(\epp{d}) &\hbox{otherwise}
\end{array}
\right.
\end{eqnarray*}
\begin{eqnarray*}
\epp{(\impi{u^A}{d^B})^{A\to B}} &=& \left\{
\begin{array}{ll}
\epp{d} &\hbox{if $A$ is Harrop}\\
\lambda u.\,\epp{d} &\hbox{otherwise}
\end{array}
\right.
\\
\epp{(\impe{d^{A\to B}}{e^A})^B} &=& \left\{
\begin{array}{ll}
\epp{d} &\hbox{if $A$ is Harrop}\\
\epp{d}\,\epp{e} &\hbox{otherwise}
\end{array}
\right.
\end{eqnarray*}
\begin{eqnarray*}
\epp{\forall^+_{x}(d^A)^{\forall x\,A}} &=& \epp{d} \\
%
\epp{\forall^-_{t}(d^{\forall x\, A})^{A[t/x]}} &=& \epp{d}\\
\epp{\exists^{+}_{\lambda x A,t}({d^{A[t/x]}})^{\exists x\,A}} &=& \epp{d}\\
\epp{\exe{d^{\exists x\,A}}{e^{\forall x\,(A \to B)}}^{B}} &=& \left\{
\begin{array}{ll}
\epp{e} &\hbox{if $A$ is Harrop}\\
\epp{e}\,\epp{d} &\hbox{otherwise}
\end{array}
\right.
\end{eqnarray*}

\paragraph{Extraction into Haskell.}
By the Soundness Theorem (Thm.~\ref{thm-soundness}) one can extract from
a proof of a formula $A$ a realizing program $M$ such that the typing 
$M:\tau(A)$ can be derived using the rules given in Lemma~\ref{lem-typ-rifp}.
The extraction procedure $\epp{\cdot}$ implicitly computes not only $M$
but a typing derivation for $M:\tau(A)$. Composing this with the 
translation of $\RIFP$ programs into Haskell one obtains an 
extraction procedure directly into Haskell. It is easy to see that the 
composed procedure can be obtained by the following small modifications of 
$\epp{\cdot}$ which we call $\epph{\cdot}$.
In addition to replacing $\Pair(M,N)$ by $(M,N)$, the definition
of $\epp{\cdot}$ is changed for closure and coclosure 
derivation rules with $\Phi = \lambda X P$, $\alpha = \alpha_X$, 
and $\rho = \tau(P)$ from (\ref{eq:closure}) to
\begin{eqnarray*}
\epph{\clos{\Phi}^{\Phi(\mu(\Phi))\subseteq \mu(\Phi)}} =
\mathsf{roll}_{\cona}\\
\epph{\cocl{\Phi}^{\nu(\Phi)\subseteq\Phi(\nu(\Phi))}} =  
\mathsf{unroll}_{\cona}
\label{eq:closuren-ew}
\end{eqnarray*}
For induction and coinduction with $\Phi = \lambda X Q$,
$\alpha = \alpha_X$ and $\rho = \tau(Q)$,
we use the following definitions instead of
(\ref{eq:ind}) and (\ref{eq:coind}).

\vspace{1em}
$\epph{\induprime{d^{\Phi(P)\subseteq P}, 
           e^{\monprop{}{\Phi}}}^{\mu(\Phi)\subseteq P}} =$
\begin{eqnarray*}
&&
\left\{
\begin{array}{ll}         
\rec\,(\lambda a\,.\, 
  \epph{d} \circ (\epph{e}\, a) \circ \mathsf{unroll}_{\cona})
  &\hbox{if $\Phi$ is non-Harrop}\\  
\rec\,(\lambda a\,.\,
 \epph{d}\, (\epph{e[\pcv{X}/X]}\, a)) &\hbox{otherwise.}
\end{array}
\right.
\end{eqnarray*}

\vspace{1em}
$\epph{\coindprime{d^{P\subseteq \Phi(P)},
              e^{\monprop{}{\Phi}}}^{{P \subseteq \nu(\Phi)}}} =$
\begin{eqnarray*}
&&
\left\{
\begin{array}{ll}
\rec\,(\lambda a.\ \mathsf{roll}_{\cona} \circ (\epph{e}\, a)\circ \epph{d}) &\hbox{if $P$ is non-Harrop}\\  
\rec\,(\lambda a\,.\, (\epph{e[\pcv{X}/X]}\, a\,\,\epph{d}))  &\hbox{otherwise.}
\end{array}
\right.
\end{eqnarray*}
Similar modifications need to be carried out
for the other induction and coinduction schemes.


\subsection{Realizing natural numbers}
\label{sub-rnat}

In Sect.~\ref{subsub-nat} we defined natural numbers 
as a subset of the real numbers through the inductive predicate
$\NN(x) \ \eqmu\ x = 0 \lor \NN(x-1)$.
This view of natural numbers is abstract since no concrete representation
is associated with it. A concrete representation of natural numbers is provided
through the realizability interpretation of the predicate $\NN$.
Note that the formula $\NN(x)$ is not Harrop since it contains a disjunction
at a strictly positive position.
We have $\tau(\NN) = \nat = \tfix{\alpha}{1+\alpha}$, the type of 
natural numbers (see~Sect.~\ref{sub-types}).
Realizability for $\NN$ works out as
\begin{eqnarray*}
\ire{a}{\NN(x)} &\eqmu& a = \Left(\Nil) \land x = 0 \lor  
\exists b\,(a = \Right(b) \land \ire{b}{\NN(x-1))}\,.
\end{eqnarray*}
Therefore, $\ire{a}{\NN(x)}$ means that $a$ is the unary representation of the
natural number $x$.

\begin{lemma}
\label{lem-rea-nat}
\begin{itemize}
\item[(a)] $(\re\,\NN(x)) \leftrightarrow \NN(x)$
\item[(b)] 
$(\re\,\exists x\in\NN\,A(x)) \leftrightarrow (\exists x\in\NN\,\re\,A(x))$.
\item[(c)] $\reah(\forall x\in\NN\,A(x)) \leftrightarrow \forall x\in\NN\,\reah(A(x))$
if $A(x)$ is a Harrop formula.
\item[(d)] $\ire{a}{\NN(x)} \land \ire{b}{\NN(y)} \to (a=b \leftrightarrow x=y)$.
\end{itemize}
\end{lemma}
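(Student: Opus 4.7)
The plan is to deal with each clause in turn, exploiting that
$\rea(\NN)=\mu(\Phi')$ where
$\Phi'=\lambda Y\lambda(x,a)\bigl(a=\Left(\Nil)\land x=0\ \lor\
\exists b\,(a=\Right(b)\land Y(x-1,b))\bigr)$,
so that the closure and induction principles for $\rea(\NN)$ are available
alongside those for $\NN$ itself.

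For (a), I would prove each direction by s.p.~induction. For
$\re\NN(x)\to\NN(x)$, I would s.p.-induct on $\rea(\NN)$ using the predicate
$P(x,a)\eqdef\NN(x)$: the base clause gives $\NN(0)$ by closure, and the
successor clause gives $\NN(x-1)\to\NN(x)$, also by closure of $\NN$. For the
converse, I would s.p.-induct on $\NN$ with $P(x)\eqdef\re\NN(x)$: if $x=0$
then $\Left(\Nil)$ realizes, and if $b$ realizes $\NN(x-1)$ then $\Right(b)$
realizes $\NN(x)$.

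For (b) and (c), the arguments are straightforward unfoldings of the clauses
defining $\rea$ and $\reah$ combined with (a). In (b) one has
$\exists x\in\NN\,A(x)\equiv\exists x\,(\NN(x)\land A(x))$, and then the
clauses for $\exists$ and $\land$ (split by whether $A$ is Harrop) reduce
$\re(\exists x\in\NN\,A(x))$ to $\exists x\,(\re\NN(x)\land \re A(x))$ or
$\exists x\,(\re\NN(x)\land \reah(A(x)))$, which by (a) and
Lemma~\ref{lem-realizability}(c) collapses to $\exists x\in\NN\,\re A(x)$. For
(c), observing that $\NN(x)\to A(x)$ is Harrop when $A(x)$ is (the
non-Harrop predicate $\NN$ is strictly negative there), the clauses for
$\reah$ on $\forall$ and $\to$ give
$\reah(\forall x\in\NN\,A(x))\leftrightarrow
\forall x\,(\re\NN(x)\to\reah(A(x)))$, which by (a) is
$\forall x\in\NN\,\reah(A(x))$.

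Part (d) is the main obstacle. I would fix $b,y$ with $\ire{b}{\NN(y)}$ and
prove
$P(x,a)\eqdef\forall y,b\,\bigl(\ire{b}{\NN(y)}\to(a=b\leftrightarrow x=y)\bigr)$
by s.p.~induction on $\rea(\NN)$, doing a nested case analysis on the
defining disjunction for $\ire{b}{\NN(y)}$. The matching cases
($a=\Left(\Nil),b=\Left(\Nil)$ and $a=\Right(c),b=\Right(c')$) use the
injectivity axiom for constructors and, in the successor case, the induction
hypothesis applied to $\ire{c'}{\NN(y-1)}$. The cross cases require that
$\Left(\Nil)$ and $\Right(c)$ are range-disjoint together with the fact that
$0\ne y$ whenever $\NN(y-1)$ (respectively $x\ne 0$ whenever $\rea(\NN)(x-1,c)$);
this in turn follows from the lemma $\forall x\,(\NN(x)\to 0\le x)$ proved in
Sect.~\ref{subsub-nat}, after invoking (a) to pass from realizability of $\NN$
to $\NN$ itself. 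The technical care needed for these cross cases, and in
particular the appeal to $0\le x$ for natural numbers to rule them out, is
the only place where the proof goes beyond pure unfolding.
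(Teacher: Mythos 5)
Your proposal is correct and follows essentially the same route as the paper, which only sketches the argument: both directions of (a) by s.p.\ induction, (b) and (c) by unfolding plus (a), and (d) by induction with constructor injectivity/disjointness and non-negativity of naturals to kill the cross cases. The one ingredient you leave implicit in the successor case of (d) is the cancellation $x-1=y-1\to x=y$, which is precisely the ``subtraction is injective in its first argument'' fact the paper's proof names.
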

\begin{proof}
Both implications of part (a) are easily proven by induction.

Parts (b) and (c) follow immediately from (a).

To prove part (d) one can use that natural numbers are non-negative and 
subtraction is an injective function in its first argument.
\end{proof}
\emph{Remark\/}.  In the parts (a-c) of lemma~\ref{lem-rea-nat}, 
$\NN$ may be replaced by any predicate that contains neither implications 
nor universal quantifiers nor free predicate variables. 
However, (d) depends on the concrete definition of $\NN$ 
and specific properties of the theory of real numbers.

By Lemma~\ref{lem-rea-nat} it is safe to identify natural numbers with their realizers.
Henceforth we will use the variables $n,m,k,l,\ldots$
for both.  Hence, in an $\IFP$ proof a natural number is a special
real number while in an extracted program it is a special domain element.
Recall from Sect.~\ref{subsub-nat} that rational numbers are defined
by the predicate 
$\QQ(q) \eqdef \exists x,y,z\in\NN\,(z \neq 0 \land q\cdot z = x-y)$
which corresponds to a representation of rational numbers by 
triples of natural numbers $(n, m, k)$ ($k \neq 0$) denoting $(n - m)/k$.
Although the corresponding statement of Lemma~\ref{lem-rea-nat} ~(d) 
(i.e.~uniqueness of realizers) does not hold for $\QQ$, 
the generalizations of Lemma~\ref{lem-rea-nat}~(a--c) do apply
to $\QQ$. Therefore, one can use realizers to express rational numbers.

\begin{example}
\label{ex-sum2}
In Example~\ref{ex-sum1}, we proved 
$A \eqdef \forall x, y\ (\NN(x) \to \NN(y) \to \NN(x+y))$.
We have $\tau(A) = \ftyp{\nat} \ftyp{\nat}{\nat}$.
According to Lemma~\ref{lem-mon-new}, 
the formula $\monprop{}{\Phi_\NN}
\eqdef X \subseteq Y \to \Phi_\NN(X) \subseteq \Phi_\NN(Y)$
 expressing the monotonicity of the operator
$\Phi_\NN$  is provable in IFP' and the following program 
$\mon_\NN : \ftyp{(\ftyp{\alpha_X}{\alpha_Y})}{\ftyp{\one + \alpha_X}{\one + \alpha_Y}}$ is
extracted from the proof.
\begin{align*}\label{mon}
  \mon_\NN =  \lambda f.
   \lambda m.\, \case\ m\ of \{\Left(a) \to \Left(a); \Right(b) \to \Right(f(b))\}
\end{align*}
Furthermore, from the proof of the induction premise
we extract the following program
of type $\ftyp{(\one + \nat)}{\nat}$.
$$
s = (\lambda m. \case\, m\, \of\, \{\Left(c) \to n; \Right(c) \to \Right(c)\})
$$
Here, $n$ is the realizer of $\NN(x)$.
Therefore,  by (\ref{eq:ind}) of Sect.~\ref{sub-pe},
the realizer extracted from the proof of $A$ is
the following program of type $\ftyp{\nat} \ftyp{\nat}{\nat}$
\begin{align*}
\mathsf{plus} &=  \lambda n.\,  \rec\ \lambda f.\, s \circ (\mon_\NN\ f)\\
&=  \lambda n.\,  \rec\ \lambda f.\, \lambda m.\,  s ((\mon_\NN\ f) m)\, .
\end{align*}
By  
program axiom (ii), $s$ is strict.
Therefore, by Lemma~\ref{lem-bot}, 
we can rewrite
\begin{align*}
 \mathsf{plus}\, n \, m &\eqrec 
                 \case\ m\ \of \{\Left(a) \to s(\Left(a)); \Right(b) \to s(\Right(\mathsf{plus}\ n\ b))\}\\
   &\eqrec 
     \case\ m\ \of \{\Left(a) \to n; \Right(b) \to \Right(\mathsf{plus}\ n\ b)\}.
\end{align*}
\end{example}

\subsection{Realizing wellfounded induction}
\label{sub-rwf}
In this section we work out 
in detail the realizers of wellfounded induction
and its specializations (Sect.~\ref{sub-wf})
as provided by the Soundness Theorem 
(Thm.~\ref{thm-soundness}).
This will be important for understanding the programs  
extracted in Sect.~\ref{sec-realnumbers}.
\begin{lemma}[Realizer of wellfounded induction] 
\label{lem-rea-wfi}
The schema of wellfounded induction, $\wfi_{\less,A}(P)$, is realized as follows.
If $s$ realizes $\prog_{\less ,A}(P)$ where $P$ is non-Harrop, then 
$\acc_{\less}\cap A \subseteq P$ is realized by
\begin{itemize} 
\item[-] $\tilde{f} \eqrec \lambda a.\,(s\,a\,(\lambda a'.\,\lambda b.\, \tilde{f}\,a'))$
if $\less$ and $A$ are both non-Harrop,
\item[-] $\tilde{f} \eqrec \lambda a.\,(s\,a\,\tilde{f})$
if $\less$ is Harrop and $A$ is non-Harrop,
\item[-] $\tilde{c} \eqrec s\,(\lambda b.\, \tilde{c})$
if $\less$ is non-Harrop and $A$ is Harrop,
\item[-] 
$\rec\, s$
if $\less$  and $A$ are both Harrop.
\end{itemize}
\end{lemma}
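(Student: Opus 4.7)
The plan is to derive this lemma by applying the Soundness Theorem (Theorem~\ref{thm-soundness}) to the canonical IFP derivation of wellfounded induction and then simplifying the extracted realizer in each of the four cases.

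Recall from Section~\ref{sub-wf} that $\wfi_{\less,A}(P)$ is a direct instance of strictly positive induction, namely $\IND(\Phi, A\Rightarrow P)$ with $\Phi \eqdef \lambda X\,\lambda x\,\forall y\less x\,X(y)$. Its conclusion $\acc_\less\subseteq A\Rightarrow P$ is intuitionistically equivalent to $\acc_\less\cap A\subseteq P$, while its premise $\Phi(A\Rightarrow P)\subseteq A\Rightarrow P$ is equivalent to $\prog_{\less,A}(P)$ by reordering of hypotheses. First I would compute the monotonicity realizer of $\Phi$ provided by Lemma~\ref{lem-mon-new}: in the fully non-Harrop case it is essentially function composition $\lambda f.\,\lambda g.\,f\circ g$, and it collapses appropriately when $\less$ is Harrop. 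Combined with the observation that the realizer $\epp{d}$ of the induction premise is obtained from $s$ by swapping the first two arguments, applying the extraction rule~(\ref{eq:ind}) gives a closed expression that I would then unfold in each of the four cases.

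Unfolding $\rec(\lambda a.\,\epp{d}\circ(\epp{e}\,a))$ in Case~1 (both $\less$ and $A$ non-Harrop) produces the recursive equation $\tilde{f}\,a = s\,a\,(\lambda a'.\,\lambda b.\,\tilde{f}\,a')$; Case~2 ($\less$ Harrop, $A$ non-Harrop) collapses the monotonicity realizer to the identity, yielding $\tilde{f}\,a = s\,a\,\tilde{f}$; Case~3 ($\less$ non-Harrop, $A$ Harrop) suppresses the $A$-argument, giving $\tilde{c} = s\,(\lambda b.\,\tilde{c})$; and Case~4 (both Harrop) reduces the extraction pattern to the plain fixed point $\rec\,s$. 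In each case I would verify the correctness of the simplified program by $\acc_\less$-induction, using the assumed realizability of $s$ together with the unfolding of $\acc_\less$ to supply the realizer of $\forall y\less x\,P(y)$ needed by $s$.

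The main subtlety is the type bookkeeping in Case~1. The extraction procedure literally produces a realizer of type $\ftyp{(\tau(\acc_\less)\times\tau(A))}{\tau(P)}$ that threads the accessibility witness through the recursion, whereas the program stated in the lemma takes only an $A$-realizer. I would justify this discrepancy by observing that the monotonicity realizer for $\Phi$, namely composition, uses the unrolled accessibility witness only to feed a structurally smaller sub-witness into the recursive call, whose sole purpose at the semantic level is to certify termination; since $\acc_\less(x)$ already guarantees well-foundedness of the recursion, the accessibility component can safely be projected away, and the simplified form stated in the lemma is obtained by pre-composing with the second projection. This decoupling step, rather than the recursion itself, is where the proof requires the most care, and it is the reason the four cases can be presented so cleanly.
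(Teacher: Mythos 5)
There is a genuine gap, and it sits exactly where you locate your ``main subtlety.'' The key fact you miss is that $\acc_\less=\mu(\Phi)$ with $\Phi=\lambda X\,\lambda x\,\forall y\,(y\less x\to X(y))$ is a \emph{Harrop} predicate, no matter whether $\less$ is Harrop or not: the only occurrences of $y\less x$ sit in the premise of an implication, hence not at a strictly positive position, and $X$ replaced by the constant $\pcv{X}$ leaves no free predicate variable, so $\Phi$ is $X$-Harrop and $\tau(\acc_\less)=\one$. Consequently the extraction for $\acc_\less\subseteq A\Rightarrow P$ falls under case~(d) of the soundness proof ($\Phi$ Harrop, $P$ non-Harrop, $X$ free in $Q$), which directly yields $\tilde f\eqrec s'\,(m\,\tilde f)$ with $s'$ a realizer of $\prog_\less(A\Rightarrow P)$ and $m$ a realizer of $\monprop{}{\Phi}[\pcv{X}/X]$. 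There is no accessibility witness to thread through the recursion, no realizer of type $\ftyp{(\tau(\acc_\less)\times\tau(A))}{\tau(P)}$, and hence no ``decoupling'' or projection step. Your proposed justification for that step --- that the accessibility component ``only certifies termination'' and ``can safely be projected away'' --- is not a valid move in this framework anyway: one cannot discard a component of a realizer and assert that the result still provably realizes the formula in $\RIFP$ on the strength of a semantic termination intuition. The paper instead obtains the final claim about $\acc_\less\cap A\subseteq P$ for free, precisely because $\acc_\less$ is Harrop, so the realizers of $\acc_\less\cap A\subseteq P$ and of $\acc_\less\subseteq A\Rightarrow P$ coincide.

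A second, related error: the monotonicity realizer for $\Phi$ is not ``essentially function composition.'' Since $\Phi$ is a Harrop operator, the relevant monotonicity statement is $\monprop{}{\Phi}[\pcv{X}/X]$, i.e.\ $\pcv{X}\subseteq Y\to\forall x\,((\forall y\less x\,\pcv{X}(y))\to\forall y\less x\,Y(y))$, whose realizer is $\lambda a.\,\lambda b.\,a$ when $\less$ is non-Harrop and $\lambda a.\,a$ when $\less$ is Harrop. With the correct $m$ and with $s'$ obtained from $s$ by the currying/reordering you do describe (e.g.\ $s'=\lambda g.\,\lambda a.\,s\,a\,(\lambda a'.\,\lambda b.\,g\,b\,a')$ in the fully non-Harrop case, $s'=s$ when $A$ is Harrop), unfolding $\tilde f\eqrec s'(m\,\tilde f)$ gives the four programs of the lemma directly. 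Your overall strategy --- instantiate the Soundness Theorem and simplify case by case --- is the paper's, but the case analysis you would actually land in is the wrong one, and the repair you propose for the resulting mismatch is not sound.
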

\begin{proof}
Since $\wfi_{\less,A}(P)$ follows from $\wfi_{\less}(A \Rightarrow P)$ and the latter is an
instance of induction, the extracted programs shown in the lemma can be obtained from 
Theorem~\ref{thm-soundness}. However, it is instructive to give some details of their
derivations.

Recall that $\acc_{\less } = \mu(\Phi)$ where 
$\Phi(X) = \lambda x\,\forall y \less  x\,X(y)$ and 
$\prog_\less (Q) = \Phi(Q)\subseteq Q$.
Since $\Phi$ is a Harrop operator, $\acc_{\less }$ is a Harrop predicate.

According to Theorem~\ref{thm-soundness} and the program extraction 
procedure described in Sect.~\ref{sub-pe} the extracted realizer of 
$\acc_{\less} \subseteq A \Rightarrow P$  is 
\[\tilde{f} \eqrec s'\,(m\,\tilde{f})\]
provided $\ire{s'}{\prog_\less(A \Rightarrow P)}$ and 
$\ire{m}{(\monprop{}{\Phi}[\pcv{X}/X])}$. 
Because 
$s$ realizes $\prog_{\less ,A}(P)$, 
that is,
\[\forall x\,(x\in A \to \forall y\,(y\in A \to y\less  x \to y\in P) \to x \in P)\]
and $\prog_\less(A \Rightarrow P)$ expands to
\[\forall x\,(\forall y (y\less  x \to y\in A \to y\in P) \to x \in A \to x \in P)\]
it is clear that we can define
\begin{itemize} 
\item[-] $s' \eqdef \lambda g. \lambda a.\,(s\,a\,(\lambda a'.\,\lambda b.\,g\,b\,a'))$ 
if $\less$ and $A$ are both non-Harrop,
\item[-] $s' \eqdef \lambda f. \lambda a.\,(s\,a\,f)$ 
if $\less$ is Harrop and $A$ is non-Harrop,
\item[-] $s' \eqdef s$ if $A$ is Harrop.
\end{itemize}
The realizer $m$ of $\monprop{}{\Phi}[\pcv{X}/X]$, which expands to
\[\pcv{X}\subseteq Y\to\forall x\,(\forall y\less x\,\pcv{X}(y))\to \forall y \less  x\,Y(y)\]
is easily extracted as 
\begin{itemize}
\item[-] $\lambda a.\,\lambda b.\,a$ if $\less$ is non-Harrop,
\item[-] $\lambda a.\,a$ if $\less$ is Harrop.
\end{itemize}
From this, one can easily see that the extracted realizer of 
$\acc_{\less} \subseteq A \Rightarrow P$ is as stated in the lemma.
Since 
$\acc_\less $ is Harrop it follows that 
the same program 
realizes the inclusion 
$\acc_{\less}\cap A\subseteq P$.
\end{proof}
Finally, we exhibit the realizers of Archimedean induction.
We only look at the forms $\AI_q$ and $\AIB_q$ since the principles
$\AI$ and $\AIB$ have the same realizers and will not be used in the following.
\begin{lemma}[Realizers of Archimedean induction]
\label{lem-rea-ai}
\begin{description}
\item[]
\item[$\AI_q$] If $s$ realizes 
$\forall x \ne 0\, ((|x| \leq q \to P(2x)) \to P(x))$, 
where $P$ is non-Harrop, 
then 
$\rec\,s$ 
realizes $\forall x \ne 0\,P(x)$.
\item[$\AIB_q$] If $s$ realizes $\forall x \in B\setminus\{0\}\, 
      (P(x) \lor (|x| \le q \land B(2x) \land (P(2x) \to P(x))))$,
where $B$ and $P$ are non-Harrop, 
then 
\begin{equation}
\label{eq:realaib}
a\,b \eqrec
\case\,s\,b\,\of\,\{\Left(c) \to c; \Right(b',d) \to d\,(a\,b')\}
\end{equation}
realizes $\forall x \in B\setminus\{0\} \, P(x)$.
\end{description}
\end{lemma}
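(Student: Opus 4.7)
The plan is to apply the Soundness Theorem (Thm.~\ref{thm-soundness}) to the derivations of $\AI_q$ and $\AIB_q$ given in Sect.~\ref{subsub-ai}, and to read off the extracted realizer using the explicit extraction procedure of Sect.~\ref{sub-pe}, verifying that (modulo $\eta$ and Harrop simplifications) it agrees with the programs claimed in the lemma.

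For $\AI_q$ I would first recall that $\AI_q$ is derived as an instance of $\hsi(\Phi,P)$ for $\Phi=\lambda X\,\lambda x\,(\forall y\less_q x\,X(y))$, exploiting $\acc_{\less_q}(x)\leftrightarrow x\neq 0$ (Lemma~\ref{lem-pathq}). Since $\less_q$ is nc, $\Phi$ is a Harrop operator, while $P$ is non-Harrop by hypothesis. The extraction clause for half-strong induction in this case (Sect.~\ref{sub-pe}) yields a program of the form $\rec(\lambda a.\,\epp{d}\,(\epp{e[\pcv{X}/X]}\,a))$, where $\epp{d}$ realizes the half-strong premise $\Phi(P)\cap\mu(\Phi)\subseteq P$ and $\epp{e[\pcv{X}/X]}$ realizes the Harrop-substituted monotonicity. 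Two observations do the work: (i) for a Harrop operator the monotonicity realizer reduces to the identity $\lambda b.\,b$, exactly as in the proof of Lemma~\ref{lem-rea-wfi}; and (ii) since $\mu(\Phi)$ is Harrop and, by Lemma~\ref{lem-pathq}, equivalent to $x\neq 0$, the formulas $\Phi(P)\cap\mu(\Phi)\subseteq P$ and $\forall x\neq 0\,((|x|\le q\to P(2x))\to P(x))$ have the same realizability content, so the given $s$ may serve as $\epp{d}$. The extracted program therefore collapses to $\rec\,s$.

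For $\AIB_q$ I would follow the derivation in Lemma~\ref{lem-aib}, which obtains $\AIB_q(B,P)$ from $\AI_q(B\Rightarrow P)$ by showing that the premise of the former implies the premise of the latter. A realizer of the $\AIB_q$-premise has type $\ftyp{\tau(B)}{(\tau(P)+(\tau(B)\times\ftyp{\tau(P)}{\tau(P)}))}$, with the Harrop conjunct $|x|\le q$ disappearing from the $\Right$-component. From such an $s$ I would build
\[ s'\;:=\;\lambda f.\,\lambda b.\,\case\,s\,b\,\of\,\{\Left(c)\to c;\;\Right(b',d)\to d\,(f\,b')\} \]
as a realizer of the premise of $\AI_q(B\Rightarrow P)$: the $\Left$-branch returns the direct $P(x)$-realizer, and the $\Right$-branch uses $f$ to convert the $B(2x)$-realizer $b'$ into a $P(2x)$-realizer, which is then fed to $d$. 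The first part of the lemma then supplies $a:=\rec\,s'$ as a realizer of $\forall x\neq 0\,(B\Rightarrow P)(x)$, hence of $\forall x\in B\setminus\{0\}\,P(x)$, and unfolding $a=s'\,a$ reproduces precisely equation~(\ref{eq:realaib}).

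The main effort is bookkeeping: determining which conjuncts, antecedents, and operators are Harrop and therefore invisible to the realizer type assignment. This is what allows the $\hsi$-extracted program to collapse to $\rec\,s$ through an identity monotonicity realizer, and what fixes the precise shape of the $\Right$-branch in the $\AIB_q$ realizer, where $s\,b$ supplies $b'$ and $d$ but no realizer of $|x|\le q$.
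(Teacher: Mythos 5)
Your proposal is correct and takes essentially the same route as the paper: derive $\AI_q$ from half-strong induction with a Harrop operator whose monotonicity realizer is the identity (so the extracted program from Sect.~\ref{sub-pe} collapses to $\rec\,s$), and obtain the $\AIB_q$ realizer by building exactly the intermediate $s'$ realizing the premise of $\AI_q(B\Rightarrow P)$ and unfolding $\rec\,s'$. The extra bookkeeping you supply about which components are Harrop is just a more explicit version of what the paper leaves implicit.
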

\begin{proof}
$\AI_q$ is derived from 
half strong induction $\hsi$ as is shown in Lemma~\ref{lem-ai}, 
and  the realizer of the monotonicity of the operator in questions clearly is the identity.  Therefore, 
as we studied in Section \ref{sub-pe}, 
it has the realizer $a \eqrec s\,a$, that is, $\rec\,s$.

Clearly, the premise of $\AIB_q(B,P)$ implies the premise of 
$\AI_q(B\Rightarrow P)$.  
From a realizer $s$ of the premise of the former one obtains the realizer
\[s'= \lambda a. \lambda b.\,   
      \case\,s\,b\,\of\,\{\Left(c) \to c; 
                          \Right(b',d) \to d\,(a\,b')\}\]
of the premise of the latter. Therefore, $a \eqrec s'\,a$, that is,
\begin{equation*}
a\,b \eqrec
\case\,s\,b\,\of\,\{\Left(c) \to c; \Right(b',d) \to d\,(a\,b')\}
\end{equation*}
realizes the conclusion $\forall x \in P\setminus\{0\} \, B(x)$.
\end{proof}


\section{Stream representations of real numbers}
\label{sec-realnumbers}
As a first serious application of $\IFP$ we present a case study about 
the specification and extraction of exact representations of real numbers.
This will highlight many features of our system 
such as the use of classical axioms as well as 
partial and infinite realizers.
We will continue the development of the system $\IFP$ in Sect.~\ref{sec-opsem}
with the operational semantics of programs. 

We study three representations of real numbers as 
infinite streams of discrete data:
Cauchy representation, 
signed digit representation,
and 
infinite Gray code~\cite{Gianantonio99,Tsuiki02}.
We first recall each representation 
informally in the style of computable analysis~\cite{Weihrauch00}.
Then we show how it can be obtained as the realizability interpretation of a 
suitable predicate built on the formalization 
of real numbers in $\IFP$ in Sect.~\ref{sub-ex-reals}.
Hence, in this section all formal definitions and proofs take place
in $\IFP(\ax_R)$ where
$\ax_R$ is the non computational axiom system for the real numbers introduced
in Sect.~\ref{sub-ex-reals} which includes the Archimedean property ($\AP)$ and
Brouwer's Thesis for nc relations ($\BTnc$).
In particular, in this instance of $\IFP$ the various versions of 
Archimedean Induction~(Sect.~\ref{subsub-ai}) are valid.

For our purpose  it is most convenient to work
in the interval $[-1,1]$.
Everything could be easily transferred to the 
unit interval $[0,1]$ which is used in~\cite{Tsuiki02}.

We will use the notation $a_0:a_1:\ldots$ to denote infinite streams,
mostly in an informal setting but occasionally also for
elements of the domain $D$ that represent streams
(as we did in Sect.~\ref{sub-domain}). 

\subsection{Cauchy representation}
\label{sub-cauchy} 

\paragraph{Informal definition}
An infinite sequence $a = (a_i)_{i\in\NN}$ of rational numbers
that converges quickly to a 
real number $x$ is called a \emph{Cauchy representation} of $x$:
\[ \A(a,x) \eqdef \forall n \in\NN\, |x-a_n|\le 2^{-n}\,. \]
We consider the Cauchy representation as the standard representation
and call any other representation $R$ of real numbers (in $[-1,1])$
\emph{computable} if it is computably equivalent to the
Cauchy representation %
restricted to $[-1,1]$, 
i.e.\ $R$-representations can be effectively transformed
into Cauchy representations and vice-versa. More precisely, 
if $R(r, x)$ expresses that $r$ is a $R$-representation of $x$ 
we say that $R$ is computable
if there
exist (possibly partial) computable functions $\varphi,\psi$ such that
for all $a,r$ and $x\in[-1,1]$
\[ 
R(r,x) \to \A(\varphi(r),x)\qquad\hbox{and}\qquad
\A(a,x) \to R(\psi(a),x)\,.
\]
Note that all representations we will consider 
are functions or infinite sequences of discrete objects (rational numbers or digits)
possibly extended with undefinedness.
There exist natural notions of computable functions between such   
representations (see~\cite{Rogers67}, \cite{Weihrauch00}, \cite{Tsuiki02}).

\paragraph{Formalization in $\IFP$}
The Cauchy representation can be obtained through the realizability 
interpretation of the predicate
\begin{eqnarray*}
\A(x) &\eqdef& \forall n \in\NN\, \exists q \in\QQ\, |x-q|\le 2^{-n}\,.
\end{eqnarray*}
By unfolding the definition of realizability one obtains $\ire{a}{\A(x)} \leftrightarrow$
\begin{eqnarray*}
&&
\forall n\, (a : \ftyp{\nat}{\rat} \land \forall b\ (\ire{b}{\NN(n)} \to
\exists q\, (\ire{(a\, b)}{\QQ(q)} \land
|x-q|\le 2^{-n})))
\end{eqnarray*}
where $\rat\eqdef\tau(\QQ) = \nat\times\nat\times\nat$.
By identifying natural numbers with their realizers, this simplifies to
\begin{eqnarray*}
\ire{a}{\A(x)} &\leftrightarrow& a : \ftyp{\nat}{\rat} \land 
\forall n \in\NN\, \exists q\, (\ire{(a\, n)}{\QQ(q)} \land
|x-q|\le 2^{-n})
\end{eqnarray*}
and by further expressing rational numbers through their realizers, it becomes 
\begin{eqnarray*}
\ire{a}{\A(x)} &\leftrightarrow& a : \ftyp{\nat}{\rat} \land 
\forall n \in\NN\, |x-a\,n|\le 2^{-n}\,.
\end{eqnarray*}

Therefore $\ire{a}{\A(x)} \leftrightarrow \A(a,x)$ where the infinite 
sequence $a$ is given as a function on the natural numbers.

Alternatively, one can formalize the Cauchy representation coinductively by
\begin{eqnarray*}
\A'(x) &\eqnu& \exists n \in\NN\,(|x-n|\le 1 \land \A'(2x)).
\end{eqnarray*}
Defining the type of streams of type $\rho$ as
\[\stream{\rho} \eqdef \tfix{\alpha}{\rho \times \alpha}\]
the predicate $\A'$ 
has the type $\tau(\A') = \stream{\nat}$ 
and we obtain the realizability interpretation
\begin{eqnarray*}
\ire{a}{\A'(x)}&\eqnu&\exists n \, \in \NN,a'\,
         (a = \Pair(n,a') \land |x-n|\le 1
            \land \ire{a'}{\A'(2x)})
\end{eqnarray*}
by identifying natural numbers with their realizers.
Therefore, the two formalizations lead to different `implementations' of the 
Cauchy representation. However, they are equivalent in the sense that 
one can prove $\A(x) \leftrightarrow \A'(x)$ and extract from the proof 
mutually inverse translations between the representations. 
The stream representation has the advantage that it permits `memoized' 
computation due to a lazy operational semantics 
(see~Sect.~\ref{sec-opsem}).

\subsection{Signed digit representation}
\label{sub-sd}
\paragraph{Informal definition}
For an infinite sequence $p=(p_i)_{i<\omega}$ of signed digits $p_i\in\{-1,0,1\}$ 
set
\begin{eqnarray}\label{eq-sd0}
\val{p} \eqdef \sum_{i<\omega} p_i2^{-i} \in [-1,1]\,.
\end{eqnarray}
If $x = \val{p}$, then $p$ is called a 
\emph{signed digit representation} of $x\in[-1,1]$. We set
\[ \C(p,x) \eqdef \val{p} = x\,. \]
The digit $0$ is redundant since every $x\in [-1,1]$ has a 
\emph{binary representation}, that is, a signed digit
representation $p\in\{-1,1\}^\omega$.
However, the redundancy is needed to render the signed
digit representation computable, in particular to be able to compute from a
Cauchy representation of $x$ a signed digit representation of $x$.

One easily sees that for $d \in \{-1, 0, 1\}$, $p \in \{-1, 0, 1\}^\omega$
and $x \in [-1, 1]$
\begin{eqnarray}\label{eq-sd}
    \C(d:p, x) &\leftrightarrow& |2x -d| \leq 1\land \C(p, 2x-d)
\end{eqnarray}
where $d:p$ denotes the sequence 
beginning with $d$ and continuing with $p$.

\paragraph{Formalization in $\IFP$}
We define a predicate $\C(x)$ expressing that $x$ has a signed digit
representation.
First, we define the property of being a signed digit,
\begin{eqnarray*}
  \SD(x) &\eqdef& (x = -1 \lor x = 1) \lor x = 0\,.
\end{eqnarray*}
We define $\tri \eqdef (\one + \one) + \one$. Then, 
$\tau(\SD) = \tri$ and
\begin{eqnarray*}
\ire{d}{\SD(x)} 
           & =& (d = \Left(\Left(\Nil)) \land x = -1) \lor \\
             &&       (d = \Left(\Right(\Nil)) \land x = 1) \lor \\
             & & (d = \Right(\Nil) \land x = 0)\,.
\end{eqnarray*}
Thus, the three digits $-1, 1, 0$ are realized by the three elements
$\Left(\Left(\Nil)), \linebreak   \Left(\Right(\Nil)), \Right(\Nil)$ of $\tri$.
We identify these natural numbers and their realizers and use variables $d, e$ for both of them.

Next we define a predicate expressing that $d \in \{-1,0,1\}$ is the first digit 
of a signed digit representation of $x$
\begin{eqnarray*}
  \II(d,x)  &\eqdef& |2x-d| \leq 1\,.
\end{eqnarray*}
Finally, in view of (\ref{eq-sd}), we set
\begin{eqnarray*}
\C(x) &\eqnu& \exists d \in \SD\, (\II(d, x) \wedge \C(2x-d))\,.
\end{eqnarray*}
We have
$\tau(\C) = \stream{\tri}$ 
and
\begin{eqnarray*}
  \ire{p}{\C(x)} \eqnu \exists d \, \in \SD, p'\ (p = \Pair(d, p')\ \land \II(d,x) \land 
  \ire{p'}{\C(2x-d)})\,.
\end{eqnarray*}
Because of (\ref{eq-sd}) one easily sees that
$\ire{p}{\C(x)}$ holds iff $p$ is an infinite stream
of signed digits that represents $x$, i.e.\ $\C(p,x)$ holds.

\subsection{Infinite Gray code}
\label{sub-gray}
\paragraph{Informal definition}
Gray code of a real number $x$ in $[-1, 1]$ is defined using the digits
$\LG$ and $\RG$ and an `undefined' digit, $\bot$. 
We first define \emph{total Gray code} of $x$ which is a variant of
the binary representation and which does not use $\bot$.   
For an infinite sequence $q \in \{\LG,\RG\}^\omega$, 
we say that  $q$ is a total Gray code of $x$ if $x = \valg{q}$ where,
identifying $\LG$ with $-1$ and $\RG$ with $1$,
\begin{eqnarray}\label{eq-g0}
\valg{q} =   \sum_{i< \omega} (-\prod_{j \le i}(-q_j)){2^{-i}} \in [-1,1]. 
\end{eqnarray}
There are simple conversion algorithms between
binary representation and total Gray code.
Comparing (\ref{eq-g0}) with (\ref{eq-sd0}), one can see that
if $(q_i)_{i < \omega}$ is a Gray code, then
$p = (p_i)_{i < \omega}$  is a binary representation
of the same number for $p_i = -\prod_{j \le i}(-q_j)$.
This equation means that $p_i$
is 1 iff $q_0,\ldots,q_{j}$ contains an odd number of $\RG$.
Conversely,  if $p = (p_i)_{i < \omega}$  is a binary representation,
then $(q_i)_{i < \omega}$ for
\[
q_{i} = \left\{
\begin{array}{ll}
\LG &\hbox{if $p_{i-1}=p_i$}\\
\RG &\hbox{if $p_{i-1} \ne p_i$}\\
\end{array}
\right.\\
\]
is a total Gray code of the same number.
Here, we temporarily define $p_{-1} = -1$.
Defining the `tent function'
$\tent : [-1, 1] \to [-1, 1]$ as 
\[\tent(x) = 1-2|x|,\]
one can show 
\begin{eqnarray*}
\valtg{a:q} = x &\leftrightarrow&
((x \leq 0 \land a = \LG) \lor (x \geq 0 \land a = \RG))
\land \valtg{q} = \tent(x)
\end{eqnarray*}
for $a \in \{\LG, \RG\}$, $q \in \{\LG, \RG\}^\omega$ and $x \in [-1, 1]$.
This means that $q$ is an itinerary of $x$ along the tent function,
i.e.\ $q_n$ equals $\LG$ or $\RG$ depending on whether $\tent^n(x)$
is negative or positive. If $\tent^n(x) = 0$, then $q_n$ may be either.

Total Gray code is non-unique for the dyadic rationals
in $(-1,1)$, that is, numbers of the form $k/2^l$ where $l\in\NN$
and $k\in\ZZ$ and with $|k| < 2^l$. Such numbers have 
two binary codes of the form  $t(-1)1^\omega$ and $t1(-1)^\omega$
for some finite sequence $t\in\{-1,1\}^*$,
and therefore have exactly  two total Gray codes, namely,
\[ s\LG\RG\LG^\omega~\footnote{If $s=a_0,\ldots,a_{n-1}$, then
$s\LG\RG\LG^\omega = a_0:\ldots:a_{n-1}:\LG:\RG:\LG:\LG:\LG:\ldots$}
\quad\hbox{and}\quad s\RG\RG\LG^\omega \]
for some finite sequence $s\in\{\LG,\RG\}^*$.
These two codes only differ in the first digit after $s$, so 
it is natural to allow this digit to be $\bot$ since it carries no
information.
Therefore, we define the set of \emph{Gray codes} as
\begin{eqnarray*}
\GC  &\eqdef& \{\LG,\RG\}^\omega \cup 
              \{s\bot\RG\LG^\omega \mid s \in \{\LG,\RG\}^*\}\\
      &=&\{q\in\{\LG,\RG,\bot\}^\omega \mid \forall n\, 
       (q_n=\bot \to (q_k)_{k>n} =\RG\LG^\omega)\}      
\end{eqnarray*}
and define $\valg{\cdot}:\GC\to[-1,1]$
as the extension of 
total Gray code $\valg{\cdot}:\{\LG, \RG\}^\omega\to[-1,1]$ 
by setting
\[\valg{s\bot \RG\LG^\omega} \eqdef 
   \valtg{s\LG\RG\LG^\omega} ( = \valtg{s\RG\RG\LG^\omega})\,.\] 
For example $\valg{\bot\RG\LG^\omega} = 0$ and 
$\valg{\RG\bot\RG\LG^\omega} = 1/2$. 
We set
\[\G(q,x) \eqdef q\in\GC \land \valg{q} = x\,.\]
One can see that
\begin{eqnarray}\label{eq-gray}
\G(a:q,x) &\leftrightarrow&
((x \leq 0 \land a = \LG) \lor (x \geq 0 \land a = \RG)
\lor\, (x = 0 \land a = \bot)) \nonumber\\
&&\land\, \G(q,\tent(x))
\end{eqnarray}
for $a \in \{\bot, \LG, \RG\}$, $q \in \{\bot, \LG, \RG\}^\omega$
and $x \in [-1, 1]$. 
Note that $\tent(x)$ in the right conjunction of (\ref{eq-gray}) does not depend on
the first digit $a$ whereas for the signed digit case 
$2x-d$ in the right conjunction of (\ref{eq-sd}) depends on $d$.

While it can be shown that total Gray code is \emph{not} computable, 
Gray code \emph{is}, thanks to the possibility of having an 
undefined digit. 
In \cite{Tsuiki02} one finds
programs translating between Gray code and the signed digit representation.

\paragraph{Formalization in $\IFP$}
We define a predicate $\G(x)$ expressing that $x$ has a Gray code.
We first define a predicate for the digits of Gray code:
\begin{eqnarray*}
\D(x) &\eqdef&  x\neq 0 \to (x\le 0 \lor x \ge 0)\,.
\end{eqnarray*}
We have $\tau(\D) = \bool$ for $\bool \eqdef \one+\one$.  
Note that
$\tval{\bool}{} = \{\Left(\Nil), \Right(\Nil), \bot,  \linebreak \Left(\bot), \Right(\bot)\}$
(See Remark 1 of Sect.~\ref{sub-realizability}).
Setting $\LG \eqdef \Left(\Nil)$ and $\RG \eqdef \Right(\Nil)$, we have
\begin{eqnarray*}
  \ire{a}{\D(x)}  &=& 
a : \bool\, \land\, 
(x \neq 0 \to
                      (a = \LG \land x \le 0) \lor (a = \RG \land x \ge 0))\,.
\end{eqnarray*}
Thus, all elements of $\bool$ realize $\D(0)$.
By considering not only $\bot$ but also
$\Left(\bot)$ and $\Right(\bot)$ as denotations of the Gray code digit $\bot$,
$\ire{a}{\D(x)}$ means that $a$ is the first digit of a Gray code of $x$.
Therefore, we define
\begin{eqnarray*}
\G(x) &\eqnu& (-1 \leq x \leq 1) \land \D(x) \land \G(\tent(x))\,.
\end{eqnarray*}
We have $\tau(\G) = \stream{\bool}$ 
and
%
\begin{eqnarray*}
  \ire{q}{\G(x)}  &\eqnu& (-1 \leq x \leq 1) \land  \exists a, q'\ (q = \Pair(a,q') \land \ire{a}{\D(x)} \land \ire{q'}{\G(\tent(x)))}
\end{eqnarray*}
and hence 
$\ire{q}{\G(x)}$ means that $q$ is 
a Gray code of $x$, i.e.\ $\G(q,x)$ by (\ref{eq-gray}).

\subsection{Extracting conversion from signed digit representation to Gray code}
\label{sub-sdg}

We show $\C\subseteq\G$ and extract from the proof
a program that converts signed digit representation to Gray code.
Proofs are presented in an informal style but are
formalizable in the system $\IFP(\ax_R)$ and
simplifications of programs are proven in $\RIFP(\emptyset)$.
We write $x\in \II_d$ for $\II(d,x)$ and
allow combinations of patterns in case expressions.  For example, 
\begin{eqnarray*}
&&  \case\, M\, \of\, \{-1 \to N_1; 1 \to N_2; 0 \to N_3; \} \eqdef\\
&&\hspace*{1cm} \case\, M\, \of\, \{\Left(a) \to   (\case\, a\, \of\, \{\Left(b) \to N_1; \Right(b) \to N_2\}); \\
&&\hspace*{2.8cm} \Right(a) \to N_3\}\,.
\end{eqnarray*}
Recall that $\C = \nu(\Phi_\C)$ and $\G = \nu(\Phi_\G)$ for
\begin{align*}
\Phi_\C &\eqdef \lambda X\,\lambda x\,\exists d \in \SD\, 
       (x \in \II_d \wedge X(2x-d))\,,\\
\Phi_\G &\eqdef \lambda X\,\lambda x\,
(-1 \leq x \leq 1) \land \D(x) \land X(\tent(x))\,.
\end{align*}
According to Lemma~\ref{lem-mon-new}, 
the formula $\monprop{}{\Phi_\C}
\eqdef X \subseteq Y \to \Phi_\C(X) \subseteq \Phi_\C(Y)$
 expressing the monotonicity of the operator
$\Phi_\C$  is proved in IFP' and the following program 
$\mon : \ftyp{(\ftyp{\alpha_X}{\alpha_Y})}{\ftyp{\tri \times \alpha_X}{\tri \times \alpha_Y}}$ is
extracted from the proof.
\begin{align}\label{mon}
\mon\, f \,p \eqdef \Pair(\projl\,p, f (\projr\,p)). 
\end{align}
It is also the case for $\monprop{}{\Phi_\G}$ and the same program $\mon$
with the type obtained by replacing $\tri$ with $\bool$ is realizing $\monprop{}{\Phi_\G}$.
\begin{lemma}
  \label{lemma-ng}
  $\forall x\, (\C(-x) \to \C(x))$.
\end{lemma}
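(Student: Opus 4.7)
The proof will proceed by coinduction, using the predicate
$P \eqdef \lambda x\,\C(-x)$ and the rule $\COIND(\Phi_\C, P)$.
So the goal will be reduced to showing $P \subseteq \Phi_\C(P)$, namely
\[ \forall x\,(\C(-x) \to \exists d \in \SD\,(\II(d,x) \land \C(-(2x-d))))\,. \]

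To discharge this, I will fix $x$ and assume $\C(-x)$. By coclosure ($\COCL(\Phi_\C)$)
this yields some $d \in \SD$ with $\II(d,-x)$ (i.e.\ $|{-2x}-d|\le 1$) and $\C(-2x-d)$.
The key observation is that $\SD$ is closed under negation (an nc fact that is
immediate from the three-case definition), so $-d\in\SD$; moreover
\[ |2x-(-d)| = |2x+d| = |{-}(-2x-d)| = |{-2x}-d| \le 1, \]
so $\II(-d,x)$, and $2x-(-d) = -(-2x-d)$, whence $\C(-(2x-(-d)))$, i.e.\ $P(2x-(-d))$.
Witnessing the existential with $-d$ completes the step, and then $\COIND$
delivers $\forall x\,(\C(-x) \to \C(x))$.

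I do not expect any serious obstacle: everything reduces to a bit of arithmetic on
real numbers (available from $\ax_R$) together with the trivial closure of $\SD$
under negation. The only subtlety is getting the substitution right in
$\Phi_\C(P)(x)$ so that the second component of the existential becomes
$P(2x-d') = \C(-(2x-d'))$ rather than $\C(2x-d')$, since otherwise the rewriting
$-(2x-(-d)) = -2x-d$ would not line up with the hypothesis $\C(-2x-d)$.

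Regarding the extracted program, the realizer of $P\subseteq \Phi_\C(P)$ computed
from the step above is $s \eqdef \lambda p.\,\case\,p\,\of\,\{\Pair(d,p') \to \Pair(\mathsf{neg}\,d,\,p')\}$,
where $\mathsf{neg}$ negates a signed digit
($\mathsf{neg}\,(-1)=1$, $\mathsf{neg}\,1=-1$, $\mathsf{neg}\,0=0$).
Combined with the monotonicity realizer $\mon$ of (\ref{mon}) via the
coinduction scheme (\ref{eq:coind}), the extracted program is
$\rec\,(\lambda a.\,(\mon\,a)\circ s)$, which after using Lemma~\ref{lem-bot}
simplifies to the corecursive program
$f\,p \eqrec \case\,p\,\of\,\{\Pair(d,p') \to \Pair(\mathsf{neg}\,d,\,f\,p')\}$,
i.e.\ the stream transformer that negates every digit pointwise — as one would
expect from the mathematical content of the lemma.
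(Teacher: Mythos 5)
Your proposal is correct and follows essentially the same route as the paper: coinduction with $P(x)\eqdef\C(-x)$, unfolding $\C(-x)$ by coclosure to obtain a digit $e$ with $\II(e,-x)$ and $\C(-2x-e)$, and then witnessing the existential with $-e$ using the closure of $\SD$ under negation and the identity $-(2x-(-e))=-2x-e$. The extracted program you describe also matches the paper's $\minus$, up to the cosmetic difference of pattern-matching the pair versus using $\projl$/$\projr$.
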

\begin{proof}
  By coinduction.  Therefore, we show $P \subseteq \Phi_\C(P)$ for
$P(x) \eqdef \C(-x)$, that is, 
\begin{equation}
  \forall x\ (\C(-x) \to \exists d \in \SD\, (x \in \II_d \wedge \C(-(2x-d)))). \label{eq-nh}
\end{equation}
Suppose that $\C(-x)$ holds.   By coclosure, 
for some $e \in \SD$, we have
$-x \in \II_{e} \wedge \C(-2x-{e})$.
Since $-x \in \II_{e}$, we have $x \in \II_{-e}$.
Since $\C(-2x-{e})$, we have $\C(-(2x-d))$ for $d = -e$, and
therefore  
$x \in \II_d \wedge \C(-(2x-d))$.
\end{proof}

The program 
$\stepone : \ftyp{\stream{\tri}}{\tri \times \stream{\tri}}$ 
extracted from the proof of (\ref{eq-nh}) is
\[
\stepone \eqdef \lambda p.\, \Pair(\case\, (\projl\, p)\, \of\, \{-1 \to 1; 0 \to 0; 1 \to -1\}, \projr\, p)\,.
\]
Therefore,  by (\ref{eq:coind}) of Sect.~\ref{sub-pe},
the realizer extracted from the proof of $P \subseteq \C$ is
the following program  $\minus : \ftyp{\stream{\tri}}{\stream{\tri}} $ 
\[
\minus \eqrec (\mon\  \minus) \circ \stepone.   
\]
After some simplification using Lemma~\ref{lem-bot} we have 
\begin{equation}
\label{eq-minus}
\minus \, p \eqrec \Pair(\case\, (\projl\, p)\, \of\, \{-1 \to 1; 0 \to 0; 1 \to -1\}, \minus \ (\projr\, p)).
\end{equation}

\begin{theorem}
\label{thm-c-g'}
$\C\subseteq\G$.
\end{theorem}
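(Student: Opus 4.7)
The plan is to derive $\C\subseteq\G$ by coinduction on $\G$: using $\COIND(\Phi_\G,\C)$, it suffices to establish $\C\subseteq\Phi_\G(\C)$, which unfolds to the three conjuncts $-1\le x\le 1$, $\D(x)$ and $\C(\tent(x))$ for every $x$ with $\C(x)$. The bound $-1\le x\le 1$ follows in a single step from coclosure on $\C(x)$, together with $|2x-d|\le 1$ and $|d|\le 1$. The remaining two conjuncts are the substantive content and I would isolate them as two lemmas, both proved by Archimedean induction $\AIB_{1/2}(\C,P)$ (Lemma~\ref{lem-aib}) with a case analysis on the first signed digit supplied by one step of coclosure on $\C(x)$.

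For the sign lemma $\forall x\,(\C(x)\to\D(x))$ I would take $P(x)\eqdef x\le 0\lor x\ge 0$. On $x\in\C\setminus\{0\}$, coclosure gives $d\in\SD$ with $|2x-d|\le 1$ and $\C(2x-d)$. If $d=1$ then $x\ge 0$ and $P(x)$ holds directly; symmetrically for $d=-1$; if $d=0$ then $|x|\le 1/2$, $\C(2x)$ holds, and $P(2x)\to P(x)$ because $x$ and $2x$ share sign, so the recursion branch of $\AIB$ applies. The point $x=0$ is trivial because $\D(0)$ is Harrop and vacuously true.

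For the tent preservation lemma $\forall x\,(\C(x)\to\C(\tent(x)))$ I would take $P(x)\eqdef\C(\tent(x))$. With the same case split: if $d=1$ then $x\ge 0$, hence $\tent(x)=1-2x=-(2x-1)$, and from $\C(2x-1)$ together with Lemma~\ref{lemma-ng} we obtain $\C(\tent(x))$ (the extracted program will invoke $\minus$). If $d=-1$ then $\tent(x)=2x+1$, and $\C(2x-d)=\C(2x+1)$ is exactly what is needed. If $d=0$ then $|x|\le 1/2$ and $\C(2x)$ holds; the implication $P(2x)\to P(x)$ rests on the algebraic identity $\tent(x)=(1+\tent(2x))/2$ valid for $|x|\le 1/2$, so that from $\C(\tent(2x))$ a single application of closure on $\C$ with digit $1$ yields $\C(\tent(x))$. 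The degenerate case $x=0$ gives $\tent(x)=1$, for which I would prove $\C(1)$ separately by an elementary coinduction on $\C$ with invariant $\{1\}$, producing the constant stream $1{:}1{:}1{:}\ldots$ as realizer; crucially this agrees with the fixed point reached by the recursive $\AIB$ scheme on an all-zero input, so the extracted uniform conversion program behaves correctly on every signed digit representation of $0$.

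The main obstacle is the $d=0$ branches: the first nonzero signed digit may be arbitrarily deep in the stream, so a naive structural recursion on signed digits would not terminate. This is precisely why Archimedean induction (which is available here because $\ax_R$ includes the Archimedean property $\AP$ and Brouwer's thesis $\BTnc$, via Lemmas~\ref{lem-ai} and~\ref{lem-aib}) is the right tool: the measure $|x|\le 1/2$ forces the recursion to progress, yet the extracted program is a total (possibly infinite) stream transformer because the fixed point of the recursive scheme is always well defined. Once both lemmas are in hand, assembling them with the coclosure bound into the coinductive step on $\G$ is routine, and the overall extracted program has the form of a corecursive function that at each stage emits a Gray digit (via the sign lemma) and then recurses on the signed digit representation of $\tent(x)$ (via the tent lemma), using $\minus$ for the $d=1$ branch.
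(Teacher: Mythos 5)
Your top-level structure (coinduction on $\G$, with the bound $-1\le x\le 1$ immediate from coclosure and the two substantive claims $\C(x)\to\D(x)$ and $\C(x)\to\C(\tent(x))$ isolated) matches the paper, and your sign lemma via $\AIB_{1/2}(\C,\B)$ with $\B(x)\eqdef x\le 0\lor x\ge 0$ is exactly the paper's Claim~1. (One small slip there: $\D$ is not Harrop — it has a strictly positive disjunction; the reason $x=0$ needs no separate treatment is that the guard $x\neq 0$ is already the premise of $\D$, so $\forall x\in\C\,\D(x)$ \emph{is} the formula $\forall x\in\C\setminus\{0\}\,\B(x)$ that $\AIB_{1/2}$ delivers.)

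The tent-preservation lemma is where your route genuinely diverges, and where it breaks. $\AIB_{1/2}(\C,P)$ concludes only $\forall x\in\C\setminus\{0\}\,P(x)$ — necessarily so, since $0$ is not in the accessible part of $\less_{1/2}$ (Lemma~\ref{lem-pathq}) — whereas your target $\forall x\,(\C(x)\to\C(\tent(x)))$ carries no such guard. To combine the $\AIB$ conclusion with your separate proof of $\C(1)=\C(\tent(0))$ you would need the case split $x=0\lor x\neq 0$ inside the $\IFP$ derivation, and that disjunction is not realizable: a realizer would decide equality to zero from a signed digit stream. Your observation that the \emph{extracted program} happens to behave correctly on representations of $0$ (producing $1{:}1{:}1{:}\ldots$ on all-zero input) is true, but it is a meta-level statement about the program; turning it into a proof would require a direct $\RIFP$-level verification quantifying over all realizers of $\C(0)$, which is exactly what the extraction methodology is designed to avoid. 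The paper sidesteps the problem by flipping the recursion from the input to the output: it sets $\C'(y)\eqdef\exists x\in\C\;(y=\tent(x))$ and proves $\C'\subseteq\C$ by \emph{half-strong coinduction}, so that in the $d=0$ case one emits the output digit $1$ (using $2y-1=\tent(2x)$) and corecurses on $\C'$, while the $d=\pm 1$ cases land in $\C$ directly (the $d=1$ case via Lemma~\ref{lemma-ng} and $\minus$). Productivity of the output replaces termination of the input search, and $x=0$ is covered uniformly. Your Claim~2 should be recast in that form; the algebraic facts you already use — $\tent(x)=(1+\tent(2x))/2$ for $|x|\le 1/2$, and negation for the $d=1$ branch — transfer directly.
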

\begin{proof}
By coinduction.  Hence we show 
$\forall x (\C(x) \to (-1 \leq x \leq 1) \land \D(x) \land \C(\tent(x)))$.
Since $\forall x (\C(x) \to -1 \leq x \leq 1)$ is immediate, 
we need to show the following two claims.

\noindent
\emph{Claim 1}.  $\forall x\,(\C(x) \to \D(x))$, that is,
$\forall x \in\C\setminus\{0\}\  \B(x)$ where
$\B(x) \eqdef x \le 0 \lor x \ge 0$.
We use $\AIB_{1/2}(\C, \B)$. Therefore, we show
\begin{equation}
  \forall x \in \C \setminus \{0\}\  (\B(x) \lor (|x| \leq 1/2 \land \C(2x) \land (\B(2x) \to \B(x)))).   
\label{eq-cd1}
\end{equation}
Since $\C(x) \eqnu \exists d \in \SD\, (x \in \II_d \wedge \C(2x-d))$,
we have the following cases.

Case $d = -1$.  We have $-1\leq x \leq 0 \wedge \C(2x+1)$ and thus $x \leq 0$.

Case $d = 1$.  We have $0\leq x \leq 1 \wedge \C(2x-1)$ and thus $x \geq 0$.

Case $d = 0$. We have $|x| \leq 1/2 \wedge \C(2x)$. 
In addition, we always have $\B(2x) \to \B(x)$ (realized by $\identity$).
This completes the proof of \emph{Claim 1}.

\medbreak
\noindent
\emph{Claim 2}. $\forall x(\C(x)\to\C(\tent(x)))$.

We set 
$\C'(y) \eqdef \exists x \in \C\  y = \tent(x)$ and show $\C' \subseteq \C$ by 
half-strong coinduction.  Therefore, we show
\begin{equation}
\C'(y) \to  \exists d \in \SD (y \in \II_d \wedge \C'(2y-d)) \vee \C(y).  \label{eq-ct0}
\end{equation}
Assume $\C'(y)$, i.e., $y = \tent(x)$ for an $x$ that satisfies $\C(x)$.

Case $-1 \leq x \leq 0 \wedge \C(2x+1)$. Then $2x+1 = \tent(x) = y$. Hence we have $\C(y)$.

Case $0 \leq x  \leq 1 \wedge \C(2x-1)$. Then $2x-1 = -\tent(x) = -y$ and hence $\C(-y)$. 
Therefore, by Lemma~\ref{lemma-ng}, we have $\C(y)$.

Case $|x| \leq 1/2 \wedge \C(2x)$. Then $y = \tent(x) \geq 0$ and thus $y \in \II_{1}$.
Hence it suffices to show $\C'(2y-1)$. 
We have $2y-1 = 1-4|x| = \tent(2x)$ 
and therefore $\C'(2y-1)$ holds.
This completes the proof of \emph{Claim 2} and hence the proof of
the theorem.
\end{proof}

We extract a 
program from this proof.

\emph{Program from Claim 1}.
The program
$\steptwo : \ftyp{\stream{\tri}}{(\bool + \stream{\tri} \times (\ftyp{\bool}{\bool}))}$
extracted from the proof of (\ref{eq-cd1}) is  
\begin{alignat*}{2}
\steptwo\, p \eqdef \case\, (\projl\, p)\, \of\,
 \{&-1 \to \Left(\Left\, \Nil); \\&1 \to \Left(\Right\, \Nil);\\
&                0 \to \Right(\Pair(\projr\ p, \identity ))\}.
\end{alignat*}
Therefore, by (\ref{eq:realaib}) of Lemma \ref{lem-rea-ai}, 
the extracted realizer of $\C(x) \to \D(x)$ is 
$\sgh : \ftyp{\stream{\tri}}{\bool}$,
\begin{eqnarray*}
\sgh \,p &\eqrec&
\case\,(\steptwo\,p)\,\of\,\{\Left(b) \to b; \Right(q,g) \to g(\sgh \,q)\}\,.
\end{eqnarray*}
By rewriting a nested case expression using Lemma~\ref{lem-bot},
we have
\begin{align}\label{eq-sgh}
\sgh \, p  \eqrec
\case\, (\projl\, p)\, \of\, \{-1 \to \LG; 1 \to \RG; 0 \to \sgh  (\projr\, p)\}\,.
\end{align}
Note that $\sgh(0:0:\ldots)=\bot$. This can be seen by applying Scott induction 
(Axiom~(viii))
to the predicate 
$P \eqdef \lambda b\,\,(b\,(0:0:\ldots)=\bot)$ and 
$a \eqdef\lambda b.\,\lambda p.\,\case\,(\projl\,p)\,\of\,\{-1\to\LG;1\to\RG;0\to b(\projr\,p)\}$.

\emph{Program from Claim 2}.
The program extracted from the proof of (\ref{eq-ct0}) is
$\stepthree : \ftyp{\stream{\tri}}{\tri \times \stream{\tri}+\stream{\tri}}$,
\begin{eqnarray*}
\stepthree \ p &\eqdef& \case\, (\projl\, p)\, \of\,
                        \{-1 \to \Right(\projr\, p); \\
&&\hspace*{2.7cm}   1 \to \Right (\minus  (\projr \, p));\\
&&\hspace*{2.7cm} 0 \to \Left\, (\Pair(1, \projr\, p))\}\,.
\end{eqnarray*}
Therefore, according to equation~(\ref{eq:hscoi}) of 
Sect.~\ref{sub-pe}, 
the program extracted 
from the proof of
$\C(x) \to \C(t(x))$ is $\sgt : \ftyp{\stream{\tri}}{\stream{\tri}}$,
\[
 \sgt \ p \eqrec [(\mon\, \sgt ) + \idty]  (\stepthree\  p)\,.
\]
This definition can be simplified to (using again Lemma~\ref{lem-bot}),
\begin{align}\label{eq-sgt}
\begin{split}
\sgt \, p \eqrec \case (\projl\, p) \of \{&-1 \to \projr\, p; 1 \to  \minus  (\projr\, p);\\
&\ 0 \to \Pair(1,\sgt (\projr\, p))\}\,.
\end{split}
\end{align}
Now, by equation~(\ref{eq:coind}) of Sect.~\ref{sub-pe},
the extracted program $\stog:\ftyp{\stream{\tri}}{\stream{\bool}}$ 
from the proof of $\C \subseteq \G$
is $\stog  \eqrec (\mon\, \stog ) \circ \stepfour$ with
$\stepfour : \ftyp{\stream{\tri}}{\stream{\bool}\times\stream{\tri}}$,  
$\stepfour\ p = \Pair(\sgh \ p, \sgt \ p) $.
This simplifies to
\begin{align}\label{eq-stog}
\stog \, p \eqrec \Pair(\sgh \, p, \stog (\sgt \, p)).
\end{align}
Note that since $\sgh(0:0:\ldots)=\bot$ the first digit of 
$\stog(0:0:\ldots)$ is $\bot$ and therefore 
$\stog(0:0:\ldots)$ evaluates to $\bot : \RG : \LG : \LG : \ldots$.
We will study this evaluation in Example \ref{ex3}, at the end of this paper.

Thus, we have obtained a program that consists of four recursions.
In the rest of this section, 
we transform this program into a program with one recursion.
We use the list notation $a:p$ for 
$\Pair(a, p)$ and write $\head$ for $\projl$ and $\tail$ for $\projr$.

First, by Scott-induction it is easy to see the equivalence of  
(\ref{eq-stog}) to
the following program provided $p$ is restricted to 
total elements of $\stream{\tri}$,
that is, elements of $\stream{\tri}_t$ 
where 
$\stream{\tri}_t(a) \eqnu \head\,a\in\{-1,0,1\} \land \stream{\tri}_t(\tail\,a)$.
\begin{itemize}
\item[] $\stog\, p\ \eqrec\ \case\ (\head\, p)\ \of\ \{$\\
\begin{tabular}{rcl}
  $-1$ & $\to$ & $\LG : \stog\,(\tail\, p)\,;$\\
   $1$ & $\to$ & $\RG : \stog(\minus\,(\tail\, p))\,;$\\
   $0$ & $\to$ & $\sgh\, (\tail\, p) : \stog\,(1 : \sgt\, (\tail\, p))$
\end{tabular}\\
\hbox{}\quad $\}$
\end{itemize}
Note that the two programs are not equal for $p = \bot$ since
  $\stog \, \bot$ is equal to $\bot : \stog \, \bot$
with the old definition (\ref{eq-stog}) of $\stog$, whereas
$\stog\, \bot = \bot$ with the new definition of $\stog$.
However, since all realizers of $\C$ are total (easy proof by coinduction),
both programs realize ${\C \subseteq \G}$.
Therefore we use the same name  $\stog$ for both.

We now show that the new definition of $\stog$ can be simplified.

By strong coinduction (Sect.~\ref{sec-ifp}) one can easily prove 
$G(-x)\to G(x)$. The extracted program 
$\nh: \ftyp{\stream{\bool}}{\stream{\bool}}$ 
inverts the first digit of a Gray code.
\begin{eqnarray*}
\inv \, a  &=& \case\ a\ \of\ \{\,\LG \to \RG\,;\, \RG \to \LG\, \}\\
\nh\, q     &=& (\inv\, (\head\, q)) : (\tail\, q)
\end{eqnarray*}

One can also show, using Scott-induction, 
that
$\sgh(\minus\, p) = \inv\,(\sgh\, p)$
and
$\sgt(\minus\, p) = \sgt\, p$.
Therefore, 
for total $p$,
\begin{eqnarray*}
\stog \,(\minus\,p)  &=& \sgh\,(\minus\, p) : \stog\,(\sgt\,(\minus\, p))\\ 
                     &=& \inv\,(\sgh\, p) : \stog\,(\sgt\, p)\\
                     &=& \nh\,(\stog\, p)\,.
\end{eqnarray*}
With this equation, we can simplify $\stog$ as follows.
\begin{itemize}
\item[] $\stog\, p\ =\ \case\ (\head\, p)\ \of\ \{$\\
\begin{tabular}{rcl}
  $-1$ & $\to$ & $\LG : \stog\,(\tail\, p)\,;$\\
   $1$ & $\to$ & $\RG : \nh\,(\stog\,(\tail\, p))\,;$\\
   $0$ & $\to$ & $\sgh\, (\tail\, p) : \stog\,(1 : \sgt\, (\tail\, p))$
\end{tabular}\\
\hbox{}\quad $\}$
\end{itemize}
The last case further simplifies to 
$0 \to \sgh\, (\tail\, p) : \RG : \nh\, (\stog\, (\sgt\, (\tail\, p)))$
by expanding $\stog$.
Since $\stog\, p = \sgh\, p : \stog\, (\sgt\, p)$, 
one can further rewrite the definition of $\stog$
using the let notation $\lett\ q = M\ \inn\ N$
for $(\lambda q.\,N)\,M$.
\begin{itemize}
\item[] $\stog\, p\ =\ \case\ (\head\, p)\ \of\ \{$\\
\begin{tabular}{rcl}
  $-1$ & $\to$ & $\LG : \stog\,(\tail\, p)\,;$\\
   $1$ & $\to$ & $\RG : \nh\,(\stog\,(\tail\, p))\,;$\\
   $0$ & $\to$ & $\lett\ q = \stog\, (\tail\, p)\ \inn\  
                 (\head\, q) : \RG : \nh\, (\tail\, q)$
\end{tabular}\\
\hbox{}\quad $\}$
\end{itemize}
The above equation holds for total $p$. 
Viewing it as recursive definition (replacing `$=$' by '$\eqrec$')
on obtains a program which coincides with the previous one on total 
arguments (proof by Scott-induction) and hence realizes $\C \subseteq\G$.
It is precisely the Haskell program of signed digit to Gray code conversion
in \cite{Tsuiki02} 
if we view $:, \head$ and  $\tail$  as ordinary list operations.


\section{Operational semantics}
\label{sec-opsem}
The Soundness Theorem (Thm.~\ref{thm-soundness}) shows that  
from an $\IFP$-proof of a formula one can extract 
a program realizing it, provably in $\RIFP$.
Because the program axioms
of $\RIFP$ are correct w.r.t.~the domain-theoretic 
semantics, this theorem shows that the denotational 
semantics of a program extracted from an $\IFP$ proof is a correct realizer 
of the formula.
However, so far we have no means to \emph{run} the extracted programs  
in order to \emph{compute} data that realize the formula.
In this section we address this issue by defining an operational semantics 
and showing that it fits the denotational semantics
through two Computational Adequacy Theorems 
(Thms.~\ref{thm-adequacy},~\ref{thm-adequacytwo}).
The first is essentially an untyped version of Plotkin's Adequacy Theorem 
for the simply typed language PCF~\cite{Plotkin77}. Its proof uses compact elements 
of the untyped domain model as a replacement for types, a technique introduced by 
Coquand and Spiwack~\cite{CoquandSpiwack06},
and follows roughly the lines of~\cite{Berger10}.
The second Adequacy Theorem
concerns the computation of infinite data. 
A related result for an extension of PCF by real numbers was obtained by
Escardo~\cite{Escardo96}. While Escardo works in a typed setting and concerns
incremental computation on the interval domain,
our result is untyped and computes arbitrary infinite data built
from constructors.
There exists a rich literature on computational adequacy covering, for example, 
typed lambda calculi with various effects \cite{PlotkinPower01,Lairdetal13}, 
denotational semantics based on games or 
categories~\cite{CrolePitts92,Simpson04}, and axiomatic 
approaches~\cite{CamposLevy18,FiorePlotkin94}.

{{In the following we work with} our untyped programming language 
that {includes} programs not typable with our type system, 
and consider types only in Section \ref{sub-data}.}
{This shows that the operational properties of our programs are
independent of the type system.}

\subsection{Inductive and coinductive definitions of data}
\label{sub-ind-coind-data}
First we make precise what we mean by data.
Recall from Sect.~\ref{sub-domain} that programs are interpreted in 
the domain $D$ defined by the recursive domain equation
\[
D = (\Nil + \Left(D) + \Right(D) + \Pair(D\times D) + \Fun(D\to D))_\bot \,.
\]
We consider the sub-domain $E$ of $D$ built from constructors only 
\[
E = (\Nil + \Left(E) + \Right(E) + \Pair(E\times E))_\bot
\]
and call its elements \emph{data}. 
We also define various predicates on $D$ as least or greatest 
fixed points of the following operators $\otdata$  and $\odata$  of arity $(\delta)$.
The definitions and proofs below take place in informal mathematics
although we take advantage of the notations and proof rules 
provided by the formal system $\IFP$ regarding inductive and 
coinductive definitions. 
\begin{align*} 
\otdata(X)(a) \eqdef  & 
\bigvee_{\stackrel{\hbox{$C$}}{\hbox{constructor}}} 
\left(
\exists a_1,\ldots,a_k \ \ a = C(a_1,\ldots,a_k) \land \bigwedge_{i\leq k} X(a_i)  
\right)
\end{align*}
and its variant $\odata$ obtained by adding $\bot$ as an option
\begin{align*}
\odata(X)(a) \eqdef  &\  a = \bot \lor \otdata(X)(a)\,.
\end{align*}
We have 
\begin{align*}
\data = &\ \nu(\odata) &\qquad(\hbox{arbitrary data})
\end{align*}
and we define 
\begin{align*}
\fdata \eqdef &\ \mu(\odata) &\qquad(\hbox{finite data})\\
\tdata \eqdef &\ \nu(\otdata) &\qquad(\hbox{total data})\\
\ftdata \eqdef &\ \mu(\otdata) &\qquad(\hbox{finite total data})
\end{align*}
It is easy to see that $\fdata$ consists of 
the compact data, $\tdata$ of the data containing no $\bot$, 
and $\ftdata = \fdata \cap \tdata$, hence our choice of names.

Using binary versions of the operators $\otdata$ and $\odata$,
\begin{align*}
\oteq(X)(a, b) \eqdef & \bigvee_C\left(
\begin{array}{l} \exists a_1,\ldots,a_k,b_1,\ldots,b_k\ a = C(a_1,\ldots,a_k) \land  \\\ \ \ \ \hspace*{1.4cm}
  b = C(b_1,\ldots,b_k) \land \bigwedge_{i \leq k} X(a_i, b_i)
\end{array}
\right)\\
\oless(X)(a,b) \eqdef &\  a = \bot \lor \oteq(X)(a,b)\\
\oeq(X)(a,b) \eqdef &\ a = b = \bot \lor \oteq(X)(a,b) 
\end{align*}
we define the relations
\begin{align*}
a \dlee b \eqdef &\ \nu(\oless)(a,b) &\qquad(\hbox{domain ordering on }E)\\
\appr{a}{b} \eqdef &\ \mu(\oless)(a,b) &\qquad(\hbox{finite approximation})\\
\peq{a}{b} \eqdef &\ \nu(\oeq)(a,b) &\qquad(\hbox{bisimilarity})\\
\teq{a}{b} \eqdef &\ \nu(\oteq)(a,b) &\qquad(\hbox{total bisimilarity})
\end{align*}
Note that $\dlee$ coincides with the domain order $\dle$ 
on $E$ but not with that on $D$.  
$a \dlee b$ implies $a \in \data$, by coinduction,
 therefore $\dlee$ is not reflexive on $D \setminus E$.
Clearly, $\appr{a}{b}$ holds iff $a \dlee b$ and $\fdata(a)$,
and $\teq{a}{b}$ holds iff $\peq{a}{b}$ and $a,b\in\tdata$. 
If we replace in the definition of $\peq{a}{b}$
the largest fixed point $\nu$ by the least fixed point $\mu$, 
we obtain the relation $\mu(\oeq)(a,b)$ which clearly implies that
$a$ and $b$ are equal 
elements of $\fdata$
(easy inductive argument).
However, 
\begin{equation}
\label{eq-algebraic}
\forall a, b\,(\peq{a}{b} \to a=b)
\end{equation}
is a non-trivial assertion expressing that
the elements of $E$ are completely determined 
by their constructors, which we use in this section.
From~(\ref{eq-algebraic}) one can derive the equivalence 
$(a = b 
\, \land\, a, b \in \data)
\leftrightarrow (a \dlee b \land b \dlee a)$ and 
the maximality of the elements in $\tdata$, 
$(a \dlee b \land \tdata(a)) \to a = b$. 
We prove the following lemma to give typical examples of 
inductive and coinductive proofs on data.

\begin{lemma}\label{lemma:fdata}\quad
  \begin{itemize}
  \item[(a)] $\appr{a}{b}$ iff $\fdata(a) \land a \dlee b$.
  \item[(b)] $a \dlee  b$ iff 
             $\data(a) \land \forall d (\appr{d}{a}  \to \appr{d}{b})$.
  \end{itemize}
  \begin{proof}
    (a)  Left to right is by induction on $\appr{a}{b}$. 
Right to left is induction on $\fdata(a)$ to prove that
    $\fdata(a) \to \forall b (a \dlee b \to \appr{a}{b})$.
We show $\forall a\, (\Phi_\bot(P)(a) \to P(a))$ for 
$P(a) = \forall b (a \dlee b \to \appr{a}{b})$.
     Suppose that $\Phi_\bot(P)(a)$. If $a = \bot$, then $P(\bot)$.
     If $a = C(a_1,\ldots,a_k) \land \bigwedge_{i\leq k} P(a_i)$ and $a \dlee b$,
     let $b = C(b_1,\ldots,b_k)$. We have $a_i \dlee b_i$ and thus 
$\appr{a_i}{b_i}$ by $P(a_i)$.

(b) Left to right is immediate 
by (a). 
Right to left is by coinduction on $a \dlee b$.
 Let $P(a, b) \eqdef \data(a) \land\forall d (\appr{d}{a} \to \appr{d}{b})$.  
    We need to show  $\forall a, b\,(P(a, b) \to \Phi^2_\bot(P)(a, b))$.
 Because $\data(a)$, $a = \bot$ or $a$ has the form 
$C(a_1,\ldots,a_k)$  with $a_1,\ldots,a_k \in \data$.
    If $a = \bot$, then $\Phi^2_\bot(P)(a, b)$ holds.  
If $a$ has the form $C(a_1,\ldots,a_k)$,  we have 
$\appr{C(\bot^k)}{a}$ and thus $\appr{C(\bot^k)}{b}$ 
by $P(a, b)$. Therefore,     $b = C(b_1,\ldots,b_k)$ for some $b_i$.  
We need to show that $P(a_i, b_i)$.  If $\appr{d}{a_i}$,
    then $\appr{C(\bot^i,d, \bot^{k-i-1})}{ a}$.  Hence,  
$\appr{C(\bot^i,d, \bot^{k-i-1})}{ b}$, and thus $\appr{d}{ b_i}$. 

  \end{proof}
\end{lemma}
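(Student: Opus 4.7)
\textbf{Proof proposal for Lemma \ref{lemma:fdata}.}

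For part (a), the plan is to proceed by two separate fixed-point inductions. For the left-to-right direction, I would induct on $\appr{a}{b} \eqdef \mu(\oless)(a,b)$, which is a direct strictly positive induction. The induction statement is the conjunction $\fdata(a) \land a \dlee b$. The base case $a = \bot$ uses that both $\fdata$ and $\dlee$ admit $\bot$ (since both fixed-point operators are built from $\odata$, which has $a = \bot$ as a disjunct). The constructor case $a = C(\vec a), b = C(\vec b)$ follows directly by the closure rules for $\mu(\odata) = \fdata$ and $\nu(\oless) = \dlee$ applied componentwise from the induction hypotheses. For the converse, I would induct on $\fdata(a)$ with the predicate $P(a) \eqdef \forall b\,(a \dlee b \to \appr{a}{b})$. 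Strictly positive induction reduces this to showing $\odata(P) \subseteq P$; the case $a = \bot$ is trivial, and in the constructor case one unfolds $a \dlee b$ via coclosure of $\nu(\oless)$ to obtain $b = C(\vec b)$ with $a_i \dlee b_i$, then applies the induction hypotheses $P(a_i)$ and closes under $\mu(\oless)$.

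For part (b), the left-to-right direction would invoke (a) together with transitivity of $\dlee$: from $\appr{d}{a}$ one gets $\fdata(d) \land d \dlee a$; combined with $a \dlee b$ this yields $d \dlee b$, hence $\appr{d}{b}$ by (a) again. The conjunct $\data(a)$ follows because $\dlee \subseteq \data \times \data$, which is immediate by coinduction since $\oless(X) \subseteq \odata(X)$. Transitivity of $\dlee$ is itself provable by coinduction using the witness $\{(a,c) \mid \exists b\, (a \dlee b \land b \dlee c)\}$, so I would either establish it as an auxiliary lemma or weave it directly into the proof.

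The right-to-left direction of (b) is the interesting part, and I would prove it by coinduction on $\dlee$, taking the coinduction predicate to be $P(a,b) \eqdef \data(a) \land \forall d\,(\appr{d}{a} \to \appr{d}{b})$. The proof obligation $P \subseteq \oless(P)$ is analyzed by coclosure of $\data(a)$, splitting into $a = \bot$ (trivial) and $a = C(a_1,\ldots,a_k)$ with $a_i \in \data$. The key trick is to feed suitable finite \emph{probes} $C(\bot^k)$ and $C(\bot^i, d, \bot^{k-i-1})$ into the universal quantifier of $P(a,b)$: the first probe forces $b$ to have the shape $C(b_1,\ldots,b_k)$ (since $\appr{C(\bot^k)}{b}$ can only hold if $b$ has head constructor $C$), and the positional probes then transfer $\appr{d}{a_i}$ to $\appr{d}{b_i}$, establishing $P(a_i, b_i)$.

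The main obstacle I anticipate is the left-to-right direction of (b), which requires transitivity of $\dlee$; this is routine but needs its own small coinductive argument rather than being truly immediate. A secondary subtlety is being careful that all the unfolding steps use the equivalence of $\mu(\Phi) \equiv \Phi(\mu(\Phi))$ and $\nu(\Phi) \equiv \Phi(\nu(\Phi))$ at the right arities, and that the probe arguments implicitly rely on equation (\ref{eq-algebraic}) for uniqueness of the head constructor decomposition.
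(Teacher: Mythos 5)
Your proof follows the paper's argument essentially step for step: the same two inductions for part (a), and the same coinduction with the probes $C(\bot^k)$ and $C(\bot^i,d,\bot^{k-i-1})$ for the right-to-left direction of (b). The one place you go beyond the paper is the left-to-right direction of (b), where the paper says only ``immediate by (a)'': you correctly observe that this hides a transitivity-type step, and your plan (full transitivity of $\dlee$ by coinduction) works, though a lighter alternative is to prove $\appr{d}{a} \land a \dlee b \to \appr{d}{b}$ directly by induction on $\appr{d}{a}$, mirroring the right-to-left induction of (a); note also that the shape of $b$ forced by $\appr{C(\bot^k)}{b}$ follows from one unfolding of $\mu(\oless)$ together with injectivity and range-disjointness of constructors, not from equation (\ref{eq-algebraic}).
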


\subsection{Inductively and coinductively defined  reduction relations}
\label{subsec-ope}

We define four reduction relations between closed programs and data through
induction and coinduction.  
These 
relations are related to computational procedures
in Sect.~\ref{section:computationinfinite}.
In order to treat programs as syntactic objects, we introduce a new sort $\pi$ of 
programs and use $M,N,K,\ldots$ for variables of sort $\pi$. 
When a program is considered as an element of $\pi$, 
we use $x, y,\ldots$ as names for program variables while we use $a, b,\ldots$ 
to denote elements of $D$.

A \emph{value} is a closed program $M$ that
begins with a constructor or has the form $\lambda x.\,M$.
Following \cite{Berger10}, we first  
define inductively a \emph{bigstep reduction relation} $M \bs V$ between 
closed programs $M$ and values $V$ as follows:
\begin{itemize}
\item[(i)] $V \bs V$ \\[-0.1em]
\item[(ii)] \AxiomC{$M \bs C(\vec{M})$\ \ \ $N[\vec{M}/\vec{y}]\bs V$}   
             \UnaryInfC{$\case\,M\,\of\, \{\ldots;C(\vec{y})\to N;\ldots\}\bs V$}
            \DisplayProof \\[0.5em]
\item[(iii)] \AxiomC{$M\bs \lambda x.\,M'$}
             \AxiomC{$M'[N/x]\bs V$}
            \BinaryInfC{$M\,N\bs V$}
            \DisplayProof \\[0.5em] 
\item[(iv)]  \AxiomC{$M\,(\rec\,M)\bs V$}
            \UnaryInfC{$\rec\,M\bs V$}
            \DisplayProof 
\end{itemize}
\begin{lemma}\label{unique}
  For a closed program $M$, there is at most one value $V$ such that $M \bs V$.
\end{lemma}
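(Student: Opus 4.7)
The plan is to prove the claim by induction on a derivation of $M \bs V_1$, showing that for every value $V_2$ with $M \bs V_2$, one has $V_1 = V_2$. The key observation driving the argument is a \emph{syntactic case analysis}: the outermost form of $M$ uniquely determines which of the four rules (i)--(iv) can conclude a derivation of $M \bs V_2$. Indeed, rule (i) applies only when $M$ is a value (i.e.\ begins with a constructor or $\lambda$), while rules (ii), (iii), (iv) require $M$ to begin with $\case$, to be an application, or to be $\rec N$, respectively; these four shapes are mutually exclusive, and a value is never a $\case$-, application-, or $\rec$-expression.

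With this observation in hand, the induction hypothesis gives uniqueness of every immediate subderivation, and I would handle each rule as follows. Case (i): $M$ is a value $V$, so any derivation of $M \bs V_2$ must also be by rule (i), whence $V_2 = V = V_1$. Case (iii): $M = M_1\,M_2$, so any derivation of $M \bs V_2$ must also use rule (iii) with premises $M_1 \bs \lambda x.M_1'$ and $M_1'[M_2/x] \bs V_2$; the induction hypothesis applied to $M_1 \bs \lambda x.M_1'$ (from the derivation of $V_1$) forces the two $\lambda$-values to coincide, so the two second premises are derivations of $M_1'[M_2/x] \bs V_1$ and $M_1'[M_2/x] \bs V_2$; a second application of the induction hypothesis yields $V_1 = V_2$. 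Case (iv) is completely analogous.

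Case (ii) is the one that requires an extra word. Here $M = \case\,N\,\of\,\{\ldots;C_i(\vec y_i)\to N_i;\ldots\}$, and the derivation of $V_1$ proceeds via premises $N \bs C_j(\vec K)$ and $N_j[\vec K/\vec y_j] \bs V_1$ for some clause index $j$. Any derivation of $M \bs V_2$ must also be by rule (ii), picking some clause index $j'$ with $N \bs C_{j'}(\vec K')$ and $N_{j'}[\vec K'/\vec y_{j'}] \bs V_2$. Applying the induction hypothesis to the two derivations of $N$ yields $C_j(\vec K) = C_{j'}(\vec K')$ as syntactic values; because the constructors $C_i$ appearing in the clauses are pairwise distinct (as required by the definition of case-expressions in Sect.~\ref{sub-programs}), this forces $j = j'$ and $\vec K = \vec K'$; a second invocation of the induction hypothesis on the matching second premises yields $V_1 = V_2$.

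The only mildly delicate point is the bookkeeping in case (ii)---ensuring that the distinctness of constructors across clauses is explicitly invoked to pin down both the clause chosen and the bound arguments substituted. Everything else is a routine structural argument once the mutual exclusivity of the four rule shapes is observed.
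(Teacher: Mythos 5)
Your proposal is correct and is essentially the paper's argument spelled out in full: the paper's proof consists of the single observation that at most one $\bs$-rule is applicable to a given closed program, which is exactly your ``key observation,'' with the induction on derivations and the clause-uniqueness bookkeeping in the case rule left implicit. Your expanded version, including the explicit appeal to the pairwise distinctness of constructors across clauses, fills in the details faithfully.
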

\begin{proof}
 There is at most one $\bs$ reduction rule applicable to a closed program.
\end{proof}

Since bigstep reduction stops at constructors (due to rule (i)),
in order to obtain a data, we need to continue computation 
under constructors.   
We define four reduction relations
$M \muprint a$, $M \mubprint a$, $M \nuprint a$, $M \nubprint a$, all of 
arity $(\pi, \delta)$,
as least and greatest fixed points of 
the operators 
\begin{align*}
&\ocl(X)(M, a) \eqdef
\ \bigvee_C \left(\begin{array}{l}
\exists M_1,\ldots,M_k, a_1, \ldots,a_k\ (M \bs C(M_1,\ldots,M_k)\\  
\hspace*{1.8cm} \land\, 
   a = C(a_1,\ldots, a_k) \land \bigwedge_{i \leq k} X(M_i, a_i))
                      \end{array}
  \right)\\
&\ocl_\bot(X)(M, a) \eqdef a = \bot \lor \ocl(X)(M, a)\,. 
\end{align*}
Here again, $C$ ranges over constructors.   Now we define
\begin{align*}
\muprint\ \eqdef\ & \mu(\ocl)\\
\nuprint\ \eqdef\ & \nu(\ocl)\\
\mubprint\ \eqdef\ & \mu(\ocl_\bot)\\
\nubprint\ \eqdef\ & \nu(\ocl_\bot)\,.
\end{align*}

Note that the definition of $M \muprint a$ is equivalent to an inductive
definition by the following 
reduction rules.
\begin{center}
\AxiomC{$M \bs \Nil$}
             \UnaryInfC{$M \muprint \Nil$}
            \DisplayProof \ \ \ \ \ 
\AxiomC{$M \bs \Pair(M_1,M_2)$\ \ $M_1 \muprint a_1$\ \  $M_2 \muprint a_2$}
             \UnaryInfC{$M \muprint \Pair(a_1, a_2)$}
            \DisplayProof \\[0.5em]

\AxiomC{$M \bs \Left(M)$\ \ $M \muprint a$}
             \UnaryInfC{$M \muprint \Left(a)$}
            \DisplayProof \ \ \ \ \ 
\AxiomC{$M \bs \Right(M)$\ \ $M \muprint a$}
             \UnaryInfC{$M \muprint \Right(a)$}
            \DisplayProof 
          \end{center}
$\nuprint$ can be defined by replacing in the rules above
$\mu$ with $\nu$ and interpreting the rules  coinductively,
that is, permitting infinite derivations. 
$\mubprint$ and $\nubprint$ are obtained by adding the axioms
$M \mubprint \bot$ and $M\nubprint \bot$ respectively.

$M \muprint a$ is the finite reduction to a finite total data and  
$M \nuprint a$ is the (possibly) infinite reduction to a (possibly) infinite total data.
$M \mubprint a$ and
$M \nubprint a$ are reductions that may leave some part unreduced by assigning $\bot$,
and are used to obtain observations of infinite data through finite approximations.
  For example, for
  $M = \rec(\lambda x. \Pair(\Nil, x))$,
  no $a \in D$ satisfies $M \muprint a$ but
\begin{align*}
M &\nuprint\, \Nil:\Nil:\Nil:\ldots\\
M &\nubprint  \bot:\Nil:\Nil:\ldots\\
M &\mubprint  \bot:\Nil:\bot \qquad (=  \Pair(\Pair(\bot,\Nil),\bot)).
\end{align*}

\begin{lemma}\label{lemma:nutomu}\quad
  \begin{itemize}
    \item[(a)] $M \muprint a $ iff $M \mubprint a \land \ftdata(a)$.
    \item[(b)] $M \nuprint a $ iff $M \nubprint a \land \tdata(a)$.
    \item[(c)] $M \mubprint a$ iff $M \nubprint a \land \fdata(a)$.
\item[(d)]  $M \nubprint a$ iff $\forall d\, (\appr{d}{a} \to M \mubprint d) 
\land \data(a)$.
    \end{itemize}
  \end{lemma}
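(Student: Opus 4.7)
\noindent
My plan is to handle each of the four equivalences by a direct application of induction or coinduction on the fixed point predicate appropriate to the side being proved, with part (d) being the main technical challenge.

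\emph{Parts (a)--(c)} follow the pattern: for implications whose conclusion is an inductive predicate (i.e.~$\muprint$, $\mubprint$, $\fdata$, $\ftdata$), I induct on the hypothesis; for implications whose conclusion is coinductive ($\nuprint$, $\nubprint$, $\tdata$, $\data$), I coinduct on the conclusion.
For (a)~$\Leftarrow$, for instance, I would prove $\forall a\in\ftdata\,\forall M\,(M\mubprint a\to M\muprint a)$ by induction on $\ftdata$: in the step $a=C(\vec a)$, the hypothesis $M\mubprint C(\vec a)$ together with $C(\vec a)\neq\bot$ forces the $\ocl$ disjunct, giving $M\bs C(\vec M)$ and $M_i\mubprint a_i$, whence IH delivers $M_i\muprint a_i$ and hence $M\muprint C(\vec a)$.
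The direction~$\Rightarrow$ of (a) is induction on $M\muprint a$, using that $\ocl\subseteq\ocl_\bot$ and that the $\ocl$-clause preserves $\ftdata$.
Part (b) is dual: $\Rightarrow$ uses the inclusion $\nu(\ocl)\subseteq\nu(\ocl_\bot)$ by coinduction and coinduction on $\tdata$ with witness predicate $\exists M\,(M\nuprint a)$; $\Leftarrow$ coinducts with $P(M,a)\eqdef M\nubprint a\land\tdata(a)$, using $\tdata(a)\to a\neq\bot$ to rule out the $\bot$-disjunct and recover the structural case.
Part (c) is handled by the same template: $\Rightarrow$ by two inductions on $M\mubprint a$ (one into $\nubprint$, via $\ocl_\bot$-coclosure; one into $\fdata$, via $\odata$-closure); $\Leftarrow$ by induction on $\fdata(a)$ with $P(a)\eqdef\forall M\,(M\nubprint a\to M\mubprint a)$.

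\emph{Part (d)} is where I expect the real work. The direction~$\Rightarrow$ splits into two sub-goals. For $\data(a)$, I coinduct with $P(a)\eqdef\exists M\,M\nubprint a$: unfolding $\nubprint$ via coclosure yields either $a=\bot$ or $a=C(\vec a)$ with $M_i\nubprint a_i$, which exactly matches $\odata(P)$. For $\forall d\,(\appr{d}{a}\to M\mubprint d)$ I induct on $\appr{d}{a}=\mu(\oless)(d,a)$ with $P(d,a)\eqdef\forall M\,(M\nubprint a\to M\mubprint d)$: the case $d=\bot$ is immediate from the $\bot$-axiom of $\mubprint$, while in the case $d=C(\vec d)$, $a=C(\vec a)$ with $P(d_i,a_i)$, the hypothesis $M\nubprint C(\vec a)$ forces (by coclosure and $a\neq\bot$) $M\bs C(\vec M)$ with $M_i\nubprint a_i$, so IH yields $M_i\mubprint d_i$ and hence $M\mubprint C(\vec d)$.

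The direction~$\Leftarrow$ is the crux. I coinduct with
\[
 P(M,a)\;\eqdef\;\forall d\,(\appr{d}{a}\to M\mubprint d)\,\land\,\data(a),
\]
and must show $P\subseteq\ocl_\bot(P)$. Given $P(M,a)$, the coclosure of $\data$ gives either $a=\bot$ (done, via the $\bot$-disjunct) or $a=C(\vec a)$ with $\data(a_i)$. The hard part is producing the reduct: instantiating the first conjunct with $d=C(\bot,\ldots,\bot)$ (which satisfies $\appr{d}{a}$ by an easy induction) yields $M\mubprint C(\bot,\ldots,\bot)$; since this data is not $\bot$, the only way this holds via the inductive definition of $\mubprint$ is through $M\bs C(\vec M)$ with $M_i\mubprint\bot$, so I obtain canonical $\vec M$. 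By Lemma~\ref{unique} these $M_i$ are determined by $M$. To verify $P(M_i,a_i)$, I already have $\data(a_i)$; and for any $d$ with $\appr{d}{a_i}$, the element $C(\bot,\ldots,d,\ldots,\bot)$ (with $d$ in slot $i$) approximates $a$, so by assumption $M\mubprint C(\bot,\ldots,d,\ldots,\bot)$; another appeal to uniqueness of $M\bs C(\vec M)$ pins down $M_i\mubprint d$, completing the step.

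The principal obstacle, as highlighted above, is the $\Leftarrow$ direction of (d): I must reconstruct both the head reduction $M\bs C(\vec M)$ and the componentwise properties of the $M_i$ purely from information about finite approximations of $a$. This is enabled precisely by the combination of Lemma~\ref{unique} (canonicity of $\vec M$) and the closure of $\appr{\cdot}{\cdot}$ under slotwise refinement through $\bot$-padding.
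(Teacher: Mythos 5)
Your proposal is correct and follows essentially the same route as the paper: parts (a)--(c) by the same (co)inductions, and the crux --- the right-to-left direction of (d) --- by coinduction on $\nubprint$ with exactly the paper's $\bot$-padding trick $d=C(\bot^{i-1},d',\bot^{k-i})$ (your explicit appeal to Lemma~\ref{unique} makes precise a step the paper leaves implicit). The only minor divergence is in (d) left-to-right, where the paper obtains $M\mubprint d$ by first showing $M\nubprint d$ coinductively and then invoking part (c), whereas you give a direct induction on $\appr{d}{a}$; both are fine.
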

  \begin{proof}
    (a) By induction on $\muprint$ and $\mubprint$.

    (b) By coinduction on $\nuprint$ and $\nubprint$.
    
    (c) Left to right is immediate induction on $\mubprint$.
    Right to left is by induction on $\fdata(a)$.  

    (d)  Right to left by coinduction on $\nubprint$.  For $P(M, a) \eqdef \forall d\, (\appr{d}{a} \to M \mubprint d)
\, \land\, E(a)$,
we prove $P(M, a) \to \ocl_\bot(P)(M, a)$.  Suppose that $P(M, a)$.
Since $a \in \data$, $a = \bot$ or $a$ has the form $C(a_1,\ldots,a_k)$ for $a_i \in E$.
    If $a = \bot$, then we have $\ocl_\bot(P)(M, a)$.  If
    $a = C(a_1,\ldots,a_k)$,  then $\appr{C(\bot^k)}{a}$ and therefore
    $M \mubprint C(\bot^k)$.  Hence,
    $M \bs C(M_1,\ldots,M_k)$
    for some $M_1,\ldots,M_k$.
    We need to show $P(M_i, a_i)$ for each $i \leq k$.  
    Suppose that $\appr{d'}{a_i}$ and let 
    $d = C(\bot^{i-1}, d', \bot^{k-i})$.
    Since $\appr{d}{ a}$, we have $M \mubprint d$.  Therefore, 
    $M_i \mubprint d'$.

  Left to right:  Suppose $M \nubprint a$.  We have $\data(a)$ by coinduction on $\data$.
We show that $\appr{d}{ a}$ implies $M \mubprint d$.
$\appr{d}{ a}$ implies $\fdata(d)$ by Lemma~\ref{lemma:fdata}~(a).
  On the other hand, $M \nubprint a$ and  $\appr{d}{ a}$ imply
  $M \nubprint d$  by coinduction.  
  Therefore, by part~(c), 
  $M \mubprint d$.

\end{proof}
    
\medbreak

\subsection{Computational Adequacy Theorem}
\label{sub-adequacy}
Now we prove our first result linking the denotational with the operational
semantics.
\begin{theorem}[Computational Adequacy I] %
\label{thm-adequacy}
Let $M$ be a closed program.
\begin{enumerate}
\item[(a)] $M \muprint a$ iff $a = \val{M} \land \ftdata(a)$. 
\item[(b)] $M \mubprint a$ iff $a \dlee \val{M} \land \fdata(a)$.
\item[(c)] $M \nuprint a$ iff $a = \val{M} \land \tdata(a)$.
\item[(d)]          $M \nubprint a$ iff $a \dlee \val{M}$. 
\end{enumerate}
\end{theorem}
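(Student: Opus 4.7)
My plan is to observe that once part (b) is established, the remaining three parts reduce to it via Lemma~\ref{lemma:nutomu} together with Lemma~\ref{lemma:fdata} and the maximality of total data in the order $\dlee$. Concretely, (a) follows from (b) and Lemma~\ref{lemma:nutomu}(a), once one notes that $\tdata(a) \land a \dlee \val{M}$ forces $a = \val{M}$ by a short coinductive comparison on constructor structure (using equation~(\ref{eq-algebraic})); (c) follows analogously from (d) and Lemma~\ref{lemma:nutomu}(b). For (d) itself, the forward direction is a coinduction on $\dlee$ applied to the relation $\{(a,\val{M}) \mid M \nubprint a\}$, using a soundness lemma for bigstep reduction asserting that $M \bs V$ implies $\val{M} = \val{V}$, which is a short induction on $\bs$ via Lemma~\ref{lem-subst}. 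The backward direction of (d) reduces to the backward direction of (b) through Lemma~\ref{lemma:nutomu}(d) and Lemma~\ref{lemma:fdata}(b), which together convert $a \dlee \val{M}$ into the task of showing $M \mubprint d$ for every $d \in \fdata$ with $d \dlee \val{M}$.

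The forward direction of (b) is a routine induction on the derivation of $M \mubprint a$: in the $\bot$ case both conjuncts are immediate; in the constructor case, from $M \bs C(\vec M)$ and $M_i \mubprint a_i$, the soundness lemma yields $\val{M} = C(\val{M_1}, \ldots, \val{M_k})$, while the induction hypothesis gives $a_i \dlee \val{M_i}$ and $\fdata(a_i)$, whence $C(\vec a) \dlee \val{M}$ and $\fdata(C(\vec a))$ follow from the definitions of $\dlee$ and $\fdata$.

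The main obstacle is the backward direction of (b), which requires a genuine adequacy argument in the Coquand--Spiwack style, using compact elements of $D$ in place of types. I would introduce a logical relation $c \vartriangleleft M$ between compacts $c \in D$ and closed programs $M$, defined by induction on $\rk(c)$ with the clauses: $\bot \vartriangleleft M$ for all $M$; $C(\vec c) \vartriangleleft M$ iff there exist closed $\vec N$ with $M \bs C(\vec N)$ and $c_i \vartriangleleft N_i$; and $\Fun(f) \vartriangleleft M$ iff for every compact $c'$ and every closed $N$, $c' \vartriangleleft N$ implies $f(c') \vartriangleleft M\,N$, well-foundedness being secured by the finiteness property of compact functions in $\rk 2$, which bounds $\rk(f(c'))$ strictly below $\rk(\Fun(f))$. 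The heart of the argument is the fundamental lemma: for every program $M$, every environment $\eta$ of compact values and closing substitution $\sigma$ with $\eta(x) \vartriangleleft x\sigma$ for each free variable $x$, and every compact $c \dle \valu{M}{\eta}$, we have $c \vartriangleleft M\sigma$. This is proved by structural induction on $M$; the delicate case is $\rec\,K$, where $\valu{\rec\,K}{\eta} = \bigsqcup_n f^n(\bot)$ with $\Fun(f) = \valu{K}{\eta}$, so compactness of $c$ forces $c \dle f^n(\bot)$ for some $n$, and a subsidiary induction on $n$ using the functional clause of $\vartriangleleft$ closes the case. The backward direction of (b) then follows from the fundamental lemma by induction on $\fdata(a)$, since the constructor clause of $\vartriangleleft$ directly forces $M \mubprint c$ whenever $c \in \fdata$ and $c \vartriangleleft M$.
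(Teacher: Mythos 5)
Your overall architecture coincides with the paper's: parts (a) and (c) are reduced to (b) and (d) exactly as in Lemma~\ref{lemma:totalpartial}, the forward directions are the Correctness Lemma~\ref{lem-correctness}, and the backward direction of (b) is attacked with a Coquand--Spiwack logical relation indexed by compact elements (the paper's $\cl{a}$). Your two local variations are harmless: routing the backward direction of (d) through Lemma~\ref{lemma:nutomu}(d) works once you note that $\appr{d}{a}$ and $a\dlee\val{M}$ give $d\dlee\val{M}$ (an easy induction on $\appr{d}{a}$, playing the role of the paper's Lemma~\ref{lem-approximation2}), and your direct coinduction on $\dlee$ for the forward direction of (d) is, if anything, slightly more streamlined than the paper's argument.

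The genuine problem is your definition of the logical relation at $\Fun(f)$. You quantify over \emph{every} compact $c'$ with $c'\vartriangleleft N$, and you justify well-foundedness only by the bound $\rk(f(c'))<\rk(\Fun(f))$ from $\rk 2$. That bound disposes of the occurrence of $\vartriangleleft$ in the \emph{conclusion} $f(c')\vartriangleleft M\,N$, but not of the one in the \emph{hypothesis} $c'\vartriangleleft N$: since $c'$ may have rank $\ge\rk(\Fun(f))$, the clause appeals to instances of $\vartriangleleft$ not yet defined, and because this occurrence is negative the definition cannot be repaired as a monotone inductive one either. The whole point of the rank device is that the quantification over arguments must be restricted to compacts $b$ with $\rk(b)<\rk(\Fun(f))$, as in the paper's clause for $\cl{\Fun(f)}$. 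Once that restriction is in place you must pay for it elsewhere: in the application and $\rec$ cases of the fundamental lemma the actual argument (e.g.\ $f^n(\bot)$, or the relevant compact approximant of $\valu{N}{\eta}$) need not have small rank, and one recovers the needed instance from the other half of $\rk 2$ --- $f(b)=f(b_0)$ for some compact $b_0\dle b$ with $\rk(b_0)<\rk(\Fun(f))$ --- combined with the monotonicity property of Lemma~\ref{lem-mon-cl} ($a\dle b$ implies $\cl{a}\supseteq\cl{b}$). Neither ingredient appears in your outline; they are exactly the paper's Lemmas~\ref{lem-mon-cl} and~\ref{lem-rec}, and without them the $\rec$ case of your fundamental lemma does not close.
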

Note in (d)  that  $a \dlee \val{M}$ implies $\data(a)$.
The proof of the theorem will be given through the following 
Lemmas~\ref{lemma:totalpartial}-\ref{lem-approximation2}.
Computational adequacy usually means (a), and (c) is its generalization to infinite total data.  
As we will see in Lemma~\ref{lemma:totalpartial}, (b) and (d) are proved as lemmas for (a) and (c).
They are also foundations for the 
second Adequacy Theorem (Thm.~\ref{thm-adequacytwo}).
\begin{lemma}
\label{lemma:totalpartial}
In Thm.~\ref{thm-adequacy},
part (b) implies part (a),
and
part (d) implies part (c).
  \end{lemma}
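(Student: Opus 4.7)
The plan is to combine Lemma~\ref{lemma:nutomu}(a,b), which links the bot-free reductions to the bot-permitting ones via totality conditions, with the maximality of total data in the $\dlee$-ordering (i.e.\ $a \dlee b \land \tdata(a) \to a = b$), which was noted just after equation~(\ref{eq-algebraic}).

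For the first claim, that (b) implies (a), I would argue in both directions. Assume $M \muprint a$. By Lemma~\ref{lemma:nutomu}(a), $M \mubprint a$ and $\ftdata(a)$. Applying the hypothesis (b) gives $a \dlee \val{M}$ and $\fdata(a)$. Since $\ftdata(a)$ entails $\tdata(a)$, maximality of total data forces $a = \val{M}$, so the right-hand side of (a) holds. Conversely, if $a = \val{M}$ and $\ftdata(a)$, then $a \dlee \val{M}$ holds trivially and $\fdata(a)$ holds because $\ftdata \subseteq \fdata$. Hypothesis (b) then yields $M \mubprint a$, and another application of Lemma~\ref{lemma:nutomu}(a) (using $\ftdata(a)$) produces $M \muprint a$.

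For the second claim, that (d) implies (c), the reasoning is entirely parallel using Lemma~\ref{lemma:nutomu}(b). If $M \nuprint a$, then $M \nubprint a$ and $\tdata(a)$. Hypothesis (d) gives $a \dlee \val{M}$, and maximality again promotes this to $a = \val{M}$. Conversely, if $a = \val{M}$ and $\tdata(a)$, then $a \dlee \val{M}$ is immediate, so (d) gives $M \nubprint a$, and Lemma~\ref{lemma:nutomu}(b) combined with $\tdata(a)$ yields $M \nuprint a$.

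Neither direction requires any genuine obstacle: the whole argument is a routine bookkeeping exercise that simply threads together the already-established equivalences from Lemma~\ref{lemma:nutomu} with the maximality property of $\tdata$. The only mild subtlety is to remember that $\ftdata = \fdata \cap \tdata$, so the totality half of $\ftdata(a)$ is what activates maximality in the proof of (b) $\Rightarrow$ (a). No induction or coinduction is needed here, since all the hard work has already been done in Lemma~\ref{lemma:nutomu} and in the remarks on $\dlee$ following (\ref{eq-algebraic}).
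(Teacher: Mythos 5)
Your proof is correct and follows essentially the same route as the paper's: both reduce (a) to (b) and (c) to (d) by combining Lemma~\ref{lemma:nutomu}~(a) resp.~(b) with the maximality of total data under $\dlee$ noted after~(\ref{eq-algebraic}). You merely spell out the two directions of each biconditional that the paper compresses into the observation that, under $\ftdata(a)$ (resp.\ $\tdata(a)$), the left- and right-hand sides of (a) and (b) (resp.\ (c) and (d)) coincide.
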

  \begin{proof}
    {[(b) implies (a)]}: 
$M \muprint a$ implies $\ftdata(a)$ by Lemma~\ref{lemma:nutomu}~(a).
In addition, if $\ftdata(a)$ holds, then  $M \muprint a$ and $M \mubprint a$ 
are equivalent by Lemma~\ref{lemma:nutomu}~(a), and
    $a \dlee b$ and $a = b$ are equivalent as we mentioned before Lemma \ref{lemma:fdata}.

    [(d) implies (c)]:  Similar.  
Note that, by (\ref{eq-algebraic}), $=$ on $E$ is the bisimulation relation.
  \end{proof}

Due to this lemma, we only need to prove (b) and (d).  The `only if' 
parts of (b) and (d) are obtained by the following lemma.

\begin{lemma}[Correctness]\quad  
\label{lem-correctness}
\begin{itemize}
\item[(a)] If $M \bigstep V$, then $\val{M}=\val{V}$.
\item[(b)] If $M \mubprint a$, then $\appr{a}{\val{M}}$.
  
\item[(c)] If $M \nubprint a$, then $a \dlee \val{M}$.
\end{itemize}
\end{lemma}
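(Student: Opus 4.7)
The plan is to prove the three parts separately, with part~(a) proved first by induction on the big-step derivation, and then parts~(b) and (c) proved by induction on $\mubprint$ and coinduction on $\dlee$ respectively, both crucially using part~(a) in the constructor case.

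For part~(a), I would induct on the inductive definition of $M \bs V$, handling the four clauses:
(i)~If $V \bs V$ by the axiom, then trivially $\val{V} = \val{V}$.
(ii)~If $\case\,M\,\of\,\{\ldots;C(\vec y)\to N;\ldots\}\bs V$ because $M\bs C(\vec M)$ and $N[\vec M/\vec y]\bs V$, then by the induction hypothesis $\val{M} = \val{C(\vec M)} = C(\val{\vec M})$ and $\val{N[\vec M/\vec y]}=\val{V}$. Using program axiom~(i) together with the Substitution Lemma~(Lem.~\ref{lem-subst}) gives
$\val{\case\,M\,\of\,\{\ldots\}} = \val{N}[\vec y \mapsto \val{\vec M}] = \val{N[\vec M/\vec y]} = \val{V}$.
(iii)~For application, use program axiom~(iii) in the same way.
(iv)~For $\rec\,M\bs V$ from $M\,(\rec\,M)\bs V$, use the fixed point axiom~(vii), $\val{\rec\,M} = \val{M\,(\rec\,M)}$, together with the induction hypothesis.

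For part~(b), I would induct on $M \mubprint a$, i.e., show that the predicate $P(M,a) \eqdef \appr{a}{\val{M}}$ is closed under $\ocl_\bot$. If $a=\bot$, then $\appr{\bot}{\val{M}}$ holds immediately by the closure axiom for $\mu(\oless)$. Otherwise there is a constructor $C$ with $a = C(a_1,\ldots,a_k)$, $M \bs C(M_1,\ldots,M_k)$, and $P(M_i,a_i)$ for all $i\le k$. By part~(a), $\val{M} = C(\val{M_1},\ldots,\val{M_k})$, and by induction hypothesis $\appr{a_i}{\val{M_i}}$ for each $i$. One more application of the closure axiom for $\mu(\oless)$ yields $\appr{C(a_1,\ldots,a_k)}{C(\val{M_1},\ldots,\val{M_k})}$, as required.

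For part~(c), I would use coinduction. Define the relation $Q(a,b) \eqdef \exists M\,(M \nubprint a \land b = \val{M})$ and show $Q \subseteq \oless(Q)$; then by $\COIND$, $Q \subseteq \nu(\oless) = {\dlee}$. So suppose $Q(a,b)$ via a witness $M$ with $M \nubprint a$ and $b = \val{M}$. By the coclosure axiom applied to $\nu(\ocl_\bot)$, either $a=\bot$ (in which case $\oless(Q)(a,b)$ holds trivially) or $a = C(a_1,\ldots,a_k)$ with $M \bs C(M_1,\ldots,M_k)$ and $M_i \nubprint a_i$. Part~(a) again gives $b = \val{M} = C(\val{M_1},\ldots,\val{M_k})$, and each pair $(a_i, \val{M_i})$ lies in $Q$ via the witness $M_i$; hence $\oless(Q)(a,b)$ holds, completing the coinductive step.

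The main obstacle, modest but genuine, is the bookkeeping in part~(a) around the interaction of denotational substitution with the operational substitutions performed by the case- and application-rules; this is where the Substitution Lemma is indispensable. Once part~(a) is in place, parts~(b) and (c) are essentially parallel inductive/coinductive arguments that differ only in whether the fixed-point generating the relation is least or greatest.
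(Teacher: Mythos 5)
Your proposal is correct. Parts (a) and (b) follow essentially the same route as the paper: (a) by induction on the big-step derivation using the program axioms and the Substitution Lemma, and (b) by showing that $P(M,a)\eqdef\appr{a}{\val{M}}$ is closed under $\ocl_\bot$, using (a) in the constructor case. For part (c), however, you take a genuinely different path. The paper does \emph{not} argue by coinduction on $\dlee$ directly; instead it invokes Lemma~\ref{lemma:fdata}~(b), which characterises $a\dlee b$ as $\data(a)$ together with $\forall d\,(\appr{d}{a}\to\appr{d}{b})$, observes that $M\nubprint a$ and $\appr{d}{a}$ imply $M\nubprint d$, hence $M\mubprint d$ because $d$ is finite data (Lemma~\ref{lemma:nutomu}), and then concludes $\appr{d}{\val{M}}$ by part (b). Your argument instead exhibits the relation $Q(a,b)\eqdef\exists M\,(M\nubprint a\land b=\val{M})$ as a post-fixed point of $\oless$, unfolding $M\nubprint a$ by coclosure and using (a) to match constructors, so that $\COIND$ gives $Q\subseteq{\dlee}$ directly. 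Both are sound; your version is more self-contained (it does not route through the finite-approximation machinery or reuse part (b)), while the paper's version reuses infrastructure it has already built and keeps the coinductive reasoning confined to the earlier lemmas. One cosmetic remark: in your coinductive step the case $a=\bot$ gives $\oless(Q)(a,b)$ by the first disjunct of $\oless$, exactly as you say, so no separate $\data(a)$ check is needed — the coinduction absorbs it.
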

\begin{proof}
(a) is proven by induction along the definition of $M\bigstep V$.

(b) 
We define $P(M, a) \eqdef \appr{a}{ \val{M}}$ and prove
$M \mubprint a \to P(M, a)$ by induction.
Therefore, we prove $\ocl_\bot(P)(M, a) \to P(M, a)$.
Suppose that $\ocl_\bot(P)(M, a)$.  If $a = \bot$, then we have $P(M, a)$.
If $\ocl(P)(M, a)$, then $M \bs C(M_1,\ldots,M_k) $, $a = C(a_1,\ldots,a_k)$, and $P(M_i , a_i)$
for every $i \leq k$.
Hence, by (a), 
$\val{M} = \val{C(M_1,\ldots,M_k)} =
C(\val{M_1}, \ldots, \val{M_k})$.
Since $P(M_i, a_i)$, we have $\appr{a_i}{ \val{M_i}}$
and therefore 
$\appr{a}{ \val{M})}$.

(c) By Lemma~\ref{lemma:fdata}~(b), we need to show that
$M \nubprint a$ and $\appr{d}{ a}$ implies $\appr{d}{ \val{M}}$.
First,  we can easily show that $M \nubprint a$ and $\appr{d}{a}$ implies 
$M \nubprint d$.  Since $M \nubprint d$ and $\fdata(d)$, 
we have $M \mubprint d$ by Lemma~\ref{lemma:nutomu}~(a).  Therefore, 
$\appr{d}{ \val{M}}$ by (b).
\end{proof}

We prove the `if' part of Thm.~\ref{thm-adequacy}~(b) following 
\cite{Berger10}, which uses ideas from~\cite{Plotkin77} 
and~\cite{CoquandSpiwack06}.
Let $D_0$ be the set of compact elements of $D$. 
To every $a\in D_0$ we assign a set of closed programs
$\cl{a}$  by induction on 
$\rk(a)$ (Sect.~\ref {sub-domain}).   
\begin{eqnarray*}
\cl{\bot} &=& \hbox{the set of all closed programs}\\
\cl{C(a_1,\ldots,a_k)} &=& 
 \{M \mid \exists M_1,\ldots,M_k,\,M\bigstep C(M_1,\ldots,M_k) \land\\
 &&\quad\quad\quad\quad\bigwedge_{i \leq k}M_i\in\cl{a_i}) \}\\
\cl{\Fun(f)} &=& \{M\mid \exists x, M',\,(M\bigstep\lambda x.\,M' \land\\
&&\quad\quad\quad\quad\forall b \in D_0\,(\rk(b) <\rk(\Fun(f))\to\\
&& \quad\quad\quad\quad\quad\forall N\in\cl{b}\,(M'[N/x]\in\cl{f(b)})))\} 
\end{eqnarray*}

Note that for $a \in D_0\cap E$ ($=\fdata(a)$), $M \in \cl{a}$  is 
equivalent to $M \mubprint a$.

\begin{lemma}
\label{lem-mon-cl}
For $a,b\in D_0$, if $a\dle b$, then $\cl{a}\supseteq\cl{b}$.
\end{lemma}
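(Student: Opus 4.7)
The plan is to argue by induction on $\max(\rk(a),\rk(b))$, with case analysis on the form of $b$ via Axiom~(vi). The case $a = \bot$ is immediate since $\cl{\bot}$ contains all closed programs; otherwise, $a \dle b$ forces $b$ to share the outermost shape of $a$.

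In the constructor case $a = C(a_1,\ldots,a_k)$, $b = C(b_1,\ldots,b_k)$ with $a_i \dle b_i$ componentwise, property $\rk 1$ gives $\rk(a_i) < \rk(a)$ and $\rk(b_i) < \rk(b)$, so the induction hypothesis yields $\cl{a_i} \supseteq \cl{b_i}$. A program $M \in \cl{b}$ reduces $M \bigstep C(M_1,\ldots,M_k)$ with $M_i \in \cl{b_i} \subseteq \cl{a_i}$, so the defining clause for $\cl{C(a_1,\ldots,a_k)}$ gives $M \in \cl{a}$.

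The main obstacle is the function case $a = \Fun(f)$, $b = \Fun(g)$ with $f \dle g$ pointwise, because the definitions of $\cl{\Fun(f)}$ and $\cl{\Fun(g)}$ quantify over compact test inputs of rank below $\rk(\Fun(f))$ and $\rk(\Fun(g))$ respectively, and these bounds need not be comparable. Given $M \in \cl{\Fun(g)}$ with $M \bigstep \lambda x.\,M'$ and a test $b' \in D_0$ with $\rk(b') < \rk(\Fun(f))$ and $N \in \cl{b'}$, I would first use the second clause of $\rk 2$ applied to $\Fun(g)$ to pick a compact $b_1 \dle b'$ with $\rk(b_1) < \rk(\Fun(g))$ and $g(b_1) = g(b')$. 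Applying the induction hypothesis to $b_1 \dle b'$ (admissible since $\max(\rk(b_1),\rk(b')) \leq \rk(b') < \rk(\Fun(f)) \leq \max(\rk(a),\rk(b))$) transfers $N \in \cl{b'}$ to $N \in \cl{b_1}$, and then $M \in \cl{\Fun(g)}$ yields $M'[N/x] \in \cl{g(b_1)} = \cl{g(b')}$. A second application of the induction hypothesis to $f(b') \dle g(b')$, legitimate since $\rk 2$ makes $\rk(f(b'))$ and $\rk(g(b'))$ strictly smaller than $\max(\rk(a),\rk(b))$, then gives $\cl{g(b')} \subseteq \cl{f(b')}$, so $M'[N/x] \in \cl{f(b')}$ as required.
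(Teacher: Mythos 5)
Your proof is correct and is essentially the standard argument; the paper itself gives no details for this lemma, deferring to Lemma~12 of \cite{Berger10}, and your induction on $\max(\rk(a),\rk(b))$ with the $\rk 2$-based detour through a lower-rank approximation $b_1$ in the $\Fun$ case is precisely the kind of argument that reference supplies. One small inaccuracy: in justifying the first appeal to the induction hypothesis you write $\max(\rk(b_1),\rk(b'))\le\rk(b')$, which tacitly assumes that $\rk$ is monotone with respect to $\dle$ --- a property not among the stated axioms $\rk 1$ and $\rk 2$. The step is nevertheless legitimate, since $\rk(b_1)<\rk(\Fun(g))=\rk(b)$ and $\rk(b')<\rk(\Fun(f))=\rk(a)$ already give $\max(\rk(b_1),\rk(b'))<\max(\rk(a),\rk(b))$, so the induction hypothesis applies without any monotonicity of rank.
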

\begin{proof}
As the proof of Lemma~12 in \cite{Berger10}.
\end{proof}

 \begin{lemma}
 \label{lem-reducibility}
Suppose that $a \in D_0 \setminus \{\bot\}$.
 $M\in\cl{a}$ iff $M\bigstep V$ for some $V\in\cl{a}$.
\end{lemma}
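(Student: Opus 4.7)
The plan is to argue directly by case analysis on the shape of $a$. Since $a \in D_0 \setminus \{\bot\}$, the compact element $a$ has either the form $C(a_1,\ldots,a_k)$ for some constructor $C$, or the form $\Fun(f)$; in both cases the definition of $\cl{a}$ already requires membership witnesses to begin with a bigstep reduction to an appropriately shaped value.

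For the forward direction, suppose $M \in \cl{a}$. Unfolding the definition in each of the two cases yields a reduction $M \bigstep V$ where $V$ is either $C(M_1,\ldots,M_k)$ with $M_i \in \cl{a_i}$, or $\lambda x.\,M'$ with the corresponding functional clause. It remains to verify $V \in \cl{a}$, but since $V$ is itself a value, rule~(i) of $\bigstep$ gives $V \bigstep V$, so $V$ trivially satisfies its own defining clause in $\cl{a}$ with the same witnesses.

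For the backward direction, suppose $M \bigstep V$ with $V \in \cl{a}$. The crucial observation is that for a value $V$, the only $\bigstep$-rule whose conclusion matches $V$ is rule~(i): rules~(ii), (iii), (iv) require the outermost syntactic form to be $\case$, application, or $\rec$. Combined with Lemma~\ref{unique}, this means the sole derivable judgement from $V$ is $V \bigstep V$. Hence, from $V \in \cl{a}$ and the required reduction from $V$, we read off that $V$ itself must have the appropriate syntactic shape ($C(\vec N)$ with $N_i \in \cl{a_i}$, respectively $\lambda x.\,N'$ with the functional clause). The required reduction $M \bigstep V$ together with these witnesses then directly matches the defining clause of $\cl{a}$, giving $M \in \cl{a}$.

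There is no real obstacle here: the lemma is a routine consequence of the shape of the definition of $\cl{a}$ for non-bottom compact $a$, plus the fact that values are normal forms for $\bigstep$. The only subtle point is remembering that $\cl{\bot}$ is excluded (the equivalence would be trivially false for $a=\bot$, since any non-terminating $M$ lies in $\cl{\bot}$ but reduces to no value), and this is precisely why the hypothesis restricts $a$ to $D_0 \setminus \{\bot\}$.
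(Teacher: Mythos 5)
Your proof is correct and matches the paper's approach: the paper simply records this lemma as ``immediate from the definition of $\cl{a}$,'' and your case analysis on the shape of $a$ together with the observation that values only bigstep-reduce to themselves is exactly the routine unfolding that justifies that claim.
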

\begin{proof}
Immediate from the definition of $\cl{a}$.
\end{proof}

\begin{lemma} 
\label{lem-rec}
If $M \in \cl{\Fun(f)}$, then $\rec\, M \in \cl{f^n(\bot)}$ for every $n \in \NN$.
\end{lemma}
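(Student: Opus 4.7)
The plan is to proceed by induction on $n$, exploiting the defining property of $\cl{\Fun(f)}$ together with the bigstep reduction rule for $\rec$.

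For the base case $n=0$ we have $f^0(\bot)=\bot$, and since $\cl{\bot}$ contains every closed program, $\rec\,M\in\cl{\bot}$ holds trivially.

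For the step, suppose $\rec\,M\in\cl{f^n(\bot)}$. Since $M\in\cl{\Fun(f)}$, the definition of reducibility gives $M\bigstep \lambda x.\,M'$ for some $M'$, together with the instantiation property: for every compact $b$ with $\rk(b)<\rk(\Fun(f))$ and every $N\in\cl{b}$, one has $M'[N/x]\in\cl{f(b)}$. I would apply this with $b\eqdef f^n(\bot)$ and $N\eqdef\rec\,M$. The required compactness and rank bound follow from property $(\rk 2)$ in Sect.~\ref{sub-domain}: for $n\ge 1$, $f^n(\bot)=f(f^{n-1}(\bot))$ is compact with $\rk(f^n(\bot))<\rk(\Fun(f))$; for $n=0$, $\bot$ is compact of rank $0$, and $\rk(\Fun(f))\ge 1$ since by $(\rk 2)$, $\rk(\Fun(f))>\rk(f(\bot))\ge 0$. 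This yields $M'[\rec\,M/x]\in\cl{f(f^n(\bot))}=\cl{f^{n+1}(\bot)}$.

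It remains to transfer this reducibility from $M'[\rec\,M/x]$ to $\rec\,M$. If $f^{n+1}(\bot)=\bot$ the conclusion is immediate. Otherwise, Lemma~\ref{lem-reducibility} gives a value $V$ with $M'[\rec\,M/x]\bigstep V$ and $V\in\cl{f^{n+1}(\bot)}$. Combining $M\bigstep \lambda x.\,M'$ with $M'[\rec\,M/x]\bigstep V$ via bigstep rule~(iii) gives $M\,(\rec\,M)\bigstep V$, and then rule~(iv) gives $\rec\,M\bigstep V$. Applying Lemma~\ref{lem-reducibility} once more yields $\rec\,M\in\cl{f^{n+1}(\bot)}$, completing the induction.

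The only delicate point is verifying the rank condition $\rk(f^n(\bot))<\rk(\Fun(f))$ uniformly in $n$, which is not an obstacle in substance but requires one to invoke the precise properties of the rank function rather than simply the definition of compactness; the rest is a standard unfolding of the bigstep operational rules.
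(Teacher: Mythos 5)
Your proof is correct and follows essentially the same route as the paper: induction on $n$, unfolding the definition of $\cl{\Fun(f)}$ to apply the body to $N=\rec\,M$, and then transferring reducibility back to $\rec\,M$ via bigstep rules (iii) and (iv) together with Lemma~\ref{lem-reducibility} (whose $\bot$-exclusion you handle more carefully than the paper does). The one point where you diverge is exactly the "delicate point" you flag: you discharge the side condition $\rk(c)<\rk(\Fun(f))$ by bounding $\rk(f^n(\bot))$ directly from the first clause of $(\rk 2)$, whereas the paper instead invokes the second clause of $(\rk 2)$ to obtain a compact $b_0\dle f^n(\bot)$ with $\rk(b_0)<\rk(\Fun(f))$ and $f(b_0)=f^{n+1}(\bot)$, and then uses the monotonicity Lemma~\ref{lem-mon-cl} to pass from $\rec\,M\in\cl{f^n(\bot)}$ to $\rec\,M\in\cl{b_0}$; that detour makes the rank bookkeeping uniform in $n$ without any claim about $\rk(\bot)$. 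Your treatment of $n=0$ leans on $\rk(\bot)=0$, which the paper never asserts (the rank function is only required to satisfy $(\rk 1)$ and $(\rk 2)$); the gap is cosmetic, since applying the second clause of $(\rk 2)$ with $b=\bot$ forces $b_0=\bot$ and hence $\rk(\bot)<\rk(\Fun(f))$ anyway, but you should justify it that way rather than by fiat.
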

\begin{proof}
  Induction on $n$.
  It is trivial for $n = 0$ because $\cl{\bot}$ contains every closed program.
  Suppose that $\rec\,{M}\in \cl{f^n(\bot)}$.
  According $\rk 2$,  for $b = f^n(\bot)$,  $f(b) = f(b_0)$ for some compact $b_0 \dle b$ with
  $\rk(\Fun(f)) > \rk(b_0)$.
  Since $\rec\,M \in \cl{b}$, we have $\rec\,M \in \cl{b_0}$ by Lemma \ref{lem-mon-cl}.
  Since $M \in \cl{\Fun(f)}$, $M \bigstep \lambda x. K$ for some $x$ and $K$ and $\forall c \in D_0 (\rk(c) < \rk(\Fun(f)) \to \forall N \in \cl{c} (K[N/x] \in \cl{f(c)})$.
  We apply this to the case $c = b_0$ and $N = \rec\ M$ and get
  $K[\rec\ M/x] \in \cl{f(b_0)} = \cl{f^{n+1}(\bot)}$.
  Therefore, 
  $K[\rec\ M/x] \bigstep V$ and $V \in \cl{f^{n+1}(\bot)}$.
  Thus, we also have $\rec\ M \bigstep V$ and therefore 
  $\rec\ M \in \cl{f^{n+1}(\bot)}$, by Lemma~\ref{lem-reducibility}.
\end{proof}

\begin{lemma}[Approximation] 
\label{lem-approximation}
For a closed program $M$ and  $a\in D_0$, if $a\dle \val{M}$, then 
$M\in\cl{a}$.
\end{lemma}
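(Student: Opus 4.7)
The plan is to prove the statement as a corollary of a stronger reducibility lemma that is established by structural induction on the program $M$, and that handles open terms through the usual substitution technique.

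Concretely, I would prove the following strengthening: for every program $M$ with free variables $\vec{x}$, every assignment $\eta$ of compact elements $\vec{a}\in D_0$ to $\vec{x}$, every tuple $\vec{N}$ of closed programs with $N_i\in\cl{a_i}$, and every compact $b\in D_0$ with $b \dle \valu{M}{\eta}$, one has $M[\vec{N}/\vec{x}]\in\cl{b}$. The original Approximation Lemma is then the special case in which $M$ is closed and we take $b = a$.

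The induction is on the structure of $M$. The case of a variable is immediate. The constructor case $M = C(M_1,\ldots,M_k)$ is handled by analyzing the shape of $b$: if $b=\bot$ we are done, otherwise $b = C(b_1,\ldots,b_k)$ with $b_i\dle \valu{M_i}{\eta}$, so the induction hypothesis yields $M_i[\vec N/\vec x]\in\cl{b_i}$, and then rule (i) of $\bigstep$ gives membership in $\cl{b}$. The $\case$ case is dealt with by first applying the induction hypothesis to the scrutinee (against a suitable compact approximation of its value that is forced to begin with the correct constructor by Lemma~\ref{lem-mon-cl}) and then to the chosen branch after the usual substitution, combining them with rule (ii). The application case is handled analogously using rule (iii) and the defining clause of $\cl{\Fun(f)}$. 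The $\lambda$-case is the most delicate purely bookkeeping step: to show that $\lambda x.\,K[\vec N/\vec x]\in\cl{\Fun(f)}$ whenever $\Fun(f)\dle \valu{\lambda x.\,K}{\eta}$, one invokes the induction hypothesis on $K$ against every compact $c$ of rank strictly smaller than $\rk(\Fun(f))$, using that $f(c) \dle \valu{K}{\eta[x\mapsto c]}$ and that $c$ supplies a valid entry for the augmented environment.

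The main obstacle will be the $\rec$ case, which is where the finite-rank stratification of $\cl{\cdot}$ pays off. Here $\val{\rec\,M}$ is the directed supremum $\sup_n f^n(\bot)$ where $f$ is the continuous function denoted by $M$; by compactness of $b$, $b\dle \val{\rec\,M}$ implies $b \dle f^n(\bot)$ for some $n$. The induction hypothesis applied to $M$ yields $M[\vec N/\vec x]\in\cl{\Fun(f)}$ (using rank properties via $\rk 2$), and then Lemma~\ref{lem-rec} gives $\rec\,(M[\vec N/\vec x])\in\cl{f^n(\bot)}$, from which Lemma~\ref{lem-mon-cl} yields membership in $\cl{b}$. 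The constant $\botexp$ is trivial since $\cl{\bot}$ is everything and $\val{\botexp}=\bot$. The substitution lemma (Lemma~\ref{lem-subst}) is used throughout to align the denotation of substituted terms with the environment-based denotation, so that the induction hypothesis applies cleanly.
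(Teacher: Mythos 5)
Your proposal follows essentially the same route as the paper's proof: the same strengthening to open terms via substitutions paired with environments of compact elements, the same structural induction on $M$, and the same treatment of $\rec\,M$ via Lemma~\ref{lem-rec} and Lemma~\ref{lem-mon-cl}. The only imprecision is in the $\rec$ case, where the induction hypothesis applies only to \emph{compact} elements of $D_0$, so before concluding $M[\vec N/\vec x]\in\cl{\Fun(f)}$ you must first replace the denoted function by a compact approximation $f_0\dle f$ with $b\dle f_0^n(\bot)$ (which exists by continuity), exactly as the paper does.
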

\begin{proof}
We show a more general statement about arbitrary programs involving
substitutions and environments to take care of free variables.
  A substitution is a finite mapping from variables to the set of closed programs.
  An environment is a finite mapping from variables to $D$.
  For a substitution $\theta$ and an environment $\eta$, 
  we write $\theta\in\cl{\eta}$ if $\eta(x)$ is compact and $\theta(x) \in \cl{\eta(x)}$ 
for each $x \in \dom(\theta)$.
We prove by induction on $M$:
\begin{quote}
For an environment $\eta$, a substitution $\theta$ such that
$\theta\in\cl{\eta}$, a program $M$ such that $FV(M) \subseteq \dom(\theta)$ 
and $a \in D_0$, if $a\dle \val{M}\eta$ then $M\theta\in\cl{a}$.  
\end{quote}

Since the statement is clear for $a = \bot$, we assume $a \ne \bot$.
We may also assume $M\neq\botexp$ since otherwise the condition 
$a \dle\valu{M}{\eta}$ is not satisfied. 
The cases that $M$ is  $x$,  $C(N_1,\ldots,N_k)$, 
$\case\,M'\,\of\,\{\ldots;C(\vec{y}) \to K;\ldots\}$,
$\lambda x.\,M'$,
$M'\ N$ are similar to the corresponding cases of 
Lemma~15 in \cite{Berger10}. 
We only consider the case $M = \rec\,N$.
Suppose that $a \dle \valu{M}\eta$.
Since $a\neq\bot$, $\valu{N}\eta = \Fun(g)$ for some continuous function 
$g:D \to D$ such that $\valu{M}\eta$ is the least fixed point of $g$.
Therefore, $a \dle  g^n(\bot)$ for some $n$.
By continuity,  there is a compact $f\in D \to D$ such that 
$f \dle g$ and  $a \dle  f^n(\bot)$.
Since $\Fun(f) \dle \valu{N}\eta$, by induction hypothesis, 
$N\theta \in \cl{\Fun(f)}$.
By Lemma \ref{lem-rec}, 
$\rec\,(N\theta) \in \cl{f^n(\bot)}$.
By Lemma \ref{lem-mon-cl}, $\cl{a} \supseteq \cl{f^n(\bot)}$.
Therefore, $M\theta = \rec\,(N\theta) \in \cl{a}$.
\end{proof}

\begin{proof}[Proof of the if part of Thm.~\ref{thm-adequacy}~(b)]
Suppose that $d \dlee \val{M}$ for a finite data $d$.
Then, $M\in\cl{d}$ by the Approximation Lemma.
Therefore, by the remark after the definition of $\cl{a}$, we have
$M \mubprint d$.
\end{proof}

\begin{lemma}
\label{lem-approximation2}
If $\valu{M}$ has the form
$C(a_1, \ldots, a_k)$, then $M\bs C(M_1,\ldots,M_k)$ for some $M_1,\ldots,M_k$.
\end{lemma}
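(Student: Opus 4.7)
The plan is to apply the Approximation Lemma (Lemma~\ref{lem-approximation}) to a minimal compact element that witnesses the presence of the constructor $C$ at the head of $\val{M}$. The natural candidate is $a \eqdef C(\bot,\ldots,\bot)$ with $k$ bottom arguments, which is compact (since $\bot$ is compact and the constructor case of rank is immediate from $\rk 1$) and satisfies $C(\bot,\ldots,\bot) \dle C(a_1,\ldots,a_k) = \val{M}$ by monotonicity of constructors.

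First I would observe that $M$ is closed (this is implicit from the statement, since $\bigstep$ and $\val{\cdot}$ without an environment make sense only for closed programs). Then, invoking Lemma~\ref{lem-approximation} with this $a$, we obtain $M \in \cl{C(\bot,\ldots,\bot)}$. Unfolding the definition of $\cl{\cdot}$ from Sect.~\ref{sub-adequacy}, this precisely asserts the existence of closed programs $M_1,\ldots,M_k$ such that
\[
M \bigstep C(M_1,\ldots,M_k) \ \land\ \bigwedge_{i\le k} M_i \in \cl{\bot},
\]
and the second conjunct is trivial since $\cl{\bot}$ contains every closed program. Discarding the trivial conjunct yields exactly the conclusion of the lemma.

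There is essentially no obstacle here; the lemma is a direct corollary of the Approximation Lemma, and its role is to serve as the base case/head-step ingredient needed to complete the `if' direction of Thm.~\ref{thm-adequacy}~(d). In particular, iterated use of this lemma together with Lemma~\ref{lemma:fdata} will let one prove, for every compact $d \dlee \val{M}$, that $M \mubprint d$, matching the $\mubprint$-based characterization already established. The only small care required is to make sure $C(\bot,\ldots,\bot)$ is recognized as compact and to verify that $C$-monotonicity gives $C(\bot,\ldots,\bot) \dle C(a_1,\ldots,a_k)$ in the ambient domain $D$, both of which are immediate from the recursive domain equation defining $D$ in Sect.~\ref{sub-domain}.
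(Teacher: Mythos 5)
Your proof is correct and follows essentially the same route as the paper: both arguments hinge on taking the compact witness $C(\bot,\ldots,\bot) \dle \val{M}$ and pushing it through the Approximation Lemma. The only cosmetic difference is that the paper invokes the already-established `if' direction of Thm.~\ref{thm-adequacy}~(b) to get $M \mubprint C(\bot,\ldots,\bot)$ and reads off the head constructor, whereas you unfold the definition of $\cl{C(\bot,\ldots,\bot)}$ directly; since that direction of (b) is itself an immediate corollary of the Approximation Lemma, the two arguments coincide.
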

\begin{proof}
  Let $a = C(\bot, \ldots, \bot)$.
  If $\valu{M}$ has the form $C(a_1, \ldots, a_k)$,
  then $a \dlee \valu{M}$.
By applying Thm.~\ref{thm-adequacy}~(b), we obtain $M \mubprint a$.
  Thus, $M\bs C(M_1,\ldots,M_k)$ for some $M_1,\ldots,M_k$.
\end{proof}

\begin{proof}[Completing the proof of the first Adequacy Theorem]
Finally, we prove the  
`if' part of (d) of Thm.~\ref{thm-adequacy}.
We prove 
by coinduction that 
$a \dlee \val{M}$ 
implies $M \nubprint a$.
Therefore, for  $a \in D$ and a closed program $M$, 
we show 
\begin{align*}
a \dlee \val{M} 
\to\  &a = \bot\ \lor \\
&  \bigvee_C \left(\begin{array}{l}
\exists M_1,\ldots,M_k, a_1, \ldots,a_k\ (M \bs C(M_1,\ldots,M_k)\\  
\hspace*{2cm} \land a = C(a_1,\ldots, a_k) \land \bigwedge_{i \leq k} a_i \dlee \val{M_i})
                       \end{array}\right)\,.
\end{align*}
Suppose that $a \dlee \val{M}$.
Since this implies $a \in \data$, it follows that $a = \bot$ or $a$ has the form $C(a_1,\ldots,a_k)$ for $a_i \in E$.
If $a = \bot$ we are done.
If $a = C(a_1,\ldots, a_k)$,  then $\val{M}$ also has the form $C(a_1',\ldots, a_k')$  for some $a_i' \sqsupseteq_E a_i$.
Therefore,  we can apply Lemma~\ref{lem-approximation2} and obtain
$M\bs C(M_1,\ldots,M_k)$ for some $M_1,\ldots,M_k$.
By Lemma~\ref{lem-correctness}~(a), we have $\val{M} = 
\val{C(M_1,\ldots,M_k)}
= C(\val{M_1},\ldots,\val{M_k}).$
Therefore, $a_i \dlee a_i' = \val{M_i}$.
\end{proof}

\subsection{Computation of infinite data}
\label{section:computationinfinite}

Thm.~\ref{thm-adequacy}~(c) and~(d) characterize the denotational 
semantics of a program $M$ in terms of the relations $M \nuprint a$ and
$M \nubprint a$ which have a more proof-theoretic rather than operational
character since they are defined by (possibly infinite) derivations.
In this section we define a notion of possibly infinite step-by-step computation
that continues under data constructor and prove
our second Adequacy Theorem (Thm.~\ref{thm-adequacytwo})
which provides a truly operational characterization of the denotational
semantics of a program.

As one can see from Thm.~\ref{thm-adequacy}~(d), the reduction 
relation $M \nubprint a$ is not functional and
a program $M$ is related to a set of data whose 
upper bound is the denotational semantics
of $M$.  To obtain a more precise operational notion,
we use the following inductively defined smallstep leftmost-outermost reduction 
relation $\ssp$ on closed programs that corresponds to bigstep reduction. 
\begin{itemize}
\item[(i)] $\case\,C(\vec M) \,\of\, \{\ldots;C(\vec y)\to N;\ldots\}\ssp
                                                            N[\vec M/\vec y]$
\item[(ii)] $(\lambda x.\,M)\ N \ssp M[N/x]$
\item[(iii)] $\rec\,M \ssp M\,(\rec\,M)$
\item[(iv)] \AxiomC{$M \ssp M'$}
             \UnaryInfC{$\case\,M\,\of\, \{\vec{Cl}\}\ssp 
               \case\,M'\,\of\, \{\vec{Cl}\}$}
            \DisplayProof 
\item[(v)] \AxiomC{$M \ssp M'$}
            \UnaryInfC{$M\,N \ssp M'\,N$}
            \DisplayProof 
\end{itemize}
Since we are only concerned with reducing closed terms 
the substitutions in (i) and (ii) do not need $\alpha$-conversions.
\begin{lemma}\label{lemma:bsssp}
If $M \bs V$, then $M \ssp^* V$. 
\end{lemma}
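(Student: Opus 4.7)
The plan is to proceed by induction on the derivation of $M \bs V$, case-splitting on the final rule used. First, however, I would establish two routine congruence lemmas for $\ssp^*$: if $M \ssp^* M'$ then $\case\,M\,\of\,\{\vec{Cl}\} \ssp^* \case\,M'\,\of\,\{\vec{Cl}\}$, and $M\,N \ssp^* M'\,N$. Both follow by an easy induction on the length of the reduction sequence, using smallstep rules (iv) and (v) respectively at each step. Reflexivity and transitivity of $\ssp^*$ handle the base and combination steps.

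With these lemmas in hand, the four cases of the induction on $M \bs V$ are:
\begin{itemize}
\item[(i)] $V \bs V$: take the empty reduction $V \ssp^* V$.
\item[(ii)] $\case\,M\,\of\,\{\ldots; C(\vec y) \to N; \ldots\} \bs V$ with premises $M \bs C(\vec M)$ and $N[\vec M/\vec y] \bs V$. By the induction hypothesis, $M \ssp^* C(\vec M)$ and $N[\vec M/\vec y] \ssp^* V$. Using the first congruence lemma, lift the first sequence to $\case\,M\,\of\,\{\vec{Cl}\} \ssp^* \case\,C(\vec M)\,\of\,\{\vec{Cl}\}$, then apply smallstep rule (i) to get $\ssp N[\vec M/\vec y]$, and compose with the second sequence.
\item[(iii)] $M\,N \bs V$ with premises $M \bs \lambda x.\,M'$ and $M'[N/x] \bs V$. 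By the induction hypothesis and the second congruence lemma, $M\,N \ssp^* (\lambda x.\,M')\,N$; then smallstep rule (ii) gives $\ssp M'[N/x]$, and the induction hypothesis completes the chain to $V$.
\item[(iv)] $\rec\,M \bs V$ from $M\,(\rec\,M) \bs V$. Smallstep rule (iii) gives $\rec\,M \ssp M\,(\rec\,M)$, and the induction hypothesis finishes.
\end{itemize}

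There is no real obstacle: the only thing to be careful about is the mild mismatch that $\bs$ is a ``big jump'' to a value while $\ssp$ is leftmost-outermost at the top level, so we must justify reducing strictly underneath a $\case\ldots\of$ head or to the left of an application. The congruence lemmas above do exactly this, and they are the only non-trivial ingredient. The substitutions appearing in cases (ii) and (iii) pose no technical difficulty since we are only reducing closed programs, so $\alpha$-conversion concerns do not arise.
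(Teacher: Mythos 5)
Your proposal is correct and follows essentially the same route as the paper: induction on the derivation of $M \bs V$ with the same four cases, where the paper simply uses the congruence of $\ssp^*$ under $\case$-heads and left of application implicitly rather than isolating it as a lemma as you do. Making those congruence lemmas explicit is a harmless (arguably cleaner) presentational choice, not a different argument.
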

\begin{proof}
The proof is by induction on the definition of $M \bs V$.

If $M=V$, then the assertion is trivial.

If $M = \case\,M'\,\of\,\{\ldots;C(\vec y)\to N;\ldots\} $,
then $M' \bs C(\vec M) $ and
$N[\vec M/\vec y])\bs V$.
By the induction hypothesis, $M'  \ssp^* C(\vec M) $  and
$N[\vec M/\vec y] \ssp^* V$.
We have
\begin{align*}
   M &= \case\,M'\,\of\,\{\ldots;C(\vec y)\to N;\ldots\} \\
&\ssp^* \case\,C(\vec M)\,\of\,\{\ldots;C(\vec y)\to N;\ldots\} \\
&\ssp N[\vec M/\vec y] \ssp^* V\,.
\end{align*}

If $M= M_1\,N $, then $M_1 \bs \lambda x.\,M' $ and
$M'[N/x]\bs V$. By the induction hypothesis,
$M_1  \ssp^* \lambda x.\,M' $ and 
$M'[N/x]\ssp^* V$.
Therefore,
$M = M_1\, N  \ssp^* (\lambda x.\,M')\, N  \ssp M'[N/x]\ssp^* V$.

If $M= \rec\,M' $, then $M'\,(\rec\,M') \bs V$. 
We have
\[ M = \rec\,M'  \ssp M' \,(\rec\,M' ) \ssp^* V,\]
by the induction hypothesis.
\end{proof}

In order to approximate the denotational semantics operationally,
we need to continue computation under constructors. 
Since a constructor may have more than one argument
and some computations of arguments may diverge,
we need to compute all the arguments in parallel.
For this purpose, we extend the smallstep reduction $\ssp$ 
to a relation $\newprintp$ by the following inductive rules:
\begin{center}
\AxiomC{$M \ssp M' $} 
\UnaryInfC{$M \newprintp M'$}
\DisplayProof 
\quad
\AxiomC{$M_i \newprintp M_i'$ $(i = 1,\ldots, k)$} 
\UnaryInfC{$C(M_1,\ldots,M_k) \newprintp C(M_1',\ldots,,M_k')$}
\DisplayProof\\[1em] 
$M \newprintp M$ \hspace{.5cm} otherwise.
\end{center}
Clearly there is exactly one applicable rule for each closed program $M$.
We denote by $M^{(n)}$ the unique program
$M'$ such that  
$M (\newprintp)^n M'$.

For a closed program $M$, we define 
${M}_\bot \in E$ as follows.
\begin{align*}
{C(M_1,\ldots,M_k)}_\bot &= C({M_1}_\bot,\ldots, {M_k}_\bot)\\
{M}_\bot &= \bot  \hspace*{.5cm} \mbox{if $M$ is not a constructor term}
\end{align*}

\begin{lemma}[Accumulation]
\label{lem:order}
  If $M \newprintp M'$, then ${M}_\bot \dlee {M'}_\bot$.  Therefore,
  ${M^{(n)}}_\bot \dlee   {M^{(m)}}_\bot$ for $n \leq m$.
\end{lemma}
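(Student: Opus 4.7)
The plan is to prove the first claim by induction on the derivation of $M \newprintp M'$, following the three rules that define this relation.

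The key observation — which makes Case~1 (the base case) immediate — is that whenever $M \ssp M'$ holds, the program $M$ must begin with a case-expression, an application, or a $\rec$ construct (this is visible from all five rules defining $\ssp$). In particular, $M$ is never a constructor term, so by definition ${M}_\bot = \bot$, and ${M}_\bot \dlee {M'}_\bot$ holds trivially because $\bot$ is the least element of $E$ (for instance by a one-step unfolding of the coinductive definition of $\dlee$ via $\oless$, taking the left disjunct).

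For Case~2, the congruence rule under a constructor, we have $M = C(M_1,\ldots,M_k)$ and $M' = C(M_1',\ldots,M_k')$ with $M_i \newprintp M_i'$ for each $i$. The induction hypothesis gives ${M_i}_\bot \dlee {M_i'}_\bot$, and the definition of ${(\cdot)}_\bot$ then yields ${M}_\bot = C({M_1}_\bot,\ldots,{M_k}_\bot)$ and likewise for $M'$; one further unfolding of $\oless$ (taking the right disjunct with constructor $C$) delivers ${M}_\bot \dlee {M'}_\bot$. Case~3 (the default clause $M \newprintp M$) reduces to reflexivity of $\dlee$ on $E$, which is a standard coinductive fact, easily established by noting that $\{(a,a) \mid a \in E\}$ is closed under $\oless$, or by invoking that $\dlee$ agrees with the domain order $\dle$ on $E$, which is a partial order.

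The second claim, ${M^{(n)}}_\bot \dlee {M^{(m)}}_\bot$ for $n \le m$, then follows by a routine induction on $m-n$, using the first claim at each step together with transitivity of $\dlee$ on $E$ (again from the fact that $\dlee$ agrees with $\dle$ on $E$, or proved directly by coinduction on the composed relation).

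I do not foresee a serious obstacle; the only subtle point is making sure Case~1 really covers every $\ssp$-redex, and this is settled by a quick syntactic inspection of the five $\ssp$ rules, all of whose left-hand sides are non-constructor terms.
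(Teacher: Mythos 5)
Your proof is correct and matches the paper's intent: the paper dismisses this lemma as ``immediate by the definition of $\newprintp$,'' and your case analysis on the three $\newprintp$ rules (noting that every $\ssp$-redex is a non-constructor term, so ${M}_\bot=\bot$; using the fixed-point unfolding of $\oless$ for the constructor case; and reflexivity of $\dlee$ on $E$ for the default case) is exactly the elaboration being elided. The second claim via transitivity of $\dlee$ on $E$ is likewise fine.
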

\begin{proof}
Immediate by the definition of $\newprintp$.
\end{proof}

For a closed program  $M$, ${M^{(n)}}_\bot$ can be viewed as the finite 
approximation of the value of $M$ obtained after $n$ consecutive parallel 
computation steps.
The following lemma shows that this computation is complete, 
that is, \emph{every} 
finite approximation 
is obtained eventually. 
\begin{lemma}[Adequacy for finite values]
\label{lem:equiv}
If $M \mubprint a $, then 
$\exists n\, a \dlee {{M^{(n)}}}_\bot$.
\end{lemma}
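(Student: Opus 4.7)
The plan is to prove the lemma by induction on $M \mubprint a$, viewed as the least fixed point of $\ocl_\bot$ (so I need to show that the set $\{(M,a) \mid \exists n\, a \dlee {M^{(n)}}_\bot\}$ is closed under the operator $\ocl_\bot$).

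First I would dispose of the trivial case $a = \bot$: for any $M$ we have $\bot \dlee {M^{(0)}}_\bot$ since $\bot$ is the least element of $E$. The substantive case is when $M \bs C(M_1,\ldots,M_k)$, $a = C(a_1,\ldots,a_k)$, and each pair $(M_i,a_i)$ satisfies the induction hypothesis, yielding indices $n_i$ with $a_i \dlee {M_i^{(n_i)}}_\bot$.

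The key step is to bridge bigstep reduction $\bs$ with the parallel smallstep reduction $\newprintp$. By Lemma~\ref{lemma:bsssp}, $M \ssp^* C(M_1,\ldots,M_k)$, say in $n_0$ steps. Crucially, none of the strict predecessors of $C(M_1,\ldots,M_k)$ along this chain can themselves be constructor terms (the rules (i)--(v) of $\ssp$ never apply with a constructor at the head), so on each of those steps the first clause of $\newprintp$ applies, giving $M^{(n_0)} = C(M_1,\ldots,M_k)$. From that point on, the only applicable clause of $\newprintp$ is the parallel one under the constructor, so by a straightforward sub-induction on $j$ one has $M^{(n_0+j)} = C(M_1^{(j)},\ldots,M_k^{(j)})$.

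Now set $n \eqdef n_0 + \max_{i \leq k} n_i$ and let $j \eqdef n - n_0$. For each $i$, $j \geq n_i$, so by the Accumulation Lemma~\ref{lem:order} we get ${M_i^{(j)}}_\bot \sqsupseteq_E {M_i^{(n_i)}}_\bot \sqsupseteq_E a_i$. Hence
\[
{M^{(n)}}_\bot = C({M_1^{(j)}}_\bot,\ldots,{M_k^{(j)}}_\bot) \sqsupseteq_E C(a_1,\ldots,a_k) = a,
\]
completing the induction. The only mild obstacle is the bookkeeping around the transition from $\ssp$-reductions (which produce $C(M_1,\ldots,M_k)$ with its \emph{unreduced} arguments) to the parallel $\newprintp$-reductions that begin to descend under the constructor; but this is precisely handled by observing that $\ssp$ cannot fire on a constructor-headed term, so the two phases of the computation join cleanly.
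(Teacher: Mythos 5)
Your proof is correct and takes essentially the same route as the paper's: induction on $M \mubprint a$ as the least fixed point of $\ocl_\bot$, with the $\bot$ case trivial, Lemma~\ref{lemma:bsssp} used to reach the constructor-headed form $M^{(n_0)} = C(M_1,\ldots,M_k)$, and Lemma~\ref{lem:order} used to synchronize the argument indices. The paper is merely terser about the bookkeeping you spell out explicitly, namely that the $\ssp$-phase and the parallel phase under the constructor join cleanly because no $\ssp$-rule fires on a constructor-headed term.
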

\begin{proof}
Let $P(M, a) \eqdef \exists n\, a \dlee {M^{(n)}}_\bot$.
We prove by induction that 
$M \mubprint a$ implies  $P(M, a)$.
That is,  
we show 
\[
\ocl_\bot(P)(M, a) \to P(M, a).
\]
If $a = \bot$, then we have $P(M, a)$.
If we have 
\[
(M \bs C(M_1,\ldots,M_k)) \land a = C(a_1,\ldots, a_k) \land \bigwedge_{i \leq k} (\exists n_i\  a_i \dlee 
{{M_i}^{(n_i)}}_\bot)
\]
for a constructor $C$,
then, for $n$ the maximum of $n_i$ ($i \leq k$),
$a = C(a_1,\ldots, a_k) \dlee C({{M_1}^{(n)}}_\bot,\ldots,{{M_k}^{(n)}}_\bot)
= {C(M_1 ,\ldots,M_k )^{(n)}}_\bot$
by Lemma~\ref{lem:order}.
On the other hand,  by Lemma~\ref{lemma:bsssp}, we have
$M^{(m)} = C(M_1,\ldots,M_k) $ for some $m$. 
Therefore, $a \dlee {M^{(m+n)}}_\bot$.
\end{proof}

Since ${{M^{(n)}}}_\bot$ is an increasing sequence by Lemma~\ref{lem:order}, we can define
\[M^{(\infty)} = \bigsqcup_n\, {M^{(n)}}_\bot.\]
We say that the program $M$ \emph{infinitely computes} 
the data $M^{(\infty)}$.

For $d \in D$ we define the data-part $d_E \in E$ as follows.
\begin{align*}
  \bot_E &= \bot\\
  C(d_1,\ldots,d_k)_E &= C((d_1)_E,\ldots,(d_k)_E)\\
  \Fun(f)_E &= \bot
\end{align*}
Clearly, the function $ d \mapsto d_E$ is a projection of $D$ onto $E$.
\begin{theorem}[Computational Adequacy II]
\label{thm-adequacytwo}
$M^{(\infty)} = \val{M}_E$ for every closed program $M$. 
\end{theorem}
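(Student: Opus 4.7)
The plan is to prove the two inclusions $M^{(\infty)} \dlee \val{M}_E$ and $\val{M}_E \dlee M^{(\infty)}$ separately. For the first, I first observe that both $\ssp$ and $\newprintp$ preserve denotational meaning. The primitive rules of $\ssp$ are direct instances of the program axioms (i), (iii), (vii) combined with the Substitution Lemma (Lemma~\ref{lem-subst}), while its contextual rules preserve $\val{\cdot}$ by compositionality. An induction on the derivation of $M\newprintp M'$ then extends this to $\newprintp$ via the constructor congruence rule, yielding $\val{M^{(n)}} = \val{M}$ for all $n$. A routine structural induction on a closed program $N$ shows ${N}_\bot \dlee \val{N}_E$: when $N$ has the form $C(\vec N)$ both sides match componentwise by the induction hypothesis, and otherwise ${N}_\bot = \bot$ and the inequality is trivial. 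Applied to $N := M^{(n)}$ this gives ${M^{(n)}}_\bot \dlee \val{M^{(n)}}_E = \val{M}_E$, and passing to the directed supremum (available by Lemma~\ref{lem:order}) yields $M^{(\infty)} \dlee \val{M}_E$.

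For the converse inclusion I invoke algebraicity of the subdomain $E$: the element $\val{M}_E$ is the directed supremum of $\{a \in \fdata \mid a \dlee \val{M}_E\}$, so it suffices to prove that each such $a$ lies below $M^{(\infty)}$. A short induction on $d$ shows that the projection $d\mapsto d_E$ is monotone and satisfies $d_E \dle d$ in $D$; in particular $\val{M}_E \dle \val{M}$, so $a \dle \val{M}$. For $a \in \fdata$ the relation $a \dlee b$ coincides with the domain inequality $a \dle b$ in $D$ (by a simple induction along the definition of $\fdata$: the only non-trivial case $a = C(\vec a)$ forces $\val{M}$ to begin with the matching constructor $C$), hence $a \dlee \val{M}$. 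Theorem~\ref{thm-adequacy}~(b) then yields $M \mubprint a$, and Lemma~\ref{lem:equiv} supplies an $n$ with $a \dlee {M^{(n)}}_\bot \dlee M^{(\infty)}$. Taking the supremum over $a$ completes the proof.

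The main difficulty is essentially bookkeeping rather than conceptual: one must carefully keep track of the three orderings --- the full domain order $\dle$ on $D$, the relation $\dlee$ attached to $E$, and the projections $(\cdot)_\bot$ and $(\cdot)_E$ --- in order to promote an approximation of $\val{M}_E$ to an approximation of $\val{M}$ so that the hypothesis of Thm.~\ref{thm-adequacy}~(b), which is formulated with respect to the full denotation $\val{M}$ in $D$, becomes applicable. Once this translation is secured, every remaining step reduces to a routine structural (co)induction or a direct appeal to Thm.~\ref{thm-adequacy}~(b) and Lemma~\ref{lem:equiv}.
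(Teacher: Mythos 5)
Your proof is correct and follows essentially the same route as the paper's: the easy inclusion via preservation of denotation under $\newprintp$ together with ${N}_\bot \dlee \val{N}_E$, and the hard inclusion by reducing to compact approximations of $\val{M}_E$, promoting them to approximations of $\val{M}$, and then applying the first Adequacy Theorem and Lemma~\ref{lem:equiv}. The only (harmless) difference is that you invoke Thm.~\ref{thm-adequacy}~(b) directly on each compact approximant together with algebraicity of $E$, whereas the paper routes through Thm.~\ref{thm-adequacy}~(d), Lemma~\ref{lemma:nutomu}~(d) and Lemma~\ref{lemma:fdata}~(b) to the same effect.
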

\begin{proof}
  It is easy to show the following.
\begin{itemize}
\item[(a)] If $M \newprintp M'$ then $\val{M}_E = \val{M'}_E$ .
\item[(b)] ${M}_\bot \dlee \val{M}_E$.
\end{itemize}
Therefore, ${{M^{(n)}}}_\bot \dlee \val{M}_E$.
Since this holds for every $n$,  we have
  $M^{(\infty)} \dlee \val{M}_E$.

  By Thm.~\ref{thm-adequacy}~(d), $M \nubprint \val{M}_E$
because $\val{M}_E \dlee \val{M}$. 
  Therefore, by Lemma~\ref{lemma:nutomu}~(d), 
$\forall d\ (\appr{d}{\val{M}_E} \to M \mubprint d)$,
and consequently, by Lemma~\ref{lem:equiv},
$\forall d\ (\appr{d}{ \val{M}_E} \to \exists n\  d \dlee {M^{(n)}}_\bot)$.
Since $d \dlee {M^{(n)}}_\bot \to \appr{d}{M^{(\infty)}}$,
we have  $\forall d\ (\appr{d}{ \val{M}_E} \to \appr{d}{ M^{(\infty)}})$.
  Therefore, $\val{M}_E \dlee M^{(\infty)}$ by Lemma~\ref{lemma:fdata}~(b).

\end{proof}

Note that if $\val{M} \in E$, then we have $\val{M}_E = \val{M}$.
Therefore, the second Adequacy Theorem says 
$M^{(\infty)} = \val{M}$ in this case.

\subsection{Data extraction}
\label{sub-data}
Using types we are able to identify criteria under which an extracted 
program denotes an observable data, i.e.\ an element of $E$. 
\begin{lemma}
\label{lem-typ-E}
If $\rho$ is a type that contains no function type and $\zeta$ is a type environment
such that $\zeta(\alpha)\subseteq E$ for all 
type variables $\alpha$ in the domain of $\zeta$, 
then $\tval{\rho}{\zeta}\subseteq E$.
\end{lemma}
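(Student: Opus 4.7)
The plan is to proceed by structural induction on the type $\rho$, using the definition of $\tval{\cdot}{\zeta}$ from Section~\ref{sub-types} together with the fact that $E$ is itself a subdomain of $D$ closed under the constructors $\Nil, \Left, \Right, \Pair$ and containing $\bot$. The condition ``no function type'' is clearly preserved under taking subexpressions, so the induction hypothesis applies to the immediate subtypes throughout.

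First I would dispatch the easy cases. If $\rho$ is a type variable $\alpha$, then $\tval{\alpha}{\zeta} = \zeta(\alpha) \subseteq E$ by assumption on $\zeta$. If $\rho = \one$, then $\tval{\one}{\zeta} = \{\Nil,\bot\} \subseteq E$ by the defining equation of $E$. For $\rho = \rho_1 + \rho_2$ and $\rho = \rho_1 \times \rho_2$, the induction hypothesis gives $\tval{\rho_i}{\zeta}\subseteq E$ ($i=1,2$), and then the defining equation $E = (\Nil + \Left(E) + \Right(E) + \Pair(E\times E))_\bot$ directly yields $(\tval{\rho_1}{\zeta} + \tval{\rho_2}{\zeta})_\bot \subseteq E$ and $(\tval{\rho_1}{\zeta} \times \tval{\rho_2}{\zeta})_\bot \subseteq E$. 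The case $\rho = \ftyp{\rho_1}{\rho_2}$ is excluded by hypothesis and does not arise.

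The only case requiring thought is $\rho = \tfix{\alpha}{\rho'}$. Here the key observation is that $E$ itself is a subdomain of $D$ (evident from its definition), so it is a candidate in the intersection
\[
\tval{\tfix{\alpha}{\rho'}}{\zeta} = \bigcap\{\subdom{X} \mid \tval{\rho'}{\zeta[\alpha \mapsto X]} \subseteq X \}.
\]
It therefore suffices to show that $E$ satisfies the condition defining the intersected family, i.e.\ $\tval{\rho'}{\zeta[\alpha \mapsto E]} \subseteq E$. But the extended environment $\zeta[\alpha \mapsto E]$ still sends every type variable to a subdomain contained in $E$ (since $E \subseteq E$), and $\rho'$ still contains no function type. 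Thus the induction hypothesis applied to $\rho'$ under $\zeta[\alpha \mapsto E]$ gives exactly the required inclusion, and hence $\tval{\tfix{\alpha}{\rho'}}{\zeta} \subseteq E$.

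I do not foresee any real obstacle; the fixed-point case is the most interesting but becomes transparent once one notices that $E$ can itself be used as the witness subdomain. The exclusion of function types is needed only to prevent the case $\ftyp{\rho_1}{\rho_2}$ arising, since $E$ is not closed under $\Fun$.
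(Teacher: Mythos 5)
Your proof is correct and follows exactly the paper's argument: structural induction on $\rho$, with the fixed-point case handled by observing that $E$ is itself a subdomain satisfying $\tval{\rho'}{\zeta[\alpha\mapsto E]}\subseteq E$ (by the induction hypothesis), hence contains the intersection defining $\tval{\tfix{\alpha}{\rho'}}{\zeta}$. The only difference is that you spell out the easy cases which the paper omits.
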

\begin{proof}
Structural induction on $\rho$. The only non-obvious case is $\tfix{\alpha}{\rho}$.
By the definition of $\tval{\tfix{\alpha}{\rho}}{\zeta}$, and
since $E$ is a subdomain of $D$,  it suffices to show
$\tval{\rho}{\zeta[\alpha\mapsto E]}\subseteq E$. But this holds by the induction
hypothesis.
\end{proof}

We call an $\IFP$-formula a \emph{data formula} if it contains no free 
predicate variable and no strictly positive subformula 
of the form $A\to B$ where $A$ and $B$ are non-Harrop.

\begin{theorem}[Data Extraction]
\label{thm-pe}
From a proof in $\IFP$ of a data formula $A$ from Harrop assumptions 
$\Gamma$ one can extract a closed 
program $M$ realizing $A$, provably in $\RIFP$ from $\reah(\Gamma)$. 
Moreover, 
$M$ is a data that can hence be infinitely 
computed, that is, $M^{(\infty)}=\val{M}$.
\end{theorem}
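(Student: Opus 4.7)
The plan is to invoke the Soundness Theorem (Thm.~\ref{thm-soundness}) and then use the type system together with Lemma~\ref{lem-typ-E} and the Second Adequacy Theorem (Thm.~\ref{thm-adequacytwo}). Since $\Gamma$ consists only of Harrop assumptions, there are no non-Harrop assumptions, and the Soundness Theorem yields a \emph{closed} program $M$ with $\vdash M:\tau(A)$ derivable by the rules of Lemma~\ref{lem-typ-rifp} and $\ire{M}{A}$ provable in $\RIFP$ from $\reah(\Gamma)$. The derivability of the typing, together with axioms (ix)--(xv) of $\RIFP$, gives $\val{M}\in\tval{\tau(A)}{\emptyset}$. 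So the task reduces to showing $\tval{\tau(A)}{\emptyset}\subseteq E$, from which both remaining claims follow: $\val{M}\in E$ means $\val{M}_E=\val{M}$, and then Thm.~\ref{thm-adequacytwo} yields $M^{(\infty)}=\val{M}_E=\val{M}$.

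The core of the proof is the following claim, which I would establish by a simultaneous structural induction on $\IFP$ expressions:

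\emph{Claim.} If $E$ is a data formula (or more generally an expression with no free predicate variables and no strictly positive subformula $B\to C$ with $B$ and $C$ both non-Harrop), then $\tau(E)$ contains no subtype of the form $\ftyp{\rho}{\sigma}$ and has no free type variables.

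The key observations driving the induction are: (i) free type variables in $\tau(E)$ arise only from free predicate variables in $E$ (bound ones get captured by $\tfix{\alpha_X}{\cdot}$), so the closedness is automatic; (ii) the only clause of $\tau$ that introduces $\ftyp{}{}$ is $\tau(A\to B)=\ftyp{\tau(A)}{\tau(B)}$ in the case where both $A,B$ are non-Harrop, and this is precisely the configuration forbidden at strictly positive positions in a data formula; (iii) in all other cases ($\lor,\land,\forall,\exists$, predicate constants, abstractions, and $\mu/\nu$) the definition of $\tau$ descends only into positions that remain strictly positive in the enclosing expression, so the data-formula property is inherited by the subexpressions visited. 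In particular, for $\tau(A\to B)$ when one side is Harrop, $\tau$ does not descend into the premise at all, so any ``hidden'' non-Harrop implications inside an antecedent never surface as function types.

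Once the Claim is proved, Lemma~\ref{lem-typ-E} applies to $\tau(A)$ with the empty type environment to give $\tval{\tau(A)}{\emptyset}\subseteq E$; hence $\val{M}\in E$. The Second Adequacy Theorem then yields $M^{(\infty)}=\val{M}_E=\val{M}$, completing the proof. The main obstacle is bookkeeping in the Claim, in particular verifying that the strict positivity condition in the definition of data formula is preserved when one descends through all the constructs of $\tau$ --- most delicately through the bodies of $\mu$- and $\nu$-operators, where one uses that any occurrence that is strictly positive in the body $Q$ of $\mu(\lambda X\,Q)$ remains strictly positive in the enclosing formula.
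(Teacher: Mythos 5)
Your proposal is correct and follows essentially the same route as the paper: Soundness Theorem for the closed typable realizer $M$, the observation that a data formula's type $\tau(A)$ is closed and function-type-free, Lemma~\ref{lem-typ-E} to conclude $\val{M}\in E$, and the Second Adequacy Theorem for $M^{(\infty)}=\val{M}$. The only difference is that you spell out by structural induction the step the paper dismisses with ``clearly'' (that $\tau$ introduces $\ftyp{}{}$ only for non-Harrop implications at positions $\tau$ actually visits, which are exactly the strictly positive ones), and your reasoning there is sound.
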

\begin{proof}
By the Soundness Theorem (Thm.~\ref{thm-soundness}) 
we can extract a closed program $M:\tau(A)$ such that
$\RIFP$ proves $\reah(\Gamma)\vdash\ire{M}{A}$.
Clearly, since
$A$ is a data formula, $\tau(A)$ contains no function type.
Therefore, by Lemma~\ref{lem-typ-E}, $M$ denotes a data.
By the second Adequacy Theorem (Thm~\ref{thm-adequacytwo}),
$M^{(\infty)}=\val{M}$. 
\end{proof}

\begin{example}\label{ex3}
In Thm.~\ref{thm-c-g'}, we proved $\C\subseteq\G$ and obtained a program 
$\stog$ as its realizer.  On the other hand, one can prove 
$\C(1)$ by showing 
$\{1\} \subseteq \C$ 
by coinduction. 
From the proof, we can extract the realizer $a \eqrec \Pair(1, a)$ 
(i.e., 
$a = 1\!:\!1\!:\!\ldots$
) of $\C(1)$.
From $\C\subseteq\G$ and  $\C(1)$, we can trivially prove $\G(1)$ 
and from these proofs we can extract a realizer 
$\mathsf{M_1} = \stog\, (1\!:\!1\!:\!\ldots)$ 
of $\G(1)$.
With the small-step reduction rule,  one can compute
$$\mathsf{M_1}\, \newprintptr\, \RG \!:\! N_1
\ \newprintptr\  \RG\!:\! \LG\!:\! N_2 \ \newprintptr\  \RG\!:\! \LG \!:\! \LG\!:\! N_3 \ \newprintptr\  
 \ldots$$
for some $N_i (i \geq 1)$.
Taking $(\underline{\ })_\bot$ of these terms, we have an increasing sequence
$$\bot,\ \ \RG\!:\!\bot,\ \  \RG\!:\! \LG\!:\!\bot, \ \ \RG\!:\! \LG\!:\! \LG\!:\!\bot,\ \ \ldots$$
Taking the limit of these terms, one can see that $\mathsf{M_1}$ infinitely computes 
the data 
$\mathsf{M_1}^{(\infty)} = \RG\!:\!\LG\!:\!\LG\!:\!\ldots$, 
which is a realizer of $\G(1)$ by Thm.~\ref{thm-pe}.

While for $\C(1)$ there was only one canonical proof and one realizer, we
now look at $\C(1/2)$ which has more than one canonical 
proof and realizer and will
give rise to three Gray codes, one with an undefined digit.
By the coclosure axiom, $\C(1/2)$ unfolds to 
$\exists d\in\SD\,(1/2\in\II_d\land\C(2\cdot 1/2 - d))$.
Therefore, we can choose $d=0$ and use the above proof of $\C(1)$. 
This yields a realizer 
$0\!:\!1\!:\!1\!:\!\ldots$ 
of $\C(1/2)$, and 
$\mathsf{M_{1/2}} = \stog\, (0\!:\!1\!:\!1\!:\!\ldots)$ 
is a realizer of  $\G(1/2)$.
One can see that
\begin{align*} 
\mathsf{M_{1/2}} \ \newprintptr\  N_1\!:\!\RG\!:\!N_2 \ \newprintptr\  \RG\!:\!\RG\!:\!N_3 \ \newprintptr\  \RG\!:\!\RG\!:\!\RG\!:\!N_4 \ \newprintptr\  
\end{align*}
for some $N_i (i \geq 1)$.
Therefore,  the result of finite-time computation proceeds 
$$\bot,\ \ \bot\!:\!\RG\!:\!\bot,\ \ \RG\!:\!\RG\!:\!\bot,\ \ \RG\!:\!\RG\!:\!\RG\!:\!\bot,\ \ \RG\!:\!\RG\!:\!\RG\!:\!\LG\!:\!\bot, \ \ldots$$
and in the limit, we have 
$\mathsf{M_{1/2}}^{(\infty)} = \RG\!:\!\RG\!:\!\RG\!:\!\LG,\LG\!:\!\ldots$.

Since $1\!:\!0\!:\!0\!:\!\ldots$ 
is another realizer of $\C(1/2)$,
$\mathsf{M'_{1/2}} = \stog(1\!:\!0\!:\!0\!:\!\ldots)$ is also a  realizer of  $\G(1/2)$.
One can see that
$$\mathsf{M'_{1/2}} \ \newprintptr\  \RG\!:\!N_1\ \newprintptr\  \RG\!:\!N_2\!:\!\RG\!:\!N_3\ \newprintptr\  \RG\!:\!N_4\!:\!\RG\!:\!\LG\!:\!N_5 \ \newprintptr\  \ldots$$
for some $N_i (i \geq 1)$.
Therefore,  one can observe the finite approximations
$$\bot,\ \ \RG\!:\!\bot,\ \ \RG\!:\!\bot\!:\!\RG\!:\!\bot,\ \ \RG\!:\!\bot\!:\!\RG\!:\!\LG\!:\!\bot,\ \ldots$$
hence $\mathsf{M_{1/2}'}$ computes the partial infinite data
$\mathsf{M_{1/2}'}^{(\infty)} = \RG\!:\!\bot\!:\!\RG\!:\!\LG\!:\!\LG\!:\!\ldots$.
\end{example}


\section{Conclusion}
\label{sec-conclusion}
We presented $\IFP$, a formal system supporting
program extraction from proofs in abstract mathematics. 
$\IFP$ is plain many-sorted first-order logic extended with two 
extra constructs for strictly positive inductive and coinductive 
definitions that are dual to each other.
Sorts in $\IFP$ represent abstract structures 
specified by (classically true) disjunction free closed axioms.
Hence full classical logic is available.
Computational content is extracted through a realizability 
interpretation that treats quantifiers uniformly in order to permit
the interpretation of sorts as abstract spaces.
The target language of the interpretation is a functional 
programming language in which extracted programs are typable
and therefore easily translatable into Haskell and executed there.
The exact fit of the denotational and operational semantics of the target language is
proven by two computational adequacy theorems. 
The first 
(Thm~\ref{thm-adequacy}) states that 
all compact approximations of the denotational value of a program 
can be computed, the second (Thm~\ref{thm-adequacytwo}) states 
that the full (possibly infinite) denotation value can be computed 
through successive computation steps.
It should be stressed that axioms used in a proof
do not show up as non-executable constants in extracted programs
and therefore do not spoil the computation of programs 
into canonical form.  
Besides the natural numbers as a primary example of a strictly
positive inductive definition we studied wellfounded induction 
and useful variations thereof such as Archimedean induction.

In an extended case study we 
formalized in $\IFP$ the real numbers as an Archimedean real closed field
and introduced various exact real number representations 
(Cauchy and signed digit representation as well as infinite Gray code) 
as the realizability
interpretations of simple coinductive predicates ($\A$, $\C$, and $\G$).
From a proof that $\C$ is a subset of $\G$ we extracted a program converting 
the signed digit representation into infinite Gray code.
There is an experimental Haskell implementation of $\IFP$ and its program 
extraction called Prawf~\cite{BergerPetrovskaTsuiki20} where this is carried out.

This case study highlights some crucial features of $\IFP$:
\begin{itemize}
\item The real numbers are given axiomatically as an abstract structure;
\item signed digit representation and infinite Gray code are obtained
as realizers of coinductive predicates $\C$ and $\G$;
\item Archimedean induction is used to prove that the sign of 
non-zero reals in $\C$ can be decided (first part of the proof
of~Thm.~\ref{thm-c-g'}); 
\item the definition of $\G$ permits partial realizers (which are inevitable 
for infinite Gray code);
\item the second Adequacy Theorem is applied to compute full
infinite Gray code in the limit.
\end{itemize}
This case study not only puts to test the practical usability 
of $\IFP$  but also leads to the study of possible extensions of it.
Having extracted a program realizing the inclusion $\C\subseteq\G$
it is natural to ask 
about the reverse inclusion.
In~\cite{Tsuiki02} a parallel and nondeterministic program converting 
infinite Gray code into signed digit representation is given
which is necessarily parallel and nondeterministic \cite{Tsuiki05}.
Since the programming language of $\RIFP$ doesn't have these features 
such conversion cannot be extracted.
We leave it for further work to develop a suitable extension of
our system improving and extending previous work
in this direction~\cite{BergerMiyamotoSchwichtenbergTsuiki16,BergerCSL16}.
A further interesting line of study will be the extraction of
algorithms that operate on compact sets of real numbers as studied 
in~\cite{BergerSpreen16,Spreen20}.
%


\bibliographystyle{plainnat}
\bibliography{refs}


\end{document}